\documentclass[dvipsnames,11pt]{article}
\usepackage{amsmath,amssymb,amsthm,color,bm,mathrsfs,extarrows,tikz,graphicx,mathtools,enumitem,fancybox,makecell}
\usepackage[square,sort,comma,numbers]{natbib}
\usepackage{subcaption}
\usepackage[ruled]{algorithm2e}
\usepackage[margin=1in]{geometry}
\usepackage{xcolor,bbm}
\usepackage[pagebackref]{hyperref}
\usepackage{comment}
\usepackage{soul}
\usepackage{ifthen}
\usepackage{mathdots}
\usepackage{enumitem}

\allowdisplaybreaks

\setlist[itemize]{itemsep=0pt}
\setlist[enumerate]{itemsep=0pt}
\hypersetup{colorlinks=true,urlcolor=blue,linkcolor=blue,citecolor=[rgb]{.42,.56,.14},}
\usetikzlibrary{decorations.pathreplacing,decorations.markings,decorations.pathmorphing,decorations.shapes,arrows.meta,positioning,patterns}

\usepackage[capitalise,nameinlink]{cleveref}
\Crefname{lemma}{Lemma}{Lemmas}
\Crefname{fact}{Fact}{Facts}
\Crefname{theorem}{Theorem}{Theorems}
\Crefname{corollary}{Corollary}{Corollaries}
\Crefname{claim}{Claim}{Claims}
\Crefname{example}{Example}{Examples}
\Crefname{problem}{Problem}{Problems}
\Crefname{definition}{Definition}{Definitions}
\Crefname{notation}{Notation}{Notations}
\Crefname{assumption}{Assumption}{Assumptions}
\Crefname{subsection}{Subsection}{Subsections}
\Crefname{section}{Section}{Sections}
\Crefformat{equation}{(#2#1#3)}
\Crefname{figure}{Figure}{Figures}

\newtheorem{theorem}{Theorem}[section]
\newtheorem*{theorem*}{Theorem}

\newtheorem{proposition}[theorem]{Proposition}
\newtheorem*{proposition*}{Proposition}
\newtheorem{property}[theorem]{Property}
\newtheorem*{property*}{Property}
\newtheorem{lemma}[theorem]{Lemma}
\newtheorem*{lemma*}{Lemma}
\newtheorem{corollary}[theorem]{Corollary}
\newtheorem*{corollary*}{Corollary}
\newtheorem*{conjecture*}{Conjecture}
\newtheorem{fact}[theorem]{Fact}
\newtheorem*{fact*}{Fact}

\newtheorem*{exercise*}{Exercise}

\newtheorem*{hypothesis*}{Hypothesis}

\theoremstyle{definition}
\newtheorem{definition}[theorem]{Definition}

\newtheorem{exercise-easy}[theorem]{Exercise}
\newtheorem{exercise-med}[theorem]{Exercise}
\newtheorem{exercise-hard}[theorem]{Exercise$^\star$}
\newtheorem{claim}[theorem]{Claim}
\newtheorem*{claim*}{Claim}

\newtheorem{remark}[theorem]{Remark}
\newtheorem*{remark*}{Remark}

\newtheorem*{observation*}{Observation}

\DeclareSymbolFont{extraup}{U}{zavm}{m}{n}
\DeclareMathSymbol{\varheart}{\mathalpha}{extraup}{86}
\DeclareMathSymbol{\vardiamond}{\mathalpha}{extraup}{87}
\DeclareMathOperator*{\sd}{\mathrm d}
\DeclareMathOperator*{\E}{\mathbb E}

\renewcommand{\Pr}{\operatorname*{\mathbf{Pr}}}

\newcommand{\Mod}[1]{\ (\mathrm{mod}\ #1)}

\newcommand{\eps}{\varepsilon}
\newcommand{\abs}[1]{\left| #1 \right|}
\newcommand{\vabs}[1]{\left\| #1 \right\|}

\newcommand{\pbra}[1]{\left( #1 \right)}
\newcommand{\sbra}[1]{\left[ #1 \right]}
\newcommand{\cbra}[1]{\left\{ #1 \right\}}
\newcommand{\floorbra}[1]{\left\lfloor #1 \right\rfloor}
\newcommand{\ceilbra}[1]{\left\lceil #1 \right\rceil}

\renewcommand{\mid}{\,\middle\vert\,}
\newcommand{\ket}[1]{\left| #1 \right\rangle}

\newcommand{\bin}{\{0,1\}}

\newcommand{\poly}{\mathsf{poly}}

\newcommand{\str}{\mathsf{str}}

\newcommand{\indicator}{\mathbbm{1}}
\newcommand{\err}{\mathsf{err}}
\newcommand{\supp}[1]{\mathsf{supp}\pbra{#1}}
\newcommand{\tow}{\mathrm{tow}}
\newcommand{\ac}{\mathsf{AC^0}}
\newcommand{\nc}{\mathsf{NC^0}}
\newcommand{\qnc}{\mathsf{QNC^0}}

\newcommand{\Nbb}{\mathbb{N}}
\newcommand{\Rbb}{\mathbb{R}}
\newcommand{\Zbb}{\mathbb{Z}}

\newcommand{\Acal}{\mathcal{A}}
\newcommand{\Bcal}{\mathcal{B}}

\newcommand{\Dcal}{\mathcal{D}}
\newcommand{\Ecal}{\mathcal{E}}

\newcommand{\Hcal}{\mathcal{H}}

\newcommand{\Lcal}{\mathcal{L}}
\newcommand{\Mcal}{\mathcal{M}}

\newcommand{\Pcal}{\mathcal{P}}
\newcommand{\Qcal}{\mathcal{Q}}

\newcommand{\Scal}{\mathcal{S}}
\newcommand{\Tcal}{\mathcal{T}}
\newcommand{\Ucal}{\mathcal{U}}

\newcommand{\Wcal}{\mathcal{W}}

\newcommand{\tvdist}[1]{\vabs{#1}_\mathsf{TV}}

\interfootnotelinepenalty=10000

\renewcommand{\tilde}{\widetilde}
\renewcommand{\bar}{\overline}

\title{Locality Bounds for Sampling Hamming Slices}
\author{
Daniel M. Kane\thanks{University of California, San Diego. Email: \texttt{dakane@ucsd.edu}. Supported by NSF Medium Award CCF-2107547 and NSF CAREER Award CCF-1553288.}
\and
Anthony Ostuni\thanks{University of California, San Diego. Email: \texttt{aostuni@ucsd.edu}.}
\and
Kewen Wu\thanks{University of California, Berkeley. Email: \texttt{shlw\_kevin@hotmail.com}. Supported by a Sloan Research Fellowship and NSF CAREER Award CCF-2145474.}
}
\date{}

\begin{document}

\maketitle

\begin{abstract}
Spurred by the influential work of Viola (Journal of Computing 2012), the past decade has witnessed an active line of research into the complexity of (approximately) \emph{sampling} distributions, in contrast to the traditional focus on the complexity of \emph{computing} functions. 

We build upon and make explicit earlier implicit results of Viola to provide superconstant lower bounds on the locality of Boolean functions approximately sampling the uniform distribution over binary strings of particular Hamming weights, both exactly and modulo an integer, answering questions of Viola (Journal of Computing 2012) and Filmus, Leigh, Riazanov, and Sokolov (RANDOM 2023). 
Applications to data structure lower bounds and quantum-classical separations are discussed.
\end{abstract}
\thispagestyle{empty}

\newpage
\tableofcontents
\thispagestyle{empty}
\newpage

\section{Introduction}\label{sec:intro}
Historically, complexity theory has been dominated by research determining the complexity of \textit{computing} particular functions. Following the seminal work of Viola \cite{viola2012complexity} (with preliminary ideas appearing in \cite{ambainis2003quantum, goldreich2010implementation}), the past decade has seen a rise in study on the complexity of \textit{sampling} particular distributions \cite{viola2012complexity, lovett2011bounded, beck2012large, de2012extractors, viola2016quadratic, viola2020sampling, goos2020lower, chattopadhyay2022space, viola2023new, filmus2023sampling}. In this setting, the goal is to construct a circuit whose input is an infinite string of unbiased, independent random bits and whose output is a distribution close in total variation distance to a specified distribution.

As a standard motivating example, consider the parity function. H{\aa}stad's flagship result \cite{haastad1986computational} shows that $\ac$ circuits need exponentially many gates to compute parity. However, one can easily sample pairs of the form $(X, \textsc{parity}(X))$, where each output bit depends on just two input bits: output $(x_1 \oplus x_2, x_2 \oplus x_3, \ldots, x_{n-1} \oplus x_n, x_n \oplus x_1)$ on input $x_1, x_2, \ldots$ \cite{babai1987random, boppana1987one}. In fact, $\ac$ circuits can even sample $(X,f(X))$ for more complicated functions, such as inner product \cite{impagliazzo1996efficient} and symmetric functions \cite{viola2012complexity}.

Despite these examples illustrating the difficulty of sampling lower bounds, the challenge has been rewarded with results providing intuition for and applications to succinct data structures \cite{viola2012complexity, lovett2011bounded, beck2012large, viola2020sampling, chattopadhyay2022space, viola2023new}, pseudorandom generators \cite{viola2012bit, lovett2011bounded,beck2012large}, and extractors \cite{viola2012extractors, de2012extractors, viola2014extractors, chattopadhyay2016explicit, cohen2016extractors}.

Let $f\colon\bin^m\to\bin^n$ be a Boolean function of $n$ output bits. Additionally, let $\Ucal^m$ be the uniform distribution over $\bin^m$ and $f(\Ucal^m)$ be the output distribution of $f$ given a uniform input. We say $f$ is $d$-local if every output bit of $f$ depends on at most $d$ input bits.
For constant $d$, this captures $\nc$ circuits, and more generally, $d$-local functions encompass circuits of depth $\log(d)$ and bounded fan-in.
In his influential paper, Viola \cite{viola2012complexity} considered the problem of determining locality lower bounds for Boolean functions $f$ where $f(\Ucal^m)$ approximates one of the following distributions:
\begin{enumerate}
\item $\Dcal_k$ -- the uniform distribution over $x\in\bin^n$ of Hamming weight $k = \Theta(n)$.   
\item $\Mcal_n$ -- the uniform distribution over $x \in \bin^n$ conditioned on $\textsc{modmaj}_p(x)=0$ where
$$
\textsc{modmaj}_p (x) \coloneqq \mathbbm{1}[(x_1 + \cdots + x_n) \bmod p \ge p/2]
\quad\text{for some prime $p=\Theta(\log(n))$.}
$$
\end{enumerate}
In particular, an $\Omega(\log(n))$ locality lower bound was proved for both distributions\footnote{Viola actually considered the uniform distribution over $(X, \textsc{modmaj}_p(X))$ pairs, but these are essentially equivalent (see, e.g., \cite[Lemma 4.3]{viola2012complexity}).}, conditional on $f$'s input domain being sufficiently small ($m=(1+o(1))\cdot n$). Additionally, Viola gave results without this restriction, but with a worse distance bound decaying exponentially in the locality $d$.

Towards a full locality-distance trade-off, Viola asked whether lower bounds could be obtained with strong error bounds and no input length restriction. Later, the similar bound $\tilde\Omega(\log(n/k))$ was discovered for $\Dcal_k$ by Filmus, Leigh, Riazanov, and Sokolov \cite{filmus2023sampling}, answering the question for small $k$. To complement this result, Viola proved an $\Omega(\log(n))$ bound in the case of non-dyadic\footnote{Recall a number is \textit{dyadic} if it can be expressed as a fraction whose denominator is a power of two.} $k/n$ \cite{viola2023new}.\footnote{The bound was originally implicit in \cite{viola2023new} with many of the ideas appearing earlier in \cite{viola2012bit}. Very recently and after the submission of our paper, Viola posted an update making the bound explicit -- see Section 9 of \url{https://eccc.weizmann.ac.il/report/2021/073/}. All our bounds, including the non-dyadic case, were proven independently.\label{fn:non-dyadic-lit}}
However, Viola's question in the general regime remained open.

More recently, building on Viola's results on the distribution $\Mcal_n$, Watts and Parham \cite{watts2023unconditional} proved an input-independent separation between $\qnc$ and $\nc$ circuits of restricted domain size.
Such domain conditions boil down to the exact domain conditions of the locality bounds for $\Mcal_n$ in Viola's analysis, rendering their result only a partial separation.

\subsection{Our Results}\label{sec:our_results}

We resolve Viola's question in the affirmative by providing nontrivial locality lower bounds on $\Dcal_k$ in the $k = \Theta(n)$ regime, as well as for $\Mcal_n$, with no restrictions on the domain size.

Formally, we say a distribution $\Pcal$ is $\delta$-far (resp., $\delta$-close) from a distribution $\Qcal$ if the total variation distance is at least $\delta$ (resp., at most $\delta$).
Our results provide a wide range of trade-offs between locality and distance. For readability, we present them here with some particular parameter choices.

\begin{theorem}[Consequence of \Cref{thm:locality_single_hamming_slice}]\label{thm:informal_single}
Let $f\colon\bin^m\to\bin^n$ be a $d$-local function, and let $1\le k\le n-1$.
If $n$ is sufficiently large and
$$
d\le\frac{\log^*(\log^*(n))}{60}
\quad\text{and}\quad
\frac1{\log^*(\log^*(n))}\le \frac{k}{n} \le1-\frac1{\log^*(\log^*(n))},
$$
then $f(\Ucal^m)$ is $\pbra{1-\frac{\log^*(n)}{\sqrt n}}$-far from $\Dcal_k$, where $\log^*(\cdot)$ is the iterated logarithm with base $2$.
\end{theorem}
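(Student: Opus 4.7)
The overall strategy is a Hamming-weight anti-concentration argument. Since $\Dcal_k$ is supported entirely on strings of weight exactly $k$, any output distribution $f(\Ucal^m)$ whose induced weight $\wt(f(\Ucal^m))$ is spread over $\omega(\sqrt{n}/\log^*(n))$ values must place mass at most $O(\log^*(n)/\sqrt{n})$ on any single weight, which translates directly to total variation distance $1 - O(\log^*(n)/\sqrt{n})$ from $\Dcal_k$. The plan is therefore to show anti-concentration of $\wt(f(\Ucal^m))$ with this quantitative strength.

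To do so, the goal is to extract from $f$ a large set $S \subseteq [n]$ of output indices whose input neighborhoods $N_i \subseteq [m]$ are pairwise disjoint. Once such an $S$ is in hand, the bits $\cbra{f(x)_i : i \in S}$ are mutually independent under $x \sim \Ucal^m$, so $\sum_{i \in S} f(x)_i$ can be controlled by Berry--Esseen or a Littlewood--Offord-type inequality. The hypothesis that $k/n$ is bounded away from $0$ and $1$ forces $\E[\wt(f(\Ucal^m))]$ to lie in the same range, which in turn implies that only a vanishing fraction of the marginal biases $p_i = \Pr[f(\Ucal^m)_i = 1]$ can be extreme; hence the ``balanced'' output bits within $S$ supply variance $\Omega(|S|)$ and give Hamming-weight spread $\Omega(\sqrt{|S|})$. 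A target size $|S| = \Omega\pbra{n/(\log^*(n))^2}$ suffices for the claimed error bound.

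The hard part, and the origin of the $\log^*(\log^*(n))$ locality threshold, is producing such a near-linear-sized $S$ when $d$ is allowed to grow with $n$. My plan is an iterative restriction argument in the spirit of Viola's \cite{viola2012complexity} neighbor analysis: at each round, one carefully partially assigns input variables (e.g., those read by many output bits, or those causing large pairwise intersections among neighborhoods) so that after removing a controlled fraction of ``spoiled'' output bits, the effective locality on the surviving bits drops by one. After $d$ such rounds one reaches the trivial $1$-local regime, where pairwise-disjoint neighborhoods can be extracted directly. Because each round can thin the pool of usable output bits by a tower-type factor -- sunflower-style combinatorics forces near-exponential blowups to avoid collisions between neighborhoods -- iterating $d$ times requires roughly $n \geq \tow(\tow(d))$, which translates to the bound $d \leq \log^*(\log^*(n))/60$.

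The main obstacle I foresee is the combined bookkeeping across these rounds: one must simultaneously control (i) the size of the surviving output-bit pool, (ii) the disjointness of the residual input neighborhoods, and (iii) the balance of the marginal biases, since output bits whose restricted bias collapses to $0$ or $1$ contribute no variance to the final anti-concentration step. Making these three invariants propagate through $d$ rounds with only polylogarithmic-in-$\log^*$ loss is precisely the quantitative computation that is left implicit in Viola's earlier analysis and must be carried out explicitly here; once that is done, the Berry--Esseen step is routine and the total variation bound follows.
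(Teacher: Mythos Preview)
Your plan has a genuine gap at the anti-concentration step. Finding a large set $S\subseteq[n]$ of output bits with pairwise disjoint input neighborhoods makes the bits $\{f(x)_i : i\in S\}$ independent, so $\sum_{i\in S} f(x)_i$ is indeed anti-concentrated. But this says nothing about the \emph{full} Hamming weight $\sum_{i\in[n]} f(x)_i$: the output bits in $[n]\setminus S$ can still depend on the same inputs that feed $S$ and cancel the fluctuation. The paper's overview gives exactly this counterexample: the $1$-local map $f(x)=(x_1,1-x_1,x_2,1-x_2,\ldots,x_{n/2},1-x_{n/2})$ has $n/2$ odd-indexed output bits with pairwise disjoint input sets, each unbiased, yet the total weight is deterministically $n/2$. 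So your Berry--Esseen step on $\sum_{i\in S}$ does not yield the claimed bound on $\Pr[\wt(f(\Ucal^m))=k]$.

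The paper's remedy is to work not with non-connected output \emph{bits} but with non-connected output \emph{neighborhoods} $N(i)=\{i':I(i)\cap I(i')\neq\emptyset\}$. If $N(i_1),\ldots,N(i_r)$ are pairwise non-connected and each has size at most $t$, then after conditioning on all inputs outside $I(i_1)\cup\cdots\cup I(i_r)$, every output bit outside $\bigcup_\ell N(i_\ell)$ is fixed and the block sums $\Delta_\ell=\sum_{j\in N(i_\ell)} f(x)_j$ are independent, so Littlewood--Offord applies to the \emph{total} weight. The delicate point is that each $\Delta_\ell$ must be genuinely non-constant under the residual randomness; the paper handles this with a \textsf{Type-1}/\textsf{Type-2} dichotomy (is the joint law on $N(i_\ell)$ far from or close to $\Ucal_\gamma^{|N(i_\ell)|}$?) and, in the \textsf{Type-2} case, a resampling argument (\Cref{clm:lem:type-2_single_1}) comparing the neighborhood sum before and after re-randomizing $I(i_\ell)$. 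It is the need for non-connected \emph{neighborhoods} of bounded size, not merely non-connected bits, that forces the graph-elimination machinery of \Cref{cor:graph_elim_non-adj_neigh} and produces the tower-in-$d$ blowup behind the $\log^*(\log^*(n))$ threshold; your proposed locality-reduction scheme does not reach this structure.
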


By a different application of \Cref{thm:locality_single_hamming_slice}, we obtain the following sharp bound. The tightness of \Cref{thm:informal_single_1} is discussed in \Cref{sec:single_hamming_slice}.

\begin{theorem}[Consequence of \Cref{thm:locality_single_hamming_slice}]\label{thm:informal_single_1}
Let $f\colon\bin^m\to\bin^n$ be a $d$-local function and $1\le k\le n-1$. If both $d$ and $k/n$ are constant, then $f(\Ucal^m)$ is $\pbra{1-O\pbra{1/\sqrt n}}$-far from $\Dcal_k$.
\end{theorem}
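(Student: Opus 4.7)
The plan is to reduce the TV-distance lower bound to a point-probability bound on the output Hamming weight, and then establish the latter via a Viola-style decoupling argument. Since $\Dcal_k$ is uniform on the slice $\cbra{x\in\bin^n:|x|=k}$, we have two complementary TV lower bounds:
\begin{align*}
\dist\pbra{f(\Ucal^m),\Dcal_k} &\;\geq\; 1-\Pr_X\sbra{|f(X)|=k}, \\
\dist\pbra{f(\Ucal^m),\Dcal_k} &\;\geq\; 1-\frac{\abs{\supp{f(\Ucal^m)}\cap\cbra{x\in\bin^n:|x|=k}}}{\binom{n}{k}}.
\end{align*}
The first accounts for mass placed outside the target slice; the second accounts for target strings missed by $f$'s support. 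The desired $1-O(1/\sqrt{n})$ bound holds as soon as one of the two right-hand quantities is $O(1/\sqrt{n})$ for every $d$-local $f$ with $d,k/n$ constant.

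The generic case is the anti-concentration $\sup_t\Pr_X\sbra{|f(X)|=t}=O(1/\sqrt{n})$. Each output bit $f_i$ depends on $S_i\subseteq[m]$ with $|S_i|\leq d$, and $\sum_i|S_i|\leq dn$. Fix a constant threshold $T=T(d)$ and let $J$ be the set of input bits that appear in more than $T$ of the $S_i$'s, so $|J|\leq dn/T$. Condition on $X_J$; the residual $d$-uniform hypergraph has maximum degree at most $T$, so a greedy matching produces $L=\Omega(n)$ output indices $i_1,\dots,i_L$ whose residual supports $S_{i_j}\setminus J$ are pairwise disjoint, making $f_{i_1}(X),\dots,f_{i_L}(X)$ jointly independent Bernoullis conditionally on $X_J$. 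Conditioning further on the inputs outside $V\coloneqq\bigcup_j(S_{i_j}\setminus J)$ fixes the unmatched contribution to $|f(X)|$, reducing the task to anti-concentration of the independent subsum $S\coloneqq\sum_{j=1}^L f_{i_j}(X)$: Kolmogorov's classical anti-concentration inequality yields $\sup_s\Pr\sbra{S=s}=O(1/\sqrt{n})$ as long as a constant fraction of the $f_{i_j}$'s have individual variance bounded below by some $c_d>0$.

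The degenerate case of many near-constant $f_{i_j}$'s (which covers, for instance, constant $f$) falls under the second TV inequality: near-constancy of most matched coordinates forces the effective output entropy of $f(\Ucal^m)$ to be small, so $\abs{\supp{f(\Ucal^m)}\cap\text{slice}_k}$ is negligible compared to $\binom{n}{k}$. The main technical obstacle is the quantitative bookkeeping interpolating between the two cases: $T$ must be large enough that $|J|$ and the number of fully-determined output bits are $o(n)$, yet small enough that the matching is linear-sized with a constant fraction of non-degenerate Bernoullis. For constant $d$ and $k/n$ every parameter can be absorbed into an $O(1)$ constant, producing the sharp $1-O(1/\sqrt{n})$ bound. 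Tightness is witnessed by $d$-local constructions such as disjoint pairwise XORs, whose output Hamming weight is binomially distributed with standard deviation $\Theta(\sqrt{n})$, matching the claim up to constants.
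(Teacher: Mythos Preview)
There is a genuine gap in the ``generic case'' anticoncentration argument. You claim that conditioning on inputs outside $V=\bigcup_j(S_{i_j}\setminus J)$ ``fixes the unmatched contribution to $|f(X)|$'', but this is false: an unmatched output bit $f_i$ may have $(S_i\setminus J)\cap V\neq\emptyset$, so it continues to fluctuate together with the matched bits and can exactly cancel their contribution to the Hamming weight. The paper's overview (\Cref{sec:overview}) gives precisely this counterexample: for $k=n/2$, take the $1$-local map $f(X)=(X_1,1-X_1,X_2,1-X_2,\dots,X_{n/2},1-X_{n/2})$. Every input has degree $2$, so $J=\emptyset$ for any $T\ge2$; a disjoint-support matching picks one coordinate from each pair, yielding $L=n/2$ independent unbiased Bernoullis whose subsum $S$ satisfies $\sup_s\Pr[S=s]=\Theta(1/\sqrt{n})$. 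Yet $|f(X)|=S+(n/2-S)=n/2$ deterministically, so $\Pr[|f(X)|=k]=1$ and your first inequality gives nothing. All matched bits here have maximal variance, so your dichotomy places this example squarely in the generic case, not the degenerate one---but it is your support-size inequality, not anticoncentration, that actually handles it. Your case split is therefore not aligned with which of the two TV bounds is effective.

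This is exactly why the paper works with non-connected \emph{neighborhoods} $N(i)$ rather than non-connected output bits. If the $N(i_\ell)$'s are pairwise non-connected, then conditioning on inputs outside $\bigcup_\ell I(i_\ell)$ genuinely fixes every output bit not in $\bigcup_\ell N(i_\ell)$, and what remains is an independent sum of the block Hamming weights $\Delta_\ell=\sum_{i\in N(i_\ell)}f_i$. The paper then shows (via a resampling argument, \Cref{clm:lem:type-2_single_1}) that if a neighborhood's marginal is close to $\gamma$-biased, this block sum is still nonconstant after the conditioning, which is what feeds the Littlewood--Offord bound. The \textsf{Type-1}/\textsf{Type-2} dichotomy in \Cref{prop:single_slice_more_local} plays the role of your generic/degenerate split, but at the neighborhood level. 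Producing many small non-connected neighborhoods after deleting few inputs is a substantially harder graph-elimination problem than your greedy vertex matching---it forces tower-type constants in $d$ (\Cref{cor:graph_elim_non-adj_neigh}, and see \Cref{app:tightness_neigh} for its necessity)---and this is where most of the paper's technical effort goes.
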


We remark that the above theorems hold in a stronger setting where $f$ is fed with some arbitrary binary\footnote{In fact, it works even in the case where the product distribution is not binary, as long as the alphabet is a constant (or slightly superconstant). This will be clear in the proof, and we will discuss it in \Cref{pg:general_input_dist}.} product distribution as an input.
Taking advantage of the fact that the input is actually unbiased coins, we prove the following bound.

\begin{theorem}[Consequence of \Cref{thm:locality_single_non-dyadic}]\label{thm:informal_single_non-dyadic}
Let $f\colon\bin^m\to\bin^n$ be a $d$-local function, where $n$ is a multiple of $3$.
Then $f(\Ucal^m)$ is $\pbra{1-\exp\cbra{-n\cdot2^{-O(d^2)}}}$-far from $\Dcal_{n/3}$.
\end{theorem}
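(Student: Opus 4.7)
My strategy rests on the observation that, because each output bit $f_i$ depends on at most $d$ unbiased coins, its marginal $p_i := \Pr_{X \sim \Ucal^m}[f_i(X) = 1]$ is a dyadic rational with denominator dividing $2^d$; since $1/3$ is non-dyadic, $\abs{p_i - 1/3} \ge 1/(3 \cdot 2^d)$ for every $i \in [n]$. By pigeonhole, at least $n/2$ of the output bits satisfy $p_i > 1/3$ (or, by symmetry, at least $n/2$ satisfy $p_i < 1/3$); assume the former and set $T^+ := \cbra{i \in [n] : p_i > 1/3}$. The plan is to find a large ``independent'' subset $M \subseteq T^+$ whose input-dependency sets $D_i \subseteq [m]$ (with $\abs{D_i} \le d$) are pairwise disjoint, so that $(f_i)_{i \in M}$ become jointly independent Bernoullis with biases uniformly above $1/3 + 1/(3 \cdot 2^d)$, and then apply concentration to exhibit a single event that sharply distinguishes the two distributions.

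To obtain $M$ of size $\Omega(n / 2^{O(d^2)})$, I would condition on the values of the ``heavy'' coins: call a coin $j \in [m]$ heavy if its fanout $t_j := \abs{\cbra{i : j \in D_i}}$ exceeds a threshold $K := 2^{\Theta(d)}$. Since $\sum_j t_j = \sum_i \abs{D_i} \le dn$, there are at most $dn/K$ heavy coins. After conditioning on their values, every remaining output bit depends only on light coins of fanout $\le K$, so the conflict graph on $T^+$ has maximum degree $\le dK$, and a greedy independent set yields $\ge n/(2(dK + 1)) = n/2^{O(d)}$ output bits with pairwise disjoint remaining dependencies. Because one must ultimately absorb a union bound over the $2^{O(dn/K)}$ heavy-coin conditionings, choosing $K = 2^{\Theta(d)}$ balances the two quantities and leaves an effective matching size $\abs{M} = n/2^{O(d^2)}$.

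For the concentration step, $S := \sum_{i \in M} f_i(X)$ is, conditional on the heavy coins, a sum of $\abs{M}$ independent Bernoullis with biases $p_i \ge 1/3 + 1/(3 \cdot 2^d)$, so $\E[S] \ge \abs{M}/3 + \abs{M}/(3 \cdot 2^d)$ and Hoeffding's inequality gives $\Pr\sbra{S \le \abs{M}/3 + \abs{M}/(6 \cdot 2^d)} \le \exp\pbra{-\Omega(\abs{M}/2^{2d})} = \exp\pbra{-n / 2^{O(d^2)}}$. Under $\Dcal_{n/3}$, meanwhile, $\sum_{i \in M} y_i$ is hypergeometric with mean $\abs{M}/3$, so standard tail bounds yield $\Pr_{y \sim \Dcal_{n/3}}\sbra{\sum_{i \in M} y_i > \abs{M}/3 + \abs{M}/(6 \cdot 2^d)} \le \exp\pbra{-n / 2^{O(d^2)}}$. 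Taking the test event $A := \cbra{y \in \bin^n : \sum_{i \in M} y_i > \abs{M}/3 + \abs{M}/(6 \cdot 2^d)}$, the difference of the two probabilities gives $\tvdist{f(\Ucal^m) - \Dcal_{n/3}} \ge \Pr_{f(\Ucal^m)}[A] - \Pr_{\Dcal_{n/3}}[A] \ge 1 - \exp\pbra{-n \cdot 2^{-O(d^2)}}$, as required.

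The main obstacle is that the test event $A$ must be a subset of $\bin^n$, yet the natural choice of the matching $M$ depends on the heavy-coin conditioning. The cleanest fix is either to choose $M$ from the deterministic bipartite structure of $f$ alone (so that $A$ is unambiguously defined on the output space) or to let $A$ be existential over a small family of candidate matchings indexed by conditionings, paying the extra union-bound factor inside the exponent. Correctly balancing the $2^{O(dn/K)}$ conditioning count against the per-conditioning concentration rate $\exp\pbra{-\abs{M}/2^{O(d)}}$ is what drives the $2^{O(d^2)}$ in the exponent rather than $2^{O(d)}$, and making this quantitative in the presence of heavy-fanout coins is the technical heart of the argument.
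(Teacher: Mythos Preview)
Your overall framework is the same as the paper's, but the crucial graph-elimination step is wrong, and no choice of a single threshold $K$ can repair it. With one threshold you get at most $H \le dn/K$ heavy coins to condition on, and a greedy independent set of size $|M| \ge \Omega(n/(dK))$ among the surviving outputs. Notice that the ratio $H/|M| = \Theta(d^2)$ is \emph{independent of $K$}. For the union bound $2^{H}$ over conditionings to be absorbed by the per-conditioning Hoeffding rate $\exp(-\Omega(|M|\cdot 2^{-2d}))$, you need $H \lesssim |M|\cdot 2^{-2d}$, i.e.\ $H/|M|\lesssim 2^{-2d}$. Since your ratio is stuck at $\Theta(d^2)$, the two exponentials never balance: for every $K$ you have
\[
2^{H}\cdot\exp\!\bigl(-c\,|M|\,2^{-2d}\bigr)
\;=\;
\exp\!\Bigl(\Theta\bigl(\tfrac{n}{dK}\bigr)\bigl(d^2\ln 2 - c\,2^{-2d}\bigr)\Bigr)
\;\to\;\infty,
\]
so the claimed ``effective matching size $n/2^{O(d^2)}$'' does not follow. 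The paper's \Cref{cor:non-adj_vtx_new} (used via \Cref{prop:biased_structure}) gets around exactly this obstruction by a multi-scale elimination: it iterates over a geometric sequence of thresholds and argues via a harmonic-sum edge count that after removing a set $S$ with $|S|\le r/\beta$ (here $\beta=\Theta(2^{2d})$) one obtains $r\ge n/(d\beta)^{O(d)}=n/2^{O(d^2)}$ non-connected outputs. That iterative argument is the technical heart you are missing; the paper even flags your single-threshold idea as the ``naive attempt'' in the overview and shows it is tight only after the iteration (see also \Cref{app:tightness_vtx}).

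A secondary point: your set $T^+=\{i:p_i>1/3\}$ is defined from the unconditioned marginals, but after conditioning on the heavy coins the biases $p_i^\rho$ can switch sides, so $M\subseteq T^+$ need not consist of coordinates with bias $>1/3$ under $f_\rho$. Your ``option (b)'' (one test event $A_\rho$ per conditioning, combined by a union bound) is exactly the paper's \Cref{lem:tvdist_after_conditioning}; the paper further sidesteps the sign issue entirely by choosing a per-coordinate event $\Ecal_i$ in \Cref{lem:tvdist_after_product} rather than a global sum, and compares to $\Dcal_{n/3}$ through the proxy $\Ucal_{1/3}^n$ via the pointwise bound $\Ucal_{1/3}^n(x)\ge \Dcal_{n/3}(x)/\sqrt{2n}$ (\Cref{prop:single_non-dyadic_more_local}), which is cleaner than invoking hypergeometric tails directly.
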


A result (with an adaptive version) giving the bound $1-\exp\cbra{-n^{1-\Omega(1)} \cdot 2^{-d}}$ appears in \cite{viola2023new} (see \Cref{fn:non-dyadic-lit}), which for sufficiently large $d$ eclipses \Cref{thm:informal_single_non-dyadic} (as well as \Cref{thm:informal_single} and \Cref{thm:informal_single_1} when $k/n$ is non-dyadic). Given these similar bounds, we view our primary contribution here to be the generality of our statements, as they have no dyadic restrictions.

Towards lower bounds for the distribution $\Mcal_n$, we introduce the following notation.
Let $q$ be an integer and let $\Lambda\subseteq\Zbb/q\Zbb$ be a non-empty set, where $\Zbb/q\Zbb=\cbra{0,1,\ldots,q-1}$.
We define the distribution $\Dcal_{q,\Lambda}$ to be the uniform distribution over $x\in\bin^n$ conditioned on $(x_1+\cdots+x_n)\bmod q\in\Lambda$.

\begin{theorem}[Consequence of \Cref{thm:mod_slice}]\label{thm:informal_mod}
Let $f\colon\bin^m\to\bin^n$ be a $d$-local function.
Let $3\le q\le\sqrt{n/\log^*(n)}$ be an integer, and let $\Lambda\subseteq\Zbb/q\Zbb$ be a non-empty set.
If $n$ is sufficiently large and $d\le\log^*(\log^*(n))/20$, then $f(\Ucal^m)$ and $\Dcal_{q,\Lambda}$ have distance at least
$$
1-\exp\cbra{-\frac n{q^2\cdot\log^*(n)}}-\begin{cases}
|\Lambda|/q & q\text{ is odd,}\\
2\cdot\max\cbra{|\Lambda_\textsf{even}|,|\Lambda_\textsf{odd}|}/q & q\text{ is even,}
\end{cases}
$$
where $\Lambda_\textsf{even}$ is the set of even numbers in $\Lambda$ and  $\Lambda_\textsf{odd}$ is the set of odd numbers in $\Lambda$.
\end{theorem}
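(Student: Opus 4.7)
The plan is to bound $\Pr\sbra{W\bmod q\in\Lambda}$ from above, where $W\coloneqq\wt(f(\Ucal^m))$, since $\Dcal_{q,\Lambda}$ is supported on $\cbra{x\in\bin^n:\wt(x)\bmod q\in\Lambda}$ and so the total variation distance is at least $1-\Pr\sbra{W\bmod q\in\Lambda}$. Fourier-expand the indicator over the characters $\omega_\ell\coloneqq e^{2\pi i\ell/q}$ of $\Zbb/q\Zbb$:
\[
\Pr\sbra{W\bmod q\in\Lambda}\;=\;\frac1q\sum_{\ell=0}^{q-1}\pbra{\sum_{s\in\Lambda}\omega_\ell^{-s}}\E\sbra{\omega_\ell^W}.
\]
The ``trivial'' terms are those with $\omega_\ell\in\cbra{\pm1}$: the $\ell=0$ summand contributes exactly $|\Lambda|/q$, and when $q$ is even the $\ell=q/2$ summand contributes $\E[(-1)^W]\cdot(|\Lambda_\textsf{even}|-|\Lambda_\textsf{odd}|)/q$. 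Together these are bounded by $|\Lambda|/q$ for odd $q$ and by $2\max\cbra{|\Lambda_\textsf{even}|,|\Lambda_\textsf{odd}|}/q$ for even $q$ (saturated when $\Lambda$ sits in one parity class), matching the stated main terms.

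The remaining and main technical task is to bound $\abs{\E\sbra{\omega^W}}$ for nontrivial $\omega\in\cbra{\omega_\ell}$ with $\omega\ne\pm1$. My approach is to reuse the structural engine underlying \Cref{thm:locality_single_hamming_slice}: for any $d$-local $f$ with $d\le\log^*(\log^*(n))/20$, it produces (with positive probability over a partial input restriction $\rho$) a set $I\subseteq[n]$ of size $\Omega(n/\log^*(n))$ of output coordinates that become mutually independent functions of pairwise disjoint blocks of the remaining inputs, each $\cbra{0,1}$-valued with $\Pr\sbra{f_i=1\mid\rho}\in[c,1-c]$ for some absolute constant $c>0$. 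Using $|\omega|=1$ to absorb the outputs outside $I$ and independence to factor along $I$:
\[
\abs{\E\sbra{\omega^W}}\;\le\;\E_\rho\sbra{\prod_{i\in I}\abs{\E\sbra{\omega^{f_i}\mid\rho}}}.
\]
Setting $p_i\coloneqq\Pr\sbra{f_i=1\mid\rho}$, a direct computation yields $\abs{\E\sbra{\omega^{f_i}\mid\rho}}^2=1-2p_i(1-p_i)(1-\Re\omega)$, and since $\omega$ is a $q$-th root of unity avoiding $\pm1$, $1-\Re\omega\ge1-\cos(2\pi/q)=\Omega(1/q^2)$. Each factor is thus at most $1-\Omega(1/q^2)$, and multiplying yields
\[
\abs{\E\sbra{\omega^W}}\;\le\;\pbra{1-\Omega(1/q^2)}^{\Omega(n/\log^*(n))}\;\le\;\exp\cbra{-\Omega\pbra{n/(q^2\log^*(n))}}.
\]

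Summing this bound over the at most $q-2$ nontrivial characters and using the hypothesis $q\le\sqrt{n/\log^*(n)}$ to absorb the extra factor of $q$ into the exponent produces the claimed error term $\exp\cbra{-n/(q^2\log^*(n))}$; combining with the trivial-term bound completes the proof. The main obstacle is the structural lemma extracting $\Omega(n/\log^*(n))$ mutually independent, well-balanced output bits from a sampler at iterated-logarithmic locality---this is the technical core already developed for \Cref{thm:locality_single_hamming_slice}, and the locality budget $d\le\log^*(\log^*(n))/20$ is precisely the regime in which that extraction succeeds.
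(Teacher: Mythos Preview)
Your Fourier decomposition of $\Pr[W\bmod q\in\Lambda]$ and the treatment of the trivial characters $\omega\in\{\pm1\}$ are correct and indeed match the paper's use of the local limit theorem (\Cref{lem:mod_llt}, \Cref{thm:mod_llt}). The gap is in the structural claim you invoke. The machinery behind \Cref{thm:locality_single_hamming_slice} does \emph{not} produce $\Omega(n/\log^*(n))$ mutually independent output bits with bias in $[c,1-c]$; no such lemma exists in the paper, and it is false as stated. The constant function $f\equiv 0^n$ is $0$-local, hence $d$-local for every $d$, yet has no balanced output bit, and your argument would yield only the trivial bound $\Pr[W\bmod q\in\Lambda]\le 1$. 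More generally, nothing prevents every output bit of a $d$-local sampler from being heavily biased (e.g.\ $f_i=x_1\wedge\cdots\wedge x_d$), so the ``well-balanced'' hypothesis cannot be guaranteed by any graph-elimination step alone.

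There is a second, independent gap: even granting many pairwise non-connected output bits $I$, your factorization $|\E[\omega^W\mid\rho]|\le\prod_{i\in I}|\E[\omega^{f_i}\mid\rho]|$ is not valid, because the outputs \emph{outside} $I$ typically still depend on the unfixed inputs of the bits in $I$ and hence are correlated with $W_I$; you cannot ``absorb'' them via $|\omega|=1$. This is precisely why the paper works with non-connected \emph{neighborhoods} $N(1),\dots,N(r)$ rather than non-connected bits: after conditioning on all inputs outside $I(1)\cup\cdots\cup I(r)$, the outputs outside $\bigcup_\ell N(\ell)$ are genuinely fixed and the neighborhood sums $\Delta_\ell=\sum_{j\in N(\ell)}X_j$ are independent (\Cref{lem:type-2_mod}). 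The paper then handles the possibility that some $\Delta_\ell$ is constant mod $q$ via the \textsf{Type-1}/\textsf{Type-2} dichotomy: neighborhoods whose marginal is far from uniform are dealt with by a direct distance argument (\Cref{lem:type-1_mod}) that does \emph{not} go through $\Pr[W\bmod q\in\Lambda]$, while neighborhoods close to uniform provably have $\Delta_\ell$ non-constant mod $q$ (\Cref{clm:lem:type-2_mod_1}), enabling the character bound you want. Your sketch collapses this dichotomy into a single unjustified structural assumption.
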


Taking $q=p$ and $\Lambda=\cbra{c\in\Zbb/q\Zbb\colon c\ge p/2}$, we recover $\Dcal_{q,\Lambda}=\Mcal_n$.
By setting $\Lambda=\cbra{0}$, \Cref{thm:informal_mod} also answers an open problem in \cite{filmus2023sampling} for locality lower bounds for the uniform distribution over binary strings of Hamming weight $0 \bmod q$.

We highlight that \Cref{thm:informal_mod} is essentially tight for \emph{any} choice of $q$ and $\Lambda$, and direct interested readers to the discussion in \Cref{sec:periodic_hamming_slice}.

\subsubsection{Data Structure Lower Bounds}\label{sec:dslb}

It is an active line of research to determine optimal bounds for \textit{succinct data structures}: 
structures that store their data close to the information theoretic limit, while still including sufficient redundancy to allow for efficient and meaningful queries \cite{gal2007cell, viola2012bit, viola2012complexity, lovett2011bounded, beck2012large, viola2020sampling, persiano2020tight, larsen2023optimal}. We will focus on the following setting of a binary alphabet and bit probes.

\begin{definition}[Dictionary Problem]\label{def:dictionary}
Let $\Hcal\subseteq\bin^n$ and $s,q\in\Nbb$.
The dictionary problem of $\Hcal$ with parameters $s$ and $q$ asks for a pair of algorithms $\Acal$ and $\Bcal$ such that the following holds:
\begin{itemize}
\item Given an arbitrary $a\in\Hcal$, $\Acal$ produces a data structure $\str_a\in\bin^s$.
\item Given access to $\str\in\bin^s$, for every query $i\in[n]$, $\Bcal$ produces an answer $b_i\in\bin$ with $q$ (adaptive / non-adaptive) bit probes (i.e., number of bits read) to $\str$.
\item When $\str=\str_a$, we have $b_i=a_i$ for all $i\in[n]$.
\end{itemize}
\end{definition}

We remark that this setting is static, in contrast to the dynamic setting where the data structure needs to support updates to the underlying input $a$.
As a weaker model, proving static lower bounds has traditionally been much more difficult than dynamic lower bounds.
The locality-distance trade-off provides a useful tool in establishing trade-offs between parameters in the static dictionary problem.

\begin{claim}[{\cite[Claim 1.8]{viola2012complexity}}]\label{clm:locality_to_ds}
Suppose we can solve the dictionary problem of $\Hcal\subseteq\bin^n$ with parameters $s,q$ and non-adaptive (resp., adaptive) queries.
Then there exists a $q$-local (resp., $(2^q-1)$-local) function $f\colon\bin^s\to\bin^n$ such that $f(\Ucal^s)$ is $(1-|\Hcal|/2^s)$-close to the uniform distribution over $\Hcal$.
\end{claim}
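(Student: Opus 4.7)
The plan is to explicitly construct the function $f$ from the pair $(\Acal,\Bcal)$ and then separately verify the two properties: the locality bound and the closeness bound.

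On input $\str\in\bin^s$, I define $f(\str)\coloneqq (b_1,b_2,\ldots,b_n)$, where $b_i$ is the answer that $\Bcal$ returns when given the query $i\in[n]$ and bit-probe access to $\str$. In the non-adaptive case, for each fixed $i$, the decoder $\Bcal$ reads a pre-determined set of at most $q$ bits of $\str$, so the $i$-th output bit is a function of at most $q$ coordinates of the input, giving $q$-locality. In the adaptive case, the decoder for query $i$ is described by a decision tree of depth $q$ over the coordinates of $\str$; this tree has at most $2^q-1$ internal nodes, and only the coordinates queried at these internal nodes can influence the output. Hence $b_i$ depends on at most $2^q-1$ coordinates of $\str$, yielding $(2^q-1)$-locality.

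For the distance bound, the correctness requirement of \Cref{def:dictionary} forces $a\mapsto\str_a$ to be injective on $\Hcal$: if $\str_a=\str_{a'}$, then $\Bcal$ must produce both $a$ and $a'$ from the same data structure, so $a=a'$. Let $S\coloneqq\cbra{\str_a\colon a\in\Hcal}\subseteq\bin^s$, so $|S|=|\Hcal|$ and, by correctness, $f(\str_a)=a$ for every $a\in\Hcal$. When $\str\sim\Ucal^s$ is conditioned on the event $\str\in S$, which has probability $|\Hcal|/2^s$, the string $\str$ is uniform on $S$, and thus $f(\str)$ is uniform on $\Hcal$ via the bijection $a\leftrightarrow\str_a$.

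Writing $\alpha=|\Hcal|/2^s$ and letting $\Ucal_\Hcal$ denote the uniform distribution on $\Hcal$, we can therefore decompose
\[
f(\Ucal^s)=\alpha\cdot\Ucal_\Hcal+(1-\alpha)\cdot\mu',
\]
where $\mu'$ is the distribution of $f(\str)$ conditioned on $\str\notin S$. A direct total variation computation then yields
\[
\tvdist{f(\Ucal^s)-\Ucal_\Hcal}=(1-\alpha)\cdot\tvdist{\mu'-\Ucal_\Hcal}\le 1-\alpha=1-\frac{|\Hcal|}{2^s},
\]
which is exactly the claimed closeness. There is no real obstacle here; the only subtle point is the adaptive-to-local translation, where one must remember that a depth-$q$ decision tree only ever touches $2^q-1$ distinct coordinates, regardless of which root-to-leaf path is actually taken on a particular input.
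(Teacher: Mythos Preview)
Your proof is correct and is precisely the standard argument. Note, however, that the paper does not supply its own proof of this claim: it is stated with a citation to \cite[Claim 1.8]{viola2012complexity} and used as a black box, so there is no in-paper proof to compare against. Your write-up matches what one finds in Viola's original, including the observation that an adaptive depth-$q$ decision tree touches at most $2^q-1$ coordinates.
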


Combining \Cref{clm:locality_to_ds} with our results, we obtain a number of lower bounds. Here we highlight the most interesting ones, and interested readers are encouraged to instantiate more on their own.\footnote{It seems plausible that one could obtain similar results to \Cref{cor:dslb_mod=0}, \Cref{cor:dslb_mod=01}, and \Cref{cor:dslb_n/2} using the ideas in \cite{viola2012complexity}; however, issues arise when trying to modify their proofs, and we do not pursue this direction further. For example, their proofs rely on $\Dcal_{n/2}$ and $\Mcal_n$ failing a particular test $T_0$, which is no longer true when one of their assumptions ($\delta < 1$) is lifted, as is required to improve the storage dependence.}

\begin{corollary}[Via \Cref{thm:informal_mod}]\label{cor:dslb_mod=0}
Let $\Hcal=\cbra{a\in\bin^n\colon(a_1+\cdots+a_n)\bmod r=0}$ where $r\ge3$ is an odd constant.
The dictionary problem of $\Hcal$ needs either $s=n$ bits of storage or $q=\omega(1)$ bit probes per query.
\end{corollary}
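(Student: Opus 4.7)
The plan is to argue by contradiction, combining \Cref{clm:locality_to_ds} with \Cref{thm:informal_mod}. I would assume for contradiction that $\Hcal$ admits a dictionary with $s\le n-1$ bits of storage and $q = O(1)$ bit probes per query (adaptive or non-adaptive). Then \Cref{clm:locality_to_ds} hands me a $d$-local function $f\colon\bin^s\to\bin^n$ with constant locality ($d=q$ in the non-adaptive case or $d=2^q-1$ in the adaptive case) such that $f(\Ucal^s)$ is $(1-|\Hcal|/2^s)$-close to the uniform distribution over $\Hcal$.

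Next I would observe that the uniform distribution over $\Hcal$ is exactly $\Dcal_{r,\cbra{0}}$ in the notation of \Cref{thm:informal_mod}. Invoking that theorem with modulus $r$ (odd, constant, hence $3\le r\le\sqrt{n/\log^*(n)}$ for large $n$) and $\Lambda=\cbra{0}$ yields that $f(\Ucal^s)$ is $\pbra{1-\exp\cbra{-n/(r^2\log^*(n))}-1/r}$-far from $\Dcal_{r,\cbra{0}}$. The locality hypothesis $d\le\log^*(\log^*(n))/20$ is comfortably satisfied for any constant $d$ and sufficiently large $n$.

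To close the loop, I would show the closeness and farness bounds are incompatible whenever $s\le n-1$. A standard roots-of-unity computation gives
\[
|\Hcal|=\frac{1}{r}\sum_{j=0}^{r-1}\pbra{1+e^{2\pi ij/r}}^n = \frac{2^n}{r}+o(2^n),
\]
since $\abs{1+e^{2\pi ij/r}}<2$ for every $j\ne 0$ when $r\ge 3$. Therefore $|\Hcal|/2^s\ge 2|\Hcal|/2^n = (1+o(1))\cdot 2/r$, which strictly exceeds $1/r+\exp\cbra{-n/(r^2\log^*(n))}$ for large $n$ because the gap $2/r-1/r=1/r$ is a positive constant while the exponential error tends to $0$. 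This contradicts the closeness bound, giving the desired dichotomy.

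The main obstacle is not technical but rather bookkeeping: one must verify that all parameter ranges of \Cref{thm:informal_mod} are met, that the adaptive-to-local conversion in \Cref{clm:locality_to_ds} still yields constant locality (it does, since $2^q-1$ is constant when $q$ is), and that the $o(1)$ error in $|\Hcal|/2^n\approx 1/r$ is dominated by the constant $1/r$ gap. None of these steps is delicate, since $r$ and $d$ are constants and the relevant error terms vanish with $n$.
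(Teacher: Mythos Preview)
Your argument is correct and matches the paper's intended approach: the paper does not spell out a proof, merely labeling the corollary ``Via \Cref{thm:informal_mod}'' and remarking that it follows by combining \Cref{clm:locality_to_ds} with the locality lower bound. Your contradiction argument, the roots-of-unity count $|\Hcal|=2^n/r+o(2^n)$, and the comparison $|\Hcal|/2^s\ge 2/r+o(1)>1/r+o(1)$ are exactly the bookkeeping one needs to fill in.
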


Note that the information theoretic limit is $\ceilbra{\log(|\Hcal|)}=n-\floorbra{\log(r)}$. On the other hand, the trivial data structure that simply stores $a$ in $n$ bits can support every membership query by a single bit probe.
Hence \Cref{cor:dslb_mod=0} shows that the \emph{only} efficient data structure using constant probes is the trivial one. We can obtain a similar sharp trade-off for even $r$ by adding a modulus to reflect \Cref{thm:informal_mod}'s different quantitative behavior for odd and even moduli. 

\begin{corollary}[Via \Cref{thm:informal_mod}]\label{cor:dslb_mod=01}
Let $\Hcal=\cbra{a\in\bin^n\colon(a_1+\cdots+a_n)\bmod r\in\cbra{0,1}}$ where $r\ge3$ is an even constant.
The dictionary problem of $\Hcal$ needs either $s=n$ bits of storage or $q=\omega(1)$ bit probes per query.
\end{corollary}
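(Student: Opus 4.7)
The plan is to derive a contradiction by combining Viola's \Cref{clm:locality_to_ds} with our \Cref{thm:informal_mod}. Suppose the dictionary problem for $\Hcal$ admits a solution with storage $s\le n-1$ and $q=O(1)$ adaptive bit probes. By \Cref{clm:locality_to_ds}, there is a $(2^q-1)$-local function $f\colon\bin^s\to\bin^n$ such that $f(\Ucal^s)$ is $(1-|\Hcal|/2^s)$-close to the uniform distribution on $\Hcal$, which is precisely $\Dcal_{r,\cbra{0,1}}$. Since $q=O(1)$, the locality $d=2^q-1$ is a constant, hence $d\le\log^*(\log^*(n))/20$ for all sufficiently large $n$.

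Next, I apply \Cref{thm:informal_mod} with the modulus parameter (there called $q$) set to $r$ and $\Lambda=\cbra{0,1}$. The hypothesis $r\le\sqrt{n/\log^*(n)}$ is immediate since $r$ is a constant. Because $r$ is even with $\Lambda_\textsf{even}=\cbra{0}$ and $\Lambda_\textsf{odd}=\cbra{1}$, the parity term in \Cref{thm:informal_mod} evaluates to $2\cdot\max\cbra{|\Lambda_\textsf{even}|,|\Lambda_\textsf{odd}|}/r=2/r$, yielding
$$
\tvdist{f(\Ucal^s)-\Dcal_{r,\cbra{0,1}}}\ge 1-\exp\cbra{-n/(r^2\log^*(n))}-2/r.
$$
Comparing against the closeness bound $1-|\Hcal|/2^s$ gives $|\Hcal|/2^s\le\exp\cbra{-n/(r^2\log^*(n))}+2/r$.

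A short Fourier calculation over $\Zbb/r\Zbb$ shows $|\Hcal|/2^n=2/r\pm\exp\cbra{-\Omega(n)}$: expanding the indicator of $\cbra{\sum_i a_i\bmod r\in\cbra{0,1}}$ in the characters of $\Zbb/r\Zbb$, only the $j=0$ character contributes the main mass $2/r$, the $j=r/2$ character vanishes because $(1+\omega^{r/2})/2=0$ for $\omega=e^{2\pi i/r}$, and all other characters yield contributions of magnitude $\abs{(1+\omega^j)/2}^n$ which decay exponentially in $n$. Consequently $s\le n-1$ forces $|\Hcal|/2^s\ge 4/r-\exp\cbra{-\Omega(n)}$, producing the contradiction $2/r\le\exp\cbra{-n/(r^2\log^*(n))}+\exp\cbra{-\Omega(n)}$ for constant $r$ and large $n$. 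The main subtlety (rather than obstacle) is the choice $\Lambda=\cbra{0,1}$: placing one element in each parity class drives the parity term down to $2/r$, which exactly matches the density $|\Hcal|/2^n\approx 2/r$, so that even a single saved bit of storage widens the gap enough to contradict the farness bound. Any single-parity $\Lambda$ would leave the parity term too large to exploit this constant-factor doubling, which is precisely why the even case is phrased with $\Lambda=\cbra{0,1}$ here instead of $\Lambda=\cbra{0}$ as in \Cref{cor:dslb_mod=0}.
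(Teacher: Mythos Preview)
Your proof is correct and follows exactly the route the paper intends: combine \Cref{clm:locality_to_ds} with \Cref{thm:informal_mod}, compute $|\Hcal|/2^n\approx 2/r$ via Fourier analysis over $\Zbb/r\Zbb$, and observe that saving even one bit of storage doubles $|\Hcal|/2^s$ past the theorem's $2/r$ threshold. The paper does not spell out a proof for this corollary, but your argument is precisely the derivation it has in mind, including the correct identification of why $\Lambda=\{0,1\}$ (balancing the parity classes) is needed for even $r$.
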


In the case of $\Hcal=\cbra{a\in\bin^n\colon a_1+\cdots+a_n=n/k}$ where $n/k$ is non-dyadic, the results of \cite{viola2012bit} are superior to those we can obtain via our techniques. Specifically, they show that the dictionary problem of $\Hcal$ with $q$ (adaptive) queries requires at least $\log(|\Hcal|) + n2^{-O(q)} - \log(n)$ bits of storage. However, our generality allows us to prove trade-offs in the dyadic setting.

\begin{corollary}[Via \Cref{thm:informal_single_1}]\label{cor:dslb_n/2}
Let $\Hcal=\cbra{a\in\bin^n\colon a_1+\cdots+a_n=n/2}$ where $n$ is a multiple of two.
The dictionary problem of $\Hcal$ needs either $s=n-O(1)$ bits of storage or $q=\omega(1)$ bit probes per query.
\end{corollary}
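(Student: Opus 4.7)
The plan is to combine \Cref{clm:locality_to_ds} with \Cref{thm:informal_single_1} via a standard contrapositive argument, and then reconcile the resulting ``closeness versus farness'' constraint using the asymptotics of the central binomial coefficient.

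First, I would assume for contradiction that the dictionary problem can be solved with $s < n - C$ bits of storage and $q \le Q$ (adaptive) bit probes per query, for some constants $C, Q$ to be chosen later. Applying \Cref{clm:locality_to_ds}, this produces a $d$-local function $f\colon\bin^s\to\bin^n$ with $d \le 2^Q - 1$ (a constant) such that $f(\Ucal^s)$ is $(1-|\Hcal|/2^s)$-close to the uniform distribution over $\Hcal$, which is exactly $\Dcal_{n/2}$. Since both $d$ and $k/n = 1/2$ are constants, \Cref{thm:informal_single_1} applies to give a constant $c > 0$ such that $f(\Ucal^s)$ is $(1 - c/\sqrt{n})$-far from $\Dcal_{n/2}$ for all sufficiently large $n$.

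The two bounds together force
\[
1 - \frac{|\Hcal|}{2^s} \;\ge\; 1 - \frac{c}{\sqrt{n}},
\qquad\text{equivalently}\qquad
2^s \;\ge\; \frac{\sqrt{n}}{c}\cdot |\Hcal|.
\]
By Stirling's approximation, $|\Hcal| = \binom{n}{n/2} = \Theta(2^n/\sqrt{n})$, so the inequality becomes $2^s \ge \Omega(2^n)$, i.e., $s \ge n - O(1)$. Choosing the constant $C$ larger than this hidden $O(1)$ yields the desired contradiction.

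The only ``obstacle'' is cosmetic: one must verify that \Cref{thm:informal_single_1}'s hypotheses are met for the function produced by \Cref{clm:locality_to_ds} in the adaptive regime, which amounts to noting that $q = O(1)$ implies $d = 2^q - 1 = O(1)$. Once this is checked, the remaining argument is a short calculation, and no additional ideas beyond the two cited statements are needed. The sharpness of the ``$n - O(1)$'' conclusion is also immediate from this proof, since the loss is precisely the $\sqrt{n}$ gap between the information-theoretic limit $\log|\Hcal| = n - \tfrac{1}{2}\log n - O(1)$ and the farness bound $1 - O(1/\sqrt{n})$.
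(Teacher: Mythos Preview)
Your proposal is correct and follows precisely the route the paper intends: the corollary is stated as ``Via \Cref{thm:informal_single_1}'' and the paper explicitly says these data-structure bounds are obtained by ``combining \Cref{clm:locality_to_ds} with our results,'' leaving the instantiation to the reader. Your argument---pass from a constant number of (adaptive) probes to constant locality via \Cref{clm:locality_to_ds}, invoke the $1-O(1/\sqrt n)$ farness bound of \Cref{thm:informal_single_1}, and compare against the $(1-|\Hcal|/2^s)$-closeness using $|\Hcal|=\Theta(2^n/\sqrt n)$---is exactly that instantiation.
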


The previous best result in the setting of \Cref{cor:dslb_n/2} is \cite{viola2012complexity}, which gives an $s=n-0.01\log(n)$ versus $q=\Omega(\log(n))$ trade-off.
Our \Cref{cor:dslb_n/2} improves the storage bound to optimal (ignoring the hidden constant in $O(1)$) at the cost of a worse bit probe bound.
It remains an interesting question whether one can get the best of the two results that further improves our bit probe bound to $\Omega(\log(n))$ without weakening the $n-O(1)$ storage bound.

Meanwhile we note that it is impossible to get an $n$-vs-$\omega(1)$ trade-off as in \Cref{cor:dslb_mod=0} and \Cref{cor:dslb_mod=01}.
Here is a simple data structure of $s=(n-1)$-bit of storage and using $q=2$ bit probes per query for $\Hcal$ in \Cref{cor:dslb_n/2}: for each $i\in[n-1]$, store the prefix sum $\str_a[i]=a_1\oplus\cdots\oplus a_i$. For $i=n$, the prefix sum is precisely $n/2\bmod2$ that we do not need to store. Then every query $i$ can be answered by the parity of the prefix sum up to $i$ and $i-1$. It would be interesting to determine if a similar structure exists with fewer than $(n-1)$-bits of storage while maintaining $O(1)$ bit probes per query.

The results compared here are by no means a complete list of data structure lower bounds for the dictionary problem. 
In particular, there are many results on cell probe lower bounds (e.g., \cite{persiano2020tight}), the dynamic dictionary setting (e.g., \cite{li2023tight}), and other natural choices of $\Hcal$ (e.g., \cite{viola2012bit}).
We refer interested readers to \cite{viola2012complexity} for a detailed discussion.

\subsubsection{Input-Independent Quantum-Classical Separation}\label{subsec:qc_sep}

A driving research direction in quantum computing is exhibiting separations between quantum and classical complexity. In the theme of our paper, we consider the problem of devising distributions that quantum circuits can efficiently sample, whereas classical circuits cannot.
Note that such a separation does not rely on a particular input. Instead, the quantum circuit is fed with a fixed initial state (ideally $\ket{0}^n$), and each qubit is measured in the computational basis at the end to produce the desired distribution over $\bin^n$.
Meanwhile, a classical circuit, which has $n$ output bits, has access to unbiased coins and aims to reproduce the distribution.

The problem of establishing such an \emph{input-independent} separation between circuit classes $\qnc$ and $\nc$ was first proposed by Bravyi, Gosset, and K{\"o}nig \cite{bravyi2018quantum}, and was later found to be connected to the complexity of quantum states \cite{watts2023unconditional} as well.
Using ideas from \cite{viola2012complexity}, Watts and Parham \cite{watts2023unconditional} gave a family of distributions over $\bin^n$ that constant-depth quantum circuits can produce within distance $1/6+o(1)$, but any $\nc$ circuit's output is at least $(1/2-o(1))$-far from, assuming the total number of random bits the $\nc$ circuit could use is $(1+o(1))\cdot n$.
The exact distributions are variants of the $\Mcal_n$ distribution above. 

However, an ideal separation result should have no restriction on the number of classical random bits, as well as a maximal quantum-classical distance gap of $1-o(1)$ or even $1-e^{-\Omega(n)}$.
Towards this goal, \cite{filmus2023sampling} suggested determining locality lower bounds for $\Mcal_n$ and related distributions.
Without diving into detail, we remark that our \Cref{thm:informal_mod} resolves this open problem, and we can lift the domain size assumption in \cite[Theorem 5]{watts2023unconditional} while still preserving the separation.

Aside from directly improving previous analysis, we note that there is a simpler distribution that produces an optimal separation. Let $\Ucal_{1/3}^n$ be the $1/3$-biased distribution over $n$ bits, where each bit is independently set as $1$ with probability $1/3$.

\begin{theorem}[Consequence of \Cref{thm:locality_biased}]\label{thm:informal_biased}
Let $f\colon\bin^m\to\bin^n$ be a $d$-local function.
Then $f(\Ucal^m)$ is $\pbra{1-\exp\cbra{-n\cdot2^{-O(d^2)}}}$-far from $\Ucal_{1/3}^n$.
\end{theorem}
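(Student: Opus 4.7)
The plan is induction on the locality $d$, exploiting that $1/3$ is non-dyadic. For any $d$-local $f$, each output coordinate $f_i$ is a function of at most $d$ uniform bits, so its marginal $p_i\coloneqq\Pr[f_i(X)=1]$ is a dyadic rational with denominator dividing $2^d$; since $\gcd(3,2^d)=1$, we have $|p_i-1/3|\ge 1/(3\cdot 2^d)$. I plan to amplify this per-bit gap into a total-variation gap of $1-\exp(-n\cdot 2^{-O(d^2)})$ by isolating many mutually independent output bits and invoking a Bhattacharyya bound, handling high-degree inputs via recursive conditioning.

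\medskip

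Fix a threshold $T=2^{\Theta(d^2)}$ and process the dependency hypergraph (vertices are inputs, hyperedges are $\{I_i\}_{i\in[n]}$) as follows: while some remaining input $j^*$ has degree $\ge T$, designate $j^*$ a \emph{pivot}, set aside the $\ge T$ outputs depending on $j^*$ as a group, and remove both from further consideration. Let $J$ be the accumulated pivots (so $|J|\le nd/T$) and let $\mathcal R\subseteq[n]$ be the surviving outputs. In the subcase where the removed groups cover at least $n/2$ outputs, condition on $X_J$: the conditioned outputs are $(d-1)$-local on $\ge n/2$ coordinates, and the inductive bound combined with the mixture factor $2^{|J|/2}$ yields the target. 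In the complementary subcase, $|\mathcal R|\ge n/2$ and the residual hypergraph on $\mathcal R$ has maximum input-degree $<T$, so greedy matching yields a family $S\subseteq\mathcal R$ with $|S|\ge n/(2dT)=n\cdot 2^{-O(d^2)}$ and pairwise disjoint $\{I_i\}_{i\in S}$; because $I_i\cap J=\emptyset$ for $i\in\mathcal R$, this makes $(f_i(X))_{i\in S}$ an independent product of Bernoullis under $X\sim\Ucal^m$.

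\medskip

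In the matching subcase, the Bhattacharyya coefficient factorizes:
$$
BC\bigl(f(\Ucal^m)|_S,\,\Ucal_{1/3}^{|S|}\bigr) \;=\; \prod_{i\in S}BC\bigl(\mathrm{Bern}(p_i),\,\mathrm{Bern}(1/3)\bigr) \;\le\; \bigl(1-\Omega(4^{-d})\bigr)^{|S|} \;\le\; \exp\bigl(-n\cdot 2^{-O(d^2)}\bigr),
$$
where the per-bit gap follows from $|p_i-1/3|\ge 1/(3\cdot 2^d)$ by a short Hellinger Taylor expansion around $p=1/3$. Data processing (projection only decreases TV) together with $\text{TV}\ge 1-BC$ then gives the theorem. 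The base case $d=0$ is immediate since $f$ is constant and $\text{TV}\ge 1-(2/3)^n$. The hardest step is the mixture bookkeeping in the recursive subcase: the threshold $T=2^{\Theta(d^2)}$ must simultaneously absorb the conditioning overhead $2^{|J|/2}$, guarantee a matching of size $n\cdot 2^{-O(d^2)}$ in the complementary subcase, and preserve the exponent through the induction on $d$; verifying that a sufficiently large constant in $T$'s exponent achieves all three requires careful arithmetic, in the same spirit as the proof behind \Cref{thm:informal_single_non-dyadic}.
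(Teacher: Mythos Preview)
Your approach is correct and genuinely different from the paper's. The paper does \emph{not} induct on $d$; instead it proves a one-shot graph-elimination lemma (\Cref{cor:non-adj_vtx_new}): for a $d$-left-bounded bipartite graph, one can delete $|S|\le r/\beta$ right vertices to obtain $r\ge n/(O(d\beta))^{2d}$ pairwise non-connected left vertices. With $\beta\asymp 4^{d}$ (from $\err(1/3,d)\ge 2^{-d-2}$), this yields $r\ge n\cdot 2^{-O(d^2)}$ independent output bits after conditioning on $S$. The per-conditioning gap is amplified via a Hoeffding-style event construction (\Cref{lem:tvdist_after_product}), and the $2^{|S|}$ mixtures are handled by \Cref{lem:tvdist_after_conditioning}.

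Your route instead recurses on the locality: one round of threshold-based pivot removal either produces the independent block directly (via greedy, since the residual input degree is $<T$) or shaves one unit of locality on half the coordinates; the Bhattacharyya coefficient is a clean choice for the product step and for carrying the bound through mixtures (where indeed $BC(\sum_\rho 2^{-|J|}P_\rho,Q)\le 2^{|J|/2}\max_\rho BC(P_\rho,Q)$, and $BC$ only increases under projection, so the induction on $BC$ is self-contained). Two small remarks: first, your bound $|J|\le nd/T$ is loose---the disjoint-groups argument gives $|J|\le n/T$; second, be careful to keep the whole induction in $BC$ (or convert once at the very end), since mixing TV and $BC$ across steps costs square roots. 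The paper's approach has the advantage that the graph lemma is stated once and reused verbatim for the Hamming-slice bounds (\Cref{thm:locality_single_non-dyadic}); your approach is arguably more elementary (no harmonic-sum counting), but the recursion bookkeeping---choosing $T_d=2^{\Theta(d^2)}$ so that both $|J|\ln 2$ is dominated by $\psi_{d-1}n'/2$ and the matching size $n/(2dT_d)$ still beats $4^{d}$---must be done explicitly to pin down the constant in $O(d^2)$.
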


Observe that, starting with $\ket{0}^n$, a $\qnc$ circuit can perfectly simulate $\Ucal_{1/3}^n$ using just one layer of single-qubit gates each mapping $\ket0$ to $\sqrt{2/3}\ket0+\sqrt{1/3}\ket1$.
Thus we obtain the following ideal input-independent quantum-classical separation.

\begin{corollary}\label{cor:qc_sep}
There exists a distribution that $\qnc$ circuits of depth one can sample without error, but any $\nc$ circuit is $\pbra{1-\exp\cbra{-\Omega(n)}}$-far from.
\end{corollary}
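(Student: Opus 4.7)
The plan is to combine \Cref{thm:informal_biased} (the classical lower bound against $\Ucal_{1/3}^n$) with the elementary observation that $\Ucal_{1/3}^n$ admits a trivial depth-one quantum sampler, so both directions of the separation follow immediately.

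For the quantum side, I would describe the following explicit $\qnc$ circuit of depth one: start from the all-zero state $\ket{0}^{\otimes n}$, and in parallel apply to each qubit the single-qubit gate $U$ defined by $U\ket{0} = \sqrt{2/3}\ket{0} + \sqrt{1/3}\ket{1}$ (for instance, a $Y$-rotation by the appropriate angle). Since these $n$ gates act on disjoint qubits, the whole operation is a single layer, hence has depth one. A standard computational-basis measurement then yields $n$ independent bits, each equal to $1$ with probability exactly $1/3$, so the output distribution is precisely $\Ucal_{1/3}^n$, sampled without any error.

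For the classical side, I would note that any $\nc$ circuit has constant depth and $O(1)$ fan-in, so each output bit depends on at most $d$ input bits for some constant $d$ (one can take $d$ to be $2^{\text{depth}}$ with fan-in two). In particular, every $\nc$ circuit computes a $d$-local function $f\colon \bin^m \to \bin^n$ in the sense of \Cref{sec:intro}. Applying \Cref{thm:informal_biased} to this $f$ yields that $f(\Ucal^m)$ is $\pbra{1-\exp\cbra{-n\cdot 2^{-O(d^2)}}}$-far from $\Ucal_{1/3}^n$, and since $d = O(1)$, the exponent simplifies to $-\Omega(n)$, matching the claim.

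There is essentially no obstacle to overcome at this stage: all the difficulty has been absorbed into \Cref{thm:informal_biased}, and the corollary itself is a direct instantiation combined with a one-line quantum construction. The only thing worth being careful about is specifying the constant $d$ corresponding to an $\nc$ circuit (so the $O(d^2)$ in the exponent is genuinely $O(1)$), and confirming that the depth-one $\qnc$ simulation need not be approximate -- both of which are immediate.
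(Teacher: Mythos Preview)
Your proposal is correct and follows essentially the same approach as the paper: the paper also exhibits $\Ucal_{1/3}^n$ via a single layer of single-qubit gates mapping $\ket{0}$ to $\sqrt{2/3}\ket{0}+\sqrt{1/3}\ket{1}$, and invokes \Cref{thm:informal_biased} for the classical lower bound.
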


We suspect this corollary may be folklore, especially
after a reviewer pointed out that a bound similar to \Cref{thm:informal_biased} is implicit in \cite{viola2012bit, viola2023new}. (Specifically, one can obtain distance $1-\exp\cbra{-n^{1-\Omega(1)} \cdot 2^{-d}}$.) However, it does not seem to explicitly appear in the literature, so we hope our statement will be beneficial to future researchers.

\begin{remark}\label{rmk:qc_sep}
The quantum-classical separation result obtained in \Cref{cor:qc_sep} seems a bit dishonest, as it takes advantage of precision issues arising from the classical binary representation.
One may desire a separation where the quantum circuit is also restricted to ``binary operations'' to rule out distributions like $1/3$-biased.
One natural candidate is Clifford circuits where non-Clifford gates are not allowed. However, the sampling task there sometimes can be reduced to the search task with a constant depth overhead \cite[Section F]{grier2020interactive}, where the latter is trivial in the input-independent setting.
Hence one must be careful in formulating such a restriction on the quantum circuit.

A different way to compensate for the precision issue is to give $\nc$ circuits access to arbitrary binary product distributions. Then $\nc$ circuits can certainly generate $\Ucal_{1/3}^n$ by simply receiving $1/3$-biased coins. 
We remark that our proof of \Cref{thm:informal_mod} still holds in this setting.\footnote{\Cref{thm:informal_mod} works in the case where the input distribution is a product distribution with constant (or slightly superconstant) alphabet. This will be clear in the proof, and we will discuss it in \Cref{pg:general_input_dist}.}
Thus, even giving $\nc$ this extra power, we have a separation combining \Cref{thm:informal_mod} and \cite[Theorem 5]{watts2023unconditional}. 
One caveat here is that the $\qnc$ circuit needs to start with the $\textrm{GHZ}_n$ state. 
If it is forced with $\ket{0}^n$ as the initial state, one may want to prove locality lower bounds (without the domain assumption) for a more complicate distribution designed in \cite[Theorem 3]{watts2023unconditional}. 
Due to its similarity with the $\Mcal_n$ distribution, we believe our techniques can be used there, and we leave this as a future work.\footnote{In the new version of \cite{watts2023unconditional} and as a concurrent work, they updated an appendix section with an extension of their theorem by allowing the $\nc$ circuits to have access to (a bounded number of) biased i.i.d. Bernoulli random variables.}
\end{remark}

\subsection{Future Directions}\label{sec:future}
Beyond considering specific distributions, Filmus, Leigh, Riazanov, and Sokolov \cite{filmus2023sampling} conjectured a classification of when $\nc$ circuits can approximately sample $\Dcal_{\Lambda}$, where $\Dcal_\Lambda$ is the uniform distribution over binary strings with Hamming weights in $\Lambda$. They hypothesized that if $f$ is $O(1)$-local and $f(\Ucal^m)$ is $\eps$-close to $\Dcal_{\Lambda}$, then $f(\Ucal^m)$ is $O(\eps)$-close to $\Dcal_{\Lambda'}$ for $\Lambda'$ being one of the following:
\[
    \{0\}, \{n\}, \{0,n\}, \{0,2,4, \ldots\}, \{1,3,5,\ldots\}, [n].
\]
Our \Cref{thm:informal_single}, combined with their main results, rules out all the singleton $\Lambda'$ other than $\cbra0$ and $\cbra n$.
In addition, our \Cref{thm:informal_mod} rules out all the $q$-periodic $\Lambda'$ for $3\le q\le n^{1/2-o(1)}$.
With a number of new ideas, in an upcoming paper \cite{kaneLocally} we are able to resolve this conjecture (and a strengthening of it) affirmatively.

One question we are not able to resolve concerns the quantitative bounds derived. While our distance bounds are asymptotically optimal when locality is constant, the locality-distance trade-offs deteriorate quickly when locality becomes superconstant. We believe our trade-offs can be further improved, especially in light of the best known upper bounds (see \Cref{sec:upper_bounds}). However, in \Cref{app:tightness} we give examples to show the tightness of the parts in our analysis that create such inevitable blowup. This suggests that new ideas may be needed to get substantial improvements.

\paragraph*{Paper Organization.}
An overview of our proofs is given in \Cref{sec:overview}.
In \Cref{sec:prelim}, we define necessary notation and list standard inequalities.
In \Cref{sec:useful_lemmas}, we prove additional useful inequalities for total variation distance and bipartite graph structures.
In \Cref{sec:special}, we prove sharp lower bounds for specific distributions including biased distributions, uniform  strings of a fixed Hamming weight, and uniform strings of periodic Hamming weights.
Upper bound constructions are presented in \Cref{sec:upper_bounds}.
Missing proofs can be found in the appendices.
In addition, in \Cref{app:tightness}, we give examples to explain the barriers of improving our analysis to obtain better locality lower bounds.
\section{Proof Overview}\label{sec:overview}

Let $f$ be a $d$-local function with $n$ output bits, and let $\Dcal$ be a distribution over $\bin^n$.
The goal is to prove that $f$, fed with uniform inputs, cannot generate a distribution close to $\Dcal$.
The general recipe of establishing such a bound is as follows:
\begin{enumerate}
\item\label{itm:overview_1} 
First, we consider a simpler setting where not only does every output bit of $f$ depend on few input bits, but every input bit of $f$ influences few output bits as well.

In this case, we can find many output bits that depend on disjoint sets of input bits.
Now if the desired distribution $\Dcal$ has long-range correlation (e.g., the Hamming weight must equal $k$), we would expect a large error, since these output bits are independent and cannot coordinate with each other.
\item\label{itm:overview_2} 
Then, based on the error bound established in the first step, we aim to reduce the general case, where we may have popular input bits that many output bits depend on, to the structured case above.

At this step, we shall prove certain graph elimination results, showing that the desired structure in the first step can be obtained after deleting some input bits.
\end{enumerate}
The above description is an oversimplification of our analysis, and for each of our results in \Cref{sec:our_results} we face different issues, which we elaborate on below.
For convenience and simplicity, we will hide minor factors when stating bounds. 

The framework of viewing $f$ as a convex combination of specific, easier-to-handle restrictions was largely developed in \cite{viola2012complexity, viola2020sampling} and applied in \cite{filmus2023sampling}. Thus, our primary contributions are the specific choices of structure we reduce to and the corresponding technical analysis.

\paragraph*{The $1/3$-Biased Distribution.}
We first consider the toy example $\Dcal=\Ucal_{1/3}^n$, the $1/3$-biased distribution.
The idea here works equally well for any $\gamma$-biased distribution where $\gamma$ is non-dyadic.
Observe that $1/3$ can only be approximated up to error $\approx2^{-d}$ using integer multiples of $2^{-d}$.
Therefore the marginal distribution for every output bit of $f$ is doomed to be $2^{-d}$-far from a $1/3$-biased coin.
Since total variation distance is closed under marginal projections, this already implies a $2^{-d}$ bound.

To further boost it to $1-o(1)$, we first assume that we can find $r$ non-connected output bits, i.e., they do not depend on common input bits, which means they are independent.
Since each one of these output bits incurs $2^{-d}$ error, intuitively their error should accumulate.
We prove (\Cref{lem:tvdist_after_product}) that this is indeed the case. If we have $r$ pairs of distributions $(\Pcal_1,\Qcal_1),\ldots,(\Pcal_r,\Qcal_r)$ where each $\Pcal_i$ is $\eps$-far from $\Qcal_i$, then their products $\Pcal_1\times\cdots\times\Pcal_r$ and $\Qcal_1\times\cdots\times\Qcal_r$ are $(1-\exp\cbra{-\eps^2r})$-far from each other.
We briefly sketch the proof: each weak distance bound implies an event $\Ecal_i$ that happens $\eps$ more often in $\Pcal_i$ than $\Qcal_i$. Then by independence and standard concentration, the number of total events happening in $\Pcal_1\times\cdots\times\Pcal_r$ is typically $r\cdot\eps/2$ larger than the number in $\Qcal_1\times\cdots\times\Qcal_r$, thus establishing the bound.
Applied here, each $\Pcal_i$ corresponds to a selected output bit, and each $\Qcal_i$ is a $1/3$-biased coin.
Hence we get (\Cref{prop:biased_more_local}) a $1-\exp\cbra{-r\cdot2^{-d}}$ bound.

Now back to reality, we may not immediately find non-connected output bits, since the degrees of input bits can be unbounded. 
For example, there could be one input bit that all outputs depend on, and therefore no two output bits are independent. 
However, conditioning on this one bit would decrease the degree to $d-1$ and also fix the problem, at the cost of changing the distribution by a factor of $2$.
Since the distance bound above is sufficiently strong, we can indeed pay some loss to condition on input bits.

In particular, we show (\Cref{lem:tvdist_after_conditioning}) that the convex combination of distributions $\Pcal_1,\ldots,\Pcal_m$ is $(1-m\cdot\eps)$-far from a distribution $\Qcal$, provided that each $\Pcal_i$ is $(1-\eps)$-far from $\Qcal$.
This is proved as follows: each distance bound implies an event $\Ecal_i$ happening with probability at least $1-\eps$ in $\Pcal_i$ but at most $\eps$ in $\Qcal$. Then their disjunction will inherent the $1-\eps$ probability in any convex combination of $\Pcal_i$'s, but still happens with at most $m\cdot\eps$ probability in $\Qcal$ by the union bound, which gives the desired statement.
Applied here, each $\Pcal_i$ corresponds to the distribution of output bits conditioned on a specific assignment of $c = \log(m)$ input bits. This will add a $2^c$ overhead on top of the distance bound for the distribution after each conditioning.
Given this observation and the $1-\exp\cbra{-r\cdot2^{-d}}$ bound above, we can afford to delete roughly $r\cdot2^{-d}$ input bits as long as we can find $r$ non-connected output bits after this.

At this point, the problem is graph theoretic: given a bipartite graph $G$ where each left vertex (representing an output bit) has degree bounded by $d$, we are allowed to delete a few right vertices (representing input bits) to get many non-connected left vertices, where we say two left vertices are non-connected if they are not both adjacent to the same right vertex.
More precisely, we are allowed to delete at most $r\cdot2^{-d}$ right vertices to get at least $r$ non-connected left vertices. In addition, we would like to maximize $r$, since the final bound will be roughly $1-\exp\cbra{-r\cdot2^{-d}}$.
It turns out that this can be achieved (\Cref{prop:biased_structure}) with $r=n/2^{d^2}$, which explains our bounds in \Cref{thm:informal_biased}.
The starting point of the proof is the following naive attempt: if we remove all right vertices of degree at least $\ell$, then we obtain a bipartite graph with left degree $d$ and right degree $\ell$, which readily gives $n/(d\cdot\ell)$ non-connected left vertices. 
Hence if the desired bound does not hold, the number of right vertices of degree at least $\ell$ is larger than $r\cdot2^{-d}\ge n/(d2^d\cdot\ell)$.
Then summing over all $\ell$ up to roughly $2^{2^d}$, we will find the right-hand side of above will be a sum of harmonic series and larger than $d\cdot n$, whereas the left-hand side of above is still upper bounded by the number of total edges, which is at most $d\cdot n$. This forms a contradiction.
By analyzing more carefully, we can improve (\Cref{cor:non-adj_vtx_new}) the $2^{2^d}$ to $2^{d^2}$.
This turns out to be sharp (\Cref{app:tightness_vtx}).

\paragraph*{The Hamming Slice of Weight $n/3$.}
Now we move on to the single Hamming slice case $\Dcal=\Dcal_k$, the uniform distribution over $n$-bit binary strings of Hamming weight $k$.
A simpler case here is still when $k/n$ has precision issues -- think of $k=n/3$ for now.
Then every bit in $\Dcal_{n/3}$ is supposed to be $1/3$-biased, whereas every output bit in the produced distribution is still $2^{-d}$-far from it.

While we largely follow the analysis above, the caveat here is that being far from $\Ucal_{1/3}^n$ does not imply being far from $\Dcal_{n/3}$, as the distance between $\Ucal_{1/3}^n$ and $\Dcal_{n/3}$ is itself $1-1/\sqrt n$.
More precisely, this issue arises when we try to aggregate the errors from $m$ independent output bits.
In the previous argument, we compared $\Pcal_i$ (representing the true output bits) and $\Qcal_i=\Ucal_{1/3}$ (representing the desired marginal distributions), then showed that the weak individual error can be boosted to $1-o(1)$ between their product distributions, where the issue kicks in as the product of $\Qcal_i$'s is $\Ucal_{1/3}^n$ instead of (and actually far from) $\Dcal_{n/3}$.

To get around this, we strengthen (\Cref{prop:single_non-dyadic_more_local}) the above argument to use $\Qcal_i$'s as a proxy between the actual distribution and the desired distribution.
Notice that, despite being far in the total variation distance metric, $\Ucal_{1/3}^n$ is close to $\Dcal_{n/3}$ in the pointwise multiplicative error sense.
More formally, every string in the support of $\Dcal_{n/3}$ has density $\binom n{n/3}^{-1}$, and has density $(1/3)^{n/3}(2/3)^{2n/3}$ under $\Ucal_{1/3}^n$.
These two quantities are only off by a $\sqrt n$ multiplicative factor, which means every event of probability at most $\eps$ under $\Ucal_{1/3}^n$ will have probability at most $\eps\sqrt n$ under $\Dcal_{n/3}$.
Therefore we can modify (\Cref{lem:tvdist_after_product}) the previous analysis to show that there is an event of probability at least $1-\exp\cbra{-r\cdot2^{-d}}$ under $\Pcal_1\times\cdots\times\Pcal_r$ but of probability at most $\sqrt n\cdot\exp\cbra{-r\cdot2^{-d}}$ under $\Dcal_{n/3}$, thus establishing a strong distance bound between the actual and desired distributions.
Since later in the graph theoretic task we will set $r\approx n/2^{d^2}$, this $\poly(n)$ loss is affordable when $d$ is not particularly large.

\paragraph*{The Hamming Slice of Weight $n/2$.}
The above analysis works well when individual output bits have inevitable error against the marginal of the desired distribution. As such, we need new ideas if we want to establish lower bounds for the general case.
For simplicity let us focus on the $k=n/2$ case, as the analysis will generalize to any $k$ that is not too close to $0$ or $n$.

If we can find $r$ independent output bits that are $\eps$-far from being unbiased, then we can use the same argument to boost them to a $1-\exp\cbra{-\eps^2r}$ bound.
Otherwise we need to exploit the long-range correlation of $\Dcal_{n/2}$ that the Hamming weight must sum to exactly $n/2$.
One possible exploitation is through anticoncentration inequalities, which have played an important role in the analysis of similar problems \cite{viola2012complexity, chattopadhyay2022space}.
In particular, if there are $r$ independent output bits that are actually unbiased, then by Littlewood-Offord anticoncentration \cite{littlewood1943number,erdos1945lemma}, they cannot sum to any particular value with probability more than $1/\sqrt r$, which seemingly means the distribution is still $(1-1/\sqrt r)$-far from $\Dcal_{n/2}$.
The issue with this argument is that the $r$ independent output bits can correlate with many other output bits, which might be able to force the total Hamming weight to a fixed value.
For example, one can consider the construction $(X_1,1-X_1,X_2,1-X_2,\ldots,X_{n/2},1-X_{n/2})$, where we have $n/2$ independent bits but the total sum is always $n/2$ and every individual bit is unbiased.

To address this problem, we need to take into account the neighborhood of each output bit.
Define the neighborhood $N(i)$ of an output bit $i$ as the set of output bits that depend on some input bit that $i$ also depends on.
We will exploit a key tension between two facts about $N(i)$'s distribution.
Firstly, every small neighborhood should be unbiased, since the marginals of $\Dcal_{n/2}$ restricted to any small number of bits are $1/\poly(n)$-close to the uniform distribution over those bits. Secondly, resampling the input bits on which $i$ depends should not change the Hamming weight of the output (and thus does not change the Hamming weight of $N(i)$). However, since the output of $i$ depends only on these inputs, the second property implies the distribution over Hamming weights of $N(i)$ conditioned on $i=0$ would be the same as the distribution over $i=1$, which contradicts the first property. Note this argument has no issue with the above construction.

Let $\eps$ be a parameter to be optimized later.
We classify each neighborhood as \textsf{Type-1} if it is $\eps$-far from being unbiased, and as \textsf{Type-2} if it is $\eps$-close to unbiased coins.
Mimicking the previous analysis, we say two neighborhoods $N(i),N(j)$ are non-connected if all pairs $(i', j') \in N(i) \times N(j)$ are non-connected.
Thus by the same argument (\Cref{lem:type-1_single}), if we have $r$ non-connected neighborhoods of \textsf{Type-1}, then our distribution is at distance $1-\sqrt n\cdot\exp\cbra{-\eps^2r}$ from $\Dcal_{n/2}$.

Now suppose we have $r$ non-connected neighborhoods of \textsf{Type-2}, each of size at most $t$. We would like to use anticoncentration inequalities to argue that with high probability the Hamming weight does not sum up to $n/2$.
Assume the neighborhoods are $N(1),\ldots,N(r)$ and $I(i)$ is the set of input bits the $i$-th output bit depends on.
We fix all the input bits outside $I(1)\cup\cdots\cup I(r)$ as $\rho$. Then all the output bits outside $N(1)\cup\cdots\cup N(r)$ are fixed, and moreover, the neighborhoods $N(1),\ldots,N(r)$ are independent to each other.
At this point, if the Hamming sum of each $N(i)$ is still not fixed, we can apply anticoncentration (\Cref{fct:ushakov}) to obtain the desired bound.

To this end, we use the property that $N(i)$ is \textsf{Type-2}, i.e., it is roughly unbiased under random $\rho$.
Say $N(i)$ has size $t$. 
Then under a uniform random input, the Hamming sum of $N(i)$ is distributed like a binomial distribution of $t$ coins.
If we resample the input bits in $I(i)$, with half probability the $i$-th output bit is flipped, whereas the Hamming sum of $N(i)\setminus i$ is a binomial distribution of $t-1$ coins.
This implies that such an experiment has $1/\sqrt t-\eps$ probability of changing the Hamming sum of $N(i)$, where $1/\sqrt t$ comes from the total variation distance between a binomial distribution of $t$ coins and its shift, and $\eps$ comes from the error between the actual distribution of $N(i)$ and $\Ucal^t$.
Meanwhile, since $\rho$ does not touch $I(i)$, we cannot change the Hamming sum by simply resampling $I(i)$ if the Hamming sum is already fixed by $\rho$.
Hence as long as $\eps\le1/\sqrt t$, we show (\Cref{clm:lem:type-2_single_1}) that the Hamming sum of $N(i)$ is not fixed under random (and thus a typical) $\rho$.
Since these neighborhoods are independent, by standard concentration many neighborhoods will enjoy this property simultaneously for a typical $\rho$.
Then we can apply anticoncentration and obtain a bound of roughly $1-1/\sqrt r$ (\Cref{lem:type-2_single}).

Set $\eps=1/(2\sqrt t)$.
To summarize (\Cref{prop:single_slice_more_local}), if we have $r$ non-connected neighborhoods of size at most $t$, then 
\begin{itemize}
\item either $r/2$ of them are \textsf{Type-1}, which implies a distance bound of $1-\sqrt n\cdot\exp\cbra{-r/t}$; 
\item or $r/2$ of them are \textsf{Type-2}, which implies a distance bound of $1-1/\sqrt r$.
\end{itemize}
Following the previous argument, this means that we can afford conditioning on $\min\cbra{r/t,\log r}$ input bits to get the above structure. This seems too stringent and impossible, even without considering the undesirable loss in the final bound.
Instead, we observe that the distance bound from the second case actually tells more; it is proved in the stronger sense that our output distribution hits any point in the support of $\Dcal_{n/2}$ with probability at most $1/\sqrt r$.
Hence we can refine (\Cref{lem:tvdist_after_conditioning}) the previous analysis as follows:
any convex combination of $\Pcal_1,\ldots,\Pcal_m$ is $(1-m\cdot\eps_1-\eps_2)$-far from $\Qcal$, provided that each $\Pcal_i$ is either $(1-\eps_1)$-far from $\Qcal$, or hits the support of $\Qcal$ with probability at most $\eps_2$.
The proof is not much different, and we simply merge the event ``not hitting the support'' into the previous union bound.
Therefore we can remove $r/t$ input bits now.

Finally we need to handle the graph theoretic task: given a bipartite graph $G$ with left degree at most $d$, show we can obtain $r$ non-connected left neighborhoods (representing the neighborhoods of output bits) of size $t$ by removing $r/t$ right vertices.
The left neighborhood of a left vertex is the set of left vertices reachable from it with two edges.
Two left neighborhoods are non-connected if they do not connect to common right vertices.
In addition, we aim to maximize $r$ and minimize $t$, since the final distance bound will be $1-1/\sqrt r-\sqrt n\cdot\exp\cbra{-r/t}$.
This task is significantly more challenging than the previous one, as now we need to eliminate the dependency of the neighborhoods too. Consequently, we only get a bound with tower-type dependence on $d$. That is, we show (\Cref{prop:biased_structure}) that the problem can be solved with $r=n/\tow_2(d)$ and $t=\tow_2(d)$.\footnote{$\tow_2(d)=2^{2^{2^{\iddots}}}$ is the tower of $2$'s of height $d$.}
Perhaps surprisingly, this tower-type dependency is in fact necessarily (\Cref{app:tightness_neigh}).

Here we briefly sketch the proof. As before, assume towards contradiction it is false.
Then we follow the previous approach and argue that we will have too many right vertices of large degree, which will imply the following structural result (\Cref{lem:graph_elim_non-adj_neigh}): if we have removed $n/\alpha$ right vertices from the graph where $\alpha\ge C$ is sufficiently large, we can additionally remove $n/\log(\alpha)$ right vertices to shave $n$ edges from the graph.
Then we arrive at a contradiction, as the graph has at most $d\cdot n$ edges and thus can support the elimination process up to $d$ times. However repeating $d$ times only removes roughly $n/\log^{(d)}(n)$ many right vertices in total, which means the elimination process should continue if $\log^{(d)}(n)\ge C$.\footnote{$\log^{(d)}(n)=\log(\log(\log(\cdots(n))))$ is the iterated logarithm of order $d$.} 

\paragraph*{Hamming Slices of Weight $0$ Modulo $3$.}
We note that the analysis for $\Dcal_{n/2}$ also works for the union of multiple Hamming slices, since the main place where we use $n/2$ is that it is \emph{one} fixed value and thus has $1/\sqrt n$ bound via anticoncentration. Beyond a single slice, the $1/\sqrt n$ bound simply scales with the number of slices.
Nevertheless, this does not go beyond $\sqrt n$ slices. Here we demonstrate that our framework is robust enough to handle $\Omega(n)$ periodic slices.

For simplicity, we consider the case where $\Dcal$ equals the uniform distribution over $n$-bit binary strings with Hamming weight $0$ modulo $3$.
Note that this distribution consists of roughly $n/3$ Hamming slices and has marginal distribution almost unbiased.
We follow the proof of $\Dcal_{n/2}$. Let $\eps$ be a parameter measuring the distance between the marginal distribution of neighborhoods and the unbiased distribution.
Similarly we classify each neighborhood as \textsf{Type-1} if the distance is at least $\eps$, and as \textsf{Type-2} if otherwise.
Once we have $r$ non-connected neighborhoods of \textsf{Type-1}, we readily get a $1-\exp\cbra{-\eps^2r}$ distance bound following the same argument.

On the other hand, if we have $r$ non-connected \textsf{Type-2} neighborhoods of size at most $t$, then we use anticoncentration inequalities (in fact, a local limit theorem) to show that with certain probability we cannot have Hamming weight equal to $0$ modulo $3$.
Recall that in the single Hamming slice case, we argue that a \textsf{Type-2} neighborhood $N(i)$ is not fixed after a typical restriction $\rho$ which does not touch $I(i)$ (the input bits that the $i$-th output bit depends on).
This is proved via a thought experiment where we resample the input bits in $I(i)$ and compare the binomial distribution of $t$ coins with its shift.
Here we need a similar statement (\Cref{clm:lem:type-2_mod_1}) that a \textsf{Type-2} neighborhood is not fixed \emph{modulo $3$} after a typical restriction $\rho$.
The only difference is that now we need to compare the binomial-modulo-$3$ distribution with its shift.
Since $3$ does not divide $2$, the binomial-modulo-$3$ distribution can never be uniform over $\cbra{0,1,2}$. 
In fact, by granularity, it is $2^{-t}$-far from its shift, which means that the Hamming sum modulo $3$ of $N(i)$ is typically not fixed as long as $\eps\le2^{-t}/2$.
Then using a local limit theorem (which is an almost tight Littlewood-Offord-type anticoncentration) on the additive group modulo $3$ (\Cref{lem:mod_llt}), we obtain that under typical $\rho$, the Hamming sum modulo $3$ is roughly uniform over $\cbra{0,1,2}$, thus it hits any particular value with probability $1/3+o(1)$.

Set $\eps=2^{-t}/2$.
To summarize (\Cref{prop:mod_slice_more_local}), if we have $r$ non-connected neighborhoods of size at most $t$, then 
\begin{itemize}
\item either $r/2$ of them are \textsf{Type-1}, which implies a distance bound of $1-\exp\cbra{-r/2^t}$; 
\item or $r/2$ of them are \textsf{Type-2}, then hitting $0$ modulo $3$ has probability at most $1/3+o(1)$.
\end{itemize}
By the same reasoning, we seek the above structure at the cost of removing at most $r/2^t$ input bits, while simultaneously maximizing $r$ and minimizing $t$.
It turns out that this (\Cref{prop:mod_slice_structure}) is still manageable with a tower-type loss on $d$ via a similar graph elimination argument.

At last, we mention that the local limit theorem used for analyzing \textsf{Type-2} neighborhoods holds generally for all modulus (\Cref{thm:mod_llt}) including $2$.
However, the comparison between the binomial-modulo-$q$ distribution and its shift can only be done for modulus $q\ge3$.
This is because the binomial-modulo-$2$ distribution is indeed uniform over $\cbra{0,1}$.
Thus for even $q$'s (i.e., $q$'s not coprime with $2$), there will be an additional contributing factor (\Cref{lem:mod_llt}), which results in a different bound for even $q$'s in \Cref{thm:informal_mod}.

\paragraph*{More General Input Distributions.}\label{pg:general_input_dist}

Now we briefly discuss how to modify our analysis to prove similar lower bounds when the input distribution changes from unbiased coins to general product distributions.
While this is not true for the $1/3$-biased distribution (or any $\gamma$-biased in general), it works for the Hamming slices setting. 
Since it is standard that a Boolean circuit takes unbiased coins as input, we focus on this case and leave the following more general treatment for interested readers as an exercise.

Recall that our analysis starts with a simpler setting where we can find many small non-connected neighborhoods.
In this case, we prove distance lower bounds by comparing the marginal distributions of these non-connected neighborhoods with the desired marginal distribution (unbiased or $1/3$-biased coins).
Then we classify them into \textsf{Type-1} and \textsf{Type-2} and argue the final distance bound separately.
The analysis in this part has nothing to do with the \emph{input distribution}, since the only property we need is the non-connectivity of output neighborhoods in the input-output relation, which generalizes trivially when we view the input ``bits'' as taking values in a larger alphabet.
Then we reduce the general setting to the above simpler setting by removing a few input bits.
This part is also oblivious to the alphabet of the input as it works in a purely graph theoretic sense where the input-output dependency is defined in an abstract way regardless the alphabet.

The only problematic part is where we put the above two steps together (\Cref{lem:tvdist_after_conditioning}). 
There, we have to pay a union bound of $m$ for all the possible conditioning (or equivalently, the number of different distributions after conditioning), since the true output distribution is a convex combination of them.
If the alphabet of the input is $\Sigma$ and we need to remove $t$ input bits, we will need to set $m=|\Sigma|^t$.
To compensate this loss, the graph theoretic problem needs to be slightly reformulated, but it will still be manageable if $|\Sigma|$ is a constant or even slightly superconstant.
For example, in the setting of $\Dcal=\Dcal_{n/2}$, previously we needed to obtain $r$ non-connected neighborhoods of size $t$ after removing $r/t$ input bits; now we need to obtain $r$ non-connected neighborhoods of size $t$ after removing $r/(t\cdot\log(|\Sigma|))$ input ``bits''.

Finally we remark that an extremely general result, where the bounds have no restriction on the alphabet, is simply not true. One can use a $1$-local function to sample any distribution if the input alphabet is large enough to include all possible outcomes and is dubbed just with the desired distribution.

\section{Preliminaries}\label{sec:prelim}

For a positive integer $n$, we use $[n]$ to denote the set $\cbra{1,2,\ldots,n}$.
We use $\Rbb$ to denote the set of real numbers, use $\Nbb=\cbra{0,1,2,\ldots}$ to denote the set of natural numbers, and use $\Zbb$ to denote the set of integers.
For a positive integer $q$, we use $\Zbb/q\Zbb=\cbra{0,1,\ldots,q-1}$ to denote the additive group modulo $q$. 
For a binary string $x$, we use $|x|$ to denote its Hamming weight.

We use $\log(x)$ and $\ln(x)$ to denote the logarithm with base $2$ and $e\approx2.71828\ldots$ respectively.
We use $\log^*(x)$ to denote the iterated logarithm with base $e$:
$$
\log^*(x)=\begin{cases}
0 & 0\le x\le1,\\
1+\log^*(\log(x)) & x>1.
\end{cases}
$$
For $a>0$ and $b\in\Nbb$, we use $\tow_a(b)$ to denote the power tower of base $a$ and order $b$, where 
$$
\tow_a(b)=\begin{cases}
1 & b=0,\\
a^{\tow_a(b-1)} & b\ge1.
\end{cases}
$$
Note that $\log^*(\tow_2(b))=b$ and $x\le\tow_2(\log^*(x))\le2^x$.

\paragraph*{Asymptotics.}
We use the standard $O(\cdot), \Omega(\cdot), \Theta(\cdot)$ notation, and emphasize that in this paper they only hide universal positive constants that do not depend on any parameter.

\paragraph*{Probability.}
We reserve $\Ucal$ to denote the uniform distribution over $\bin$, and more generally for $\gamma\in[0,1]$, reserve $\Ucal_\gamma$ to denote the $\gamma$-biased distribution, i.e., $\Ucal_\gamma(1)=\gamma=1-\Ucal_\gamma(0)$.
Note that $\Ucal=\Ucal_{1/2}$.

Let $\Pcal$ be a (discrete) distribution. We use $x\sim\Pcal$ to denote a random sample $x$ drawn from the distribution $\Pcal$.
If $\Pcal$ is a distribution over a product space, then we say $\Pcal$ is a product distribution if its coordinates are independent.
In addition, for any non-empty set $S\subseteq[n]$, we use $\Pcal|_S$ to denote the marginal distribution of $\Pcal$ on coordinates in $S$.
For a deterministic function $f$, we use $f(\Pcal)$ to denote the output distribution of $f(x)$ given a random $x\sim\Pcal$.

For every event $\Ecal$, we define $\Pcal(\Ecal)$ to be the probability that $\Ecal$ happens under distribution $\Pcal$.
In addition, we use $\Pcal(x)$ to denote the probability mass of $x$ under $\Pcal$, and use $\supp{\Pcal}=\cbra{x\colon\Pcal(x)>0}$ to denote the support of $\Pcal$.

Let $\Qcal$ be a distribution. We use $\tvdist{\Pcal-\Qcal}=\frac12\sum_x\abs{\Pcal(x)-\Qcal(x)}$ to denote their total variation distance.\footnote{To evaluate total variation distance, we need two distributions to have the same sample space. This will be clear throughout the paper and thus we omit it for simplicity.}
We say $\Pcal$ is $\eps$-close to $\Qcal$ if $\tvdist{\Pcal(x)-\Qcal(x)}\le\eps$, and $\eps$-far otherwise.

\begin{fact}\label{fct:tvdist}
Total variation distance has the following equivalent characterizations:
$$
\tvdist{\Pcal-\Qcal}=\max_{\text{event }\Ecal}\Pcal(\Ecal)-\Qcal(\Ecal)=\min_{\substack{\text{random variable }(X,Y)\\\text{$X$ has marginal $\Pcal$ and $Y$ has marginal $\Qcal$}}}\Pr\sbra{X\neq Y}.
$$
\end{fact}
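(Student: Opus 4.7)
The statement is a standard pair of equivalent characterizations of total variation distance; I would split the proof into the event characterization first, then the coupling characterization, chaining the two bounds to close the loop.

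\textbf{Step 1 (event characterization).} Define $\Ecal^\star = \{x : \Pcal(x) > \Qcal(x)\}$. Since both $\Pcal$ and $\Qcal$ sum to $1$, we have $\sum_x (\Pcal(x)-\Qcal(x)) = 0$, so the positive and negative parts of $\Pcal(x)-\Qcal(x)$ have equal absolute mass. Therefore
$$
\tvdist{\Pcal-\Qcal} = \tfrac{1}{2}\sum_x |\Pcal(x)-\Qcal(x)| = \sum_{x\in\Ecal^\star}(\Pcal(x)-\Qcal(x)) = \Pcal(\Ecal^\star)-\Qcal(\Ecal^\star).
$$
For any other event $\Ecal$, restricting the sum $\sum_{x\in\Ecal}(\Pcal(x)-\Qcal(x))$ to $\Ecal\cap\Ecal^\star$ can only increase it (as we drop nonpositive terms and may add positive terms not in $\Ecal$), so $\Pcal(\Ecal)-\Qcal(\Ecal)\le\Pcal(\Ecal^\star)-\Qcal(\Ecal^\star)$. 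This establishes the first equality, with $\Ecal^\star$ as the maximizer.

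\textbf{Step 2 (coupling lower bound).} For any coupling $(X,Y)$ of $(\Pcal,\Qcal)$ and any event $\Ecal$, note that on the event $\{X=Y\}$ the indicators $\indicator[X\in\Ecal]$ and $\indicator[Y\in\Ecal]$ coincide, so
$$
\Pcal(\Ecal)-\Qcal(\Ecal) = \E\sbra{\indicator[X\in\Ecal]-\indicator[Y\in\Ecal]} \le \Pr\sbra{X\ne Y}.
$$
Taking the maximum over $\Ecal$ (using Step 1) gives $\tvdist{\Pcal-\Qcal}\le\Pr[X\ne Y]$ for every coupling, hence $\tvdist{\Pcal-\Qcal}\le\min\Pr[X\ne Y]$.

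\textbf{Step 3 (maximal coupling construction).} To match the lower bound, construct the standard maximal coupling. Let $m(x)=\min\cbra{\Pcal(x),\Qcal(x)}$, and write $\alpha=\sum_x m(x) = 1-\tvdist{\Pcal-\Qcal}$ (using Step 1 again, since $1-\alpha = \sum_x (\Pcal(x)-\Qcal(x))^+ = \tvdist{\Pcal-\Qcal}$). Flip a coin with success probability $\alpha$: on success, draw $Z$ from $m/\alpha$ and set $X=Y=Z$; on failure, independently draw $X$ from the normalized distribution proportional to $(\Pcal-m)$ and $Y$ from the normalized distribution proportional to $(\Qcal-m)$. A direct check shows $X$ has marginal $\Pcal$, $Y$ has marginal $\Qcal$, and $\Pr[X\ne Y]\le 1-\alpha=\tvdist{\Pcal-\Qcal}$. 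Combined with Step 2, this yields the second equality.

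\textbf{Main obstacle.} There is no real obstacle; the only subtlety worth being careful about is verifying that the maximal coupling in Step 3 has the correct marginals (which follows because the ``success'' contribution $\alpha\cdot(m(x)/\alpha)=m(x)$ plus the ``failure'' contribution $(1-\alpha)\cdot(\Pcal(x)-m(x))/(1-\alpha)=\Pcal(x)-m(x)$ sums to $\Pcal(x)$, and symmetrically for $\Qcal$) and handling the degenerate case $\alpha=1$ (where $\Pcal=\Qcal$ and one simply sets $X=Y$) or $\alpha=0$ separately.
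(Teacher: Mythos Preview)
Your proof is correct and follows the standard textbook argument. The paper itself does not prove this statement; it is listed in the preliminaries as a known fact without proof, so there is no paper proof to compare against.
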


Let $\Pcal_1,\ldots,\Pcal_t$ be distributions.
Then $\Pcal_1\times\cdots\times\Pcal_t$ is a distribution denoting the product of $\Pcal_1,\ldots,\Pcal_t$.
We also use $\Pcal^t$ to denote $\Pcal_1\times\cdots\times\Pcal_t$ if each $\Pcal_i$ is the same as $\Pcal$.
For a finite set $S$, we use $\Pcal^S$ to emphasize that coordinates of $\Pcal^{|S|}$ are indexed by elements in $S$.
We say distribution $\Pcal$ is a convex combination of $\Pcal_1,\ldots,\Pcal_t$ if there exist $\alpha_1,\ldots,\alpha_t\in[0,1]$ such that $\sum_{i\in[t]}\alpha_i=1$ and $\Pcal=\sum_{i\in[t]}\alpha_i\cdot\Pcal_i$.

\paragraph*{Locality.}
Let $f\colon\bin^m\to\bin^n$. For each output bit $i\in[n]$, we use $I_f(i)\subseteq[m]$ to denote the set of input bits that the $i$-th output bit depends on.
We say $f$ is a $d$-local function if $|I_f(i)|\le d$ holds for all $i\in[n]$.
Define $N_f(i)=\cbra{i'\in[n]\colon I_f(i)\cap I_f(i')\neq\emptyset}$ to be the neighborhood of $i$, which contains all the output bits that have potential correlation with the $i$-th output bit.
For each input bit $j\in[m]$, we use $\deg_f(j)=\abs{\cbra{i\in[n]\colon j\in I_f(i)}}$ to denote the number of output bits that it influences.

We say output bit $i_1$ is connected to $i_2$ if $I_f(i_1)\cap I_f(i_2)\neq\emptyset$.
We say neighborhood $N_f(i_1)$ is connected to $N_f(i_2)$ if there exist $i_1'\in N_f(i_1)$ and $i_2'\in N_f(i_2)$ such that $I_f(i_1')\cap I_f(i_2')\neq\emptyset$.
As such, every output bit is independent of any non-connected output bit, and the output of a neighborhood has no correlation with any non-connected neighborhood of it.
When $f$ is clear from the context, we will drop subscripts in $I_f(i),N_f(i),\deg_f(j)$ and simply use $I(i),N(i),\deg(j)$.

\paragraph*{Bipartite Graphs.}
We sometimes take an alternative view, using bipartite graphs to model the dependency relations in $f$.
Let $G=(V_1,V_2,E)$ be an undirected bipartite graph.
For each $i\in V_1$, we use $I_G(i)\subseteq V_2$ to denote the set of adjacent vertices in $V_2$.
We say $G$ is $d$-left-bounded if $|I_G(i)|\le d$ holds for all $i\in V_1$.
Define $N_G(i)=\cbra{i'\in V_1\colon I_G(i)\cap I_G(i')\neq\emptyset}$ to be the left neighborhood of $i$.

We say left vertex $i_1$ is connected to $i_2$ if $I_G(i_1)\cap I_G(i_2)\neq\emptyset$.
We say left neighborhood $N_G(i_1)$ is connected to $N_G(i_2)$ if there exist $i_1'\in N_G(i_1)$ and $i_2'\in N_G(i_2)$ such that $I_G(i_1')\cap I_G(i_2')\neq\emptyset$.
For each $j\in V_2$, we use $\deg_G(j)=\abs{\cbra{i\in V_1\colon j\in I_G(i)}}$ to denote its degree.
When $G$ is clear from the context, we will drop subscripts in $I_G(i),N_G(i),\deg_G(j)$ and simply use $I(i),N(i),\deg(j)$.

It is easy to see that the dependency relation in $f\colon\bin^m\to\bin^n$ can be visualized as a bipartite graph $G=G_f$ where $[n]$ is the left vertices (representing output bits of $f$) and $[m]$ is the right vertices (representing input bits of $f$), and an edge $(i,j)\in[n]\times[m]$ exists if and only if $j\in I_f(i)$.
The notation and definitions of $I_f(i),N_f(i),\deg_f(j)$ are then equivalent to those of $I_G(i),N_G(i),\deg_G(j)$.

\paragraph*{Concentration and Anticoncentration.}
We will use the following standard concentration inequalities.

\begin{fact}[Hoeffding's Inequality]\label{fct:hoeffding}
Assume $X_1,\ldots,X_n$ are independent random variables such that $a\le X_i\le b$ holds for all $i\in[n]$.
Then for all $\delta\ge0$, we have
$$
\max\cbra{\Pr\sbra{\frac1n\sum_{i\in[n]}\pbra{X_i-\E[X_i]}\ge\delta},\Pr\sbra{\frac1n\sum_{i\in[n]}\pbra{X_i-\E[X_i]}\le-\delta}}
\le\exp\cbra{-\frac{2n\delta^2}{(b-a)^2}}.
$$
\end{fact}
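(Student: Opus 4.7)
The plan is to use the standard Chernoff/exponential-moment method. Let $Y_i = X_i - \E[X_i]$, so each $Y_i$ is mean-zero and bounded in an interval of length $b-a$. For any $s>0$, Markov's inequality applied to $\exp(s\sum_i Y_i)$ gives
$$
\Pr\!\sbra{\sum_{i\in[n]} Y_i \ge n\delta}
\le e^{-sn\delta}\cdot \E\!\sbra{\exp\pbra{s\sum_{i\in[n]} Y_i}}
= e^{-sn\delta}\prod_{i\in[n]}\E\!\sbra{e^{sY_i}},
$$
where the factorization uses independence of the $Y_i$. The other tail is handled symmetrically by replacing each $X_i$ with $-X_i$ (which is still bounded in an interval of length $b-a$), so it suffices to control one side.

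The main technical step, and the one I would treat as the crux, is Hoeffding's lemma: for any mean-zero random variable $Y$ with $Y\in[a',b']$ where $b'-a'\le b-a$, one has $\E[e^{sY}]\le \exp\pbra{s^2(b-a)^2/8}$. I would prove this via a convexity argument: since $e^{sy}$ is convex in $y$, on the interval $[a',b']$ it lies below the chord, so
$$
e^{sY}\le \frac{b'-Y}{b'-a'}\,e^{sa'} + \frac{Y-a'}{b'-a'}\,e^{sb'}.
$$
Taking expectations and using $\E[Y]=0$ yields an expression of the form $e^{\phi(h)}$ where $h=s(b'-a')$ and $\phi$ is a function of $h$ and $p := -a'/(b'-a')\in[0,1]$. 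A direct computation shows $\phi(0)=\phi'(0)=0$ and $\phi''(h)\le 1/4$ uniformly (because it reduces to bounding the variance of a $\{0,1\}$-valued variable), so Taylor's theorem gives $\phi(h)\le h^2/8\le s^2(b-a)^2/8$.

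Combining the two steps, the right-hand side of the Markov bound becomes
$$
e^{-sn\delta}\cdot \exp\pbra{ns^2(b-a)^2/8}.
$$
Optimizing over $s>0$ by choosing $s = 4\delta/(b-a)^2$ yields exactly the target exponent $-2n\delta^2/(b-a)^2$, completing the bound for the upper tail; the lower tail follows by symmetry, and taking the maximum gives the stated inequality. The only nontrivial ingredient is Hoeffding's lemma; everything else is Markov, independence, and single-variable optimization.
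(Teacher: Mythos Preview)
Your proof is correct and is the canonical argument for Hoeffding's inequality (Chernoff's exponential-moment bound combined with Hoeffding's lemma, then optimizing over $s$). The paper itself does not prove this statement; it is listed in the preliminaries as a standard concentration inequality and cited as a fact, so there is nothing further to compare.
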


\begin{fact}[Chernoff's Inequality]\label{fct:chernoff}
Assume $X_1,\ldots,X_n$ are independent random variables such that $X_i\in[0,1]$ holds for all $i\in[n]$.
Let $\mu=\sum_{i\in[n]}\E[X_i]$.
Then for all $\delta\in[0,1]$, we have
$$
\Pr\sbra{\sum_{i\in[n]}X_i\le(1-\delta)\mu}
\le\exp\cbra{-\frac{\delta^2\mu}2}.
$$
\end{fact}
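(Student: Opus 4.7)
The plan is to establish the lower-tail bound via the classical Cramér--Chernoff exponential-moment method. First I would apply Markov's inequality to the random variable $\exp\cbra{-t\sum_{i\in[n]} X_i}$ for a free parameter $t\ge 0$: the event $\sum_i X_i\le(1-\delta)\mu$ coincides with $\exp\cbra{-t\sum_i X_i}\ge\exp\cbra{-t(1-\delta)\mu}$, so
$$
\Pr\sbra{\sum_{i\in[n]}X_i\le(1-\delta)\mu}\le\exp\cbra{t(1-\delta)\mu}\cdot\E\sbra{\exp\cbra{-t\sum_{i\in[n]}X_i}}.
$$
By independence the expectation factorizes as $\prod_{i\in[n]}\E\sbra{e^{-tX_i}}$, reducing the task to controlling each single-variable moment generating function.

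Next I would exploit that $X_i\in[0,1]$. The function $x\mapsto e^{-tx}$ is convex, so on $[0,1]$ it lies below the chord connecting $(0,1)$ and $(1,e^{-t})$, giving $e^{-tX_i}\le 1-X_i(1-e^{-t})$ pointwise. Taking expectations and using $1+y\le e^y$ yields $\E\sbra{e^{-tX_i}}\le\exp\cbra{-\E[X_i](1-e^{-t})}$. Multiplying over $i$ and plugging back in,
$$
\Pr\sbra{\sum_{i\in[n]}X_i\le(1-\delta)\mu}\le\exp\cbra{-\mu(1-e^{-t})+t(1-\delta)\mu}.
$$
Optimizing over $t$ by setting $t=\ln\pbra{1/(1-\delta)}$ (valid for $\delta\in[0,1)$; the endpoint $\delta=1$ follows by continuity or is degenerate) collapses the right-hand side to the standard closed form $\pbra{e^{-\delta}/(1-\delta)^{1-\delta}}^\mu$.

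Finally I would convert this into the clean sub-Gaussian-type statement in the fact. Taking logarithms, the exponent equals $-\mu\sbra{\delta+(1-\delta)\ln(1-\delta)}$, so it suffices to show $(1-\delta)\ln(1-\delta)\ge-\delta+\delta^2/2$ for all $\delta\in[0,1]$. This elementary inequality can be verified by comparing derivatives: both sides agree at $\delta=0$, and differentiating reduces the claim to $\ln(1-\delta)\le-\delta$, i.e., $1-\delta\le e^{-\delta}$, which is standard. Substituting yields exactly $\exp\cbra{-\delta^2\mu/2}$, completing the proof.

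There is no real conceptual obstacle here since this is a textbook argument; the only places demanding care are choosing the \emph{negative} exponential tilt (matching the lower-tail direction), and verifying the elementary inequality $(1-\delta)\ln(1-\delta)\ge -\delta+\delta^2/2$ on the whole interval $[0,1]$ rather than merely the small-$\delta$ regime where Taylor expansion makes it obvious.
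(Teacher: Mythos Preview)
Your proposal is a correct, standard Cram\'er--Chernoff argument. The paper itself does not supply a proof of this fact: it is stated in the preliminaries as a standard concentration inequality and used without justification, so there is no paper proof to compare against.
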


We also need the following version of the Littlewood-Offord-type anticoncentration inequality, which uniformly bounds the probability density function of the sum of independent random variables.
\begin{fact}[{\cite[Theorem 3]{ushakov1986upper}}]\label{fct:ushakov}
Assume $X_1,\ldots,X_n$ are independent random variables in $\Rbb$.
For each $i\in[n]$, define $p_i=\max_{x\in\Rbb}\Pr\sbra{X_i=x}$.
Then there exists a universal constant $C>0$ such that
$$
\Pr\sbra{\sum_{i\in[n]}X_i=x}\le\frac C{\sqrt{\sum_{i\in[n]}(1-p_i)}}
\quad\text{ holds for any $x\in\Rbb$.}
$$
\end{fact}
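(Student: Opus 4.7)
My plan is to derive this Littlewood--Offord-type anticoncentration estimate via the classical Esséen--Rogozin route: first symmetrize to pass to symmetric summands, then bound the characteristic function of the sum to extract a Gaussian-type decay whose effective variance scales with $\sum_{i\in[n]}(1-p_i)$, and finally invert via Fourier analysis to recover the point-mass bound.

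The first step is symmetrization, to remove the supremum in $x$. Let $X_i'$ be an independent copy of $X_i$ and set $Y_i := X_i - X_i'$, so the $Y_i$ are independent, each symmetric about $0$, and satisfy
$$
\Pr\sbra{Y_i = 0} = \sum_{x\in\Rbb}\Pr\sbra{X_i=x}^2 \le p_i\cdot\sum_{x\in\Rbb}\Pr\sbra{X_i=x} = p_i.
$$
Writing $M := \max_{x\in\Rbb}\Pr\sbra{\sum_{i\in[n]}X_i=x}$, the identity $\Pr\sbra{\sum_{i\in[n]}(X_i-X_i')=0} = \sum_{x}\Pr\sbra{\sum_{i\in[n]}X_i=x}^2 \ge M^2$ reduces the task to proving
$$
\Pr\sbra{\sum_{i\in[n]}Y_i=0} \le \frac{C'}{\sum_{i\in[n]}(1-p_i)}
$$
for a universal $C'$ in the symmetric setting, since then $M \le \sqrt{C'}/\sqrt{\sum_{i\in[n]}(1-p_i)}$ as advertised.

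The second step passes to characteristic functions. Since each $Y_i$ is symmetric, $\phi_i(t) := \E[\cos(tY_i)] \in [-1,1]$, and the elementary inequality $1-u\le e^{-u}$ gives
$$
\abs{\phi_{\sum_{i\in[n]}Y_i}(t)} = \prod_{i\in[n]}\abs{\phi_i(t)} \le \exp\cbra{-\sum_{i\in[n]}\E\sbra{1-\cos(tY_i)}}.
$$
I would recover the point mass $\Pr\sbra{\sum_{i\in[n]}Y_i=0}$ from $\phi_{\sum_iY_i}$ by an Esséen-type smoothed Fourier inversion --- convolve with a compactly supported kernel of small bandwidth $L$, bound the result using the characteristic function on $|t|\le 1/L$, and then send $L\to 0$; in the lattice-valued case this specializes to ordinary Fourier inversion on the torus. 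The remaining task is a uniform lower bound on the exponent that scales linearly with $\sum_{i\in[n]}(1-p_i)$.

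The decisive estimate, and the place I expect to be the main obstacle, is a lower bound of the form $\E\sbra{1-\cos(tY_i)} \ge c\cdot(1-p_i)\cdot\min(t^2,1)$ on an appropriate range of $t$. Morally this follows from $1-\cos(u)\ge c\cdot\min(u^2,1)$ together with $\Pr\sbra{Y_i\ne 0}\ge 1-p_i$, but if the $(1-p_i)$ mass of $Y_i$ sits at very large values then $\cos(tY_i)$ oscillates and can average close to $1$, so the naive bound fails. The standard workaround is a truncation (or lattice discretization) of $Y_i$ at a scale tuned to $t$ so that the $(1-p_i)$ mass is pushed into the quadratic regime of $1-\cos$. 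Plugging the resulting bound back into the exponential produces a Gaussian-type decay $\exp\cbra{-c\sigma^2 t^2}$ with $\sigma^2 := \sum_{i\in[n]}(1-p_i)$; integrating against the smoothed Fourier kernel and sending the bandwidth to $0$ yields $\Pr\sbra{\sum_{i\in[n]}Y_i=0} \le C'/\sigma^2$, which combined with the symmetrization reduction $M^2\le\Pr\sbra{\sum_{i\in[n]}Y_i=0}$ completes the proof.
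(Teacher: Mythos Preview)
The paper does not prove this statement: it is stated as a \emph{Fact} with a citation to \cite{ushakov1986upper} and used as a black box. So there is no ``paper's proof'' to compare against; I can only assess your sketch on its own merits.

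Your overall route (symmetrize, bound the characteristic function, Fourier-invert) is the standard Ess\'een--Rogozin strategy and is correct in spirit, but two of your intermediate steps do not hold as written. First, the inequality $|\phi_i(t)|\le\exp\{-\E[1-\cos(tY_i)]\}$ is false in general: writing $u_i=\E[1-\cos(tY_i)]\in[0,2]$ you are asserting $|1-u_i|\le e^{-u_i}$, which fails once $u_i>1$ (e.g.\ $u_i=2$ gives $1\le e^{-2}$). The usual fix is to work with $|\phi_i(t)|^2\in[0,1]$, for which $|\phi_i(t)|^2\le\exp\{-(1-|\phi_i(t)|^2)\}$ is valid.

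Second, and more seriously, the pointwise lower bound you aim for, $\E[1-\cos(tY_i)]\ge c\,(1-p_i)\min(t^2,1)$, cannot hold for all $t$ even after truncation: if $Y_i$ is supported on $\{0,2\pi\}$ then $\E[1-\cos(tY_i)]=(1-p_i)(1-\cos(2\pi t))$ vanishes at every integer $t$. No ``truncation at a scale tuned to $t$'' repairs this, because the zero set of $1-\cos(tY_i)$ tracks $t$. What actually works in the Ess\'een argument is an \emph{average} over $t$: one shows that $\int_{|t|\le T}(1-|\phi_i(t)|^2)\,dt\ge c\,T\,(1-Q(X_i;1/T))$ for the L\'evy concentration function $Q$, and then the smoothed inversion yields $Q(S_n;1/T)\le C/\sqrt{\sum_i(1-Q(X_i;1/T))}$. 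Sending $T\to\infty$ (equivalently the bandwidth $L\to0$) gives $Q(X_i;0)=p_i$ and recovers the point-mass bound. Your sketch gestures at this with ``smoothed Fourier inversion \ldots\ send $L\to0$'' but then reverts to a pointwise-in-$t$ argument that cannot close; the averaging over $t$ is not optional, it is where the obstacle you identify is actually resolved.
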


\paragraph*{Binomials and Entropy.}
Let $\Hcal(x)=x\cdot\log\pbra{\frac1x}+(1-x)\cdot\log\pbra{\frac1{1-x}}$ be the binary entropy function.
We will frequently use the following estimates regarding binomial coefficients and the entropy function.

\begin{fact}[{See e.g., \cite[Lemma 17.5.1]{cover2006elements}}]\label{fct:individual_binom}
For $1\le k\le n-1$, we have
$$
\frac{2^{n\cdot\Hcal(k/n)}}{\sqrt{8k(1-k/n)}}\le\binom nk\le\frac{2^{n\cdot\Hcal(k/n)}}{\sqrt{\pi k(1-k/n)}}.
$$
\end{fact}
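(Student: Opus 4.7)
The plan is to apply Robbins' refined Stirling bound
$$\sqrt{2\pi n}(n/e)^n e^{1/(12n+1)} \le n! \le \sqrt{2\pi n}(n/e)^n e^{1/(12n)}$$
to each of $n!$, $k!$, and $(n-k)!$ in the expansion $\binom{n}{k} = n!/(k!\,(n-k)!)$. The $(n/e)^n / ((k/e)^k ((n-k)/e)^{n-k})$ piece splits into an $e$-power that cancels exactly and a polynomial piece $n^n/(k^k(n-k)^{n-k})$. Writing $p = k/n$, the polynomial piece equals $p^{-pn}(1-p)^{-(1-p)n} = 2^{n\Hcal(k/n)}$, which is exactly the entropy factor appearing in both inequalities. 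The surviving $\sqrt{2\pi n}/(\sqrt{2\pi k}\,\sqrt{2\pi(n-k)})$ prefactors collapse to $1/\sqrt{2\pi k(1-k/n)}$, matching the shape of the $\sqrt{\cdot}$ denominator in the claim.

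What is left is to check that the Robbins exponential residuals fit inside the stated constants $\pi$ and $8$. For the upper bound the residual exponent is $1/(12n) - 1/(12k+1) - 1/(12(n-k)+1)$, which is non-positive for $1 \le k \le n-1$, so the exponential is at most $1$; since $1/\sqrt{\pi k(1-k/n)} > 1/\sqrt{2\pi k(1-k/n)}$, the stated upper bound follows with a factor-$\sqrt{2}$ of slack. For the lower bound the residual is $\exp\cbra{1/(12n+1) - 1/(12k) - 1/(12(n-k))}$; its minimum over $1 \le k \le n-1$ occurs at the boundary $k \in \cbra{1, n-1}$ and is approximately $e^{-1/12}$, so the needed inequality reduces to checking $\sqrt{8}/\sqrt{2\pi} = 2/\sqrt{\pi} \ge e^{1/12}$, which holds numerically ($1.128 > 1.087$).

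The main obstacle, to the extent one exists in such a standard calculation, is the bookkeeping of the Robbins error terms near the endpoints $k \in \cbra{1, n-1}$ where Stirling is loosest; once those worst cases are verified, the interior range is immediate because the error factor shrinks rapidly towards $1$ as $\min(k, n-k)$ grows. Everything else is algebraic manipulation of the Stirling substitution.
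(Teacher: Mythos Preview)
The paper does not supply its own proof: the statement is quoted as a \emph{Fact} with a citation to Cover and Thomas, so there is nothing to compare your argument against on the paper's side. Your Robbins--Stirling approach is the standard derivation and is essentially correct.

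There is one small slip in the lower-bound bookkeeping. You assert that the Robbins residual $R(n,k)=\frac{1}{12n+1}-\frac{1}{12k}-\frac{1}{12(n-k)}$ has minimum ``approximately $e^{-1/12}$'' and then reduce to the check $2/\sqrt\pi\ge e^{1/12}$. The value $-1/12$ is only the \emph{asymptotic} minimum (as $n\to\infty$ with $k=1$); for finite $n$ the residual is smaller, and the global minimum over all admissible $(n,k)$ is at $n=2,k=1$, where $R=\tfrac{1}{25}-\tfrac{1}{12}-\tfrac{1}{12}\approx -0.1267$. The threshold you actually need is $e^{R}\ge \sqrt{\pi}/2\approx 0.8862$, i.e.\ $R\ge -0.1208$, so Robbins alone just barely fails at $(n,k)=(2,1)$. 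This is easily patched: verify $(2,1)$ by hand (the lower bound holds with equality there, since $\binom{2}{1}=2=2^{2\Hcal(1/2)}/\sqrt{8\cdot 1\cdot 1/2}$), and note that for $n\ge 3$ one has $R(n,1)\ge R(3,1)\approx -0.098$, which clears the $-0.1208$ bar comfortably. With that one direct check inserted, your argument is complete.
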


\begin{fact}[{See e.g., \cite{wiki:Binary_entropy_function}}]\label{fct:entropy}
For any $x\in[-1,1]$, we have 
$$1-x^2\le\Hcal\pbra{\frac{1+x}2}\le1-\frac{x^2}{2\ln(2)}.$$ 
\end{fact}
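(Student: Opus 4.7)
The plan is to use symmetry (both sides are even in $x$) to reduce to $x\in[0,1]$, and then handle the two inequalities separately via analytic properties of $\Hcal$. The key facts I would invoke are $\Hcal(1/2)=1$, $\Hcal'(1/2)=0$, and $\Hcal''(p)=-1/(p(1-p)\ln(2))$. Setting $h(x):=\Hcal\pbra{(1+x)/2}$, the chain rule gives $h(0)=1$, $h'(0)=0$, and $h''(x)=-1/((1-x^2)\ln(2))$.

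For the upper bound $h(x)\le 1-x^2/(2\ln(2))$, I would apply Taylor's theorem with Lagrange remainder at $x=0$. Because $p(1-p)\le 1/4$ for every $p\in[0,1]$, we have $h''(t)\le -1/\ln(2)$ uniformly in $t\in[-1,1]$, so Taylor's formula yields $h(x)=1+\tfrac12 h''(\xi)x^2\le 1-x^2/(2\ln(2))$ for some $\xi$ between $0$ and $x$. This is essentially Pinsker's inequality applied to the Bernoulli pair $\pbra{\text{Ber}((1+x)/2),\text{Ber}(1/2)}$, since $1-h(x)$ is exactly the KL divergence (in bits) between them.

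For the lower bound $h(x)\ge 1-x^2$, I would set $\phi(x):=h(x)-(1-x^2)$ and show $\phi\ge 0$ on $[0,1]$. The boundary data $\phi(0)=0$, $\phi'(0)=0$, and $\phi(1)=0$ are immediate (using $\Hcal(1)=0$). A direct computation gives $\phi''(x)=2-1/((1-x^2)\ln(2))$, which vanishes at $x_0:=\sqrt{1-1/(2\ln(2))}\approx 0.53$, is positive on $[0,x_0)$, and is negative on $(x_0,1]$. So $\phi$ is convex on $[0,x_0]$ and concave on $[x_0,1]$. Convexity together with $\phi(0)=\phi'(0)=0$ forces $\phi\ge 0$ on $[0,x_0]$; on $[x_0,1]$, concavity together with $\phi(x_0)>0$ and $\phi(1)=0$ places the graph of $\phi$ above the (nonnegative) chord joining these endpoints, so $\phi\ge 0$ there as well.

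The main obstacle is the lower bound: unlike the upper bound, it cannot come from a uniform second-derivative bound, since $|h''|$ blows up as $x\to\pm 1$. What makes the lower bound work is the sign change of $\phi''$, and the argument has to combine convexity near the origin with concavity near the endpoints, leaning crucially on the boundary constraint $\phi(1)=0$ (which is just $\Hcal(1)=0$) to close the concave piece.
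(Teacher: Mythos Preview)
Your argument is correct. The paper does not actually prove this statement: it records it as a known fact with a citation and uses it as a black box, so there is no ``paper's own proof'' to compare against. Your Taylor-remainder argument for the upper bound (equivalently, Pinsker's inequality for Bernoulli pairs) and your convex/concave splitting for the lower bound are both valid and standard; in particular, the sign analysis of $\phi''(x)=2-1/((1-x^2)\ln 2)$, the tangent-line bound on the convex piece, and the chord bound on the concave piece using $\phi(1)=0$ all check out.
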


\paragraph*{Local Limit Theorems.}
Local limit theorems provide sharp estimates for the probability density function of the sum of independent random variables, strengthening the usual (anti-)concentration inequalities and central limit theorems. 
We refer interested readers to a recent survey by Szewczak and Weber \cite{szewczak2022classical}.

We will require the following local limit result in the additive group modulo $q$, which is a special case of the more general statement \Cref{thm:mod_llt}.
The proof of \Cref{lem:mod_llt} is deferred to \Cref{app:mod_llt}, where we also discuss its tightness.
\begin{lemma}\label{lem:mod_llt}
Let $q\ge3$ be an integer, and let $X_1,\ldots,X_n$ be independent random variables in $\Zbb$.
For each $i\in[n]$ and $r\ge1$, define $p_{r,i}=\max_{x\in\Zbb}\Pr\sbra{X_i\equiv x\Mod r}$ and assume
$$
\sum_{i\in[n]}(1-p_{r,i})\ge L>0
\quad\text{holds for all $r\ge3$ dividing $q$.}
$$
Then for any $\Lambda\subseteq\Zbb/q\Zbb$, we have
$$
\Pr\sbra{\sum_{i\in[n]}X_i\mod q\in\Lambda}\le q\cdot e^{-2L/q^2}+\begin{cases}
|\Lambda|/q & q\text{ is odd,}\\
2\cdot\max\cbra{|\Lambda_\textsf{even}|,|\Lambda_\textsf{odd}|}/q & q\text{ is even,}
\end{cases}
$$
where $\Lambda_\textsf{even}=\cbra{\text{even numbers in }\Lambda}$ and $\Lambda_\textsf{odd}=\cbra{\text{odd numbers in }\Lambda}$.
\end{lemma}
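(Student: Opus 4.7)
The plan is to perform discrete Fourier analysis on the cyclic group $\Zbb/q\Zbb$. Writing $S = \sum_{i \in [n]} X_i$, the standard identity $\indicator[S \equiv c \Mod{q}] = \frac1q \sum_{j=0}^{q-1} e^{2\pi i j(S-c)/q}$ yields
\[
\Pr\sbra{S \bmod q \in \Lambda} = \frac{1}{q} \sum_{j=0}^{q-1} \hat\Lambda(j)\cdot\phi_j, \quad \hat\Lambda(j) := \sum_{c \in \Lambda} e^{-2\pi i jc/q}, \quad \phi_j := \prod_{k \in [n]} \E\sbra{e^{2\pi i j X_k/q}}.
\]
First I would peel off the Fourier modes that \emph{cannot} admit any decay. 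The mode $j=0$ contributes exactly $|\Lambda|/q$. When $q$ is even, the mode $j = q/2$ is also unavoidable: its character $x \mapsto (-1)^x$ factors through $\Zbb/2\Zbb$, which the hypothesis excludes, so the best estimate is the trivial $\abs{\phi_{q/2}} \le 1$, combined with $\hat\Lambda(q/2) = |\Lambda_\textsf{even}| - |\Lambda_\textsf{odd}|$. Adding the $j=0$ and $j=q/2$ contributions and using $|\Lambda| + \abs{|\Lambda_\textsf{even}| - |\Lambda_\textsf{odd}|} = 2\max\cbra{|\Lambda_\textsf{even}|, |\Lambda_\textsf{odd}|}$ reproduces exactly the second (even) term of the claim, while in the odd case only $|\Lambda|/q$ appears.

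For every remaining $j$, the character $\chi_j(x) := e^{2\pi i j x/q}$ factors as a \emph{primitive} character through $\Zbb/r_j\Zbb$ with $r_j := q/\gcd(j, q)$; crucially $r_j$ divides $q$ and satisfies $r_j \ge 3$, so the lemma's hypothesis applies. The central technical input, which I would isolate as a short sublemma, is a character estimate: for any random variable $Y$ on $\Zbb/r\Zbb$ with maximum point-mass $p$ and any primitive character $\chi$,
\[
\abs{\E\sbra{\chi(Y)}}^2 \le 1 - \frac{8(1-p)}{r^2}.
\]
This is derived by expanding $1 - \abs{\E\sbra{\chi(Y)}}^2 = \tfrac12 \sum_{y_1, y_2} \Pr\sbra{Y=y_1}\Pr\sbra{Y=y_2}\,\abs{\chi(y_1) - \chi(y_2)}^2$, noting $\abs{\chi(y_1) - \chi(y_2)} \ge 2\sin(\pi/r) \ge 4/r$ whenever $y_1 \ne y_2$ (this is where primitivity is used), and $\sum_{y_1 \ne y_2}\Pr\sbra{Y=y_1}\Pr\sbra{Y=y_2} \ge 1 - p$.

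Applying the sublemma to each $X_k$ modulo $r_j$ and using $\sqrt{1-x} \le e^{-x/2}$ multiplicatively,
\[
\abs{\phi_j} \le \exp\cbra{-\frac{4}{r_j^2} \sum_{k \in [n]}(1 - p_{r_j, k})} \le \exp\cbra{-4L/r_j^2} \le \exp\cbra{-4L/q^2}.
\]
Bounding $\abs{\hat\Lambda(j)} \le q$ uniformly and summing over the at most $q$ remaining frequencies, the contribution is at most $q \cdot \exp\cbra{-4L/q^2} \le q \cdot \exp\cbra{-2L/q^2}$, which added to the trivial-mode terms yields the lemma. The main obstacle is the sublemma: a crude bound $\abs{\E\sbra{\chi(Y)}} \le 1 - \Omega(1-p)$ is independent of $r$ and yields no useful decay for $q \gg \sqrt{L}$, so one must exploit the geometric spacing of $r$-th roots of unity to extract the sharp $1/r^2$ factor, which is exactly what propagates $L$ into the claimed $\exp\cbra{-\Omega(L/q^2)}$ suppression across the $q$ Fourier modes.
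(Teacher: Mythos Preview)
Your proof is correct and follows essentially the same Fourier-analytic route as the paper: both expand in characters of $\Zbb/q\Zbb$, isolate the $j=0$ mode (and $j=q/2$ when $q$ is even), and bound every remaining mode via a single-variable estimate $|\E[\chi(Y)]|\le 1-\Omega((1-p)/q^2)$ before summing crudely over at most $q$ frequencies. The only cosmetic difference is in that sublemma---you use the variance identity $1-|\E\chi(Y)|^2=\tfrac12\,\E|\chi(Y_1)-\chi(Y_2)|^2$ together with the minimum gap $2\sin(\pi/r)\ge 4/r$ between $r$-th roots of unity, whereas the paper bounds $\Re(z\cdot\E[\omega_q^{aY}])$ for arbitrary unimodular $z$ via a $\sin^2$ expansion and an event analysis---but the two arguments are interchangeable and yours even yields a slightly sharper constant.
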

\section{Useful Lemmas}\label{sec:useful_lemmas}

In this section, we prove additional useful lemmas which will appear multiple times with various parameter choices in later sections.
In such generality, they may be of independent interest elsewhere.

\subsection{Total Variation Bounds}\label{sec:tvd_bounds}

Here we prove various lemmas to control total variation bounds.

The following fact is standard, showing that two distributions close to each other remain close after conditioning. For completeness, we include a proof in \Cref{app:proof_of_fct:mult_apx}.

\begin{fact}\label{fct:mult_apx}
Assume $\Pcal$ is $\eps$-close to $\Qcal$, and let $\Pcal',\Qcal'$ be the distributions of $\Pcal,\Qcal$ conditioned on some event $\Ecal$, respectively. Then for any function $f$,
\[
\tvdist{f(\Pcal')-f(\Qcal')}\le\frac{2\eps}{\Qcal(\Ecal)}.
\]
\end{fact}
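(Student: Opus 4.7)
The plan is to first reduce to the unconditional statement $\tvdist{\Pcal' - \Qcal'} \le 2\eps/\Qcal(\Ecal)$ by appealing to the data processing inequality for total variation: applying any deterministic (in fact, any randomized) map $f$ cannot increase TV distance, which is immediate from the coupling characterization given in \Cref{fct:tvdist}. Thus it suffices to prove the inequality directly between the two conditioned distributions.

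To bound $\tvdist{\Pcal' - \Qcal'}$, I would use the dual characterization $\tvdist{\Pcal' - \Qcal'} = \sup_{\Acal} |\Pcal'(\Acal) - \Qcal'(\Acal)|$ from \Cref{fct:tvdist} and fix an arbitrary event $\Acal$. Writing $p = \Pcal(\Acal \cap \Ecal)$, $q = \Qcal(\Acal \cap \Ecal)$, $P = \Pcal(\Ecal)$, $Q = \Qcal(\Ecal)$, the key identity is
\[
\Pcal'(\Acal) - \Qcal'(\Acal) \;=\; \frac{p}{P} - \frac{q}{Q} \;=\; \frac{p(Q - P) + P(p - q)}{PQ}.
\]
Now the two hypotheses coming from $\tvdist{\Pcal - \Qcal} \le \eps$ are precisely $|P - Q| \le \eps$ (applied to the event $\Ecal$) and $|p - q| \le \eps$ (applied to the event $\Acal \cap \Ecal$), again via \Cref{fct:tvdist}. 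Combining these with the trivial bound $p \le P$ yields
\[
|\Pcal'(\Acal) - \Qcal'(\Acal)| \;\le\; \frac{p\eps + P\eps}{PQ} \;\le\; \frac{\eps}{Q} + \frac{\eps}{Q} \;=\; \frac{2\eps}{\Qcal(\Ecal)}.
\]
Taking the supremum over $\Acal$ gives the desired bound on $\tvdist{\Pcal' - \Qcal'}$, and composing with the data processing step completes the argument.

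There is no real obstacle here; the only subtle point is the algebraic decomposition in the numerator, which is chosen specifically so that each term can be bounded by $\eps$ (via one of the two consequences of $\tvdist{\Pcal - \Qcal} \le \eps$) while the denominator $PQ$ leaves behind only $1/Q$ after cancellation with $p$ or $P$ in the numerator. A natural alternative decomposition splitting as $\frac{p - q}{P} + q\bigl(\frac{1}{P} - \frac{1}{Q}\bigr)$ would give a bound of the form $\eps/P + \eps/P = 2\eps/\Pcal(\Ecal)$; one could present both bounds and take the better of the two, but for the applications in the paper (where $\Qcal$ plays the role of the reference distribution and $\Qcal(\Ecal)$ is the naturally available quantity to lower bound) the stated form is the desired one.
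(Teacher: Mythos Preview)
Your proof is correct and essentially identical to the paper's: both fix a test event, write the difference of conditional probabilities as a difference of quotients, split it algebraically into two pieces each bounded by $\eps/\Qcal(\Ecal)$ via $|P-Q|\le\eps$, $|p-q|\le\eps$, and $p\le P$, and invoke the data processing inequality. The only cosmetic differences are that the paper picks the maximizing event directly (rather than supping over arbitrary $\Acal$) and applies data processing at the end rather than the beginning.
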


Intuitively if the marginals of two product distributions do not match, the two distributions in general should be extremely far apart.
This intuition is generalized and formalized as the following lemma, where we actually prove a strengthening of the above intuition that works even for non-product distributions.

\begin{lemma}\label{lem:tvdist_after_product}
Let $\Pcal$, $\Qcal$, and $\Wcal$ be distributions over an $n$-dimensional product space, and let $S\subseteq[n]$ be a non-empty set of size $s$.
Assume
\begin{itemize}
\item $\Pcal|_S$ and $\Wcal|_S$ are two product distributions,
\item $\tvdist{\Pcal|_{\cbra{i}}-\Wcal|_{\cbra i}}\ge\eps$ holds for all $i\in S$, and
\item $\Wcal(x)\ge\eta\cdot\Qcal(x)$ holds for some $\eta>0$ and all $x$.
\end{itemize}
Then
$$
\tvdist{\Pcal-\Qcal}\ge1-2\cdot e^{-\eps^2s/2}/\eta.
$$
\end{lemma}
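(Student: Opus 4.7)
The plan is to follow exactly the strategy sketched in the overview: manufacture per-coordinate witness events, amplify them into a single high-probability event via independence plus Hoeffding concentration, and then translate the resulting distance between $\Pcal$ and $\Wcal$ into a distance between $\Pcal$ and $\Qcal$ using the pointwise domination hypothesis.

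First I would apply \Cref{fct:tvdist} coordinate-by-coordinate: for each $i\in S$, pick an event $\Ecal_i$ depending only on the $i$-th coordinate such that $\Pcal|_{\cbra i}(\Ecal_i)-\Wcal|_{\cbra i}(\Ecal_i)\ge\eps$. Setting $N=\sum_{i\in S}\indicator[\Ecal_i]$, the product structure of $\Pcal|_S$ and $\Wcal|_S$ implies that the summands are mutually independent under both laws, while by construction $\E_\Pcal[N]-\E_\Wcal[N]\ge s\eps$. Fixing the threshold $T=\tfrac12(\E_\Pcal[N]+\E_\Wcal[N])$ and applying Hoeffding's inequality (\Cref{fct:hoeffding}) to each side gives
\[
\Pr_{\Pcal}\sbra{N\le T}\le e^{-s\eps^2/2}\quad\text{and}\quad\Pr_{\Wcal}\sbra{N>T}\le e^{-s\eps^2/2},
\]
so the event $\Fcal\coloneqq\cbra{N>T}$ already witnesses $\tvdist{\Pcal-\Wcal}\ge 1-2e^{-s\eps^2/2}$.

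Finally, I would use the pointwise bound to transfer the witness: $\Wcal(x)\ge\eta\cdot\Qcal(x)$ implies $\Qcal(\Fcal)\le\Wcal(\Fcal)/\eta\le e^{-s\eps^2/2}/\eta$, so $\Pcal(\Fcal)-\Qcal(\Fcal)\ge 1-(1+1/\eta)\,e^{-s\eps^2/2}$, which absorbs into the claimed $1-2e^{-s\eps^2/2}/\eta$ since summing the pointwise hypothesis forces $\eta\le 1$. There is no real obstacle here; the only subtle point is choosing a single threshold $T$ that simultaneously serves as a lower-tail Hoeffding witness under $\Pcal$ and an upper-tail witness under $\Wcal$, and verifying that the $\Wcal$-to-$\Qcal$ conversion cleanly folds the extra $1/\eta$ factor into the final constant.
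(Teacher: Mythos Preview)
Your proposal is correct and follows essentially the same approach as the paper: pick per-coordinate witness events, define a threshold event on their count, apply Hoeffding under both $\Pcal|_S$ and $\Wcal|_S$, and then convert the $\Wcal$-bound to a $\Qcal$-bound via the pointwise domination (including the same observation that $\eta\le 1$ absorbs the extra additive term). The only cosmetic difference is that the paper fixes its threshold at $\E_\Pcal[N]-s\eps/2$ rather than the midpoint $T$, but both choices yield the identical Hoeffding exponent.
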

\begin{proof}
By \Cref{fct:tvdist}, for each $i\in S$ there exists an event $\Ecal_i$ such that $\Pcal|_{\cbra{i}}(\Ecal_i)-\Wcal|_{\cbra{i}}(\Ecal_i)\ge\eps$.
Let $\indicator_{\Ecal_i}\in\bin$ be the indicator of event $\Ecal_i$.
Now define event $\Ecal$ such that
\begin{center}
$\Ecal$ happens if and only if $\frac1s\sum_{i\in S}\pbra{\indicator_{\Ecal_i}-\Pcal|_{\cbra{i}}(\Ecal_i)}\ge-\eps/2$.
\end{center}
Then
\begin{align*}
\Pcal(\Ecal)
&=\Pcal|_S(\Ecal)
=\Pr_{\Pcal|_S}\sbra{\frac1s\sum_{i\in S}\pbra{\indicator_{\Ecal_i}-\Pcal|_{\cbra{i}}(\Ecal_i)}\ge-\eps/2}\\
&=1-\Pr_{\Pcal|_S}\sbra{\frac1s\sum_{i\in S}\pbra{\indicator_{\Ecal_i}-\Pcal|_{\cbra{i}}(\Ecal_i)}<-\eps/2}\\
&\ge1-e^{-\eps^2s/2}.
\tag{since $\Pcal|_S$ is a product distribution and by \Cref{fct:hoeffding}}
\end{align*}
We also have
\begin{align*}
\Wcal(\Ecal)
&=\Wcal|_S(\Ecal)=\Pr_{\Wcal|_S}\sbra{\frac1s\sum_{i\in S}\pbra{\indicator_{\Ecal_i}-\Pcal|_{\cbra{i}}(\Ecal_i)}\ge-\eps/2}\\
&\le\Pr_{\Wcal|_S}\sbra{\frac1s\sum_{i\in S}\pbra{\indicator_{\Ecal_i}-\Wcal|_{\cbra{i}}(\Ecal_i)}\ge\eps/2}
\tag{since $\Pcal_i(\Ecal_i)\ge\Wcal|_{\cbra{i}}(\Ecal_i)+\eps$}\\
&\le e^{-\eps^2s/2}.
\tag{since $\Wcal|_S$ is a product distribution and by \Cref{fct:hoeffding}}
\end{align*}
Since $\Wcal(x)\ge\eta\cdot\Qcal(x)>0$ for all $x$, we have
$$
\Wcal(\Ecal)=\sum_{x:\Ecal\text{ happens}}\Wcal(x)\ge\sum_{x:\Ecal\text{ happens}}\Qcal(x)\cdot\eta=\eta\cdot\Qcal(\Ecal),
$$
which then implies $\Qcal(\Ecal)\le e^{-\eps^2s/2}/\eta$.
Therefore by \Cref{fct:tvdist}, we obtain the desired bound.
\end{proof}

Suppose we can prove distance bounds from a distribution to a set of distributions. This should establish distance bounds from the former distribution to any distribution inside the convex hull of the latter set of distributions.
This is characterized by \Cref{lem:tvdist_after_conditioning}, a special case of which appears in \cite[Section 4.1]{viola2020sampling}.

\begin{lemma}\label{lem:tvdist_after_conditioning}
Let $\Pcal_1,\ldots,\Pcal_t$ and $\Qcal$ be distributions.
Assume there exists an event $\Ecal$ and values $\eps_1,\eps_2,\eps_3$ such that for each $i\in[t]$,
\begin{itemize}
\item either $\tvdist{\Pcal_i-\Qcal}\ge1-\eps_1$ holds,
\item or $\Pcal_i(\Ecal)\le\eps_2$ and $\Qcal(\Ecal)\ge1-\eps_3$ hold.
\end{itemize}
Then for any distribution $\Pcal$ as a convex combination of $\Pcal_1,\ldots,\Pcal_t$, we have
$$
\tvdist{\Pcal-\Qcal}\ge1-(t+1)\cdot\eps_1-\eps_2-\eps_3.
$$
\end{lemma}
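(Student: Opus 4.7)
The plan is to convert each of the two hypotheses into a distinguishing event and then apply the variational characterization of total variation distance from \Cref{fct:tvdist}. For each $i\in[t]$ satisfying the first hypothesis, \Cref{fct:tvdist} yields an event $\Ecal_i$ with $\Pcal_i(\Ecal_i)\ge 1-\eps_1$ and $\Qcal(\Ecal_i)\le\eps_1$. For each $i\in[t]$ satisfying the second hypothesis, the hypothesis itself already furnishes such an event, namely $\bar\Ecal$, which satisfies $\Pcal_i(\bar\Ecal)\ge 1-\eps_2$ and $\Qcal(\bar\Ecal)\le\eps_3$. The crucial observation is that the event $\bar\Ecal$ is the \emph{same} across all second-case indices, whereas the events $\Ecal_i$ from the first case are potentially distinct.

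Next I would take $\Fcal$ to be the union of all these distinguishing events over $i\in[t]$. Because $\Fcal$ contains the specific distinguishing event tailored to index $i$, for every $i$ we have $\Pcal_i(\Fcal)\ge 1-\max\{\eps_1,\eps_2\}\ge 1-\eps_1-\eps_2$. Since $\Pcal$ is a convex combination of the $\Pcal_i$, the same lower bound transfers to $\Pcal(\Fcal)$. On the $\Qcal$ side, I would apply a union bound: at most $t$ distinct events of the form $\Ecal_i$ from the first case appear in the union (each contributing at most $\eps_1$), and a single event $\bar\Ecal$ accounts for all second-case indices (contributing at most $\eps_3$), giving $\Qcal(\Fcal)\le t\eps_1+\eps_3$.

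Combining the two estimates via \Cref{fct:tvdist} yields
\[
\tvdist{\Pcal-\Qcal}\ge \Pcal(\Fcal)-\Qcal(\Fcal)\ge (1-\eps_1-\eps_2)-(t\eps_1+\eps_3)=1-(t+1)\eps_1-\eps_2-\eps_3,
\]
which is precisely the claimed bound. There is no real obstacle here; the proof is essentially careful bookkeeping of a union bound. The only slightly nontrivial point is the asymmetric treatment of the two cases when aggregating on the $\Qcal$ side: reusing the common event $\bar\Ecal$ across all case-2 indices is what keeps the coefficient of $\eps_3$ in the conclusion independent of $t$, while the coefficient of $\eps_1$ necessarily picks up a factor of $t$ from the potentially $t$-many distinct events $\Ecal_i$.
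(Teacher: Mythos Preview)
Your proposal is correct and essentially identical to the paper's proof: both define the distinguishing event as the union of $\neg\Ecal$ together with the witnesses $\Ecal_i$ for the case-1 indices, lower-bound $\Pcal(\Fcal)$ via the convex combination, and upper-bound $\Qcal(\Fcal)$ via a union bound. The only cosmetic difference is that the paper writes the $\Pcal$-side bound as $(1-\eps_1)\sum_{i\in T}\alpha_i+(1-\eps_2)\sum_{i\notin T}\alpha_i\ge 1-\eps_1-\eps_2$ rather than passing through $1-\max\{\eps_1,\eps_2\}$.
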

\begin{proof}
Let $T\subseteq[t]$ be the set of distributions such that $\tvdist{\Pcal_i-\Qcal}\ge1-\eps_1$.
By \Cref{fct:tvdist}, for each $i\in T$ there exists an event $\Ecal_i$ such that $\Pcal_i(\Ecal_i)-\Qcal(\Ecal_i)\ge1-\eps_1$.
This means 
$$
\Pcal_i(\Ecal_i)\ge1-\eps_1
\quad\text{and}\quad
\Qcal(\Ecal_i)\le\eps_1
\quad\text{for $i\in T$.}
$$
Now define the event $\Ecal'=(\neg\Ecal)\lor\bigvee_{i\in T}\Ecal_i$.
Assume $\Pcal=\sum_{i\in[t]}\alpha_i\cdot\Pcal_i$ is the convex combination.
Then
$$
\Pcal(\Ecal')\ge\sum_{i\in T}\alpha_i\cdot\Pcal_i(\Ecal_i)+\sum_{i\notin T}\alpha_i\cdot\Pcal_i(\neg\Ecal)\ge(1-\eps_1)\cdot\sum_{i\in T}\alpha_i+(1-\eps_2)\cdot\sum_{i\notin T}\alpha_i\ge1-\eps_1-\eps_2,
$$
since $\sum_{i\in[t]}\alpha_i=1$.
In addition,
$$
\Qcal(\Ecal')\le\Qcal(\neg\Ecal)+\sum_{i\in T}\Qcal(\Ecal_i)\le\eps_3+t\cdot\eps_1.
$$
Then the desired bound follows from \Cref{fct:tvdist}.
\end{proof}

The next lemma shows that if two coupled random vectors are both individually $\gamma$-biased, they will still have Hamming weight mismatch (even modulo an integer) as long as parts of their entries are independent.

\begin{lemma}\label{lem:anticoncentration_after_coupling_all}
Let $(X,Y,Z,W)$ be a random variable where $X,Z\in\bin$ and $Y,W\in\bin^{t-1}$.
Let $q\ge\min\cbra{3,t+1}$ be an integer.\footnote{If $q\ge t+1$, then one may instead apply \Cref{lem:anticoncentration_after_coupling_all} with modulus $t+1$, since $X+|Y|\equiv Z+|W|\Mod q$ is equivalent to $X+|Y|=Z+|W|$ for $q\ge t+1$.}
Assume
\begin{itemize}
\item $X$ is independent from $(Z,W)$ and $Z$ is independent from $(X,Y)$,
\item $(X,Y)$ and $(Z,W)$ have the same marginal distribution and are $\eps$-close to $\Ucal_\gamma^t$ for some $\gamma\in(0,1/2]$\footnote{\Cref{lem:anticoncentration_after_coupling_all} holds for $\gamma\in[1/2,1)$ as well, with $\gamma$ replaced by $1-\gamma$ in the bounds. This can be achieved by simply flipping zeros and ones of $(X,Y,Z,W)$. This trick carries over the $\eps$-closeness to $\Ucal_{1-\gamma}^t$ and preserves the congruence.} and 
$$
\eps\le\frac\gamma{4q}\cdot2^{-50\gamma(t-1)/q^2}.
$$
\end{itemize}
Then we have
$$
\Pr\sbra{X+|Y|\equiv Z+|W|\Mod q}\le1-\frac\gamma{2q}\cdot2^{-50\gamma(t-1)/q^2}.
$$
\end{lemma}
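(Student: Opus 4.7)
The plan is to exploit $X\perp Z$ (which follows from $X\perp(Z,W)$) and decompose the target probability by conditioning on $(X,Z)\in\bin^2$. Conditioned on $(X,Z)=(x,z)$, the pair $(Y,W)$ has $Y$-marginal $Y\mid X{=}x$ (by $Z\perp(X,Y)$) and $W$-marginal $W\mid Z{=}z$ (by $X\perp(Z,W)$), although the joint between $Y$ and $W$ may be arbitrary. Hence the standard maximum coupling inequality yields
\[
\Pr\sbra{|Y|\equiv|W|+c\Mod{q}\mid X{=}x,Z{=}z}\le1-\tvdist{\mu_x-\sigma_{z,c}},
\]
where $\mu_x$ denotes the law of $|Y|\bmod q$ under $Y\mid X{=}x$ and $\sigma_{z,c}$ the law of $(|W|+c)\bmod q$ under $W\mid Z{=}z$. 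Using \Cref{fct:mult_apx} together with $\Pr_{\Ucal_\gamma^t}\sbra{X=x}\ge\gamma$, both $\mu_x$ and $\sigma_{z,c}$ are $O(\eps/\gamma)$-close in TV to $\beta:=\mathrm{Bin}(t-1,\gamma)\bmod q$ and $\beta+c\bmod q$ respectively, so by the triangle inequality $\tvdist{\mu_x-\sigma_{z,c}}\ge\tvdist{\beta-(\beta+c)}-O(\eps/\gamma)$.

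The $x=z$ cases (where $c=0$) give only the trivial bound $1$, so the useful gain must come from $x\ne z$ (where $c\in\cbra{\pm1}$), which occurs with probability $\Pr\sbra{X\ne Z}=2p(1-p)\ge\gamma$ (since $p:=\Pr\sbra{X=1}\approx\gamma\le1/2$ and $1-\gamma\ge1/2$). The heart of the argument is thus a quantitative lower bound on $\tvdist{\beta-(\beta+1)}$ for $q\ge3$. Setting $\omega=e^{2\pi i/q}$, the $\ell^1$-to-Fourier duality $\|a\|_1\ge|\sum_s a_s\omega^s|$ applied to $a_s=\beta(s)-\beta(s-1)$ yields $\|a\|_1\ge|(1-\omega)\hat\beta(1)|$, whence $\tvdist{\beta-(\beta+1)}\ge\sin(\pi/q)\cdot|\hat\beta(1)|\ge\tfrac{2}{q}|\hat\beta(1)|$. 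From $|\hat\beta(1)|^2=(1-4\gamma(1-\gamma)\sin^2(\pi/q))^{t-1}\ge e^{-O(\gamma(t-1)/q^2)}$ (using $\sin^2(\pi/q)\le\pi^2/q^2$ and $(1-x)^n\ge e^{-O(xn)}$ for $x$ bounded away from $1$), a careful tracking of constants gives $\tvdist{\beta-(\beta\pm1)}\ge\tfrac{1}{q}\cdot 2^{-50\gamma(t-1)/q^2}$.

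Combining all pieces and using the hypothesis $\eps\le\frac{\gamma}{4q}\cdot 2^{-50\gamma(t-1)/q^2}$ to ensure the $O(\eps/\gamma)$ approximation error is at most half the Fourier main term, we conclude
\[
\Pr\sbra{X+|Y|\equiv Z+|W|\Mod{q}}\le1-\Pr\sbra{X\ne Z}\cdot\pbra{\tvdist{\beta-(\beta\pm1)}-O(\eps/\gamma)}\le1-\frac{\gamma}{2q}\cdot 2^{-50\gamma(t-1)/q^2}.
\]
The main obstacle is the quantitative Fourier lower bound on $\tvdist{\beta-(\beta+1)}$: because $\beta$ is itself close to uniform on $\Zbb/q\Zbb$ by a local limit theorem, its distance to the shift $\beta+1$ is already exponentially small in $\gamma(t-1)/q^2$, and one must extract precisely the right exponent from the $k=1$ Fourier coefficient while tracking constants to match the ``$50$'' in the lemma's exponent. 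The remaining ingredients---the conditioning decomposition, the max-coupling inequality, and the triangle-inequality manipulations---are routine.
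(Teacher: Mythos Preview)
Your proposal is correct and follows essentially the same route as the paper's proof. Both arguments condition on $(X,Z)$, use the independence assumptions to identify the conditional marginals of $|Y|\bmod q$ and $|W|\bmod q$, apply the maximal-coupling inequality $\Pr[A=B]\le 1-\tvdist{\mu_A-\mu_B}$, transfer to the ideal $\mathrm{Bin}(t-1,\gamma)\bmod q$ via \Cref{fct:mult_apx} and the triangle inequality, and then lower-bound $\tvdist{\beta-(\beta+1)}$ by exactly the Fourier computation you sketch (this is the paper's \Cref{clm:tvdist_gamma_biased_shift}, proved in the appendix). The only cosmetic differences are that the paper separates out the $t=1$ case explicitly (where $q$ may equal $2$) and tracks the constants a bit more tightly to get $\tvdist{\Dcal_0-\Dcal_1}\ge\tfrac{2}{q}\cdot 2^{-50\gamma(t-1)/q^2}$ rather than $\tfrac{1}{q}$, which is what makes the final constants close; your $\Pr[X\ne Z]\ge\gamma$ is slightly optimistic (the paper uses $\ge\gamma/2$), but the slack in the Fourier bound absorbs this.
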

\begin{proof}
If $t=1$ then we observe that $\Pr\sbra{X+|Y|=Z+|W|}=\Pr\sbra{X=Z}$ as $q\ge2$.
Since $X$ and $Z$ are independent and of the same distribution $\eps$-close to $\Ucal_\gamma^1$, we have $\Pr\sbra{X=1}=\Pr\sbra{Z=1}\in[\gamma-\eps,\gamma+\eps]$.
Hence 
\begin{equation}\label{eq:lem:anticoncentration_after_coupling_2}
\Pr\sbra{X=Z}=\Pr\sbra{X=1}^2+(1-\Pr\sbra{X=1})^2\le(\gamma-\eps)^2+(1-\gamma+\eps)^2\le1-\gamma/2,
\end{equation}
where we use the fact that $\gamma\in(0,1/2]$ and $\eps\le\gamma/2$.

Now we assume $t\ge2$ and $q\ge3$.
Expand $\Pr\sbra{X+|Y|\equiv Z+|W|\Mod q}$ as
\begin{align}\label{eq:lem:anticoncentration_after_coupling_1}
\sum_{x,z\in\bin}\Pr\sbra{X=x,Z=z}\Pr\sbra{x+|Y|\equiv z+|W|\Mod q\mid X=x,Z=z}.
\end{align}
For fixed $x$ and $z$, consider the distribution of $x+|Y|\bmod q$ conditioned on $X=x,Z=z$.
Since $Z$ is independent from $(X,Y)$, it is the same as the distribution, denoted by $\Pcal_x$, of $x+|Y|\bmod q$ conditioned on $X=x$.
Similarly define $\Qcal_z$ as the distribution of $z+|W|\bmod q$ conditioned on $Z=z$ (or equivalently, conditioned on $Z=z,X=x$).

Since $(X,Y)$ is $\eps$-close to $\Ucal_\gamma^t$, by \Cref{fct:mult_apx}, $\Pcal_0$ is $\frac{2\eps}{1-\gamma}$-close to $\Dcal_0$, the distribution of $|V|\bmod q$ for $V\sim\Ucal_\gamma^{t-1}$.
Similarly, $\Qcal_1$ is $\frac{2\eps}{\gamma}$-close to $\Dcal_1$, the distribution of $1+|V|\bmod q$ for $V\sim\Ucal_\gamma^{t-1}$.
Hence 
\begin{align*}
\Pr\sbra{|Y|\equiv1+|W|\Mod q\mid X=0,Z=1}
&\le1-\tvdist{\Pcal_0-\Qcal_1}
\tag{by \Cref{fct:tvdist}}\\
&\le1+\frac{2\eps}\gamma+\frac{2\eps}{1-\gamma}-\tvdist{\Dcal_0-\Dcal_1}
\tag{by \Cref{fct:mult_apx}}\\
&\le1+\frac{4\eps}\gamma-\tvdist{\Dcal_0-\Dcal_1}
\tag{since $\gamma\le1/2$}\\
&\le1+\frac{4\eps}\gamma-\frac2q\cdot2^{-50\gamma(t-1)/q^2},
\end{align*}
where we apply the following claim for the last inequality.
\Cref{clm:tvdist_gamma_biased_shift} is proved in \Cref{app:proof_of_clm:tvdist_gamma_biased_shift} by Fourier analysis.

\begin{claim}\label{clm:tvdist_gamma_biased_shift}
$\tvdist{\Dcal_0-\Dcal_1}\ge\frac2q\cdot2^{-50\gamma(t-1)/q^2}$ for any $q\ge3$.
\end{claim}

By our assumption on $\eps$, we now have
\begin{align*}
\Pr\sbra{|Y|\equiv1+|W|\Mod q\mid X=0,Z=1}
\le1-\frac1q\cdot2^{-50\gamma(t-1)/q^2}.
\end{align*}
The same bound holds for $\Pr\sbra{1+|Y|\equiv|W|\Mod q\mid X=1,Z=0}$.
Plugging back into \Cref{eq:lem:anticoncentration_after_coupling_1} and using \Cref{eq:lem:anticoncentration_after_coupling_2}, we can upper bound $\Pr\sbra{X+|Y|\equiv Z+|W|\Mod q}$ by
\begin{align*}
\Pr[X=Z]+\Pr[X\neq Z]\cdot\pbra{1-\frac1q\cdot2^{-50\gamma(t-1)/q^2}}
\le1-\frac\gamma{2q}\cdot2^{-50\gamma(t-1)/q^2}
\end{align*}
as desired.
\end{proof}

We remark that \Cref{lem:anticoncentration_after_coupling_all} does not hold when $q=2$ and $t\ge2$, even if we assume $(X,Y)=(Z,W)=\Ucal^t$ and $t=2$: let $B$ be an unbiased coin independent from $X$ and $Z$. Then define $(X,Y,Z,W)=(X,X\oplus B,Z,Z\oplus B)$, and one can verify that this distribution satisfies all the conditions yet has $X+Y\equiv Z+W \Mod{2}$ always.

\subsection{Graph Elimination: Non-Connected Vertices}\label{sec:graph_elim_vtx}

In this section we prove the graph theoretic results mentioned in \Cref{sec:overview} that aim to reduce a general $d$-local function to a more structured one: a $d$-local function with many non-connected output bits.

Recall the notation and terminology for bipartite graphs from \Cref{sec:prelim}. In particular, recall that $d$-left-bounded means each of the left vertices has degree at most $d$, and two left vertices are non-connected if they are not both adjacent to the same right vertex. We show that a $d$-left-bounded bipartite graph $G=([n],[m],E)$ has many non-connected left vertices after removing few right vertices. 

Let $\beta,\lambda\ge1$ be parameters (not necessarily constant). We formalize the desired property as the following \Cref{as:non-adj_vtx} with parameters $\beta,\lambda$.

\begin{property}\label{as:non-adj_vtx}
There exists $S\subseteq[m]$ such that deleting those right vertices (and their incident edges) produces a bipartite graph with $r$ non-connected left vertices satisfying
$$
|S|\le\frac r\beta
\quad\text{and}\quad
r\ge\frac n\lambda.
$$
\end{property}

Assuming \Cref{as:non-adj_vtx} is false, we prove the following graph elimination result to show that we can remove many edges by deleting few right vertices.
Later we will iteratively apply this with a proper choice of relation between $\beta$ and $\lambda$ to show that actually \Cref{as:non-adj_vtx} always holds.

\begin{lemma}\label{lem:non-adj_vtx_new}
Assume \Cref{as:non-adj_vtx} does not hold for a particular choice of (not necessarily constant) parameters $\beta,\lambda\ge1$ and $d$-left-bounded bipartite graph $G=([n],[m],E)$ with $d \ge 1$.
Let $U\subseteq[m]$ be of size at most $n/\alpha$.
Define
\begin{equation}\label{eq:lem:non-adj_vtx_new_1}
s=\min\cbra{n,\frac\lambda{2d},\frac\alpha{2d\beta}}.
\end{equation}
If $s\ge1$, then there exists $V\subseteq[m]\setminus U$ of size at most $n/s$ and $\sum_{j\in V}\deg_G(j)\ge n/2$.
\end{lemma}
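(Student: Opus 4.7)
The plan is to argue by contradiction. I would suppose the conclusion fails; then in particular the top $\lfloor n/s\rfloor$ right vertices in $[m]\setminus U$ ranked by degree have total degree strictly less than $n/2$. From this I would derive a contradiction by extracting a witness to \Cref{as:non-adj_vtx}, whose negation is assumed.

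First, I would quantify the consequences of the failure of \Cref{as:non-adj_vtx} at each degree threshold. For $1\le\ell\le s$, set $S_\ell=\{j\in[m]\colon\deg_G(j)\ge\ell\}$. In $G-S_\ell$ every left degree remains at most $d$ and every right degree is at most $\ell-1$, so a standard greedy packing -- iteratively pick any left vertex and delete it together with its at most $d(\ell-1)$ competitors through shared right neighbors -- produces $r\ge n/(d\ell)$ mutually non-connected left vertices. Because $s\le\lambda/(2d)$ gives $r\ge 2n/\lambda>n/\lambda$, the failure of \Cref{as:non-adj_vtx} forces $|S_\ell|>r/\beta\ge n/(d\ell\beta)$. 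Since $\ell\le s\le\alpha/(2d\beta)$ implies $|U|\le n/\alpha\le n/(2d\beta\ell)$, restricting to $[m]\setminus U$ yields the key tail bound
\[
N_\ell^{(U)}:=|S_\ell\setminus U|>\frac{n}{2d\ell\beta},\qquad 1\le\ell\le s.
\]

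Next, I would convert these tail bounds into the claimed edge coverage. Taking $V^\star$ to be the top $\lfloor n/s\rfloor$ right vertices of $[m]\setminus U$ by degree, the standard identity $\sum_{j\in V^\star}\deg_G(j)=\sum_{\ell\ge 1}\min(n/s,N_\ell^{(U)})$ lets me plug in the bounds above and split the sum at the crossover $\ell^\star\approx s/(2d\beta)$: below this threshold the cap $n/s$ binds (each term contributes $n/s$), while above it the tail bound $n/(2d\ell\beta)$ is active. Aggregating the two ranges and invoking the constraints baked into $s$, I would aim to push the sum strictly above $n/2$, contradicting the contradiction-hypothesis.

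The main obstacle will be precisely this summation step. A naive estimate gives only $\tfrac{n}{2d\beta}\bigl(1+\ln(2d\beta)\bigr)$, which slips below $n/2$ once $d\beta$ is moderately large, so crossing the $n/2$ threshold requires extracting additional slack from the inequality chain -- for instance by using the sharper greedy bound $r\ge n/(d(\ell-1)+1)$, invoking the failure of \Cref{as:non-adj_vtx} in the full range $\ell\le\lambda/d$ (which can be strictly larger than $2s$), or reorganizing the sum around the true crossover $\ell^\star=\max\{\ell\colon N_\ell^{(U)}\ge n/s\}$, which may be considerably larger than the a priori $s/(2d\beta)$. Balancing these refinements against the three-way cutoff $s=\min\{n,\lambda/(2d),\alpha/(2d\beta)\}$ is the delicate part I expect the actual argument to navigate.
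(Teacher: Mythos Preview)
Your plan hinges on summing the tail bounds $N_\ell^{(U)}>n/(2d\ell\beta)$ across $1\le\ell\le s$ via the layer-cake identity, and you already see that this yields only about $\tfrac{n}{2d\beta}\bigl(1+\ln(2d\beta)\bigr)$, which is $o(n)$ as $d\beta\to\infty$. The refinements you list---sharper greedy constants, extending the range of $\ell$ up to $\lambda/d$, or locating the true crossover $\ell^\star$---do not change the order of magnitude: the harmonic shape of the bounds $n/(2d\ell\beta)$ caps the achievable total at $\Theta\bigl(n\log(d\beta)/(d\beta)\bigr)$, so this route cannot reach $n/2$ in general. This is a genuine gap, not a matter of bookkeeping.

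The paper's argument works at the \emph{single} threshold $\ell=s$ and avoids any summation. Set $A=\{j\in[m]\setminus U:\deg_G(j)\ge s\}$ and split on $\sum_{j\in A}\deg_G(j)$. If this sum is at least $n/2$, then either $|A|\le n/s$ and $V=A$ works directly, or $|A|>n/s$ and any $\lfloor n/s\rfloor$ elements of $A$ already contribute total degree at least $s\lfloor n/s\rfloor\ge n/2$. If instead $\sum_{j\in A}\deg_G(j)<n/2$, then fewer than $n/2$ left vertices can have a right neighbor (outside $U$) of degree $\ge s$, so at least $n/2$ left vertices have all their post-$U$ right neighbors of degree $<s$; a greedy packing among these gives $r\ge\tfrac{n/2}{ds}$ non-connected left vertices with deletion set $S=U$, and the three-way minimum defining $s$ then forces $|U|\le r/\beta$ and $r\ge n/\lambda$---a witness to \Cref{as:non-adj_vtx}, contradicting the hypothesis. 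The idea you are missing is this single-threshold dichotomy: rather than accumulating weak tail information over many $\ell$, one converts the lone degree sum $\sum_{j\in A}\deg_G(j)$ into a bound on how many left vertices are ``blocked'' by high-degree right vertices, and the unblocked ones supply the greedy packing.
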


The proof exploits that unless such a $V$ exists, there are many small neighborhoods. 
If so, we can find many non-connected left vertices by a simple greedy argument, contradicting to the assumption that \Cref{as:non-adj_vtx} is false.

\begin{proof}[Proof of \Cref{lem:non-adj_vtx_new}]
We first remove right vertices (and their incident edges) in $U$ to obtain graph $G'$.
Note that $\deg_G(j)=\deg_{G'}(j)$ holds for any $j\in[m]\setminus U$. Hence for simplicity we use $\deg(j)$ to denote both of them.

For each $i\in[n]$, we say $N_{G'}(i)$ is a small left neighborhood if every $j\in I_{G'}(i)$ satisfies $\deg(j)<s$.
Since $G$ is $d$-left-bounded, each small left neighborhood has size less than $d\cdot s$.
Let $A=\cbra{j\in[m]\setminus U\colon\deg(j)\ge s}$. Then each $j\in A$ prevents $\deg(j)$ left neighborhoods from being small, which means that the number of small left neighborhoods is at least $n-\sum_{j\in A}\deg(j)$.
If $\sum_{j\in A}\deg(j)\ge n/2$, then we have the following cases:
\begin{itemize}
\item if $|A|\le n/s$, then \Cref{lem:non-adj_vtx_new} follows by setting $V=A$,
\item otherwise $|A|>n/s$. Then we pick an arbitrary set $V\subset A$ of size $\floorbra{n/s}$. Since $\deg(j)>s$ for all $j\in A\supset V$, we have $\sum_{j\in V}\deg(j)>s\cdot\floorbra{n/s}\ge n/2$ for any $0<s\le n$.
\end{itemize}
Now we assume $\sum_{j\in A}\deg(j)<n/2$, which means that we have at least $n/2$ small left neighborhoods.
We will show that this cannot happen.

Observe that two left vertices in $G'$ are non-connected if and only if one is not in the left neighborhood of the other.
Since $d,s\ge1$, among the left vertices with small left neighborhoods, we can find at least 
$$
r=\underbrace{\frac n2}_{\#i\colon N_{G'}(i)\text{ is small}}\cdot\underbrace{\frac1{d\cdot s}}_{|N_{G'}(i)|<d\cdot s}
$$ 
of them that are not connected to each other.
If \Cref{as:non-adj_vtx} is false, we must have
$$
\frac n\alpha\ge|U|>\frac r\beta=\frac n{2ds\beta}
\quad\text{or}\quad
r=\frac n{2ds}<\frac n\lambda.
$$
However by our choice \Cref{eq:lem:non-adj_vtx_new_1} of $s$, we now have a contradiction.
\end{proof}

We now show that \Cref{as:non-adj_vtx} always holds if $\lambda$ is not too small with respect to $\beta$ and $d$.

\begin{corollary}\label{cor:non-adj_vtx_new}
Let $\beta,\lambda\ge1$ be parameters (not necessarily constant), and let $G=([n],[m],E)$ be a $d$-left-bounded bipartite graph with $d \ge 1$.
If 
$$
\lambda\ge2d\cdot(2d\beta+1)^{2d},
$$
then \Cref{as:non-adj_vtx} holds for $G$.
\end{corollary}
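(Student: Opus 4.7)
\emph{Proposal.} The plan is to prove the contrapositive by contradiction, iteratively applying \Cref{lem:non-adj_vtx_new} and using the global edge bound for $G$ (at most $dn$ edges) as the source of the contradiction. As a preliminary, note that if $\lambda \ge n$ then \Cref{as:non-adj_vtx} is trivial: take $S = \emptyset$ and $r = 1$, since any single left vertex is vacuously pairwise non-connected, $1 \ge n/\lambda$, and $0 \le 1/\beta$. So I will assume $\lambda < n$ and suppose for contradiction that \Cref{as:non-adj_vtx} fails for $G$.

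Starting from $U_0 = \emptyset$, I will iteratively invoke \Cref{lem:non-adj_vtx_new} at step $i$ with $U = U_i$ and $\alpha_i := n/|U_i|$ (so the bound $|U_i| \le n/\alpha_i$ is tight), obtaining $V_i \subseteq [m] \setminus U_i$ with $|V_i| \le n/s_i$ and $\sum_{j \in V_i} \deg_G(j) \ge n/2$, and set $U_{i+1} = U_i \cup V_i$. Since the $V_i$'s are pairwise disjoint, after $2d$ iterations $U_{2d}$ touches at least $2d \cdot n/2 = dn$ edges, i.e., every edge of $G$. Removing $U_{2d}$ therefore leaves no edges and renders all $n$ left vertices pairwise non-connected, yielding $r = n$.

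The quantitative core is the inductive bound $u_i := |U_i|/n \le 2d\,(2d\beta + 1)^{i-1}/\lambda$ for $1 \le i \le 2d$. For the base case, $\lambda < n$ forces $s_0 = \lambda/(2d)$, giving $u_1 \le 2d/\lambda$. For the inductive step, the hypothesis $u_i \ge 2d/\lambda$ combined with $\lambda < n$ and $2d\beta \ge 1$ shows that the three terms of $s_i = \min\{n,\,\lambda/(2d),\,1/(2d\beta u_i)\}$ are ordered so that $s_i = 1/(2d\beta u_i)$, yielding the multiplicative recurrence $u_{i+1} \le (2d\beta + 1)\,u_i$. The hypothesis $\lambda \ge 2d(2d\beta + 1)^{2d}$ is calibrated precisely so that $u_{2d-1} \le 1/(2d\beta)$ (ensuring $s_{2d-1} \ge 1$, hence \Cref{lem:non-adj_vtx_new} applies at every step) and simultaneously $u_{2d} \le 1/\beta$. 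Together with $r = n \ge n/\lambda$, this witnesses \Cref{as:non-adj_vtx} with $S = U_{2d}$, contradicting our standing assumption.

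The main obstacle is the bookkeeping to ensure $s_i = 1/(2d\beta u_i)$ throughout the iteration: if another term realized the minimum, the clean multiplicative recurrence would break and the resulting weaker additive/hybrid bound would not close up to $u_{2d} \le 1/\beta$ from the stated hypothesis. Accumulating a factor of $(2d\beta + 1)$ at each of the $2d$ iterations is exactly what produces the $(2d\beta + 1)^{2d}$ in the hypothesis, explaining why the assumed lower bound on $\lambda$ takes this particular form.
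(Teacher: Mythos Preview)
Your approach is essentially the paper's: iterate \Cref{lem:non-adj_vtx_new} under the assumption that \Cref{as:non-adj_vtx} fails, and use the global edge bound $|E|\le dn$ to force a contradiction. The paper runs $2d+1$ rounds and overshoots the edge count directly; you run $2d$ rounds, observe that $U_{2d}$ already covers every edge, and exhibit $S=U_{2d}$ as an explicit witness. Both terminations are fine.

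There is, however, a real gap exactly where you flag the ``main obstacle''. You invoke ``the hypothesis $u_i\ge 2d/\lambda$'' to force $s_i=1/(2d\beta u_i)$, but no such lower bound is ever established: the base case gives only $u_1\le 2d/\lambda$, and \Cref{lem:non-adj_vtx_new} bounds $|V_i|$ only from above, so nothing prevents the actual $u_i$ from being tiny. In that case $1/(2d\beta u_i)$ is huge, the minimum is $\lambda/(2d)$ instead, and your multiplicative recurrence does not fire. The paper sidesteps this entirely by \emph{not} taking $\alpha_i$ tight: it fixes the deterministic sequence $\alpha_i=(2d\beta+1)^{2d-i}\cdot 2d\beta$ and $s_i=(2d\beta+1)^{2d-i}$ in advance, checks once that $s_i=\alpha_i/(2d\beta)$ always realizes the minimum, and then the recursion $|U_i|\le n/\alpha_{i-1}+n/s_{i-1}=n/\alpha_i$ becomes an exact identity with no case analysis. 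Your version is easily salvaged---since $\lambda<n$ forces $s_i=\min\{\lambda/(2d),\,1/(2d\beta u_i)\}$, one has $1/s_i=\max\{2d/\lambda,\,2d\beta u_i\}$, and the additive branch $u_{i+1}\le u_i+2d/\lambda$ is readily seen to stay below $B_{i+1}$ as well---but as written the inductive step does not go through.
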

\begin{proof}
If $n\le\lambda$, then we can simply pick an arbitrary left vertex in $G$ and set $S=\emptyset,r=1$ for \Cref{as:non-adj_vtx}.
Now we assume $n\ge\lambda\ge2d\cdot(2d\beta+1)^{2d}$ and \Cref{cor:non-adj_vtx_new} is false.

We will apply \Cref{lem:non-adj_vtx_new} iteratively.
For convenience, we define
$$
\alpha_i=(2d\beta+1)^{2d-i}\cdot2d\beta
\quad\text{and}\quad
s_i=(2d\beta+1)^{2d-i}
\quad\text{for $i=0,1,\ldots,2d$.}
$$
Notice that for each $i=0,1,\ldots,2d$, we have 
\begin{equation}\label{eq:cor:non-adj_vtx_new_1}
s_i\ge1
\quad\text{and}\quad
\min\cbra{n,\frac\lambda{2d},\frac{\alpha_i}{2d\beta}}=\frac{\alpha_i}{2d\beta}=s_i.
\end{equation}

Let $U_0=\emptyset$.
For each $i=0,1,\ldots,2d$, we apply \Cref{lem:non-adj_vtx_new} to $U_i$ with $\alpha_i$ to obtain $V_i$, then set $U_{i+1}=U_i\cup V_i$.
Now we prove by induction that $|U_i|\le n/\alpha_i$ and $|V_i|\le n/s_i$, which establishes the validity of the above process.
The base case $i=0$ is $|U_i|=0\le n/\alpha_i$ and $|V_i|\le n/s_i$ by \Cref{eq:cor:non-adj_vtx_new_1}.
For the inductive case $i\ge1$, we have
$$
|U_i|=|U_{i-1}|+|V_{i-1}|\le\frac n{\alpha_{i-1}}+\frac n{s_{i-1}}=\frac n{\alpha_i}
$$
by the induction hypothesis and our choice of $\alpha_{i-1},s_{i-1},\alpha_i$.
Then the size bound on $|V_i|$ follows again from \Cref{eq:cor:non-adj_vtx_new_1}.
This completes the induction.

Note that by \Cref{lem:non-adj_vtx_new}, we have
$$
\sum_{j\in U_{2d+1}}\deg_G(j)=\sum_{i=0}^{2d}\sum_{j\in V_i}\deg_G(j)\ge(2d+1)\cdot n/2,
$$
contradicting the fact that $G$ has at most $d\cdot n$ edges, as it is $d$-left-bounded.
Hence \Cref{cor:non-adj_vtx_new} must be true.
\end{proof}

We will apply this result in the proof of \Cref{prop:biased_structure} to show that we can find many independent output bits of a $d$-local function by conditioning on only a few input bits.
In addition, in \Cref{app:tightness_vtx}, we will show that the bound in \Cref{cor:non-adj_vtx_new} is essentially tight.

\subsection{Graph Elimination: Non-Connected Neighborhoods}\label{sec:graph_elim_neigh}

Similar to \Cref{sec:graph_elim_vtx}, here we aim to reduce a general $d$-local function to one having many non-connected neighborhoods of small size by deleting a few input bits. However the situation here is much more complicated than the one in \Cref{sec:graph_elim_neigh}, particularly because in later applications, we will impose different constraints between the number of input bits and the number of non-connected neighborhoods.

Let $\lambda,\kappa\ge1$ be parameters (not necessarily constant) and $G=([n],[m],E)$ be a $d$-left-bounded bipartite graph.
Let $F(\cdot)$ be a function to be chosen based on later applications.
We will require that $G$ has many non-connected left neighborhoods after removing few right vertices, formulated as the following property.

\begin{property}\label{as:non-adj_neigh}
There exists $S\subseteq[m]$ such that deleting those right vertices (and their incident edges) produces a bipartite graph with $r$ non-connected left neighborhoods of size at most $t$ satisfying
$$
|S|\le\frac r{F(t)}
\quad\text{and}\quad
r\ge\frac n\lambda
\quad\text{and}\quad
t\le\kappa.
$$
\end{property}

Similar to the previous section, we prove the following graph elimination result, which shows, under the condition that \Cref{as:non-adj_neigh} is false, we can remove many edges by deleting few right vertices.

\begin{lemma}\label{lem:graph_elim_non-adj_neigh}
Assume \Cref{as:non-adj_neigh} does not hold for a particular choice of (not necessarily constant) parameters $\lambda,\kappa\ge1$ and $d$-left-bounded bipartite graph $G = ([n], [m], E)$ with $d\ge 1$.
Let $U\subseteq[m]$ be of size at most $n/\alpha$, and let $s$ be another parameter.
If 
\begin{equation}\label{eq:lem:graph_elim_non-adj_neigh_1}
1\le s\le\min\cbra{n,\frac\kappa d}
\quad\text{and}\quad
1\le\alpha\le2\lambda\cdot F(d\cdot s)
\quad\text{and}\quad
\ln(\alpha\cdot d)\ge8d^4s^2\cdot F(d\cdot s),
\end{equation}
then there exists $V\subseteq[m]\setminus U$ of size at most $n/s$ and $\sum_{j\in V}\deg_G(j)\ge n/2$.
\end{lemma}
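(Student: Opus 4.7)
We argue by contradiction with the conclusion: suppose no subset $V \subseteq [m] \setminus U$ of size at most $n/s$ achieves degree sum at least $n/2$. The first step is a degree-profile analysis. Writing $A_\ell := \{j \in [m] \setminus U : \deg_G(j) \ge \ell\}$, a case split shows that for every threshold $\ell \ge s$ we must have $|A_\ell| < n/(2\ell)$. Indeed, if $|A_\ell| > n/s$, then any $\lfloor n/s \rfloor$ elements of $A_\ell$ form a valid $V$ with degree sum at least $s \lfloor n/s \rfloor \ge n/2$ (using $1 \le s \le n$), contradicting our assumption. Otherwise $|A_\ell| \le n/s$, so $V = A_\ell$ itself would be valid unless $\sum_{j \in A_\ell} \deg_G(j) < n/2$; combining this with $\ell \cdot |A_\ell| \le \sum_{j \in A_\ell} \deg_G(j)$ yields $|A_\ell| < n/(2\ell)$.

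The goal is now to exhibit a triple $(S, r, t)$ witnessing \Cref{as:non-adj_neigh}, contradicting its assumed failure. We set $t = ds \le \kappa$ (by the hypothesis on $s$) and take $S = U \cup A^*$ for a well-chosen collection $A^*$ of heavy right vertices in $[m] \setminus U$, so that every left neighborhood in $G \setminus S$ has size at most $ds$. A greedy argument, exploiting that each selected vertex $i$ blocks at most $|N(N(N(i)))| \le (ds)^3$ others, yields $r$ pairwise non-connected left neighborhoods of size at most $ds$. The crucial task is to simultaneously verify $|S| \le r/F(t)$ and $r \ge n/\lambda$, and this is precisely where the hypotheses $\alpha \le 2\lambda F(ds)$ and $\ln(\alpha d) \ge 8 d^4 s^2 F(ds)$ enter.

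I expect the main obstacle to be the multi-scale nature of this graph elimination. The naive choice $A^* = A_s$ gives $|A^*| < n/(2s)$, which is too large relative to the required bound $|S| \le r/F(ds)$ once $F(ds)$ is substantial (as will be the case in the applications, e.g., $F(t) = 2^t$). The remedy, echoing the harmonic-type argument of \Cref{lem:non-adj_vtx_new}, is to iterate across multiple geometric scales $\ell = s, 2s, 4s, \ldots$ up to roughly $\alpha d$: at each scale we remove the heavy vertices $A_\ell$, contributing at most $n/(2\ell)$ to $|S|$; over the $\Theta(\ln(\alpha d))$ scales the total cost telescopes geometrically, while the tightened degree control sharpens the greedy count $r$ accordingly. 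The $d^4 s^2 F(ds)$ factor in the hypothesis then precisely absorbs the per-scale blowup stemming from the second- and third-order neighborhood structure, which is the genuine new difficulty not present in the non-connected vertex analogue.
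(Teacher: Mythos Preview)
There is a genuine gap. Your plan is to assume the conclusion fails, extract the upper bounds $|A_\ell|<n/(2\ell)$, and then \emph{construct a single witness} $(S,r,ds)$ for \Cref{as:non-adj_neigh} by removing heavy vertices. But no single choice $S=U\cup A_\ell$ can satisfy $|S|\le r/F(ds)$: with $r$ obtained greedily, the requirement becomes (up to constants) $1/\ell \lesssim 1/(d^3 s^2 F(ds)\,\ell)$, i.e., $d^3 s^2 F(ds)\lesssim 1$, which is impossible. Your multi-scale patch does not help either: the sets $A_\ell$ are nested, so ``removing them across scales $\ell=s,2s,4s,\ldots$'' is literally the same as removing $A_s$, and summing the costs $n/(2\ell)$ over a geometric progression only gives $O(n/s)$, not anything smaller. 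You never produce a valid witness, so the argument does not close. (Also note \Cref{lem:non-adj_vtx_new} itself has no harmonic/multi-scale argument; you may be confusing it with its iterated corollary.)

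The paper reverses the direction of the inequalities. After reducing, as you do, to the case of at least $n/2$ ``small'' centers (those whose right neighbors all have degree $<s$), it does \emph{not} try to build one witness. Instead, for \emph{each integer} $\ell\in[1,K]$ with $K=\alpha/(4d^3s^2F(ds))$ it considers the candidate $S_\ell=U\cup B_\ell$, observes that in $G\setminus S_\ell$ each small neighborhood is connected to fewer than $d^3s^2\ell$ others (the middle factor is $\ell$, not $s$, since only the two endpoint steps use smallness), hence greedily obtains $r_\ell=n/(2d^3s^2\ell)$ non-connected neighborhoods of size $\le ds$. The assumed \emph{failure} of \Cref{as:non-adj_neigh} for this candidate then forces the \emph{lower} bound $|B_\ell|>n/(4d^3s^2F(ds)\,\ell)$. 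Summing over all $\ell\le K$ and using $\sum_{\ell\ge1}|B_\ell|=\sum_{j\notin U}\deg_G(j)\le dn$ yields $\ln K< 4d^4s^2F(ds)$, contradicting the hypothesis $\ln(\alpha d)\ge 8d^4s^2F(ds)$. The crux is the harmonic sum over \emph{all} integer thresholds (not geometric ones) combined with the edge-count ceiling; your upper bounds on $|A_\ell|$ point the wrong way for this.
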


The proof is similar to the proof of \Cref{lem:non-adj_vtx_new} and exploits that unless such a $V$ exists, there are many small neighborhoods. If so, consider taking $S = \{v \in [m] : \deg(v) \ge \ell\}$ in \Cref{as:non-adj_neigh}. Since we removed the vertices of high degree, a small neighborhood cannot be connected to too many others. Hence, unless this $S$ satisfies \Cref{as:non-adj_neigh}, it must be that $S$ is large. However, this implies there are many right vertices of large degree, violating our total degree bound.

\begin{proof}[Proof of \Cref{lem:graph_elim_non-adj_neigh}]
We first remove right vertices (and their incident edges) in $U$ to obtain graph $G'$.
Note that $\deg_G(j)=\deg_{G'}(j)$ holds for all $j\in[m]\setminus U$. Hence for simplicity we use $\deg(j)$ to denote both of them.
For each $i\in[n]$, we say $N_{G'}(i)$ is a small left neighborhood if every $j\in I_{G'}(i)$ satisfies $\deg(j)<s$.
Since $G$ is $d$-left-bounded, each small left neighborhood has size less than $d\cdot s$. 
By the same argument in the proof of \Cref{lem:non-adj_vtx_new}, the lemma holds unless there are at least $n/2$ small left neighborhoods, so assume this to be the case. We will show this cannot happen.

Let $K$ be a parameter to determine later.
For $1\le\ell\le K$, let $B_\ell=\cbra{j\in[m]\setminus U\colon\deg(j)\ge\ell}$ be the set of right vertices with degree at least $\ell$.
If we remove $B_\ell$ from $G'$ and obtain $H$, every small left neighborhood $N_{H}(i)$ is connected to 
$$
<
\underbrace{d\cdot s}_{i'\in N_H(i)}
\cdot\underbrace{d}_{j'\in I_H(i')}
\cdot\underbrace{\ell}_{i''\colon I_H(i'')\ni j'}
\cdot\underbrace{d}_{j''\in I_H(i'')}
\cdot\underbrace{s}_{i'''\colon I_H(i''')\ni j''}
=d^3s^2\ell
$$
small left neighborhoods.
Since there are at least $n/2$ small left neighborhoods and $d,s,\ell\ge1$, we can find
$$
r=\frac n{2d^3s^2\ell}
$$
non-connected small left neighborhoods, each of which has size less than $d\cdot s$. Setting 
$$
K=\frac\alpha{4d^3s^2F(d\cdot s)},
$$
we obtain 
\[
r\ge\frac n{2d^3s^2K}=\frac{2n\cdot F(d\cdot s)}\alpha,
\]
so \Cref{eq:lem:graph_elim_non-adj_neigh_1} implies
\begin{equation}\label{eq:lem:graph_elim_non-adj_neigh_2}
t = d\cdot s\le\kappa
\quad\text{and}\quad
r\ge\frac n\lambda.
\end{equation}

If \Cref{as:non-adj_neigh} is false, we must have
$$
|U|+|B_\ell|>\frac r{F(t)}=\frac n{2d^3s^2F(d\cdot s)\cdot\ell},
$$
since the other conditions are satisfied by \Cref{eq:lem:graph_elim_non-adj_neigh_2}.
Therefore, by \Cref{eq:lem:graph_elim_non-adj_neigh_1} and \Cref{eq:lem:graph_elim_non-adj_neigh_2}, we have
\begin{equation*}
|B_\ell|>\frac n{2d^3s^2F(d\cdot s)\cdot\ell}-\frac n\alpha
\ge\frac n{4d^3s^2F(d\cdot s)\cdot\ell}.
\end{equation*}
where the last inequality follows from
$$
4d^3s^2F(d\cdot s)\cdot\ell\le4d^3s^2F(d\cdot s)\cdot K=\alpha.
$$
Now we sum over\footnote{In the following inequality we do not need to assume $K\ge1$, since otherwise $\text{LHS}=0>\text{RHS}$ already holds.} $1\le\ell\le\floorbra{K}$ and obtain
\begin{align*}
\sum_{1\le\ell\le\floorbra{K}}|B_\ell|
>\sum_{1\le\ell\le\floorbra{K}}\frac n{4d^3s^2F(d\cdot s)\cdot\ell}
\ge\frac n{4d^3s^2F(d\cdot s)}\int_1^K\frac1\ell\sd\ell
=\frac{n\cdot\ln(K)}{4d^3s^2F(d\cdot s)}.
\end{align*}
Since $G'$ is $d$-left-bounded, we also have
$$
\sum_{1\le\ell\le\floorbra{K}}|B_\ell|
\le\sum_{\ell\ge1}\abs{\cbra{j\in[m]\setminus U\colon\deg(j)\ge\ell}}
=\text{number of edges in $G'$}\le d\cdot n.
$$

At this point, we obtain the relation
$$
d>\frac{\ln(K)}{4d^3s^2F(d\cdot s)}=\frac1{4d^3s^2F(d\cdot s)}\cdot\ln\pbra{\frac\alpha{4d^3s^2F(d\cdot s)}},
$$
or equivalently, $K \cdot \ln(K) < \alpha d$. Since $\alpha,d\ge1$, this implies $K<\frac{2\alpha\cdot d}{\ln(\alpha\cdot d)}$, i.e.,
$$
\ln(\alpha\cdot d)<8d^4s^2\cdot F(d\cdot s),
$$
which contradicts \Cref{eq:lem:graph_elim_non-adj_neigh_1}.
\end{proof}

Similar to \Cref{cor:non-adj_vtx_new}, we also show that \Cref{as:non-adj_neigh} holds if $\lambda,\kappa$ are not too small with respect to $d$ and the function $F$.

\begin{corollary}\label{cor:graph_elim_non-adj_neigh}
Let $\lambda,\kappa\ge1$ be parameters (not necessarily constant),  $F(\cdot)$ be an increasing function, and $G=([n],[m],E)$ be a $d$-left-bounded bipartite graph with $d \ge 1$.

Define
\begin{equation}\label{eq:cor:graph_elim_non-adj_neigh_1}
\tilde F(x)=\frac1d\cdot\exp\cbra{32d^4x^2\cdot F(2d\cdot x)}.
\end{equation}
Assume $H(\cdot)$ is an increasing function and $H(x)\ge\tilde F(x)$ for all $x\ge L$ where $L\ge1$ is some parameter not necessarily constant.
If $H(x)\ge2x$ for all $x\ge L$ and
\begin{equation}\label{eq:cor:graph_elim_non-adj_neigh_0}
F(x)\ge1
\text{ holds for all $x\ge1$}
\quad\text{and}\quad
\kappa\ge\lambda\ge d\cdot H^{(2d+2)}(L),
\end{equation}
where $H^{(k)}$ is the iterated $H$ of order $k$\footnote{$H^{(1)}(x)=H(x)$ and $H^{(k)}(x)=H(H^{(k-1)}(x))$ for $k\ge2$.}, then \Cref{as:non-adj_neigh} holds for $G$.
\end{corollary}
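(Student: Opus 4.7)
The plan is to proceed by contradiction, mirroring \Cref{cor:non-adj_vtx_new} but using \Cref{lem:graph_elim_non-adj_neigh} as the elimination step in place of \Cref{lem:non-adj_vtx_new}. Assume \Cref{as:non-adj_neigh} fails for $G$. We will construct disjoint subsets $V_0, V_1, \ldots, V_{2d+1} \subseteq [m]$, each satisfying $\sum_{j \in V_k} \deg_G(j) \ge n/2$. Summing these, $\sum_{k=0}^{2d+1} \sum_{j \in V_k} \deg_G(j) \ge (d+1) n > d n$, contradicting the fact that a $d$-left-bounded graph on $[n] \cup [m]$ has at most $d n$ edges.

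The iteration starts with $U_0 = \emptyset$ and proceeds by setting $U_{k+1} = U_k \cup V_k$, maintaining the invariant $|U_k| \le n/\alpha_k$ via the recurrence $1/\alpha_{k+1} = 1/\alpha_k + 1/s_k$. The critical choice is to take $s_k$ essentially equal to $H^{-1}(\alpha_k)$. Since $H \ge \tilde F$, this ensures $\tilde F(s_k) \le \alpha_k$, which by the definition \Cref{eq:cor:graph_elim_non-adj_neigh_1} of $\tilde F$ and the monotonicity of $F$ unpacks to $\ln(\alpha_k d) \ge 32 d^4 s_k^2 F(2 d s_k) \ge 8 d^4 s_k^2 F(d s_k)$, the third hypothesis of \Cref{lem:graph_elim_non-adj_neigh}. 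The remaining hypotheses $1 \le s_k \le \kappa/d$ and $\alpha_k \le 2 \lambda F(d s_k)$ reduce, via $\kappa \ge \lambda$ and $F \ge 1$, to choosing $\alpha_0$ of order $\lambda$.

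The quantitative heart of the argument is verifying $\alpha_k \ge L$ throughout all $2d+2$ iterations, so that the hypotheses $H \ge \tilde F$ and $H(x) \ge 2x$ remain in force. Since $H(x) \ge 2x$ gives $s_k \le \alpha_k/2$, we obtain $\alpha_{k+1} \ge s_k/2 \ge H^{-1}(\alpha_k)/2$. Starting from $\alpha_0$ of order $H^{(2d+2)}(L)$ (guaranteed by $\lambda \ge d \cdot H^{(2d+2)}(L)$) and using the geometric growth $H^{(j)}(L) \ge 2^j L$ implied by $H(x) \ge 2x$, an inductive argument shows that the factor-of-two losses accumulated through the recurrence are absorbed, yielding $\alpha_k \ge L$ at every step.

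The main obstacle is precisely this constant-factor bookkeeping: no convexity-type property of $H$ is assumed, so the factor-of-two losses in the recurrence for $\alpha_k$ must be compensated purely by the geometric slack built into the iterated bound $H^{(2d+2)}(L)$ (as opposed to $H^{(2d)}(L)$) and the prefactor of $d$ in $\lambda \ge d \cdot H^{(2d+2)}(L)$. Once these invariants are established, each of the $2d+2$ applications of \Cref{lem:graph_elim_non-adj_neigh} is valid, producing the required $V_k$'s and yielding the desired contradiction.
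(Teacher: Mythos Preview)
Your approach is the paper's: contradict by iterating \Cref{lem:graph_elim_non-adj_neigh} $2d+O(1)$ times to collect more than $dn$ edges. The paper, however, sidesteps exactly the ``main obstacle'' you flag by \emph{predefining} both sequences in closed form,
\[
\alpha_i = H^{(2d+2-i)}(L),\qquad s_i = 2\,H^{(2d+1-i)}(L)\qquad(i=0,\dots,2d),
\]
rather than running your recurrence. The factor of $2$ in $s_i$ is the whole point: then $1/\alpha_{i-1}+1/s_{i-1}\le 1/\alpha_i$ follows from a single use of $H(x)\ge 2x$ per step, with no accumulated loss, while the third hypothesis of \Cref{lem:graph_elim_non-adj_neigh} becomes $\alpha_i\ge\tilde F(s_i/2)$, which holds because $H(s_i/2)=\alpha_i$ and $H\ge\tilde F$ on $[L,\infty)$. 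The constant $32$ (versus the $8$ actually needed) in the definition of $\tilde F$ is precisely the slack that permits doubling $s_i$. By contrast, your recurrence $\alpha_{k+1}\ge H^{-1}(\alpha_k)/2$ with $s_k=H^{-1}(\alpha_k)$ does not obviously close: with only monotonicity assumed on $H$, a constant-factor deficit in $\alpha_k$ can become an uncontrolled deficit after applying $H^{-1}$, so the ``geometric slack'' $H^{(j)}(L)\ge 2^jL$ alone does not suffice. You also omit the trivial case $n\le\lambda$ (handled by $S=\emptyset$, $r=1$, $t=n$), which is what guarantees $s_i\le n$ in the lemma's first hypothesis.
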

\begin{proof}
The proof is similar to the one for \Cref{cor:non-adj_vtx_new}.
If $n\le\lambda$, then we can simply pick an arbitrary left vertex in $G$ and set $S=\emptyset,r=1,t=n$ for \Cref{as:non-adj_neigh}.
Now we assume $n\ge\lambda$ and \Cref{cor:graph_elim_non-adj_neigh} is false.

We will apply \Cref{lem:graph_elim_non-adj_neigh} iteratively. For convenience, we define
$$
\alpha_i=H^{(2d+2-i)}(L)
\quad\text{and}\quad
s_i=2\cdot H^{(2d+1-i)}(L)
\quad\text{for $i=0,1,\ldots,2d$.}
$$
Since $H$ is increasing and $H(x)\ge2x$ for $x\ge L$, we have 
\begin{equation}\label{eq:cor:graph_elim_non-adj_neigh_2}
L\le H^{(2)}(L)=\alpha_{2d}\le\alpha_{2d-1}\le\cdots\le\alpha_0=H^{(2d+2)}(L). 
\end{equation}
Similarly
\begin{equation}\label{eq:cor:graph_elim_non-adj_neigh_3}
2\cdot L\le2\cdot H(L)=s_{2d}\le s_{2d-1}\le\cdots\le s_0=2\cdot H^{(2d+1)}(L)\le H^{(2d+2)}(L).
\end{equation}

Let $U_0=\emptyset$.
For each $i=0,1,\ldots,2d$, we apply \Cref{lem:graph_elim_non-adj_neigh} to $U_i$ with $\alpha_i$ to obtain $V_i$ with $s_i$, then set $U_{i+1}=U_i\cup V_i$.
To show the validity of the process, we first verify the following relations:
\begin{equation}\label{eq:cor:graph_elim_non-adj_neigh_4}
1\le s_i\le\min\cbra{n,\frac\kappa d}
\quad\text{and}\quad
1\le\alpha_i\le2\lambda\cdot F(d\cdot s_i)
\quad\text{and}\quad
\ln(\alpha_i\cdot d)\ge8d^4s_i^2\cdot F(d\cdot s_i).
\end{equation}
\begin{itemize}
\item The first one is due to $1\le2\cdot L\le s_{i}\le2\cdot H^{(2d+1)}(L)\le\lambda\le n$ and $\kappa/d\ge H^{(2d+2)}(L)\ge s_i$ by \Cref{eq:cor:graph_elim_non-adj_neigh_3} and \Cref{eq:cor:graph_elim_non-adj_neigh_0}.
\item The second one is due to $1\le L\le\alpha_i\le H^{(2d+2)}(L)\le\lambda$ and $F(d\cdot s_i)\ge1$ as $d\cdot s_i\ge1$ by \Cref{eq:cor:graph_elim_non-adj_neigh_2} and \Cref{eq:cor:graph_elim_non-adj_neigh_0}.
\item The third one is equivalent to verifying 
$$
\alpha_i\ge\frac1d\cdot\exp\cbra{8d^4s_i^2\cdot F(d\cdot s_i)}=\tilde F(s_i/2),
$$
where we recall the definition of $\tilde F$ from \Cref{eq:cor:graph_elim_non-adj_neigh_1}.
Since $H(x)\ge\tilde F(x)$ for $x\ge L$ and $s_i/2\ge L$ from \Cref{eq:cor:graph_elim_non-adj_neigh_3}, we have
$$
\tilde F(s_i/2)\le H(s_i/2)=H^{(2d+2-i)}(L)=\alpha_i
$$
as desired.
\end{itemize}
Given \Cref{eq:cor:graph_elim_non-adj_neigh_4}, we prove by induction that $|U_i|\le n/\alpha_i$ and $|V_i|\le n/s_i$.
The base case $i=0$ is $|U_i|=0\le n/\alpha_i$ and $|V_i|\le n/s_i$ by \Cref{eq:cor:graph_elim_non-adj_neigh_4}.
For the inductive case $i\ge1$, we have
$$
|U_i|=|U_{i-1}|+|V_{i-1}|\le\frac n{\alpha_{i-1}}+\frac n{s_{i-1}}=\frac n{H^{(2d+3-i)}(L)}+\frac n{2\cdot H^{(2d+2-i)}(L)}\le\frac n{H^{(2d+2-i)}(L)}=\frac n{\alpha_i},
$$
where the second inequality used the fact that $H(x)\ge2x$ for $x\ge L$.
Then the size bound on $|V_i|$ follows again from \Cref{eq:cor:graph_elim_non-adj_neigh_4}.
This completes the induction.
Finally we obtain the same contradiction from the total number of edges as in the proof of \Cref{cor:non-adj_vtx_new}. Hence \Cref{cor:graph_elim_non-adj_neigh} must be true.
\end{proof}

Observe that in \Cref{cor:graph_elim_non-adj_neigh}, even if $F$ is a constant function, $\tilde F$ (and hence $H$) will grow faster than an exponential function. 
This implies that the lower bound on $\kappa$ and $\lambda$ will (at least) be a tower-type blowup in $d$.
We emphasize that this is surprisingly inevitable and will be elaborated in \Cref{app:tightness_neigh}.
\section{Lower Bounds}\label{sec:special}

In this section, we prove lower bounds for a variety of distributions related to Hamming slices. \Cref{sec:biased} contains lower bounds for $\gamma$-biased distributions. \Cref{sec:single_non-dyadic} contains lower bounds for single Hamming slices of weight $\gamma n$ when $\gamma$ has binary representation error, and in \Cref{sec:single_hamming_slice}, we extend the analysis to the general case of $\gamma$.
We conclude by proving lower bounds for sampling from periodic Hamming slices in \Cref{sec:periodic_hamming_slice}.

\subsection{Biased Distributions}\label{sec:biased}

Recall that $\Dcal_\gamma^n$ is the $\gamma$-biased distribution on $\bin^n$.
In this section we show that if $\gamma$ is not close to a dyadic number, then local functions cannot produce distributions close to $\Dcal_\gamma^n$.
After proving this result, we learned a similar one is implicit in \cite{viola2012bit, viola2023new}. However, we do not find references explicitly giving such bounds, and the techniques used in proving this result will be generalized to other cases.
Therefore we include a complete proof here.

\begin{definition}[Binary Representation Error]\label{def:abs_binary_rep_err}
For each $t\in\Nbb$, we use $\err(\gamma,t)$ to denote the minimum distance of $\gamma$ to an integer multiple of $2^{-t}$.
In particular, given a binary representation of $\gamma$ as $\gamma=\sum_{i\in\Zbb}a_i\cdot 2^i$ where each $a_i\in\bin$, we have
$$
\err(\gamma,t)=\min\cbra{\sum_{i<-t}a_i\cdot2^i,\sum_{i<-t}(1-a_i)\cdot2^i}.
$$
\end{definition}

It is easy to see that $0\le\err(\gamma,t)\le2^{-t-1}$.
A concrete non-trivial example is $\err(1/3,t)\ge2^{-t-2}$ for all $t\ge0$, since $1/3$ has binary representation $\sum_{i<0}2^{2i}$.

\begin{fact}\label{fct:binary_rep_err}
Let $f\colon\bin^m\to\bin^n$ be a $d$-local function.
Then the marginal distribution of $f(\Ucal^m)$ on any single output bit is $\err(\gamma,d)$-far from $\Ucal_\gamma^1$.
\end{fact}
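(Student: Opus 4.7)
The plan is to exploit the granularity of probabilities realizable by Boolean functions of a bounded number of unbiased coins. Concretely, fix any output index $i \in [n]$ and let $g_i\colon\bin^{|I_f(i)|}\to\bin$ denote the function computing the $i$-th output bit, so that $g_i$ depends on at most $d$ input coordinates. Since the input distribution is $\Ucal^m$ and the bits are independent and unbiased, the marginal distribution of the $i$-th output bit is completely determined by $g_i$ evaluated on a uniform string in $\bin^{|I_f(i)|}$.

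The first step is to observe that
$$
\Pr_{x\sim\Ucal^m}\sbra{f(x)_i = 1} \;=\; \frac{|g_i^{-1}(1)|}{2^{|I_f(i)|}} \;=\; \frac{k\cdot 2^{d-|I_f(i)|}}{2^d} \;=\; \frac{k'}{2^d}
$$
for $k = |g_i^{-1}(1)|$ and some integer $k'$ with $0 \le k' \le 2^d$. In other words, the probability that the $i$-th bit equals $1$ is an integer multiple of $2^{-d}$.

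The second step is to invoke \Cref{def:abs_binary_rep_err}: by definition, every integer multiple of $2^{-d}$ is at distance at least $\err(\gamma,d)$ from $\gamma$. Hence
$$
\abs{\Pr_{x\sim\Ucal^m}\sbra{f(x)_i = 1} - \gamma} \;\ge\; \err(\gamma,d).
$$

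Finally, for any pair of distributions $\Pcal,\Qcal$ on $\bin$, the total variation distance is simply $\abs{\Pcal(1)-\Qcal(1)}$. Applying this with $\Pcal$ equal to the marginal of $f(\Ucal^m)$ on coordinate $i$ and $\Qcal = \Ucal_\gamma^1$ yields the claimed $\err(\gamma,d)$-farness. There is essentially no obstacle here; the statement is a direct consequence of the fact that restricting a $d$-local function to a single output coordinate produces a Boolean function of at most $d$ unbiased coins, whose acceptance probability lies on the lattice $2^{-d}\Zbb$.
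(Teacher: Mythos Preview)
Your proof is correct. The paper states this as a \emph{Fact} without proof, treating it as self-evident; your argument is exactly the intended justification---a single output bit of a $d$-local function has acceptance probability in $2^{-d}\Zbb$, hence is at least $\err(\gamma,d)$ away from $\gamma$ by definition.
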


\Cref{fct:binary_rep_err} already shows that $f(\Ucal^m)$ is $\err(\gamma,d)$-far from $\Ucal_\gamma^n$.
Our goal is to boost the distance closer to $1$.

\begin{theorem}\label{thm:locality_biased}
Let $f\colon\bin^m\to\bin^n$ be a $d$-local function, and let $0\le\gamma\le1$ be a parameter.
If $\err(\gamma,d)\ge\delta$ for some $\delta>0$, then
$$
\tvdist{f(\Ucal^m)-\Ucal_\gamma^n}\ge1-4\cdot\exp\cbra{-n\cdot\delta^{40d}}.
$$
\end{theorem}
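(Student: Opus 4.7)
The plan is to combine our graph-theoretic structural result with our two total-variation lemmas on top of a baseline marginal bound. Following the general recipe from the overview, we first reduce to a highly structured instance in which many output bits are non-connected (hence independent) by conditioning on a small set of input bits, and then amplify the per-bit marginal distance into a strong total-variation bound via the product structure.

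I would start from \Cref{fct:binary_rep_err}: every single output bit of $f$ has marginal that is an integer multiple of $2^{-d}$ and is therefore $\err(\gamma,d) \ge \delta$-far from $\Ucal_\gamma^1$. This bound is preserved under any conditioning on input bits, since conditioning only reduces the number of live input bits and $\err(\gamma,\cdot)$ is non-increasing. Next, I would apply \Cref{cor:non-adj_vtx_new} to the dependency bipartite graph of $f$ with parameter $\beta = C/\delta^2$ for a sufficiently large absolute constant $C$. This produces a set $S^\star \subseteq [m]$ of size at most $r/\beta$ whose removal leaves $r$ non-connected output bits, with $r \ge n/\lambda$ and $\lambda = 2d(2d\beta+1)^{2d}$. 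Since $\delta \le \err(\gamma,d) \le 2^{-d-1}$, one obtains $\lambda \le \delta^{-O(d)}$ and hence $r \ge n\cdot \delta^{O(d)}$.

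For each of the at most $2^{|S^\star|} \le 2^{r/\beta}$ assignments $\tau$ to the deleted inputs, let $\Pcal_\tau$ denote the conditional output distribution of $f$. The $r$ non-connected output bits now depend on pairwise disjoint subsets of the still-uniform unconditioned input bits, so the joint marginal of $\Pcal_\tau$ on these bits is a product distribution, each coordinate of which remains $\delta$-far from $\Ucal_\gamma^1$ by the first step. Applying \Cref{lem:tvdist_after_product} with $\Qcal = \Wcal = \Ucal_\gamma^n$ and $\eta = 1$ yields $\tvdist{\Pcal_\tau - \Ucal_\gamma^n} \ge 1 - 2\exp(-\delta^2 r/2)$. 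Since $f(\Ucal^m)$ is a convex combination of the $\Pcal_\tau$'s, \Cref{lem:tvdist_after_conditioning} (with $\eps_2 = \eps_3 = 0$) gives
\[
\tvdist{f(\Ucal^m) - \Ucal_\gamma^n} \ge 1 - (2^{r/\beta} + 1) \cdot 2\exp(-\delta^2 r/2).
\]
Choosing $C$ large enough that $r/\beta \le \delta^2 r/4$ collapses this to $1 - 4\exp(-\delta^2 r/4)$, and substituting $r \ge n \cdot \delta^{O(d)}$ produces the claimed bound after chasing constants.

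The main obstacle is balancing the conditioning penalty against the product amplification: the union-bound cost $2^{|S^\star|}$ must be dominated by $\exp(\delta^2 r/2)$, which forces $|S^\star|/r \ll \delta^2$, or equivalently $\beta \gg 1/\delta^2$. This in turn forces $\lambda \ge (2d\beta)^{2d} = \delta^{-\Theta(d)}$, so the final exponent unavoidably inherits a $\delta^{O(d)}$ factor from the graph-elimination step; the specific constant $40$ in the exponent $\delta^{40d}$ is a comfortable upper bound once all constants are tracked through the above calculation.
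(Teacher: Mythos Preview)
Your proposal is correct and follows essentially the same route as the paper: apply \Cref{cor:non-adj_vtx_new} with $\beta\asymp 1/\delta^2$ to find a small $S$ whose removal yields $r\ge n\cdot\delta^{O(d)}$ non-connected output bits, use \Cref{lem:tvdist_after_product} (with $\Qcal=\Wcal=\Ucal_\gamma^n$, $\eta=1$) on each restriction to get $1-2e^{-\delta^2 r/2}$, and then aggregate via \Cref{lem:tvdist_after_conditioning}. The paper packages the first two steps as \Cref{prop:biased_structure} and \Cref{prop:biased_more_local} with the specific choice $\beta=4/\delta^2$, but the argument and constants line up exactly with yours.
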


We first consider a simple case where we can find many output bits that do not correlate with each other.
For example, this happens when every input bit influences few output bits.

\begin{definition}[$(d,r)$-Local Function]\label{def:dyadic_local}
We say $g\colon\bin^m\to\bin^n$ is a $(d,r)$-local function if $g$ is a $d$-local function with $r$ non-connected output bits.
\end{definition}

\begin{proposition}\label{prop:biased_more_local}
Let $g\colon\bin^m\to\bin^n$ be a $(d,r)$-local function.
Then 
$$
\tvdist{g(\Ucal^m)-\Ucal_\gamma^n}\ge1-2\cdot\exp\cbra{-\frac{\err(\gamma,d)^2\cdot r}2}.
$$
\end{proposition}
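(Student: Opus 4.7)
The plan is to directly apply \Cref{lem:tvdist_after_product} with the simplest possible choice of auxiliary distribution, namely $\Wcal = \Qcal = \Ucal_\gamma^n$ itself. The key observation is that the $(d,r)$-locality hypothesis gives us exactly the structural independence needed to feed into that lemma, and \Cref{fct:binary_rep_err} supplies the marginal distance bounds on each output bit.

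First I would let $S \subseteq [n]$ denote the set of $r$ non-connected output bits guaranteed by \Cref{def:dyadic_local}. Since these bits depend on pairwise disjoint sets of input coordinates, and the input $\Ucal^m$ is a product distribution, the marginal $g(\Ucal^m)|_S$ is also a product distribution. Of course, $\Ucal_\gamma^n|_S = \Ucal_\gamma^S$ is a product distribution as well, so the first hypothesis of \Cref{lem:tvdist_after_product} is satisfied.

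Next I would verify the second hypothesis: by \Cref{fct:binary_rep_err}, for every single output bit $i \in [n]$ (and in particular every $i \in S$) the marginal $g(\Ucal^m)|_{\{i\}}$ is $\err(\gamma,d)$-far from $\Ucal_\gamma^1 = \Ucal_\gamma^n|_{\{i\}}$. Finally, the third hypothesis $\Wcal(x) \ge \eta \cdot \Qcal(x)$ holds trivially with $\eta = 1$ since we have chosen $\Wcal = \Qcal$.

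Applying \Cref{lem:tvdist_after_product} with $\eps = \err(\gamma,d)$, $s = r$, and $\eta = 1$ immediately yields
\[
\tvdist{g(\Ucal^m) - \Ucal_\gamma^n} \ge 1 - 2\cdot \exp\cbra{-\err(\gamma,d)^2 \cdot r / 2},
\]
which is exactly the claim. There is no real obstacle here: the entire content lies in \Cref{lem:tvdist_after_product}, and \Cref{prop:biased_more_local} is essentially a direct instantiation of that lemma with the concrete marginal-error bound from \Cref{fct:binary_rep_err}. The proposition is presented as a standalone statement mainly because it will serve as the ``base case'' to which the general $d$-local setting is reduced in subsequent arguments (after conditioning away high-degree input bits via the graph-theoretic results of \Cref{sec:graph_elim_vtx}).
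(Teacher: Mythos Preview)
Your proposal is correct and matches the paper's own proof essentially line for line: the paper also sets $\Pcal=g(\Ucal^m)$, $\Qcal=\Wcal=\Ucal_\gamma^n$, takes $S$ to be the $r$ non-connected output bits, invokes \Cref{fct:binary_rep_err} for the marginal bound, and applies \Cref{lem:tvdist_after_product} with $\eta=1$. The only cosmetic difference is that the paper explicitly notes the bound is trivial when $r<1$ before proceeding.
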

\begin{proof}
The bound is trivial when $r<1$. Hence we assume $r\ge1$.
By rearranging indices, we assume without loss of generality that $1,2,\ldots,r$ are non-connected output bits.
We will apply \Cref{lem:tvdist_after_product} with
$$
\Pcal=g(\Ucal^m),\quad
\Qcal=\Wcal=\Ucal_\gamma^n,\quad
S=[r].
$$

Observe that $\Pcal$ is a product distribution marginally on $S$, since $1,2,\ldots,r$ are non-connected.
Additionally by \Cref{fct:binary_rep_err}, we have
$$
\tvdist{\Pcal|_{\cbra i}-\Wcal|_{\cbra i}}\ge\err(\gamma,d)~\reflectbox{$\coloneqq$}~\eps.
$$
Then the desired bound follows from \Cref{lem:tvdist_after_product} with $\eta=1$.
\end{proof}

We next show that any $d$-local function $f$ can be made $(d,r)$-local by restricting a few input bits.

\begin{proposition}\label{prop:biased_structure}
Assume $\err(\gamma,d)>0$.
Let $f\colon\bin^m\to\bin^n$ be a $d$-local function with $d \ge 1$.
Then there exists a set $S\subseteq[m]$ such that any fixing of input bits in $S$ reduces $f$ to a $(d,r)$-local function $g$, where
$$
|S|\le\frac{\err(\gamma,d)^2\cdot r}4
\quad\text{and}\quad
r\ge n\cdot\pbra{\frac{\err(\gamma,d)^2}{16d}}^{2d+1}.
$$
\end{proposition}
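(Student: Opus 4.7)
The plan is to encode $f$ as its dependency bipartite graph $G_f=([n],[m],E)$, which is $d$-left-bounded by hypothesis, and then invoke the graph elimination result \Cref{cor:non-adj_vtx_new} with a careful choice of parameters. Writing $\delta=\err(\gamma,d)$ for brevity, I would set
\[
\beta=\frac{4}{\delta^2}\quad\text{and}\quad\lambda=\pbra{\frac{16d}{\delta^2}}^{2d+1}.
\]
With this choice, the set $S\subseteq[m]$ and the $r$ non-connected left vertices guaranteed by \Cref{as:non-adj_vtx} satisfy precisely $|S|\le r/\beta=\delta^2 r/4$ and $r\ge n/\lambda$, matching the two bounds claimed in the proposition.

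The main step is therefore to verify the hypothesis $\lambda\ge 2d\cdot(2d\beta+1)^{2d}$ of \Cref{cor:non-adj_vtx_new}. Since $d\ge 1$ and $\delta\le 2^{-d-1}\le 1/2$, we have $2d\beta=8d/\delta^2\ge 1$, so $2d\beta+1\le 16d/\delta^2$; together with $2d\le 16d/\delta^2$, this gives
\[
2d\cdot(2d\beta+1)^{2d}\le 2d\cdot\pbra{\frac{16d}{\delta^2}}^{2d}\le \pbra{\frac{16d}{\delta^2}}^{2d+1}=\lambda.
\]
Thus \Cref{cor:non-adj_vtx_new} applies and yields the desired $S$ together with $r$ pairwise non-connected left vertices in the graph obtained from $G_f$ by deleting $S$.

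Finally, I would observe that the structural conclusion is unaffected by how one fixes the input bits in $S$. Restricting $f$ by any assignment to the coordinates in $S$ produces a function $g$ whose dependency graph is a subgraph of the one obtained from $G_f$ by removing the right vertices in $S$. Hence $g$ is still $d$-local, and the same $r$ output bits remain pairwise non-connected in $G_g$, so $g$ is $(d,r)$-local as required. The only real work lies in the parameter bookkeeping for \Cref{cor:non-adj_vtx_new}; I foresee no substantive obstacle, provided one is careful to use $d\ge 1$ and $\delta\le 1/2$ to absorb the additive $+1$ inside $(2d\beta+1)^{2d}$.
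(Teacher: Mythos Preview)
Your proposal is correct and follows exactly the paper's approach: apply \Cref{cor:non-adj_vtx_new} to the dependency graph $G_f$ with $\beta=4/\err(\gamma,d)^2$ and $\lambda=(16d/\delta^2)^{2d+1}=(4d\beta)^{2d+1}$. Your write-up is in fact more detailed than the paper's, which leaves both the verification of the hypothesis $\lambda\ge 2d(2d\beta+1)^{2d}$ and the observation that any fixing of $S$ yields a $(d,r)$-local function implicit.
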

\begin{proof}
Recall the graph theoretic view of the dependency relations in $f$. 
We apply \Cref{cor:non-adj_vtx_new} with $\beta=4/\err(\gamma,d)^2$ and $\lambda=(4d\beta)^{2d+1}$.
\end{proof}

Now we prove \Cref{thm:locality_biased}.
\begin{proof}[Proof of \Cref{thm:locality_biased}]
Recall that $\err(\gamma,d)\ge\delta>0$ and $\err(\gamma,d)\le2^{-d-1}$. We assume $d \ge 1$, as otherwise the theorem is trivial.
By \Cref{prop:biased_structure}, we find a set $S\subseteq[m]$ such that any fixing $\rho$ of input bits in $S$ reduces $f$ to a $(d,r)$-local function $f_\rho$ where
$$
|S|\le\frac{\delta^2\cdot r}4
\quad\text{and}\quad
r\ge\frac n{(16d/\delta^2)^{2d+1}}.
$$
Then for each $f_\rho$, we apply \Cref{prop:biased_more_local} and obtain that
$$
\tvdist{f_\rho(\Ucal^{[m]\setminus S})-\Ucal_\gamma^n}\ge1-2\cdot e^{-\delta^2\cdot r/2}.
$$
Note that $f(\Ucal^m)=\E_\rho\sbra{f_\rho(\Ucal^{[m]\setminus S})}$ where $\rho\sim\Ucal^S$.
By \Cref{lem:tvdist_after_conditioning} with $\cbra{f_\rho(\Ucal^{[m]\setminus S})}_\rho,\Ucal_\gamma^n$, and 
$$
\eps_1=2\cdot e^{-\delta^2\cdot r/2},\quad
\eps_2=\eps_3=0,\quad
\Ecal=\emptyset,
$$
we have
\begin{align*}
\tvdist{f(\Ucal^m)-\Ucal_\gamma^n}
&\ge1-\pbra{2^{|S|}+1}\cdot2\cdot e^{-\delta^2\cdot r/2}
\ge1-4\cdot e^{-\delta^2\cdot r/4}\\
&\ge1-4\cdot\exp\cbra{-\frac{\delta^2\cdot n}{4\cdot (16d/\delta^2)^{2d+1}}}\\
&\ge1-4\cdot\exp\cbra{-n\cdot\delta^{40d}}
\tag{since $d\ge1$ and $\delta\le2^{-d-1}$}
\end{align*}
as desired.
\end{proof}

\subsection{A Single Hamming Slice: The Non-Dyadic Case}\label{sec:single_non-dyadic}

The argument in \Cref{sec:biased} works beyond $\gamma$-biased distributions.
Here we generalize it to the Hamming slice setting, the proof of which introduces new ideas to handle non-product distributions and will be useful later.

Let $\Dcal_k$ be the uniform distribution of binary strings of length $n$ with Hamming weight $k$.
Define $\gamma=k/n$. 
Our goal here is to prove local function cannot sample from $\Dcal_k$ when $\gamma$ has large binary representation error. This is similar to \Cref{thm:locality_biased} which replaces $\Dcal_k$ by the $\gamma$-biased distribution.

\begin{theorem}\label{thm:locality_single_non-dyadic}
Let $f\colon\bin^m\to\bin^n$ be a $d$-local function, and let $1\le k\le n-1$ be an integer.
Define $\gamma=k/n$. 
If $\err(\gamma,d)\ge\delta$ for some $\delta>0$, then
$$
\tvdist{f(\Ucal^m)-\Dcal_k}\ge1-4\sqrt{2n}\cdot\exp\cbra{-n\cdot\delta^{40d}}.
$$
\end{theorem}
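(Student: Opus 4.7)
The plan is to mirror the proof of \Cref{thm:locality_biased}, with a single but crucial modification that lets us pass from the $\gamma$-biased product distribution $\Ucal_\gamma^n$ to the Hamming slice $\Dcal_k$. Following that template, I would first apply \Cref{prop:biased_structure} (which only requires $\err(\gamma,d)\ge\delta>0$) to obtain a set $T\subseteq[m]$ of input bits with $|T|\le\delta^2 r/4$ such that every restriction $\rho\in\bin^T$ reduces $f$ to a $(d,r)$-local function $f_\rho$, where $r\ge n\cdot(\delta^2/(16d))^{2d+1}$. The case $d=0$ is trivial, so I would assume $d\ge 1$ throughout.

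The key step is to prove the analogue of \Cref{prop:biased_more_local} with $\Dcal_k$ in place of $\Ucal_\gamma^n$, at the cost of an extra $\sqrt{2n}$ factor. For this, I would invoke \Cref{lem:tvdist_after_product} with $\Pcal=f_\rho(\Ucal^{[m]\setminus T})$, target $\Qcal=\Dcal_k$, and proxy $\Wcal=\Ucal_\gamma^n$, taking $S$ to be a set of $r$ non-connected output bits of $f_\rho$. The first two hypotheses of that lemma follow from the non-connectedness of $S$ together with \Cref{fct:binary_rep_err}, which gives $\tvdist{\Pcal|_{\{i\}}-\Wcal|_{\{i\}}}\ge\delta$ for every $i\in S$. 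The new ingredient is the pointwise comparison $\Wcal(x)\ge\eta\,\Qcal(x)$ with $\eta=1/\sqrt{2n}$: the inequality is vacuous when $x\notin\supp(\Dcal_k)$, and for $x\in\supp(\Dcal_k)$ we have $\Qcal(x)=1/\binom{n}{k}$ and $\Wcal(x)=\gamma^k(1-\gamma)^{n-k}=2^{-n\Hcal(k/n)}$, so \Cref{fct:individual_binom} yields
$$
\frac{\Wcal(x)}{\Qcal(x)}=\binom{n}{k}\cdot 2^{-n\Hcal(k/n)}\ge\frac{1}{\sqrt{8k(1-k/n)}}\ge\frac{1}{\sqrt{2n}},
$$
since $k(n-k)/n\le n/4$. \Cref{lem:tvdist_after_product} then delivers $\tvdist{f_\rho(\Ucal^{[m]\setminus T})-\Dcal_k}\ge 1-2\sqrt{2n}\cdot\exp\cbra{-\delta^2 r/2}$.

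Finally, writing $f(\Ucal^m)=\E_{\rho\sim\Ucal^T}[f_\rho(\Ucal^{[m]\setminus T})]$ as a convex combination and applying \Cref{lem:tvdist_after_conditioning} with $\eps_1=2\sqrt{2n}\exp\cbra{-\delta^2 r/2}$ and $\eps_2=\eps_3=0$ yields a bound of $1-(2^{|T|}+1)\cdot 2\sqrt{2n}\exp\cbra{-\delta^2 r/2}$. Using $|T|\le\delta^2 r/4$ to absorb the $2^{|T|}+1$ factor produces $1-4\sqrt{2n}\exp\cbra{-\delta^2 r/4}$, and substituting the lower bound on $r$ together with $\delta\le 2^{-d-1}$ gives $\delta^2 r/4\ge n\cdot\delta^{40d}$, matching the calculation in the proof of \Cref{thm:locality_biased}. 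I do not anticipate a serious obstacle: the only genuinely new observation is the pointwise $\sqrt{2n}$-closeness of $\Ucal_\gamma^n$ to $\Dcal_k$ on $\supp(\Dcal_k)$, and the exponential decay from \Cref{lem:tvdist_after_product} easily absorbs this polynomial-in-$n$ loss.
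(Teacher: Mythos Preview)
Your proposal is correct and follows essentially the same approach as the paper: the paper's \Cref{prop:single_non-dyadic_more_local} is exactly your ``key step,'' invoking \Cref{lem:tvdist_after_product} with proxy $\Wcal=\Ucal_\gamma^n$ and $\eta=1/\sqrt{2n}$ obtained from \Cref{fct:individual_binom}, and the remainder of the argument (combining \Cref{prop:biased_structure} with \Cref{lem:tvdist_after_conditioning} and the final numerical simplification) is identical to the proof of \Cref{thm:locality_biased} up to the extra $\sqrt{2n}$ factor carried through.
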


Following the previous framework, we first prove the distance bound for $(d,r)$-local functions analogous to \Cref{prop:biased_more_local}.
However unlike $\Ucal_\gamma^n$ there, we have $\Dcal_k$ here. 
Though marginally every bit in $\Dcal_k$ is exactly $\gamma$-biased $\Ucal_\gamma^1$, the distance between $\Dcal_k|_S$ and $\Ucal_\gamma^S$ enlarges quickly when $|S|$ grows.
Nevertheless, we can use $\Ucal_\gamma^n$ as a proxy between our distribution and $\Dcal_k$.
The crucial point is that $\Ucal_\gamma^n$ and $\Dcal_k$ are not far from each other in the multiplicative sense, though they have total variation distance roughly $1-1/\sqrt n$ (for constant $\gamma$).

\begin{proposition}\label{prop:single_non-dyadic_more_local}
Let $g\colon\bin^m\to\bin^n$ be a $(d,r)$-local function.
Then
$$
\tvdist{g(\Ucal^m)-\Dcal_k}\ge1-2\sqrt{2n}\cdot\exp\cbra{-\frac{\err(\gamma,d)^2\cdot r}2}.
$$
\end{proposition}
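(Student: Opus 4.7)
The plan is to invoke \Cref{lem:tvdist_after_product} with $\Pcal = g(\Ucal^m)$, the target $\Qcal = \Dcal_k$, and a carefully chosen proxy product distribution $\Wcal = \Ucal_\gamma^n$. Since $g$ is $(d,r)$-local, after relabeling we may take $S = [r]$ to consist of $r$ non-connected output bits, so $\Pcal|_S$ is a product distribution, and $\Wcal|_S$ is trivially one. For the per-coordinate distance, I would note that the single-bit marginal of $\Dcal_k$ is exactly $\Ucal_\gamma^1$, so the same reasoning as in \Cref{fct:binary_rep_err} shows that each marginal of $\Pcal$ on $\{i\} \subseteq S$ is $\err(\gamma,d)$-far from $\Wcal|_{\{i\}} = \Ucal_\gamma^1$. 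This is why using the product $\Ucal_\gamma^n$ rather than $\Dcal_k$ directly is essential: it preserves the per-coordinate bias gap while remaining a product distribution to which Hoeffding applies.

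The key calculation is the pointwise multiplicative comparison $\Wcal(x) \geq \eta \cdot \Qcal(x)$. This only needs to be verified on $\supp{\Dcal_k}$, i.e., strings $x$ of Hamming weight exactly $k$, where
\[
\Wcal(x) = \gamma^k(1-\gamma)^{n-k} = 2^{-n\Hcal(\gamma)} \quad \text{and} \quad \Qcal(x) = \binom{n}{k}^{-1}.
\]
Applying the lower bound from \Cref{fct:individual_binom}, $\binom{n}{k} \geq 2^{n\Hcal(\gamma)}/\sqrt{8k(1-k/n)}$, and using $k(1-k/n) \leq n/4$, I get $\Wcal(x)/\Qcal(x) \geq 1/\sqrt{2n}$, so we may take $\eta = 1/\sqrt{2n}$.

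Plugging $\eps = \err(\gamma,d)$, $s = r$, and $\eta = 1/\sqrt{2n}$ into the conclusion of \Cref{lem:tvdist_after_product} gives exactly
\[
\tvdist{g(\Ucal^m) - \Dcal_k} \geq 1 - 2\sqrt{2n} \cdot \exp\!\cbra{-\err(\gamma,d)^2 r/2},
\]
as claimed. There is no real obstacle here: the proof is essentially an invocation of the previously established \Cref{lem:tvdist_after_product} combined with the entropy--binomial estimate, and the mild $\sqrt{2n}$ loss exactly reflects the pointwise discrepancy between the proxy $\Ucal_\gamma^n$ and the target $\Dcal_k$. The only conceptual point worth flagging is that this proxy trick is what allows the product-distribution machinery to be used against a non-product target.
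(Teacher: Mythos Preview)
Your proposal is correct and follows essentially the same approach as the paper: both apply \Cref{lem:tvdist_after_product} with the identical choices $\Pcal=g(\Ucal^m)$, $\Qcal=\Dcal_k$, $\Wcal=\Ucal_\gamma^n$, $S=[r]$, obtain $\eps=\err(\gamma,d)$ from \Cref{fct:binary_rep_err}, and compute $\eta=1/\sqrt{2n}$ via \Cref{fct:individual_binom}. Your explanatory remarks about why the proxy $\Ucal_\gamma^n$ is needed are a nice addition but do not alter the argument.
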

\begin{proof}
The bound is trivial when $r<1$. Hence we assume $r\ge1$.
By rearranging indices, we assume without loss of generality that $1,2,\ldots,r$ are non-connected output bits.
We will apply \Cref{lem:tvdist_after_product} with
$$
\Pcal=g(\Ucal^m),\quad
\Qcal=\Dcal_k,\quad
\Wcal=\Ucal_\gamma^n,\quad
S=[r].
$$
Note that $\Pcal$ is a product distribution on $S$.
By \Cref{fct:binary_rep_err}, we have
$$
\tvdist{\Pcal|_{\cbra i}-\Wcal|_{\cbra i}}\ge\err(\gamma,d)~\reflectbox{$\coloneqq$}~\eps.
$$
To apply \Cref{lem:tvdist_after_product}, it remains to bound $\eta\coloneqq\min_{x\in\supp\Qcal}\Wcal(x)/\Qcal(x)$.
For any $x\in\supp\Qcal$, we have
$$
\Qcal(x)=\frac1{\binom nk}=\frac1{\binom n{\gamma n}}
\quad\text{and}\quad
\Wcal(x)=\gamma^k\cdot(1-\gamma)^{n-k}=\gamma^{\gamma n}\cdot(1-\gamma)^{(1-\gamma)n}.
$$
By \Cref{fct:individual_binom}, we have
$$
\eta=\min_{x\in\supp\Qcal}\frac{\Wcal(x)}{\Qcal(x)}\ge\frac1{\sqrt{8\gamma n(1-\gamma)}}\ge\frac1{\sqrt{2n}}.
$$
Applying \Cref{lem:tvdist_after_product} gives the desired bound.
\end{proof}

\begin{remark}\label{rmk:single_non-dyadic_more_local}
It is possible to improve the construction of $\Wcal$ in \Cref{prop:single_non-dyadic_more_local} to get a better bound of $\eta$.
To see this, we can obtain $\Wcal$ as follows: first we sample bits in $S$ according to $\Ucal_\gamma^S$, then we complete the other coordinates $[n]\setminus S$ by the distribution of $\Qcal=\Dcal_k$ conditioned on the sampled bits in $S$.

As such, $\eta$ will be the minimum ratio of $\Ucal_\gamma^S(x)$ and $\Dcal_k|_S(x)$ for $x\in\bin^S$.
Note that if $|S|$ is not too large, then $\eta$ will not be too small.
For example, if $\gamma$ is constant and $|S| \ll n$, then one can show that $\eta\ge\Omega(1)$.
Since later in \Cref{prop:biased_structure} we indeed have relatively small $|S|$, this is the typical case that matters to us.

We choose our current presentation for simplicity. Moreover this improvement is only a factor of $\poly(n)$ in terms of applications after combining everything, ultimately subsumed by $\exp\cbra{-\Omega_d(n)}$.
\end{remark}

We use the same \Cref{prop:biased_structure} to convert $d$-local functions to $(d,r)$-local, and apply \Cref{lem:tvdist_after_conditioning} to put them together.

\begin{proof}[Proof of \Cref{prop:single_non-dyadic_more_local}]
The argument is almost identical to the proof of \Cref{thm:locality_biased}, except that now we have
$$
\eps_1=2\sqrt{2n}\cdot e^{-\delta^2\cdot r/2}.
$$
Combining \Cref{prop:biased_structure} and \Cref{lem:tvdist_after_conditioning}, we have
\begin{equation*}
\tvdist{f(\Ucal^m)-\Dcal_k}\ge1-4\sqrt{2n}\cdot\exp\cbra{-n\cdot\delta^{40d}}.
\tag*{\qedhere}
\end{equation*}
\end{proof}

\subsection{A Single Hamming Slice: The General Case}\label{sec:single_hamming_slice}

In the previous section, we showed that local functions cannot sample from $\Dcal_k$ when $\gamma=k/n$ has large binary representation error.
In particular, this shows $\Dcal_{n/3}$ is not locally sampleable.
In this section, we aim to address the general case of $\Dcal_k$, where we do not gain advantages from the non-dyadic numbers.

A concrete example is $k=n/2$. The coordinatewise-independent version of $\Dcal_{n/2}$ is simply $\Ucal^n$, which can be exactly sampled by a $1$-local function. However this does not seem to generalize: $\Dcal_{n/2}$ is $(1-\Theta(1/\sqrt n))$-far from $\Ucal^n$.
We will show that this is actually the best possible strategy.

We will prove the following more general statement, which works for all $o(n)\le k\le n/2$.
To build intuition, specific instantiations can be found in \Cref{thm:informal_single} and \Cref{thm:informal_single_1}.  

\begin{theorem}\label{thm:locality_single_hamming_slice}
There exists a universal constant $\kappa\ge1$ such that the following holds.
Let\footnote{By flipping zero and one, sampling from $\Dcal_k$ is equivalent to sampling from $\Dcal_{n-k}$. Therefore \Cref{thm:locality_single_hamming_slice} also holds for $k\ge n/2$ with $\gamma$ replaced by $1-\gamma$.}
$1\le k\le n/2$ and let $f\colon\bin^m\to\bin^n$ be a $d$-local function. 
Define $\gamma=k/n$ and let $\theta(n)$ be arbitrary.
If $\theta(n)\ge\kappa$ and
$$
d\le\log^*(\theta(n))/60
\quad\text{and}\quad
\log^*(1/\gamma)\le\log^*(\theta(n))/2,
$$
then
$$
\tvdist{f(\Ucal^m)-\Dcal_k}\ge1-\theta(n)/\sqrt n.
$$
\end{theorem}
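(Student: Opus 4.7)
The plan is to follow the general recipe of \Cref{sec:overview}: reduce the arbitrary $d$-local function to a structured one with many small non-connected neighborhoods by paying a union-bound cost over restrictions of a few input bits (\Cref{cor:graph_elim_non-adj_neigh} combined with \Cref{lem:tvdist_after_conditioning}), and then establish the structured distance bound through a Type-1 / Type-2 dichotomy on neighborhood marginals, one branch handled by a neighborhood-level version of \Cref{lem:tvdist_after_product}, the other by anticoncentration via \Cref{fct:ushakov}.

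\textbf{Graph elimination.} I apply \Cref{cor:graph_elim_non-adj_neigh} to the bipartite dependency graph of $f$, taking $F(t) = \Theta(t/\gamma)$ (matching the reciprocal of the eventual Type-1 error budget per neighborhood) and picking $\lambda, \kappa$ so that after removing some $S \subseteq [m]$ with $|S| \le r/F(t)$, the residual graph has $r \ge n/\theta(n)^2$ non-connected left neighborhoods, each of size at most $t \le \tow_2(O(d))$. The tower-type lower bound on $\lambda$ required by the corollary translates to a condition $\theta(n) \ge \tow_2(O(d))$, which is guaranteed by the hypothesis $d \le \log^*(\theta(n))/60$; the hypothesis $\log^*(1/\gamma) \le \log^*(\theta(n))/2$ is used to keep $F(t)$ manageable inside the tower.

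\textbf{Structured bound per restriction.} Fix any restriction $\rho \in \bin^S$ and set $\Pcal_\rho = f_\rho(\Ucal^{[m]\setminus S})$. Introduce the threshold $\eps = c\sqrt{\gamma/t}$ for a small absolute constant $c$, and call a neighborhood $N(i)$ of $f_\rho$ \emph{Type-1} if $\tvdist{\Pcal_\rho|_{N(i)} - \Ucal_\gamma^{N(i)}} \ge \eps$, and \emph{Type-2} otherwise. If at least $r/2$ neighborhoods are Type-1, a neighborhood-level strengthening of \Cref{lem:tvdist_after_product} applied with $\Pcal = \Pcal_\rho$, $\Wcal = \Ucal_\gamma^n$, $\Qcal = \Dcal_k$ gives $\tvdist{\Pcal_\rho - \Dcal_k} \ge 1 - O(\sqrt n)\exp(-\eps^2 r / 4)$, where the $O(\sqrt n)$ factor is $1/\eta$ with $\eta = \min_x \Wcal(x)/\Qcal(x)$ bounded via \Cref{fct:individual_binom}. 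Otherwise at least $r/2$ neighborhoods are Type-2, and I bound the event $\Ecal = \cbra{x \colon |x| = k}$ directly: a resampling argument on $I(i)$ for each Type-2 neighborhood, comparing its Hamming-sum distribution against a one-unit shift of a $\gamma$-biased binomial of $t$ trials, shows that the sum of $N(i)$ is non-degenerate after fixing the inputs outside $I(i)$ with probability $\Omega(\sqrt{\gamma/t}) - \eps = \Omega(\sqrt{\gamma/t})$. By \Cref{fct:chernoff} and the independence afforded by non-connectedness, a $1 - \exp(-\Omega(r))$ fraction of restrictions of the bits outside $\bigcup_i I(i)$ leaves $\Omega(r)$ neighborhood sums non-degenerate; on that event \Cref{fct:ushakov} gives $\Pcal_\rho(\Ecal) \le O(1/\sqrt r)$.

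\textbf{Combining and main obstacle.} Since $f(\Ucal^m)$ is the convex combination $\E_\rho[\Pcal_\rho]$, I apply \Cref{lem:tvdist_after_conditioning} with $\Qcal = \Dcal_k$, $\Ecal = \cbra{x \colon |x| = k}$, $\eps_1 = O(\sqrt n)\exp(-\eps^2 r/4)$, $\eps_2 = O(1/\sqrt r) + \exp(-\Omega(r))$, and $\eps_3 = 0$ (since $\Dcal_k(\Ecal) = 1$). The union-bound factor $2^{|S|} \le 2^{r/F(t)}$ is absorbed into $\eps_1$ because $\eps^2 r / 4 = \Theta(\gamma r / t)$ exceeds $(r/F(t)) \ln 2$ by our choice $F(t) = \Theta(t/\gamma)$, and $r \ge n/\theta(n)^2$ makes $O(1/\sqrt r) \le \theta(n)/\sqrt n$. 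The main obstacle is threading the needle among three quantities that all degrade as $\gamma$ shrinks: the density ratio $\eta$ in the Type-1 case (through \Cref{fct:individual_binom}), the shift distance of a $\gamma$-biased binomial in the Type-2 case, and the tower-type blowup in $t$ from \Cref{cor:graph_elim_non-adj_neigh}. Reconciling these is exactly what forces both hypotheses $d \le \log^*(\theta(n))/60$ and $\log^*(1/\gamma) \le \log^*(\theta(n))/2$, and is the most delicate bookkeeping in the argument.
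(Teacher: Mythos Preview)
Your plan is correct and mirrors the paper's proof essentially step for step: the paper also applies \Cref{cor:graph_elim_non-adj_neigh} (packaged as \Cref{prop:single_slice_find_structure}) to obtain a $(d,r,t)$-local restriction, runs the same \textsf{Type-1}/\textsf{Type-2} dichotomy on neighborhood marginals (\Cref{lem:type-1_single} via \Cref{lem:tvdist_after_product} with proxy $\Ucal_\gamma^n$, and \Cref{lem:type-2_single} via the resampling argument plus \Cref{fct:ushakov}), and then combines via \Cref{lem:tvdist_after_conditioning}. The only discrepancies are quantitative bookkeeping: the paper takes the slightly looser $\eps=\Theta(\gamma/\sqrt t)$ and $F(t)=\Theta(t/\gamma^2)$ (your $\eps=\Theta(\sqrt{\gamma/t})$ is in fact sharper and works via the same \Cref{lem:anticoncentration_after_coupling_all}), and your $O(1/\sqrt r)$ for the \textsf{Type-2} branch should really read $O\bigl((t/\gamma)^{1/4}/\sqrt r\bigr)$, which is still absorbed by the final tower estimate exactly as in the paper.
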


We remark that the analysis in this section generalizes to multiple Hamming slices, with a loss of the union bound on top of the $\sqrt n$ that scales linearly with the number of slices.
This will be clear in the anticoncentration analysis (\Cref{lem:type-2_single}) which works equally well in the generalized setting.

In addition, the bounds in \Cref{thm:locality_single_hamming_slice} are asymptotically tight when we set $\theta(n)$ to be a fixed constant sufficiently large.
This is because the $2^{-t}$-biased distribution over $n$ bits is $\pbra{1-\Theta_t(1/\sqrt n)}$-close to $\Dcal_{n/2^t}$, where the former can be sampled by a $t$-local function.

To prove \Cref{thm:locality_single_hamming_slice}, we first consider a simpler setting where we are given a $d$-local function with many non-connected neighborhoods of small size.
For intuition, one can view it as the case where every input bit influences few output bits.

\begin{definition}[$(d,r,t)$-Local Function]\label{def:more_local}
We say $g\colon\bin^m\to\bin^n$ is a $(d,r,t)$-local function if $g$ is a $d$-local function with $r$ non-connected neighborhoods of size at most $t$.
\end{definition}

We remark that the notion of $(d,r,t)$-local generalizes the notion of $(d,r)$-local in the previous section. There, the analysis depended on individual bits having incorrect bias, but now we consider a more subtle exploitation. In particular, we need both the distribution over the neighborhood $N(i)$ to be close to uniform and resampling to not substantially change the sum. (This latter property implies the distribution of the sum conditioned on $i=0$ is approximately the same as conditioned on $i=1$.) However, this trade-off alone gives only a small error in total variation distance to amplify; we need many independent neighborhoods (rather than just independent bits). This independence follows from non-connectedness.

The following proposition concerns lower bounds for $(d,r,t)$-local functions. It is similar to \Cref{prop:single_non-dyadic_more_local} and will be proved later.

\begin{proposition}\label{prop:single_slice_more_local}
Let $g\colon\bin^m\to\bin^n$ be a $(d,r,t)$-local function and define $\Pcal_g=g(\Ucal^m)$.
Then there exists a universal constant $C\ge1$ such that either
$$
\tvdist{\Pcal_g-\Dcal_k}\ge1-C\sqrt n\cdot\exp\cbra{-\frac{\gamma^2\cdot r}{C\cdot t}}
\quad\text{or}\quad
\Pcal_g(\supp{\Dcal_k})\le\frac{C\cdot t^{1/4}}{\sqrt{\gamma\cdot r}}.
$$
\end{proposition}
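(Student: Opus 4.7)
The proof will follow the Type-1/Type-2 dichotomy outlined in \Cref{sec:overview}. Let $i_1, \ldots, i_r$ be the indices whose neighborhoods $N_j = N(i_j)$ are non-connected and of size $\le t$, and fix a threshold $\eps$ (calibrated below). Call $N_j$ \emph{Type-1} if the induced marginal $g(\Ucal^m)|_{N_j}$ is $\eps$-far from $\Ucal_\gamma^{|N_j|}$, and \emph{Type-2} otherwise. At least one of the two classes contains $r/2$ of the indices, so it suffices to prove the first conclusion under the Type-1 assumption and the second under the Type-2 assumption.

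In the Type-1 case, I apply a block-version of \Cref{lem:tvdist_after_product} with $\Pcal = g(\Ucal^m)$, $\Qcal = \Dcal_k$, $\Wcal = \Ucal_\gamma^n$, and with $S$ indexing the Type-1 neighborhoods (each treated as a single ``coordinate''). Non-connectedness makes $\Pcal|_S$ and $\Wcal|_S$ block-product distributions, the block-wise total-variation distance is $\ge \eps$ by construction, and \Cref{fct:individual_binom} (as used in the proof of \Cref{prop:single_non-dyadic_more_local}) gives $\Wcal(x) \ge \eta \Qcal(x)$ pointwise with $\eta = \Omega(1/\sqrt n)$. Rerunning the Hoeffding argument from the proof of \Cref{lem:tvdist_after_product} with the per-block distinguishing-event indicators then yields $\tvdist{g(\Ucal^m) - \Dcal_k} \ge 1 - O(\sqrt n) \cdot \exp(-\Omega(\eps^2 r))$.

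For the Type-2 case I directly upper bound $\Pcal_g(\supp{\Dcal_k}) = \Pr[|g(\Ucal^m)| = k]$. Let $\rho$ denote the restriction of the input to bits outside $\bigcup_j I(i_j)$. By non-connectedness, conditioned on $\rho$ the neighborhood sums $Y_j \coloneqq \sum_{i' \in N_j} g_{i'}(\Ucal^m)$ are mutually independent and the bits outside $\bigcup_j N_j$ are fixed, so $|g(\Ucal^m)| = c(\rho) + \sum_j Y_j$. Taking two i.i.d.\ samples of $(g_{i_j}, g_{N_j \setminus \{i_j\}})$ and invoking \Cref{lem:anticoncentration_after_coupling_all} (the Type-2 hypothesis on $\eps$ verifies its precondition, and $q$ is chosen large enough that mod-$q$ equality captures actual equality of Hamming weights in $\{0, \ldots, t\}$) bounds the collision probability $\Pr[Y_j^{(1)} = Y_j^{(2)}]$ by $1 - \Omega(\gamma/t)$. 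Jensen's inequality then gives $\E_\rho[1 - p_j(\rho)] = \Omega(\gamma/t)$ where $p_j(\rho) = \max_c \Pr[Y_j = c \mid \rho]$. A concentration argument on $\sum_j(1-p_j(\rho))$ shows that for typical $\rho$ this sum is $\Omega(\gamma r/t)$, and \Cref{fct:ushakov} then bounds $\Pr[\sum_j Y_j = k - c(\rho) \mid \rho]$; averaging over $\rho$ completes this branch.

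The main obstacle is calibrating the threshold $\eps$ (together with the modulus $q$ in \Cref{lem:anticoncentration_after_coupling_all}) so that the Type-1 exponent $\exp(-\Omega(\eps^2 r))$ and the Type-2 anticoncentration bound simultaneously match the stated $\exp(-\gamma^2 r/(Ct))$ and $C t^{1/4}/\sqrt{\gamma r}$. A direct application of \Cref{lem:anticoncentration_after_coupling_all} with $q = t+1$ forces $\eps \lesssim \gamma/t$, which is too small to yield the Type-1 exponent with denominator $t$ rather than $t^2$; closing this gap, and in particular obtaining the $t^{1/4}$ (rather than the naive $\sqrt t$) in the Type-2 denominator, appears to require either a more delicate intermediate choice of $q$ that balances the modular error against actual equality of Hamming weights, or a second-moment concentration of $\sum_j (1-p_j(\rho))$ that genuinely exploits the limited overlap between the $\rho$-dependencies of distinct non-connected Type-2 neighborhoods.
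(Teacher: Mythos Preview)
Your overall strategy is exactly the paper's: the Type-1/Type-2 split, the block application of \Cref{lem:tvdist_after_product} with $\Wcal=\Ucal_\gamma^n$ and $\eta=\Omega(1/\sqrt n)$ for Type-1, and for Type-2 the conditioning on $\rho$, the collision trick $\E_\rho[p_{\rho,j}^2]\le\Pr[\text{two resampled copies agree}]$, \Cref{lem:anticoncentration_after_coupling_all}, Chernoff on $\sum_j(1-p_{\rho,j})$, and finally \Cref{fct:ushakov}. The only genuine gap is the one you explicitly flag, and your first guess for how to close it is correct; no second-moment refinement is needed.

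The paper's resolution is simply to take the modulus $q=\Theta(\sqrt{\gamma t})$ (concretely $q=8\lceil\sqrt{\gamma t}\rceil$) rather than $q=t+1$. With this choice the exponent $50\gamma(t-1)/q^2$ in \Cref{lem:anticoncentration_after_coupling_all} is $O(1)$, so the lemma's threshold becomes
\[
\frac{\gamma}{4q}\cdot 2^{-50\gamma(t-1)/q^2}\;\ge\;\Omega\!\left(\frac{\gamma}{\sqrt{\gamma t}}\right)\;\ge\;\Omega\!\left(\frac{\gamma}{\sqrt t}\right),
\]
allowing $\eps=\Theta(\gamma/\sqrt t)$. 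This single calibration fixes both branches at once: the Type-1 exponent becomes $\eps^2 r=\Theta(\gamma^2 r/t)$ as required, and in Type-2 one gets $\E_\rho[1-p_{\rho,j}]\ge\Omega(\gamma/\sqrt t)$, hence $\sum_j(1-p_{\rho,j})=\Omega(\gamma r/\sqrt t)$ for typical $\rho$ by Chernoff, and \Cref{fct:ushakov} then yields the bound $O\bigl(1/\sqrt{\gamma r/\sqrt t}\bigr)=O\bigl(t^{1/4}/\sqrt{\gamma r}\bigr)$. Note that working modulo $q$ does not lose anything here, since $\Pr[\sum X=\sum Y]\le\Pr[\sum X\equiv\sum Y\pmod q]$ regardless of how small $q$ is.

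One small wording point: the two samples of $(g_{i_j},g_{N_j\setminus\{i_j\}})$ you feed into \Cref{lem:anticoncentration_after_coupling_all} are not i.i.d.\ but coupled --- they share all input bits outside $I(i_j)$ and only the bits in $I(i_j)$ are resampled. This is precisely what makes $X$ independent of $(Z,W)$ and $Z$ independent of $(X,Y)$ while keeping the marginals equal to $\Pcal_g|_{N_j}$, and it is also what ties the collision probability back to $\E_\rho[p_{\rho,j}^2]$.
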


Then we show that any $d$-local function $f$ can be turned into a $(d,r,t)$-local function $g$ by restricting a few input bits.
Note that this is for some values of $r$ and $t$, which might depend on the function $f$.
For intuition, one can think of $d,\gamma,C$ as constants, then we will obtain $r=\Omega(n)$ non-connected neighborhoods of size at most $t=O(1)$ by restricting way fewer than $r$ input bits. 
This result is similar to \Cref{prop:biased_structure}.

\begin{proposition}\label{prop:single_slice_find_structure}
Let $C\ge1$ be an integer parameter and let $f\colon\bin^m\to\bin^n$ be a $d$-local function with $d \ge 1$.
Then there exists a set $S\subseteq[m]$ such that any fixing of input bits in $S$ reduces $f$ to a $(d,r,t)$-local function $g$ where
$$
|S|\le\frac{\gamma^2\cdot r}{2C\cdot t}
\quad\text{and}\quad
r\ge\frac n{\tow_2(20d+\log^*(1/\gamma)+C)}
\quad\text{and}\quad
t\le\tow_2(20d+\log^*(1/\gamma)+C).
$$
\end{proposition}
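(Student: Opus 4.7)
The plan is to derive the proposition as a direct application of \Cref{cor:graph_elim_non-adj_neigh} to the bipartite dependency graph $G_f$ of $f$. Viewing the $n$ output bits as left vertices and the $m$ input bits as right vertices, the proposition's conclusion -- that any fixing of input bits in $S$ reduces $f$ to a $(d,r,t)$-local function -- is exactly \Cref{as:non-adj_neigh} for $G_f$: after deleting $S$ from the right side, $r$ non-connected left neighborhoods of size at most $t$ survive.

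To instantiate the corollary, I would take $F(x) = 2Cx/\gamma^2$, which guarantees $F(x) \ge 1$ for $x \ge 1$ (since $C \ge 1$ and $\gamma \le 1$) and rewrites the corollary's size bound $|S| \le r/F(t)$ as $|S| \le \gamma^2 r/(2Ct)$, matching the claim. I then set $\lambda = \kappa = \tow_2(20d + \log^*(1/\gamma) + C)$ to obtain the desired bounds $r \ge n/\lambda$ and $t \le \kappa$. With this $F$, the corollary yields
\[
\tilde F(x) = \frac{1}{d}\exp\cbra{128\, C d^5 x^3/\gamma^2},
\]
and I would choose the threshold $L = \Theta(Cd^5/\gamma^2)$ large enough that $x^4 \ln 2 \ge 128 C d^5 x^3/\gamma^2$ for all $x \ge L$; the doubly exponential function $H(x) := 2^{x^4}$ then satisfies both $H(x) \ge 2x$ and $H(x) \ge \tilde F(x)$ on $[L, \infty)$, as required.

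The heart of the proof is then verifying $d \cdot H^{(2d+2)}(L) \le \lambda$. Iterating $H$, the identity $\log^*(y^4) = \log^*(y) + O(1)$ gives, by induction on $k$, that $\log^* H^{(k)}(L) = k + O(\log^* L)$, so $H^{(2d+2)}(L) \le \tow_2(2d + O(\log^* L))$. Since $\log^*(Cd^5/\gamma^2) \le \max\cbra{\log^* C, \log^* d, \log^*(1/\gamma)} + O(1)$, and multiplying by $d$ costs at most one extra tower level (as $2d \ge \log^* d$), I would conclude
\[
d \cdot H^{(2d+2)}(L) \le \tow_2\pbra{2d + \max\cbra{\log^* C, \log^* d, \log^*(1/\gamma)} + O(1)}.
\]
Using $\log^* d \le d$ and $\log^* C \le C$, these hidden constants fit inside the $20d$ slack and the free $C$ term of $\tow_2(20d + \log^*(1/\gamma) + C)$, completing the verification.

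The main obstacle is the final bookkeeping: one must carefully control the hidden $O(1)$ constants accumulating from the $\log^*$ manipulations, the iterated-tower analysis of $H$, and the factor $d$ multiplication, and show they all fit inside the $20d + C$ budget simultaneously for every $d \ge 1$ and every integer $C \ge 1$. This is precisely why the proposition leaves $C$ as a tunable parameter rather than fixing a numerical constant -- it provides the extra slack needed to absorb all universal constants that arise in the chain of reductions.
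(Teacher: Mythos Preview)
Your proposal follows the same strategy as the paper: apply \Cref{cor:graph_elim_non-adj_neigh} to the dependency graph of $f$ with $F(x)=2Cx/\gamma^2$, and then verify that $d\cdot H^{(2d+2)}(L)$ is bounded by the stated tower. The paper, however, chooses $H(x)=2^{2^x}$ together with the logarithmic threshold $L=10\log d+30\log(1/\gamma)+2\log C$; this meshes directly with $\tow_2$ (each application of $H$ raises the tower height by exactly $2$), so the verification $d\cdot H^{(2d+2)}(L)\le\tow_2(20d+\log^*(1/\gamma)+C)$ becomes a one-line check with no accumulating $O(1)$'s.

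Your single-exponential choice $H(x)=2^{x^4}$ also works, but one formula needs correcting: since $\log^*\!\big(2^{x^4}\big)=1+\log^*(x^4)$ and $\log^*(x^4)\le\log^*(x)+1$ once $x\ge 16$, each iteration of $H$ adds up to $2$ (not $1$) to $\log^*$. Thus the right estimate is $\log^* H^{(k)}(L)\le 2k+\log^* L+O(1)$, giving $H^{(2d+2)}(L)\le\tow_2\big(4d+O(1)+\log^* L\big)$ rather than $\tow_2\big(2d+O(\log^* L)\big)$. This still fits within the $20d$ budget, so your conclusion survives. One further remark: $C$ is not a tunable slack parameter; it is the fixed universal constant from \Cref{prop:single_slice_more_local}, and the proposition is phrased for arbitrary integer $C\ge1$ only so it can later be invoked with that specific value.
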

\begin{proof}
Recall the graph theoretic view of the dependency relations in $f$. We will apply \Cref{cor:graph_elim_non-adj_neigh}.
Setting $F(x)=2C\cdot x/\gamma^2$ gives
$$
\tilde F(x)=\frac1d\cdot\exp\cbra{\frac{128d^5C\cdot x^3}{\gamma^2}}.
$$
Define $H(x)=2^{2^x}$ and let $L=10\cdot\log(d)+30\cdot\log(1/\gamma)+2\cdot\log(C)$.
By setting 
$$
\kappa=\lambda=\tow_2(20d+\log^*(1/\gamma)+C)\ge d\cdot H^{(2d+2)}(L),
$$
the conditions in \Cref{cor:graph_elim_non-adj_neigh} are satisfied, where we used the fact that $\gamma\le1/2$ and $d,C\ge1$.
This implies that \Cref{as:non-adj_neigh} holds for the dependency graph of $f$ with parameter $\lambda,\kappa,F$.
\end{proof}

Finally, we convert lower bounds for $(d,r,t)$-local functions to $d$-local functions via \Cref{lem:tvdist_after_conditioning} as before.

\begin{proof}[Proof of \Cref{thm:locality_single_hamming_slice}]
Let $C\ge1$ be the universal constant in \Cref{prop:single_slice_more_local}.
Without loss of generality we assume it is an integer.
Define $\kappa=\tow_2(60C)$.
If $d\le C$, then we can simply set $d=C$, since a $d'$-local function is also $d$-local if $d\ge d'$.
Since we assumed that $\theta(n)\ge\kappa=\tow_2(60C)$, setting $d=C$ still satisfies the condition $d\le\log^*(\theta(n))/60$.
From now on we safely assume $d\ge C$.

By \Cref{prop:single_slice_find_structure}, we find a set $S\subseteq[m]$ such that any fixing $\rho$ of input bits in $S$ reduces $f$ to a $(d,r,t)$-local function $f_\rho$ where
$$
|S|\le\frac{\gamma^2\cdot r}{2C\cdot t}
\quad\text{and}\quad
r\ge\frac n{\tow_2(20d+\log^*(1/\gamma)+C)}
\quad\text{and}\quad
t\le\tow_2(20d+\log^*(1/\gamma)+C).
$$
Since a $(d,r,t')$-local function is also $(d,r,t)$-local if $t\ge t'$, we may assume $t = \tow_2(20d+\log^*(1/\gamma)+C)$.
Now for each $f_\rho$, we apply \Cref{prop:single_slice_more_local} and obtain that
$$
\text{either}\quad
\tvdist{\Pcal_{f_\rho}-\Dcal_k}\ge1-C\sqrt n\cdot\exp\cbra{-\frac{\gamma^2\cdot r}{C\cdot t}}
\quad\text{or}\quad
\Pcal_{f_\rho}(\supp{\Dcal_k})\le\frac{C\cdot t^{1/4}}{\sqrt{\gamma\cdot r}}.
$$
Note that $f(\Ucal^m)=\E_\rho\sbra{f_\rho(\Ucal^{[m]\setminus S})}=\E_\rho[\Pcal_{f_\rho}]$ where $\rho\sim\Ucal^S$.
By \Cref{lem:tvdist_after_conditioning} with $\cbra{\Pcal_{f_\rho}}_\rho,\Dcal_k$, and 
$$
\eps_1=C\sqrt n\cdot\exp\cbra{-\frac{\gamma^2\cdot r}{C\cdot t}},\quad
\eps_2=\frac{C\cdot t^{1/4}}{\sqrt{\gamma\cdot r}},\quad
\eps_3=0,\quad
\Ecal=\supp{\Dcal_k},
$$
we have
\begin{align*}
\tvdist{f(\Ucal^m)-\Dcal_k}
&\ge1-\pbra{2^{|S|}+1}\cdot C\sqrt n\cdot\exp\cbra{-\frac{\gamma^2\cdot r}{C\cdot t}}-\frac{C\cdot t^{1/4}}{\sqrt{\gamma\cdot r}}\\
&\ge1-2C\sqrt n\cdot\exp\cbra{-\frac{\gamma^2\cdot r}{2C\cdot t}}-\frac{C\cdot t^{1/4}}{\sqrt{\gamma\cdot r}}
\tag{since $|S|\le\frac{\gamma^2\cdot r}{2C\cdot t}$}\\
&\ge1-2C\sqrt n\cdot\exp\cbra{-\frac n{t^3}}-\frac{t^2}{\sqrt n}
\tag{since $r\ge\frac{2Cn}{\gamma^2\cdot t^2}$ and $r\ge\frac{C^2n}{\gamma^2\cdot t^2}$}\\
&\ge1-\frac{3C\cdot t^3}{\sqrt n}.
\tag{since $e^{-x}\le 1/x$}
\end{align*}
Observe that
\begin{align*}
t&\le\tow_2(21d+\log^*(1/\gamma))
\tag{since $d\ge C$}\\
&\le\tow_2(\floorbra{\log^*(\theta)\cdot0.9})
\tag{since $d\le\log^*(\theta)/60$ and $\log^*(1/\gamma)\le\log^*(\theta)/2$}\\
&\le\tow_2(\log^*(\theta)-5)
\tag{since $\theta\ge\kappa=\tow_2(60C)\ge\tow_2(60)$}\\
&=\tow_2(\log^*(\log^{(5)}(\theta)))
\le2^{\log^{(5)}(\theta)}
\tag{since $\tow_2(\log^*(x))\le2^x$}\\
&\le(\theta/3C)^{1/3}.
\tag{since $\theta\ge\tow_2(60C)$}
\end{align*}
Hence $\tvdist{f(\Ucal^m)-\Dcal_k}\ge1-\theta/\sqrt n$ as desired.
\end{proof}

Now we prove lower bounds for $(d,r,t)$-local functions.

\begin{proposition*}[\Cref{prop:single_slice_more_local} Restated]
Let $g\colon\bin^m\to\bin^n$ be a $(d,r,t)$-local function and define $\Pcal_g=g(\Ucal^m)$.
Then there exists a universal constant $C\ge1$ such that either
$$
\tvdist{\Pcal_g-\Dcal_k}\ge1-C\sqrt n\cdot\exp\cbra{-\frac{\gamma^2\cdot r}{C\cdot t}}
\quad\text{or}\quad
\Pcal_g(\supp{\Dcal_k})\le\frac{C\cdot t^{1/4}}{\sqrt{\gamma\cdot r}}.
$$
\end{proposition*}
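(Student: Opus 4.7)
By rearranging indices, assume $N(1),\ldots,N(r)$ are the non-connected neighborhoods, each of size at most $t$. Fix a threshold $\eps\asymp \gamma/\sqrt{t}$, and call $N(i)$ \emph{Type-1} if $\tvdist{\Pcal_g|_{N(i)} - \Ucal_\gamma^{N(i)}} \ge \eps$ and \emph{Type-2} otherwise. By pigeonhole, at least $r/2$ neighborhoods fall in one of the two classes, and each class produces one of the two alternatives of the proposition.

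The \textbf{Type-1 case} directly generalizes \Cref{prop:single_non-dyadic_more_local}. Because non-connected neighborhoods are independent under $\Pcal_g$, the proof of \Cref{lem:tvdist_after_product} (re-indexed so that $S$ ranges over the Type-1 neighborhoods instead of single coordinates) applies with $\Pcal=\Pcal_g$, $\Wcal=\Ucal_\gamma^n$, $\Qcal=\Dcal_k$. The multiplicative approximation factor $\eta\ge 1/\sqrt{2n}$ comes from the same computation as in \Cref{prop:single_non-dyadic_more_local} via \Cref{fct:individual_binom}, yielding $\tvdist{\Pcal_g-\Dcal_k}\ge 1 - O(\sqrt n)\exp(-\Omega(\eps^2 r))$, which with our choice of $\eps$ becomes the first alternative.

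The \textbf{Type-2 case} invokes anticoncentration. Fix a restriction $\rho$ of all inputs outside $I(1)\cup\cdots\cup I(r)$; by non-connectedness of the neighborhoods, $\rho$ fixes every output bit outside $\bigcup_i N(i)$ to some constant $c(\rho)$ and makes the remaining neighborhood Hamming sums $X_i=|N(i)|$ mutually independent. The event ``total Hamming weight equals $k$'' becomes $\sum_i X_i=k-c(\rho)$, to which we apply \Cref{fct:ushakov}. The technical heart is the sub-claim that, for typical $\rho$, the conditional distribution $X_i\mid\rho$ is not concentrated at a single value: using \Cref{fct:mult_apx} to pass the Type-2 marginal assumption to the conditionals of $N(i)\setminus\{i\}$ given bit $i=0$ versus $i=1$ (both close to $B(t_i-1,\gamma)$), combined with the TV distance $\Theta(1/\sqrt{\gamma t})$ between $B(t_i-1,\gamma)$ and its unit shift and the fact that bit $i$ flips under resampling $I(i)$ with probability $\approx \gamma$, a resampling argument (analogous to \Cref{lem:anticoncentration_after_coupling_all} but for exact equality) yields $1-p_i(\rho)\gtrsim \gamma/\sqrt t$ with high probability over $\rho$, where $p_i(\rho)=\max_x\Pr\sbra{X_i=x\mid\rho}$. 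Since $X_i\mid\rho$ depends on disjoint pieces of $\rho$ across the non-connected neighborhoods, the quantities $1-p_i(\rho)$ are independent, and \Cref{fct:hoeffding} gives $\sum_i(1-p_i(\rho))\gtrsim \gamma r/\sqrt t$ outside a small-probability event. Then \Cref{fct:ushakov} bounds the conditional hitting probability by $C/\sqrt{\gamma r/\sqrt t}=C t^{1/4}/\sqrt{\gamma r}$, and averaging over $\rho$ yields the second alternative.

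\textbf{Main obstacle.} The principal difficulty is the Type-2 resampling step: converting the \emph{marginal} $\eps$-closeness of $\Pcal_g|_{N(i)}$ to $\Ucal_\gamma^{N(i)}$ into a lower bound on $1-p_i(\rho)$ that holds with high probability over $\rho$ and carries the correct quantitative $\gamma$-dependence. The choice $\eps\asymp \gamma/\sqrt t$ is forced by balancing this against the Type-1 exponent $\eps^2 r$, producing the exponent $\gamma^2 r/(Ct)$ in the first alternative and the factor $t^{1/4}/\sqrt{\gamma r}$ in the second.
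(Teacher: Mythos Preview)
Your proposal is correct and follows essentially the paper's proof: the same \textsf{Type-1}/\textsf{Type-2} dichotomy at threshold $\eps\asymp\gamma/\sqrt t$, the \textsf{Type-1} branch handled via \Cref{lem:tvdist_after_product} with $\Wcal=\Ucal_\gamma^n$ as a multiplicative proxy for $\Dcal_k$ (giving $\eta\ge 1/\sqrt{2n}$), and the \textsf{Type-2} branch handled by restricting $\rho$ outside $\bigcup I(i)$, using the resample-$I(i)$ coupling (which the paper packages as \Cref{lem:anticoncentration_after_coupling_all} with modulus $q\asymp\sqrt{\gamma t}$) to get $\E_\rho[1-p_i(\rho)]\gtrsim\gamma/\sqrt t$, then concentrating and invoking \Cref{fct:ushakov}. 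One quantitative correction: to make the bad-$\rho$ probability fit under the stated second alternative you should use \Cref{fct:chernoff} rather than \Cref{fct:hoeffding}, since the multiplicative Chernoff bound gives failure probability $\exp(-\Omega(\gamma r/\sqrt t))=O(t^{1/4}/\sqrt{\gamma r})$, whereas Hoeffding's additive form only yields $\exp(-\Omega(\gamma^2 r/t))$, which is not in general dominated by $t^{1/4}/\sqrt{\gamma r}$.
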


Recall that $\gamma=k/n\in[1/n,1/2]$.
Let $\eps\in[0,1]$ be a parameter to be optimized later.
For each neighborhood $N(i)=N_g(i)$ of size $s_i=|N(i)|$, we classify it into one of the following two cases:
\begin{itemize}
\item \textsf{Type-1.}
$\Pcal_g|_{N(i)}$ is not $\eps$-close to $\Ucal_\gamma^{s_i}$.
\item \textsf{Type-2.}
$\Pcal_g|_{N(i)}$ is $\eps$-close to $\Ucal_\gamma^{s_i}$.
\end{itemize}
Intuitively a \textsf{Type-1} neighborhood means the marginal $\Pcal_g$ on $N(i)$ is far from the $\gamma$-biased distribution. If we find many \textsf{Type-1} neighborhoods, we can prove the distance bound analogous to \Cref{prop:single_non-dyadic_more_local}.

On the other hand, the output distribution of a \textsf{Type-2} neighborhood is close to $\gamma$-biased, in which case the previous argument fails. Then we show that these neighborhoods are somewhat independent and it is unlikely for them to sum to a fixed value.

We first prove that $\Pcal_g$ is far from $\Dcal_k$ if there are many small non-connected \textsf{Type-1} neighborhoods. 
The intuition is that the local view of $\Dcal_k$ should be $\gamma$-biased.

\begin{lemma}\label{lem:type-1_single}
Assume there are $r'\ge1$ non-connected \textsf{Type-1} neighborhoods.
Then 
$$
\tvdist{\Pcal_g-\Dcal_k}\ge1-2\sqrt{2n}\cdot\exp\cbra{-\eps^2r'/2}.
$$
\end{lemma}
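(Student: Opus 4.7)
The plan is to mirror the proof of \Cref{prop:single_non-dyadic_more_local} but with one important twist: rather than the distance originating from individual output bits that are $\err(\gamma,d)$-far from $\Ucal_\gamma^1$, it originates from entire non-connected \textsf{Type-1} neighborhoods that are $\eps$-far from the corresponding $\Ucal_\gamma$-products. Let $N(j_1),\ldots,N(j_{r'})$ denote the $r'$ given non-connected \textsf{Type-1} neighborhoods, and set
$$
\Pcal=\Pcal_g,\qquad \Qcal=\Dcal_k,\qquad \Wcal=\Ucal_\gamma^n,\qquad S=\bigsqcup_{\ell=1}^{r'} N(j_\ell).
$$
The strategy is to run the argument of \Cref{lem:tvdist_after_product} using the neighborhoods as ``super-coordinates'' rather than single bits, and then convert the resulting bound on $\Wcal$ into one on $\Qcal=\Dcal_k$ via a pointwise density comparison.

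The steps would be the following. First, non-connectedness of the $N(j_\ell)$'s implies that they depend on pairwise disjoint input bits, so $\Pcal_g|_S$ factors as a product \emph{across blocks} $N(j_\ell)$ (even though it need not be product within any single block), while $\Ucal_\gamma^n|_S$ is trivially product across blocks. Second, for each $\ell$, the \textsf{Type-1} assumption combined with \Cref{fct:tvdist} produces an event $\Ecal_\ell\subseteq\bin^{N(j_\ell)}$ with $\Pcal_g|_{N(j_\ell)}(\Ecal_\ell)-\Ucal_\gamma^{N(j_\ell)}(\Ecal_\ell)>\eps$. Third, defining the global event
$$
\Ecal\colon\quad \frac1{r'}\sum_{\ell=1}^{r'}\pbra{\indicator_{\Ecal_\ell}-\Pcal_g|_{N(j_\ell)}(\Ecal_\ell)}\ge-\eps/2
$$
and applying \Cref{fct:hoeffding} to the indicators $\indicator_{\Ecal_\ell}$, which are independent under both $\Pcal_g|_S$ and $\Ucal_\gamma^n|_S$, gives
$$
\Pcal_g(\Ecal)\ge 1-e^{-\eps^2 r'/2}\qquad\text{and}\qquad \Ucal_\gamma^n(\Ecal)\le e^{-\eps^2 r'/2},
$$
exactly as in the proof of \Cref{lem:tvdist_after_product}. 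Fourth, as in \Cref{prop:single_non-dyadic_more_local}, \Cref{fct:individual_binom} yields $\Ucal_\gamma^n(x)/\Dcal_k(x)\ge 1/\sqrt{2n}$ for every $x\in\supp{\Dcal_k}$ (since $\gamma(1-\gamma)\le 1/4$), so $\Dcal_k(\Ecal)\le\sqrt{2n}\cdot e^{-\eps^2 r'/2}$. Combining the last two bounds through the event characterization of total variation distance gives the claimed inequality.

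The main obstacle is essentially cosmetic: \Cref{lem:tvdist_after_product} is phrased with $S$ consisting of individual coordinates, whereas here I need to treat each neighborhood as one indivisible block. However, inspecting the proof of that lemma shows that the only structural hypothesis it actually uses is the independence of the indicators $\indicator_{\Ecal_\ell}$ under both $\Pcal$ and $\Wcal$, which is precisely what non-connectedness of the neighborhoods supplies. So the real content of \Cref{lem:type-1_single} is the bookkeeping step combining the neighborhood-level distance bound (from \textsf{Type-1}) with the $\Ucal_\gamma^n$-vs-$\Dcal_k$ multiplicative density bound; the Hoeffding-based amplification is identical to the one already developed.
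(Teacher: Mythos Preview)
Your proposal is correct and takes essentially the same approach as the paper: both treat each non-connected \textsf{Type-1} neighborhood as a single ``super-coordinate'', verify the product structure and $\eps$-separation needed for \Cref{lem:tvdist_after_product}, and use the same $\eta=1/\sqrt{2n}$ density comparison between $\Ucal_\gamma^n$ and $\Dcal_k$. The paper handles the cosmetic issue you flag by explicitly regrouping the coordinates of $\Pcal_g$, $\Dcal_k$, and $\Ucal_\gamma^n$ into $r'+1$ blocks so that \Cref{lem:tvdist_after_product} applies verbatim with $S=[r']$, whereas you inline the Hoeffding step; the content is identical.
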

\begin{proof}
The proof is similar to the one for \Cref{prop:single_non-dyadic_more_local}. The only change is to work with non-connected neighborhoods instead of non-connected output bits.
By rearranging indices, we assume without loss of generality that $N(1),\ldots,N(r')$ are non-connected \textsf{Type-1} neighborhoods of sizes $s_1,\ldots,s_{r'}$.

We will apply \Cref{lem:tvdist_after_product}, for which we define $\Pcal,\Qcal,S,\Wcal$. Let $R=[n]\setminus\pbra{N(1)\cup\cdots N(r')}$ be the rest of the output coordinates.
\begin{itemize}
\item Define $\Pcal$ as $\Pcal_g$ but grouping each $N(i)$ for $i\in[r']$ and $R$ as coordinates.
That is, $\Pcal$ now is a distribution over a product space of $r'+1$ coordinates, where $\Pcal|_{\cbra{i}}=\Pcal_g|_{N(i)}$ is over $\bin^{s_i}$ for $i\in[r']$ and $\Pcal|_{\cbra{r'+1}}=\Pcal_g|_R$.
\item Define $\Qcal$ as $\Dcal_k$ but also grouping each $N(i)$ for $i\in[r']$ and $R$ as coordinates.
\item Define $S=[r']$.
\item Define $\Wcal$ as $\Ucal_\gamma^n$ but also grouping each $N(i)$ for $i\in[r']$ and $R$ as coordinates.
\end{itemize}
Observe that $\Wcal|_S=\bigtimes_{i\in S}\Ucal_\gamma^{s_i}$ is a product distribution and $\Pcal|_S=\bigtimes_{i\in S}\Pcal_g|_{N(i)}$ is also a product distribution since $N(1),\ldots,N(r')$ are non-connected.

Since each $N(i)$ here is \textsf{Type-1}, we have $\tvdist{\Pcal|_{\cbra{i}}-\Wcal|_{\cbra{i}}}\ge\eps$.
Note that $\Wcal$ is the $\gamma$-biased distribution with $\gamma\in[1/n,1/2]$.
Therefore for any $x\in\supp\Qcal$, we have
$$
\frac{\Wcal(x)}{\Qcal(x)}=\gamma^{\gamma n}\cdot(1-\gamma)^{(1-\gamma)n}\cdot\binom n{\gamma n},
$$
where we recall that $\Qcal=\Dcal_k=\Dcal_{\gamma n}$.
Then by the same calculation in the proof of \Cref{prop:single_non-dyadic_more_local}, we can set $\eta=1/\sqrt{2n}$ in \Cref{lem:tvdist_after_product} and obtain the desired bound.
\end{proof}

We note that the same improvement idea described in \Cref{rmk:single_non-dyadic_more_local} also works here.
Now we turn to the second case where there are many small non-connected \textsf{Type-2} neighborhoods.
In this case, we show that with high probability the sampled binary string from $\Pcal_g$ does not have Hamming weight $k$ via anticoncentration inequalities.

\begin{lemma}\label{lem:type-2_single}
Assume there are $r'\ge1$ non-connected \textsf{Type-2} neighborhoods of size at most $t$. 
If $\eps\le\frac{\gamma}{128\sqrt t}$, then
$$
\Pcal_g(\supp{\Dcal_k})\le O\pbra{\frac{t^{1/4}}{\sqrt{\gamma\cdot r'}}}.
$$
\end{lemma}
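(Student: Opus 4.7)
My plan is to fix a large block of input bits that decouples the $r'$ non-connected \textsf{Type-2} neighborhoods, apply Ushakov's anticoncentration inequality (\Cref{fct:ushakov}) to the now-independent Hamming sums, and invoke the coupling result \Cref{lem:anticoncentration_after_coupling_all} to show each of those conditional sums is not too concentrated at a single value. After relabeling, let $N(1),\ldots,N(r')$ be the non-connected \textsf{Type-2} neighborhoods of sizes $s_i\le t$, and put $\tilde I(i)=\bigcup_{j\in N(i)}I(j)$, which are pairwise disjoint by non-connectedness. Condition on $\sigma$, the bits of the uniform input outside $\bigcup_i I(i)$. Then $X_i\coloneqq|N(i)|$ becomes a deterministic function of the free bits in $I(i)$ alone (since outputs in $N(i)$ depend only on $\tilde I(i)$ and $\tilde I(i)\setminus I(i)\subseteq\sigma$); the $X_i$'s are conditionally independent since their input supports are disjoint; and every output outside $\bigcup_i N(i)$ is pinned down by $\sigma$ (else it would lie in some $N(j)$). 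Writing $W_\sigma$ for the deterministic contribution of those outside outputs and $p_{i,\sigma}=\max_x\Pr[X_i=x\mid\sigma]$, \Cref{fct:ushakov} produces
\[
\Pcal_g\pbra{\supp{\Dcal_k}}=\E_\sigma\Pr\sbra{\textstyle\sum_i X_i=k-W_\sigma\mid\sigma}\le C_0\cdot\E_\sigma\sbra{1/\sqrt{\textstyle\sum_i(1-p_{i,\sigma})}}.
\]

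To lower bound $\E_\sigma[1-p_{i,\sigma}]$, I couple two samples $V,V'$ of $N(i)$ sharing $\sigma$ but with independently resampled bits in $I(i)$; each has marginal $\Pcal_g|_{N(i)}$, which is $\eps$-close to $\Ucal_\gamma^{s_i}$ by the \textsf{Type-2} assumption. Splitting $V=(V_i,V_{-i})$ with $V_i\in\bin$ the value of output $i$ (determined by $I(i)$ alone, hence independent of the entire $V'$) and likewise $V'=(V'_i,V'_{-i})$, the tuple $(V_i,V_{-i},V'_i,V'_{-i})$ satisfies the hypotheses of \Cref{lem:anticoncentration_after_coupling_all}. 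Picking $q=\max\cbra{3,\lceil 10\sqrt{\gamma t}\rceil}$ lets our assumption $\eps\le\gamma/(128\sqrt{t})$ absorb that lemma's hypothesis $\eps\le\gamma/(4q)\cdot 2^{-50\gamma(s_i-1)/q^2}$ and produces
\[
\Pr\sbra{|V|=|V'|}\le\Pr\sbra{|V|\equiv|V'|\Mod q}\le 1-\beta\qquad\text{with }\beta=\Omega\pbra{\sqrt{\gamma/t}}.
\]
Since conditioning on $\sigma$ makes $V,V'$ i.i.d.\ so $\Pr[|V|=|V'|\mid\sigma]=\sum_x\Pr[X_i=x\mid\sigma]^2\ge p_{i,\sigma}^2$, taking expectations yields $\E_\sigma[1-p_{i,\sigma}^2]\ge\beta$ and hence $\E_\sigma[1-p_{i,\sigma}]\ge\beta/2$ (using $1+p_{i,\sigma}\le 2$).

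Finally, because $p_{i,\sigma}$ depends only on $\sigma|_{\tilde I(i)\setminus I(i)}$ and the $\tilde I(i)$'s are disjoint, the variables $\cbra{1-p_{i,\sigma}}_{i\in[r']}$ are independent in $[0,1]$ with mean at least $\beta/2$. Hoeffding (\Cref{fct:hoeffding}) then forces $\sum_i(1-p_{i,\sigma})\ge r'\beta/4$ off an event of probability $\exp\cbra{-\Omega(r'\beta^2)}$; combining this with the Ushakov bound on the good event and using $\gamma\le 1$ gives $\Pcal_g(\supp{\Dcal_k})\le O(1/\sqrt{r'\beta})=O(t^{1/4}/\sqrt{\gamma r'})$. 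The main obstacle is calibrating the modulus $q$ in \Cref{lem:anticoncentration_after_coupling_all} so that it simultaneously respects our $\eps$-bound and extracts the optimal anticoncentration gap $\Omega(\sqrt{\gamma/t})$; the remaining steps are careful bookkeeping leveraging the disjointness of the $\tilde I(i)$'s guaranteed by non-connectedness.
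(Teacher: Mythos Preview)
Your argument follows essentially the same route as the paper: condition on the input bits outside $I(1)\cup\cdots\cup I(r')$, decompose the output Hamming weight as a fixed part plus independent neighborhood sums, apply the resampling/coupling argument together with \Cref{lem:anticoncentration_after_coupling_all} (with modulus $q\approx C\sqrt{\gamma t}$) to bound each neighborhood sum's maximum point mass in expectation, then concentrate and apply \Cref{fct:ushakov}. The setup, the independence claims via disjointness of the $\tilde I(i)$'s, and the coupling step are all correct and mirror the paper's \Cref{clm:lem:type-2_single_1}. (A minor aside: your claimed gap $\beta=\Omega(\sqrt{\gamma/t})$ is slightly too optimistic in the regime $\gamma t\ll 1$; what your choice of $q$ actually yields uniformly is $\beta=\Omega(\gamma/\sqrt t)$, which is exactly the paper's bound and is all that is needed for the stated conclusion.)

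There is one real slip in the final step: you invoke Hoeffding's inequality (\Cref{fct:hoeffding}) for the concentration of $\sum_i(1-p_{i,\sigma})$, which only yields tail probability $\exp\cbra{-\Omega(r'\beta^2)}$. Since $\beta=\Theta(\gamma/\sqrt t)$ can be much smaller than $1$, this is not in general $O\pbra{t^{1/4}/\sqrt{\gamma r'}}$: take for instance parameters with $r'\beta$ large but $r'\beta^2=\Theta(1)$, so the Hoeffding tail is $\Theta(1)$ while the target bound is $o(1)$. The paper instead applies Chernoff's multiplicative bound (\Cref{fct:chernoff}) with $\delta=1/2$, obtaining tail $\exp\cbra{-\Omega(r'\beta)}\le O\pbra{1/\sqrt{r'\beta}}=O\pbra{t^{1/4}/\sqrt{\gamma r'}}$, which matches the Ushakov term and closes the argument. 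Replace Hoeffding by Chernoff and your proof goes through verbatim.
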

\begin{proof}
By rearranging indices, we assume without loss of generality that $N(1),\ldots,N(r')$ are non-connected \textsf{Type-2} neighborhoods of sizes $1\le s_1,\ldots,s_{r'}\le t$.
Recall that $I(i)=I_g(i)$ is the set of input bits that the $i$-th output bit depends on.
We sample a random $Z\sim\Ucal^m$ and let $(X_1,\ldots,X_n)=g(Z)$.

Let $R=[m]\setminus\pbra{I(1)\cup\cdots\cup I(r')}$.
Since $(X_1,\ldots,X_n)$ has its marginal equal to $\Pcal_g$, we have
\begin{equation}\label{eq:lem:type-2_single_1}
\Pcal_g(\supp{\Dcal_k})=\E_\rho\sbra{\Pr\sbra{\sum_{i\in[n]}X_i=k\mid\rho}},
\end{equation}
where $\rho\sim\bin^R$ and the condition on $\rho$ means that $Z_j=\rho_j$ for all $j\in R$.
We will use \Cref{fct:ushakov} to upper bound the above probability for most $\rho$'s.
To this end, we decompose $\sum_{i\in[n]}X_i$ into $K+\sum_{\ell\in[r']}\Delta_\ell$, where
$$
K=\sum_{i\notin N(1)\cup\cdots\cup N(r')}X_i
\quad\text{and}\quad
\Delta_\ell=\sum_{i\in N(\ell)}X_i.
$$
Observe that if $i\notin N(1)\cup\cdots\cup N(r')$, then $I(i)\subseteq R$ and thus $K$ is fixed given $\rho$.
For each $\ell\in[r']$ and $\rho\in\bin^R$, define the random variable
$$
p_{\rho,\ell}=\max_c\Pr\sbra{\Delta_\ell=c\mid\rho}.
$$
We will use \Cref{lem:anticoncentration_after_coupling_all} to prove an upper bound on $p_{\rho,\ell}$ in expectation.

\begin{claim}\label{clm:lem:type-2_single_1}
$\E_\rho\sbra{(p_{\rho,\ell})^2}\le1-\frac{\gamma}{64\sqrt t}$ holds for all $\ell\in[r']$.
\end{claim}

We first conclude the proof of \Cref{lem:type-2_single} assuming \Cref{clm:lem:type-2_single_1}.
Firstly by Jensen's inequality, we have
$$
\E_\rho\sbra{1-p_{\rho,\ell}}\ge1-\sqrt{\E_\rho\sbra{(p_{\rho,\ell})^2}}\ge1-\sqrt{1-\frac{\gamma}{64\sqrt t}}\ge\frac{\gamma}{128\sqrt t}.
$$
Since $N(1),\ldots,N(\ell)$ are non-connected, each $\Delta_\ell$ depends on disjoint parts of $Z$.
Thus each $p_{\rho,\ell}$ depends on disjoint parts of $\rho$, which means they are independent.
Since each $p_{\rho,\ell}\in[0,1]$, by \Cref{fct:chernoff} with $\delta=1/2$ we have
\begin{equation}\label{eq:lem:type-2_single_2}
\Pr_\rho\sbra{\sum_{\ell\in[r']}(1-p_{\rho,\ell})\le\frac12\cdot\frac{\gamma\cdot r'}{128\sqrt t}}\le\exp\cbra{-\frac{\gamma\cdot r'}{1024\sqrt t}}\le O\pbra{\frac{t^{1/4}}{\sqrt{\gamma\cdot r'}}}.
\end{equation}
We say $\rho$ is bad if the above event happens, and good otherwise.
Then we have
\begin{align*}
\Pcal_g(\supp{\Dcal_k})
&\le O\pbra{\frac{t^{1/4}}{\sqrt{\gamma\cdot r'}}}+\Pr\sbra{K+\sum_{\ell\in[r']}\Delta_\ell=k\mid\rho\text{ is good}}
\tag{by \Cref{eq:lem:type-2_single_1} and \Cref{eq:lem:type-2_single_2}}\\
&\le O\pbra{\frac{t^{1/4}}{\sqrt{\gamma\cdot r'}}}+O\pbra{\E_\rho\sbra{\frac1{\sqrt{\sum_{\ell\in[r']}(1-p_{\rho,\ell})}}\mid\rho\text{ is good}}}
\tag{by \Cref{fct:ushakov}}\\
&=O\pbra{\frac{t^{1/4}}{\sqrt{\gamma\cdot r'}}}
\end{align*}
as desired.

Now we prove \Cref{clm:lem:type-2_single_1}.
\begin{proof}[Proof of \Cref{clm:lem:type-2_single_1}]
Recall that $(X_1,\ldots,X_n)=g(Z)$ for a random $Z\sim\Ucal^m$, and let $I_\ell=\bigcup_{i\in N(\ell)}I(i)$.
Then $\Delta_\ell=\sum_{i\in N(\ell)}X_i$ depends only on bits of $Z$ in $I_\ell$.
Since $N(1),\ldots,N(r')$ are non-connected, $I_\ell\cap I(i)=\emptyset$ holds for all $i\neq\ell$.
This means the distribution of $\Delta_\ell$ conditioned on $Z_j$'s for $j\in[m]\setminus I(\ell)$ is the same as if we only condition on $Z_j$'s for $j\in R=[m]\setminus\pbra{I(1)\cup\cdots\cup I(r')}\supseteq I_\ell\setminus I(\ell)$.
Therefore
\begin{equation}\label{eq:clm:lem:type-2_single_1_1}
\E_\rho\sbra{(p_{\rho,\ell})^2}
=\E_{Z_j:j\notin I(\ell)}\sbra{\max_c\Pr\sbra{\sum_{i\in N(\ell)}X_i=c\mid Z_j:j\notin I(\ell)}^2}.
\end{equation}

We sample $Z'\sim\Ucal^m$ conditioned on $Z'_j=Z_j$ for all $j\notin I(\ell)$.
In other words, we randomly flip bits in $I(\ell)$ of $Z$ to obtain $Z'$.
Define $(Y_1,\ldots,Y_n)=g(Z')$.
Then for any value $c$, we have
\begin{align*}
\Pr\sbra{\sum_{i\in N(\ell)}X_i=c\mid Z_j\colon j\notin I(\ell)}^2
&=\Pr\sbra{\sum_{i\in N(\ell)}X_i=\sum_{i\in N(\ell)}Y_i=c\mid Z_j\colon j\notin I(\ell)}
\tag{$X$'s and $Y$'s are conditionally independent}\\
&\le\Pr\sbra{\sum_{i\in N(\ell)}X_i=\sum_{i\in N(\ell)}Y_i\mid Z_j\colon j\notin I(\ell)}.
\end{align*}
Putting into \Cref{eq:clm:lem:type-2_single_1_1}, we have $\E_\rho\sbra{(p_{\rho,\ell})^2}
\le\Pr\sbra{\sum_{i\in N(\ell)}X_i=\sum_{i\in N(\ell)}Y_i}$.

By rearranging indices, we assume $N(\ell)=[\ell]$.
Now we apply \Cref{lem:anticoncentration_after_coupling_all} to $(A,B,C,D)$ with $q=8\ceilbra{\sqrt{\gamma t}}$, where $A=X_\ell,C=Y_\ell$ and $B=(X_1,\ldots,X_{\ell-1}),D=(Y_1,\ldots,Y_{\ell-1})$.
This holds since $I(\ell)$ is resampled in $Z'$, which decouples $A=X_\ell$ from $(C,D)$ and $C$ from $(A,B)$. In addition, $(A,B),(C,D)$ have the same marginal distribution of $\Pcal_g|_{N(\ell)}$, which is of \textsf{Type-2}, i.e., $\eps$-close to $\Ucal_\gamma^{s_\ell}$.
Since $s_\ell\le t$ and
$$
\frac\gamma{4q}\cdot2^{-50\gamma(t-1)/q^2}
\ge\frac{\gamma}{32\ceilbra{\sqrt{\gamma t}}}\cdot\frac12
\ge\frac\gamma{128\sqrt t}
\ge\eps,
$$
\Cref{lem:anticoncentration_after_coupling_all} implies
\begin{align*}
\E_\rho\sbra{(p_{\rho,\ell})^2}
&\le\Pr\sbra{\sum_{i\in N(\ell)}X_i=\sum_{i\in N(\ell)}Y_i}
=\Pr\sbra{A+|B|=C+|D|}\\
&\le\Pr\sbra{A+|B|\equiv C+|D|\Mod q}
\le1-\frac{\gamma}{64\sqrt t}
\end{align*}
as desired.
\end{proof}
\end{proof}

At this point, we are ready to prove \Cref{prop:single_slice_more_local}.
\begin{proof}[Proof of \Cref{prop:single_slice_more_local}]
Firstly we note that the bound trivially holds when $r<1$. Hence we assume now $r\ge1$.
We set $\eps=\frac{\gamma}{64\sqrt t}$ and let $C$ be a universal constant sufficiently large.
By \Cref{def:more_local}, there are $r$ non-connected neighborhoods of size at most $t$.
If $\ceilbra{r/2}$ of them are \textsf{Type-1}, then we apply \Cref{lem:type-1_single} with $r'=\ceilbra{r/2}$ and obtain
$$
\tvdist{\Pcal_g-\Dcal_k}
\ge1-2\sqrt{2n}\cdot\exp\cbra{-\eps^2r/4}
\ge1-C\sqrt n\cdot\exp\cbra{-\frac{\gamma^2\cdot r}{C\cdot t}}.
$$
Otherwise there are $\ceilbra{r/2}$ of \textsf{Type-2}, and we apply \Cref{lem:type-2_single} with $r'=\ceilbra{r/2}$ to obtain
\begin{equation*}
\Pcal_g(\supp{\Dcal_k})\le\frac{C\cdot t^{1/4}}{\sqrt{\gamma\cdot r}}.
\tag*{\qedhere}
\end{equation*}
\end{proof}

\subsection{Periodic Hamming Slices}\label{sec:periodic_hamming_slice}

In the last section, we proved lower bounds for sampling a single Hamming slice.
Our technique is robust enough that it also works for uniform distributions over multiple Hamming slices.
Here we illustrate with periodic Hamming slices.

Let $q\ge3$ be an integer, and let $\Lambda\subseteq\Zbb/q\Zbb$ be a non-empty set.
We define the distribution $\Dcal_{q,\Lambda}$ to be the uniform distribution over $x\in\bin^n$ conditioned on $|x|\bmod q\in\Lambda$.
We will show that, under moderate conditions on $q$ and $\Lambda$, local functions cannot effectively sample from $\Dcal_{q,\Lambda}$.

\begin{theorem}\label{thm:mod_slice}
Let $q\ge3$ be an integer, and let $\Lambda\subseteq\Zbb/q\Zbb$ not empty.
Define $\eta=|\supp{\Dcal_{q,\Lambda}}|\cdot2^{-n}$.
Let $f\colon\bin^m\to\bin^n$ be a $d$-local function.
Then
$$
\tvdist{f(\Ucal^m)-\Dcal_{q,\Lambda}}\ge
1-\frac{6q}\eta\cdot\exp\cbra{-\frac n{q^2\cdot\tow_2(18d)}}-\begin{cases}
|\Lambda|/q & q\text{ is odd},\\
2\cdot\max\cbra{|\Lambda_\textsf{even}|,|\Lambda_\textsf{odd}|}/q & q\text{ is even.}
\end{cases}
$$
\end{theorem}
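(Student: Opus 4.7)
The plan is to follow the framework developed in Section~\ref{sec:single_hamming_slice}, adapting the \textsf{Type-1}/\textsf{Type-2} dichotomy to the modular setting. I would first introduce the analog of a $(d,r,t)$-local function from Definition~\ref{def:more_local}, and for each neighborhood $N(i)$ of size $s_i$ declare it \textsf{Type-1} if $\Pcal_g|_{N(i)}$ is $\eps$-far from the unbiased distribution $\Ucal^{s_i}$, and \textsf{Type-2} otherwise; we compare against unbiased rather than $\gamma$-biased because the marginals of $\Dcal_{q,\Lambda}$ are approximately unbiased. The parameter $\eps$, which will be of order $2^{-\Theta(t/q^2)}/q$, is chosen to balance the two cases.

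For the \textsf{Type-1} case with $r'$ non-connected such neighborhoods, I would apply Lemma~\ref{lem:tvdist_after_product} with $\Pcal=g(\Ucal^m)$, $\Qcal=\Dcal_{q,\Lambda}$, and $\Wcal=\Ucal^n$. Since every $x\in\supp{\Dcal_{q,\Lambda}}$ satisfies $\Qcal(x)=1/|\supp{\Dcal_{q,\Lambda}}|$ and $\Wcal(x)=2^{-n}$, the pointwise ratio $\Wcal(x)/\Qcal(x)$ equals exactly $\eta$, so the lemma yields $\tvdist{\Pcal_g-\Dcal_{q,\Lambda}}\ge 1-2e^{-\eps^2 r'/2}/\eta$.

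For the \textsf{Type-2} case with $r'$ non-connected neighborhoods of size at most $t$, I would mirror the proof of Lemma~\ref{lem:type-2_single}. After fixing $\rho$ on all input bits outside $I(1)\cup\cdots\cup I(r')$, the sums $\Delta_\ell=\sum_{i\in N(\ell)}X_i$ become independent and the total weight is $K+\sum_\ell\Delta_\ell$ with $K$ fixed by $\rho$. To apply the mod-$q$ local limit theorem (Lemma~\ref{lem:mod_llt}) to $\sum_\ell\Delta_\ell$, I need, for every divisor $r\ge 3$ of $q$, a lower bound on $\sum_\ell(1-p_{r,\ell})$ where $p_{r,\ell}=\max_c\Pr\sbra{\Delta_\ell\equiv c\Mod r\mid\rho}$. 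Using the resampling trick from the proof of Claim~\ref{clm:lem:type-2_single_1}, this reduces to bounding $\Pr\sbra{\Delta_\ell\equiv\Delta_\ell'\Mod r}$ where $\Delta_\ell,\Delta_\ell'$ come from inputs agreeing outside $I(\ell)$; Lemma~\ref{lem:anticoncentration_after_coupling_all} with modulus $r$, $\gamma=1/2$, and threshold $\eps$ of order $2^{-O(t/r^2)}/r$ supplies the required bound. Jensen's inequality plus Fact~\ref{fct:chernoff} then imply that for a typical $\rho$, $\sum_\ell(1-p_{r,\ell})$ is at least $\Omega(r'/(r\cdot 2^{O(t/r^2)}))$ simultaneously for every divisor $r\ge 3$ of $q$, and Lemma~\ref{lem:mod_llt} produces the two case-dependent $\Lambda$-terms together with the residual $q\cdot e^{-2L/q^2}$ error.

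Combining the two cases as in Proposition~\ref{prop:single_slice_more_local} and choosing $\eps$ appropriately, I would then invoke Corollary~\ref{cor:graph_elim_non-adj_neigh} with an $F(x)$ of order $q^2\cdot 2^{O(x/q^2)}$ to pass from a general $d$-local function to a $(d,r,t)$-local one by conditioning on a small set of input bits; this incurs a tower-of-twos blowup of height $O(d)$ in the resulting $r$ and $t$. The final assembly via Lemma~\ref{lem:tvdist_after_conditioning} mimics the proof of Theorem~\ref{thm:locality_single_hamming_slice}, yielding the stated bound. The main obstacles are: first, tracking the $q$-dependence through every stage so that the exponent in the leading term truly becomes $n/(q^2\cdot\tow_2(18d))$ rather than a worse polynomial in $q$; and second, ensuring the chosen threshold $\eps$ simultaneously satisfies the hypothesis of Lemma~\ref{lem:anticoncentration_after_coupling_all} for \emph{every} divisor $r\ge 3$ of $q$, which is precisely what forces the $q^2$ appearing in the exponent of $F$.
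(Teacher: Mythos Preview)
Your overall plan matches the paper's proof: the \textsf{Type-1}/\textsf{Type-2} split against $\Ucal^{s_i}$, Lemma~\ref{lem:tvdist_after_product} with proxy $\Wcal=\Ucal^n$ and ratio exactly $\eta$ for \textsf{Type-1}, the resampling argument plus Lemma~\ref{lem:mod_llt} for \textsf{Type-2}, and assembly via Corollary~\ref{cor:graph_elim_non-adj_neigh} and Lemma~\ref{lem:tvdist_after_conditioning}. The gap is in your treatment of the $q$-dependence, which you correctly flag as an obstacle but resolve incorrectly.

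You propose $\eps$ of order $2^{-\Theta(t/q^2)}/q$ and $F(x)$ of order $q^2\cdot 2^{O(x/q^2)}$, and you claim the need to cover every divisor $r\ge 3$ of $q$ is what introduces $q^2$ into $F$. Both assertions are wrong. The hypothesis of Lemma~\ref{lem:anticoncentration_after_coupling_all} at modulus $r$ reads $\eps\le\frac{1}{8r}\cdot 2^{-25(t-1)/r^2}$, and this is most restrictive for \emph{small} $r$, not for $r=q$: if $3\mid q$ you already need $\eps\lesssim 2^{-\Theta(t)}$, so your $\eps\approx 1/q$ is too large once $q\gg 2^{\Theta(t)}$. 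Even for prime $q$ your $\eps$ makes the \textsf{Type-1} exponent $\eps^2 r'/2$ of order $r'/q^2\le n/(q^2\tow_2(\Theta(d)))$, which for $q$ near $\sqrt n$ is $o(1)$; the \textsf{Type-1} bound then says nothing and the argument collapses. The paper sidesteps both issues with one trick you are missing: since $\Delta_\ell\in\{0,\dots,t\}$, congruence modulo any $a\ge t+1$ is equality, so (per the footnote to Lemma~\ref{lem:anticoncentration_after_coupling_all}) one may apply that lemma with modulus $\bar q=\min\{a,t+1\}$ in place of $a$. This forces $\bar q\le t+1$ always, so both the admissible $\eps$ and the resulting bound $\E_\rho[(p_{\rho,a,\ell})^2]\le 1-2^{-\Theta(t)}$ become functions of $t$ alone. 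The paper accordingly takes $\eps=2^{-14t+10}$ and $F(x)=2^{28x-18}$, both independent of $q$, and the $q^2$ in the stated exponent $n/(q^2\cdot\tow_2(18d))$ enters only through the $e^{-2L/q^2}$ factor of Lemma~\ref{lem:mod_llt} in the \textsf{Type-2} estimate, not through the graph elimination.
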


We note the following simple lower bound for $\eta$, which implies that \Cref{thm:mod_slice} gives non-trivial bounds for all $q\le O_d(\sqrt n)$.

\begin{claim}\label{clm:mod_slice_eta}
$\eta\ge2^{-q^2/n}/\sqrt{2n}$.
\end{claim}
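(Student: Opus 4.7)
The plan is to lower bound $|\supp{\Dcal_{q,\Lambda}}|$ by a single well-chosen binomial coefficient $\binom{n}{k^*}$, where $k^*$ is picked to lie near the mode $n/2$ of $\binom{n}{\cdot}$ while satisfying the congruence constraint defining $\Dcal_{q,\Lambda}$. Since $\Lambda$ is non-empty, fix any $c \in \Lambda$; the goal reduces to finding a valid $k^* \in \{0, 1, \ldots, n\}$ with $k^* \equiv c \pmod{q}$ and $|k^* - n/2|$ small, then invoking the standard estimates in \Cref{fct:individual_binom} and \Cref{fct:entropy}.

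First I would handle the main case $q \le n$. Then the interval $[\lceil n/2 - q/2\rceil,\ \lfloor n/2 + q/2\rfloor]$ contains at least $q$ consecutive integers, all lying in $[0,n]$, so it meets every residue class modulo $q$. Pick $k^*$ from this interval with $k^* \equiv c \pmod{q}$; automatically $k^* \in \supp{\Dcal_{q,\Lambda}}$ and $|k^* - n/2| \le q/2$, which implies $|\supp{\Dcal_{q,\Lambda}}| \ge \binom{n}{k^*}$. The task is then to show $\binom{n}{k^*} \ge 2^{\,n - q^2/n}/\sqrt{2n}$.

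For the entropy factor, writing $k^*/n = (1+x)/2$ gives $|x| = |2k^*/n - 1| \le q/n$, so by \Cref{fct:entropy}, $\Hcal(k^*/n) \ge 1 - x^2 \ge 1 - q^2/n^2$ and the numerator in \Cref{fct:individual_binom} is at least $2^{n - q^2/n}$. For the denominator, writing $k^* = n/2 + \delta$ with $|\delta| \le q/2$ yields $k^*(1 - k^*/n) = n/4 - \delta^2/n \le n/4$, so $\sqrt{8 k^*(1-k^*/n)} \le \sqrt{2n}$. Combining and dividing by $2^n$ gives $\eta \ge 2^{-q^2/n}/\sqrt{2n}$.

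The edge case $q > n$ is resolved by a triviality: since $\Dcal_{q,\Lambda}$ is a well-defined probability distribution, $\eta \ge 2^{-n}$, and $q > n$ implies $q^2/n > n$ so $2^{-q^2/n}/\sqrt{2n} < 2^{-n}$, giving the claim for free. I do not anticipate any real obstacle here; the only mildly fiddly step is verifying $k^*(1 - k^*/n) \le n/4$, which is a one-line calculation once the substitution $k^* = n/2 + \delta$ is made.
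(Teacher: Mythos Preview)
Your proposal is correct and follows essentially the same route as the paper: find a slice $k^*$ near $n/2$ hitting the residue class, then apply \Cref{fct:individual_binom} and \Cref{fct:entropy}. One small boundary issue: your main case $q\le n$ can produce $k^*\in\{0,n\}$ when $q=n$ and $c=0$, where \Cref{fct:individual_binom} does not apply; the paper sidesteps this by treating $q=n$ as trivial (since then $2^{-q^2/n}/\sqrt{2n}=2^{-n}/\sqrt{2n}<2^{-n}\le\eta$), and your own edge-case argument already covers it if you replace ``$q>n$'' by ``$q\ge n$''.
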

\begin{proof}
The bound is trivial when $q=n$. Hence now we assume $q\le n-1$.
Since $\Lambda\neq\emptyset$, at least one Hamming slice with weight in $[(n-q)/2,(n+q)/2]$ will be included in $\supp{\Dcal_{q,\Lambda}}$.
By \Cref{fct:individual_binom} and \Cref{fct:entropy}, we obtain
\begin{equation*}
\eta\ge2^{-n}\cdot\binom n{(n-q)/2}\ge2^{\pbra{\Hcal\pbra{\frac12-\frac q{2n}}-1}\cdot n}/\sqrt{2n}
\ge2^{-q^2/n}/\sqrt{2n}.
\tag*{\qedhere}
\end{equation*}
\end{proof}

We also remark that the bound in \Cref{thm:mod_slice} is essentially tight for $q$ not extremely large:
\begin{itemize}
\item If $q=2$, then we can perfectly produce $\Dcal_{2,\cbra{0}}$ by a $2$-local function $(x_1\oplus x_2,x_2\oplus x_3,\ldots,x_{n-1}\oplus x_n,x_n\oplus x_1)$, and similarly for $\Dcal_{2,\cbra1}$.
\item If $q\ge3$ is odd, then we can produce $\Ucal^n$ by a $1$-local function, which hits $\Lambda$ with probability roughly $|\Lambda|/q$.
\item If $q\ge3$ is even, then we can produce $\Dcal_{2,\cbra0}$ as described above, which is roughly uniform over even numbers in $\Zbb/q\Zbb$ after modulo $q$. Thus we hit $\Lambda_\textsf{even}$ with probability $|\Lambda_\textsf{even}|/(q/2)$. Similar construction using $\Dcal_{2,\cbra1}$ will achieve the distance bound $1-|\Lambda_\textsf{odd}|/(q/2)$.
\end{itemize}

The proof of \Cref{thm:mod_slice} follows the same paradigm as the previous section.
We first give bounds for $(d,r,t)$-local functions. The proof of the following proposition is presented at the end of the section.

\begin{proposition}\label{prop:mod_slice_more_local}
Let $g\colon\bin^m\to\bin^n$ be a $(d,r,t)$-local function and define $\Pcal_g=g(\Ucal^m)$.
Then either
$$
\tvdist{\Pcal_g-\Dcal_{q,\Lambda}}\ge1-\frac2\eta\cdot\exp\cbra{-2^{-28t+19}\cdot r}.
$$
or
$$
\Pcal_g(\supp{\Dcal_{q,\Lambda}})\le2q\cdot\exp\cbra{-\frac{r\cdot2^{-14t+10}}{q^2}}+\begin{cases}
|\Lambda|/q & q\text{ odd},\\
2\cdot\max\cbra{|\Lambda_\textsf{even}|,|\Lambda_\textsf{odd}|}/q & q\text{ even,}
\end{cases}
$$
\end{proposition}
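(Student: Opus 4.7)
I would follow the same \textsf{Type-1}/\textsf{Type-2} dichotomy used for \Cref{prop:single_slice_more_local}, with two adaptations for the periodic setting: the unbiased $\Ucal^n$ plays the role that the $\gamma$-biased proxy played before (since the marginals of $\Dcal_{q,\Lambda}$ are nearly $1/2$), and the modular local limit theorem \Cref{lem:mod_llt} replaces the Littlewood--Offord step. The two cases produce the two alternatives listed in the proposition.

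\textbf{Setup and Type-1 case.}
Fix a threshold $\eps=2^{-\Theta(t)}$ (calibrated so that \Cref{lem:anticoncentration_after_coupling_all} applies for every relevant modulus; the exact constants give the $2^{-14t+10}$ appearing in the final bound). Classify each of the $r$ non-connected neighborhoods $N(1),\ldots,N(r)$ as \textsf{Type-1} if $\Pcal_g|_{N(i)}$ is not $\eps$-close to $\Ucal^{s_i}$ (where $s_i=|N(i)|\le t$), and as \textsf{Type-2} otherwise; at least one class has size $\ge r/2$. If at least $r/2$ are \textsf{Type-1}, I would invoke \Cref{lem:tvdist_after_product} with $\Pcal=\Pcal_g$, $\Qcal=\Dcal_{q,\Lambda}$, $\Wcal=\Ucal^n$, and $S$ the set obtained by grouping each \textsf{Type-1} neighborhood into a single coordinate. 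Non-connectedness makes $\Pcal|_S$ and $\Wcal|_S$ product distributions, every grouped marginal has distance at least $\eps$, and the multiplicative ratio is $\Wcal(x)/\Qcal(x)=2^{-n}\cdot|\supp{\Dcal_{q,\Lambda}}|=\eta$ for every $x\in\supp{\Dcal_{q,\Lambda}}$. Plugging in gives $\tvdist{\Pcal_g-\Dcal_{q,\Lambda}}\ge 1-(2/\eta)\exp\cbra{-\eps^2 r/4}$, matching the first alternative.

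\textbf{Type-2 case.}
I would mirror \Cref{lem:type-2_single}, but track residues modulo every divisor of $q$. Set $R=[m]\setminus\bigcup_\ell I(\ell)$, sample $Z\sim\Ucal^m$, write $\rho=Z|_R$ and $(X_1,\ldots,X_n)=g(Z)$, and decompose $\sum_iX_i=K(\rho)+\sum_\ell\Delta_\ell$ with $\Delta_\ell=\sum_{i\in N(\ell)}X_i$ (which depends only on bits in $I(\ell)$). For each divisor $r'\ge 3$ of $q$ and each \textsf{Type-2} $\ell$, define $p_{\rho,\ell,r'}=\max_c\Pr\sbra{\Delta_\ell\equiv c\Mod{r'}\mid\rho}$. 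Using the resampling coupling from the proof of \Cref{clm:lem:type-2_single_1} together with \Cref{lem:anticoncentration_after_coupling_all} (with $\gamma=1/2$ and modulus $\min\cbra{r',t+1}$, so that $\eps$ satisfies the hypothesis), I would get $\E_\rho\sbra{(p_{\rho,\ell,r'})^2}\le 1-2^{-O(t)}$ and hence $\E_\rho\sbra{1-p_{\rho,\ell,r'}}\ge 2^{-O(t)}$ by Jensen. Since the $\Delta_\ell$'s depend on disjoint parts of $\rho$, the $p_{\rho,\ell,r'}$'s are independent in $\ell$, so \Cref{fct:chernoff} plus a union bound over the at most $q$ divisors $r'\ge 3$ shows that, except with probability $q\cdot\exp\cbra{-r\cdot 2^{-O(t)}}$ in $\rho$, every such $r'$ satisfies $\sum_\ell(1-p_{\rho,\ell,r'})\ge L=r\cdot 2^{-O(t)}$. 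For a good $\rho$, \Cref{lem:mod_llt} applied to the conditionally independent $\Delta_\ell$'s bounds $\Pr\sbra{\sum_iX_i\bmod q\in\Lambda\mid\rho}$ by $q\cdot e^{-2L/q^2}$ plus the parity term; averaging over $\rho$ and absorbing the bad-$\rho$ contribution into the first summand yields the second alternative.

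\textbf{Main obstacle.}
The delicate step is calibrating $\eps$ and the modulus inside \Cref{lem:anticoncentration_after_coupling_all} so that one per-neighborhood lower bound $\E_\rho\sbra{1-p_{\rho,\ell,r'}}\ge 2^{-O(t)}$ holds uniformly over all divisors $r'\ge 3$ of $q$, while tracking constants tightly enough to recover the advertised $2^{-28t+19}$ in the Type-1 exponent and $2^{-14t+10}$ in the Type-2 exponent. Note that $r'=2$ cannot enter this coupling argument (the binomial-mod-$2$ distribution is shift-invariant, so the coupling contributes nothing there); the resulting obstruction for even $q$ is already accounted for by the $\max\cbra{|\Lambda_\textsf{even}|,|\Lambda_\textsf{odd}|}$ term baked into \Cref{lem:mod_llt}, which is exactly the parity term appearing in the proposition.
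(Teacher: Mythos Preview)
Your proposal is correct and follows essentially the same route as the paper: set $\eps=2^{-14t+10}$, split into \textsf{Type-1}/\textsf{Type-2}, handle \textsf{Type-1} via \Cref{lem:tvdist_after_product} with proxy $\Wcal=\Ucal^n$ and ratio $\eta$, and handle \textsf{Type-2} by the resampling coupling plus \Cref{lem:anticoncentration_after_coupling_all} with modulus $\min\{r',t+1\}$, then Chernoff, a union bound over divisors $r'\ge3$ of $q$, and \Cref{lem:mod_llt}. The only detail you leave implicit is that $K(\rho)$ shifts $\Lambda$ before \Cref{lem:mod_llt} is applied, which is harmless since $|\Lambda|$ and $\max\{|\Lambda_{\mathsf{even}}|,|\Lambda_{\mathsf{odd}}|\}$ are shift-invariant.
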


Then we prove the following graph elimination result tailored for the parameters in \Cref{prop:mod_slice_more_local}.

\begin{proposition}\label{prop:mod_slice_structure}
There exists a set $S\subseteq[m]$ such that any fixing of input bits in $S$ reduces $f$ to a $(d,r,t)$-local function $g$ where
$$
|S|\le\frac r{2^{28t-18}}
\quad\text{and}\quad
r\ge\frac n{\tow_2(16d)}
\quad\text{and}\quad
t\le\tow_2(16d).
$$
\end{proposition}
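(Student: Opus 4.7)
The plan is to mirror the proof of \Cref{prop:single_slice_find_structure} by taking the graph-theoretic view of $f$'s dependency relations and invoking \Cref{cor:graph_elim_non-adj_neigh} with parameters tailored to our target. Matching the desired conclusion against \Cref{as:non-adj_neigh}, I would choose $F(x)=2^{28x-18}$ and $\kappa=\lambda=\tow_2(16d)$. Plugging into \Cref{eq:cor:graph_elim_non-adj_neigh_1} gives
$$
\tilde F(x)=\frac{1}{d}\exp\cbra{d^4 x^2\cdot 2^{56dx-13}},
$$
which, crucially, is doubly exponential in $x$ through the inner exponent $56dx$.

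Because $H(x)=2^{2^x}$, as used in the proof of \Cref{prop:single_slice_find_structure}, no longer dominates $\tilde F$ once $d$ is large, I would instead take $H(x)=2^{2^{2^{x}}}$ together with a threshold of the form $L=C_0\cdot\log d$ for a suitably large universal constant $C_0$. For $x\ge L$ one can verify $H(x)\ge\tilde F(x)$, since $2^x\ge d^{C_0}$ comfortably absorbs the $56dx$ factor inside $\tilde F$, and also $H(x)\ge 2x$. Both functions are clearly increasing.

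The remaining task is to verify $d\cdot H^{(2d+2)}(L)\le\tow_2(16d)$. Each application of $H$ raises the tower-of-twos height by $3$, so iterating from $L$ yields $H^{(2d+2)}(L)\le\tow_2(6d+O(1)+\log^*(L))=\tow_2(6d+O(1))$; multiplying by $d$ adds at most one more level, leaving a comfortable margin below $\tow_2(16d)$. Coupled with the trivial observation $F(x)\ge 1$ for $x\ge 1$, every hypothesis of \Cref{cor:graph_elim_non-adj_neigh} is met, so the corollary produces a set $S\subseteq[m]$ realizing \Cref{as:non-adj_neigh} with these parameters, which translates directly to the statement of \Cref{prop:mod_slice_structure}.

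The main obstacle I expect is the bookkeeping of constants: I must verify carefully that the chosen $H$ dominates $\tilde F$ on $[L,\infty)$ for $L=C_0\log d$, and that the tower growth from iterating $H$ stays within $\tow_2(16d)-1$ levels (leaving one level of slack for the factor of $d$). Conceptually, however, nothing new is needed beyond the framework already developed in \Cref{sec:graph_elim_neigh}, and the argument reduces to routine parameter verification once the right choice of $F$, $H$, and $L$ is in place.
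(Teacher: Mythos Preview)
Your plan matches the paper's proof almost exactly: the same $F(x)=2^{28x-18}$, the same $H(x)=2^{2^{2^x}}$, and the same appeal to \Cref{cor:graph_elim_non-adj_neigh} with $\kappa=\lambda=\tow_2(16d)$. The one difference is the threshold: you take $L=C_0\log d$, whereas the paper takes $L=10d$ (and separately dispatches the trivial case $d=0$). Your choice has two small wrinkles. First, for $d=1$ it gives $L=0$, violating the hypothesis $L\ge1$ of \Cref{cor:graph_elim_non-adj_neigh}. Second, the sentence ``$2^x\ge d^{C_0}$ comfortably absorbs the $56dx$ factor'' only lower-bounds $2^x$ by the constant $2^L=d^{C_0}$, which cannot dominate the $x$-dependent quantity $56dx$ as $x\to\infty$; you also need the observation that $2^x$ outgrows $56dx$ once $x$ exceeds roughly $\log(81d)$, so $L$ must clear that crossover. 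The paper's linear choice $L=10d$ handles both issues at once without case analysis, and the remainder of your bookkeeping (tower height $\approx 6d+O(1)\le 16d$ after multiplying by $d$) goes through unchanged.
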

\begin{proof}
The statement is trivial when $d=0$ since then we can set $S=\emptyset,r=n,t=0$.
For $d\ge1$, we apply \Cref{cor:graph_elim_non-adj_neigh}.
Set $F(x)=2^{28t-18}$.
Then
$$
\tilde F(x)=\frac1d\cdot\exp\cbra{32d^4x^2\cdot2^{56dx-18}}.
$$
Define $H(x)=2^{2^{2^x}}$ and let $L=10d$.
By setting 
$$
\kappa=\lambda=\tow_2(16d)\ge d\cdot H^{(2d+2)}(L),
$$
the conditions in \Cref{cor:graph_elim_non-adj_neigh} are satisfied.
This implies that \Cref{as:non-adj_neigh} holds for the dependency graph of $f$ with parameter $\lambda,\kappa,F$.
\end{proof}

Finally we use the above graph elimination results to lift the lower bounds of $(d,r,t)$-local functions to $d$-local functions.
\begin{proof}[Proof of \Cref{thm:mod_slice}]
We assume $d \ge 1$, as otherwise $f$ must be constant, and one can verify the bound holds. By \Cref{prop:mod_slice_structure}, we find a set $S\subseteq[m]$ such that any fixing $\rho$ of input bits in $S$ reduces $f$ to a $(d,r,t)$-local function $f_\rho$ where
$$
|S|\le\frac r{2^{28t-18}}
\quad\text{and}\quad
r\ge\frac n{\tow_2(16d)}
\quad\text{and}\quad
t\le\tow_2(16d).
$$
Now for each $f_\rho$, we apply \Cref{prop:mod_slice_more_local} and obtain that either
$$
\tvdist{\Pcal_{f_\rho}-\Dcal_{q,\Lambda}}\ge1-\underbrace{\frac2\eta\cdot\exp\cbra{-2^{-28t+19}\cdot r}}_{\eps_1}
$$
or
$$
\Pcal_{f_\rho}(\supp{\Dcal_{q,\Lambda}})\le\eps_2\coloneqq 2q\cdot\exp\cbra{-\frac{r\cdot2^{-14t+10}}{q^2}}+\begin{cases}
|\Lambda|/q & q\text{ is odd},\\
2\cdot\max\cbra{|\Lambda_\textsf{even}|,|\Lambda_\textsf{odd}|}/q & q\text{ is even.}
\end{cases}
$$
Note that $f(\Ucal^m)$ is a convex combination of $\Pcal_{f_\rho}$'s.
By \Cref{lem:tvdist_after_conditioning} with $\cbra{\Pcal_{f_\rho}}_\rho,\Dcal_{q,\Lambda}$, and $\eps_3=0,\Ecal=\supp{\Dcal_{q,\Lambda}}$ and $\eps_1,\eps_2$ defined above, we have
\begin{align*}
\tvdist{f(\Ucal^m)-\Dcal_{q,\Lambda}}
&\ge1-\pbra{2^{|S|}+1}\cdot\eps_1-\eps_2
\ge1-\frac4\eta\cdot\exp\cbra{-\frac r{2^{28t-18}}}-\eps_2\\
&\ge1-\frac{6q}\eta\cdot\exp\cbra{-\frac r{q^2\cdot2^{28t-18}}}-\begin{cases}
|\Lambda|/q & q\text{ is odd},\\
2\cdot\max\cbra{|\Lambda_\textsf{even}|,|\Lambda_\textsf{odd}|}/q & q\text{ is even.}
\end{cases}
\end{align*}
Since $t\le\tow_2(16d)$ and $r\ge n/\tow_2(16d)$, we can bound
$$
\frac r{2^{28t-18}}\ge\frac n{\tow_2(16d)\cdot2^{28\cdot\tow_2(16d)-18}}\ge\frac n{\tow_2(18d)},
$$
which gives the desired bound in \Cref{thm:mod_slice}.
\end{proof}

Now we prove \Cref{prop:mod_slice_more_local}. The road-map is similar to the proof of \Cref{prop:single_slice_more_local}: we first classify each non-connected neighborhood dependent on whether its marginal distribution is far from unbiased.
If most are far, we use \Cref{lem:tvdist_after_product} to show the distance bound.
Otherwise we use local limit results to show that with certain probability the Hamming weight modulo $q$ cannot fall into $\Lambda$.

Let $\eps\in[0,1]$ be a parameter to be optimized later.
For each neighborhood $N(i)=N_g(i)$ of size $s_i=|N(i)|$, we classify it into one of the following two cases:
\begin{itemize}
\item \textsf{Type-1.} $\Pcal_g|_{N(i)}$ is not $\eps$-close to $\Ucal^{s_i}$.
\item \textsf{Type-2.} $\Pcal_g|_{N(i)}$ is $\eps$-close to $\Ucal^{s_i}$.
\end{itemize}

By almost identical reasoning as \Cref{lem:type-1_single} (except that we fix $\gamma=1/2$ here), we obtain a large distance bound when there are many small \textsf{Type-1} neighborhoods.

\begin{lemma}\label{lem:type-1_mod}
Assume there are $r'\ge1$ non-connected \textsf{Type-1} neighborhoods.
Then
$$
\tvdist{\Pcal_g-\Dcal_{q,\Lambda}}\ge1-\frac2\eta\cdot\exp\cbra{-\eps^2r'/2}.
$$
\end{lemma}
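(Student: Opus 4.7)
The plan is to mimic the proof of \Cref{lem:type-1_single} essentially verbatim, with a simplification: because we compare against the uniform distribution $\Ucal^n$ (the marginals used to define \textsf{Type-1} here are unbiased, corresponding to $\gamma = 1/2$), the multiplicative ratio between the intermediate distribution and the target distribution is uniform on the support of $\Dcal_{q,\Lambda}$, which is exactly where $\eta$ comes from.

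Concretely, after reindexing assume $N(1),\dots,N(r')$ are the non-connected \textsf{Type-1} neighborhoods of sizes $s_1,\dots,s_{r'}$, and let $R = [n]\setminus\pbra{N(1)\cup\cdots\cup N(r')}$. I would apply \Cref{lem:tvdist_after_product} with the following setup, regarding every distribution as living on the $(r'+1)$-coordinate product space obtained by grouping each $N(i)$ and $R$ into a single coordinate:
\begin{itemize}
\item $\Pcal$ is $\Pcal_g$ so that $\Pcal|_{\cbra i} = \Pcal_g|_{N(i)}$ for $i\in[r']$ and $\Pcal|_{\cbra{r'+1}} = \Pcal_g|_R$;
\item $\Qcal = \Dcal_{q,\Lambda}$ with the same grouping;
\item $\Wcal = \Ucal^n$ with the same grouping;
\item $S = [r']$.
\end{itemize}

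The required hypotheses of \Cref{lem:tvdist_after_product} then all fall out immediately: $\Pcal|_S$ is a product distribution because $N(1),\dots,N(r')$ are pairwise non-connected (so they depend on disjoint input bits); $\Wcal|_S = \Ucal^{s_1}\times\cdots\times\Ucal^{s_{r'}}$ is manifestly a product distribution; and the \textsf{Type-1} hypothesis is precisely $\tvdist{\Pcal|_{\cbra i} - \Wcal|_{\cbra i}} \ge \eps$ for each $i \in S$. For the multiplicative ratio condition, observe that $\Wcal(x) = 2^{-n}$ for every $x$, while $\Qcal(x)$ equals $1/|\supp{\Dcal_{q,\Lambda}}|$ on its support and is $0$ elsewhere; hence
\[
\frac{\Wcal(x)}{\Qcal(x)} = \frac{|\supp{\Dcal_{q,\Lambda}}|}{2^n} = \eta
\]
for all $x \in \supp{\Qcal}$, so $\Wcal(x) \ge \eta\cdot\Qcal(x)$ holds universally. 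Invoking \Cref{lem:tvdist_after_product} with this value of $\eta$ yields the stated bound
\[
\tvdist{\Pcal_g - \Dcal_{q,\Lambda}} \ge 1 - \frac{2}{\eta}\cdot\exp\cbra{-\eps^2 r'/2}.
\]

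There is no real obstacle here: the only subtle point relative to \Cref{lem:type-1_single} is the choice of $\Wcal = \Ucal^n$ (instead of a biased product) and the corresponding $\eta$, but both are dictated by the fact that \textsf{Type-1} in this section is defined against the unbiased distribution and $\Dcal_{q,\Lambda}$ is uniform on its support. As in \Cref{rmk:single_non-dyadic_more_local}, one could in principle sharpen $\eta$ by building $\Wcal$ out of $\Dcal_{q,\Lambda}$-conditional completions on $R$, but this is unnecessary for the intended applications.
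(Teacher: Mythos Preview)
Your proposal is correct and follows essentially the same approach as the paper, which explicitly says the proof is ``by almost identical reasoning as \Cref{lem:type-1_single} (except that we fix $\gamma=1/2$ here).'' Your identification of $\Wcal=\Ucal^n$ and the resulting multiplicative ratio $\Wcal(x)/\Qcal(x)=\eta$ on $\supp{\Dcal_{q,\Lambda}}$ is exactly the intended specialization.
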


Now we turn to the second case where we have many small \textsf{Type-2} neighborhoods.
The analysis for this setting is also similar to the proof of \Cref{lem:type-2_single}, except that we now use a local limit theorem in additive groups instead of anticoncentration inequalities over the real numbers.

\begin{lemma}\label{lem:type-2_mod}
Assume there are $r'\ge1$ non-connected \textsf{Type-2} neighborhoods of size at most $t$.
If $\eps\le2^{-14t+10}$, then
$$
\Pcal_g(\supp{\Dcal_{q,\Lambda}})\le2q\cdot\exp\cbra{-\frac{r'\cdot2^{-14t+11}}{q^2}}+\begin{cases}
|\Lambda|/q & q\text{ odd},\\
2\cdot\max\cbra{|\Lambda_\textsf{even}|,|\Lambda_\textsf{odd}|}/q & q\text{ even,}
\end{cases}
$$
\end{lemma}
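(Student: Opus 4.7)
The plan is to follow the same blueprint as the proof of \Cref{lem:type-2_single}, replacing the real-valued anticoncentration \Cref{fct:ushakov} with the local limit theorem modulo $q$ (\Cref{lem:mod_llt}). By relabeling, assume the non-connected \textsf{Type-2} neighborhoods are $N(1),\ldots,N(r')$ of sizes $s_\ell\le t$. Set $I_\ell=\bigcup_{i\in N(\ell)}I(i)$ -- these are pairwise disjoint across $\ell$ by non-connectedness and, moreover, disjoint from $I(\ell')$ for $\ell'\neq\ell$ -- and let $R=[m]\setminus\bigcup_\ell I(\ell)$. Sampling $Z\sim\Ucal^m$ and setting $(X_1,\ldots,X_n)=g(Z)$, decompose $|X|=K+\sum_\ell\Delta_\ell$ where $K=\sum_{i\notin\bigcup_\ell N(\ell)}X_i$ is determined by $\rho:=Z|_R$, and $\Delta_\ell=\sum_{i\in N(\ell)}X_i$ is determined by $Z|_{I_\ell}$.

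The goal is to apply \Cref{lem:mod_llt} to $\Delta_1,\ldots,\Delta_{r'}$ conditionally on a typical $\rho$. For each integer $r\ge3$ dividing $q$, define $p_{\rho,\ell,r}=\max_x\Pr\sbra{\Delta_\ell\equiv x\Mod r\mid\rho}$. I would first establish the analog of \Cref{clm:lem:type-2_single_1}: for every $\ell$ and every such $r$, one has $\E_\rho\sbra{(p_{\rho,\ell,r})^2}\le1-\beta_r$ for some $\beta_r>0$. The argument mimics \Cref{clm:lem:type-2_single_1}: introduce $(Y_1,\ldots,Y_n)=g(Z')$ where $Z'$ agrees with $Z$ outside $I(\ell)$ and is freshly resampled on $I(\ell)$, so that $\E_\rho\sbra{(p_{\rho,\ell,r})^2}\le\Pr\sbra{\Delta_\ell\equiv\Delta_\ell'\Mod r}$, and then apply \Cref{lem:anticoncentration_after_coupling_all} with $\gamma=1/2$ and modulus $r$ to the tuple formed from $(X_\ell,(X_i)_{i\in N(\ell),i\neq\ell})$ and its $Y$-counterpart. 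The hypothesis $\eps\le2^{-14t+10}$ will be tuned so that the precision condition of \Cref{lem:anticoncentration_after_coupling_all} is met for every divisor $r\ge3$ of $q$ simultaneously.

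With the claim in hand, Jensen's inequality gives $\E_\rho\sbra{1-p_{\rho,\ell,r}}\ge\beta_r/2$. Non-connectedness of $N(1),\ldots,N(r')$ implies that $p_{\rho,1,r},\ldots,p_{\rho,r',r}$ are mutually independent functions of $\rho$ (each $p_{\rho,\ell,r}$ depends only on $\rho|_{I_\ell\cap R}$, and these pieces of $\rho$ are disjoint across $\ell$). Applying \Cref{fct:chernoff} together with a union bound over the at most $q$ admissible values of $r$, I would conclude that, outside a $\rho$-event of probability at most $q\cdot\exp\cbra{-\Omega(r'\beta^*)}$ where $\beta^*$ is the minimum of the $\beta_r$'s, the hypothesis of \Cref{lem:mod_llt} holds for $\Delta_1,\ldots,\Delta_{r'}$ with $L=\Omega(r'\beta^*)$. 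Applying \Cref{lem:mod_llt} conditionally on such a good $\rho$ -- noting that the shift $\Lambda\mapsto(\Lambda-K)\bmod q$ preserves $|\Lambda|$ and, for even $q$, also preserves $\max\cbra{|\Lambda_\textsf{even}|,|\Lambda_\textsf{odd}|}$ -- and combining the good and bad $\rho$ contributions then yields the stated bound.

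The main technical obstacle will be quantitative: one must calibrate the precision requirement from \Cref{lem:anticoncentration_after_coupling_all} so that a single choice $\eps\le2^{-14t+10}$ works for every divisor $r\in[3,q]$ of $q$ at once, and one must track the interplay between the $1/r$ factor appearing in $\beta_r$ (from \Cref{lem:anticoncentration_after_coupling_all}) and the $1/q^2$ factor from \Cref{lem:mod_llt} carefully enough to produce the final exponent $r'\cdot2^{-14t+11}/q^2$ rather than a weaker $q^3$-type bound. This bookkeeping is what pins down the specific constants $2^{-14t+10}$ and $2^{-14t+11}$ in the lemma statement.
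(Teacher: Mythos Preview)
Your blueprint matches the paper's proof: condition on $\rho=Z|_R$ with $R=[m]\setminus\bigcup_\ell I(\ell)$, decompose $|X|=K+\sum_\ell\Delta_\ell$, bound $\E_\rho[(p_{\rho,\ell,r})^2]$ via the resampling-and-\Cref{lem:anticoncentration_after_coupling_all} argument, then combine Jensen, Chernoff, a union bound over divisors $r\mid q$, and \Cref{lem:mod_llt} on good $\rho$ (with the observation that the shift by $K$ preserves $|\Lambda|$ and $\max\{|\Lambda_\textsf{even}|,|\Lambda_\textsf{odd}|\}$). All of this is exactly what the paper does.

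The one place your plan is incomplete is precisely where you flag a ``technical obstacle'' and defer to ``bookkeeping'': applying \Cref{lem:anticoncentration_after_coupling_all} with modulus $r$ itself does not work. For a divisor $r$ of $q$ with $r$ large, the precision hypothesis of \Cref{lem:anticoncentration_after_coupling_all} becomes $\eps\le\tfrac{1}{8r}\cdot2^{-25(t-1)/r^2}\approx\tfrac{1}{8r}$, which is \emph{not} implied by $\eps\le2^{-14t+10}$ once $r\gg2^{14t}$; and even where it holds, $\beta_r\approx\tfrac{1}{4r}$ gives $\beta^*\approx\tfrac{1}{4q}$ and hence exactly the $q^3$-type exponent you feared, not $q^2$. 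So this step is not recoverable by bookkeeping alone.

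The paper's fix is a small but genuine idea: since $\Delta_\ell,\Delta_\ell'\in\{0,1,\ldots,t\}$, the congruence $\Delta_\ell\equiv\Delta_\ell'\pmod r$ for $r\ge t+1$ is equivalent to equality, hence to congruence mod $t+1$. One therefore applies \Cref{lem:anticoncentration_after_coupling_all} with modulus $\bar q=\min\{r,t+1\}$ rather than $r$. This caps the modulus at $t+1$ uniformly over all $r\ge3$, so both the precision requirement and the resulting lower bound become $q$-free and depend only on $t$: one gets $\E_\rho[(p_{\rho,\ell,r})^2]\le1-2^{-14t+11}$ for every $r\ge3$, which is what delivers the $2^{-14t+11}/q^2$ exponent after \Cref{lem:mod_llt}. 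With this single modification, the rest of your outline goes through verbatim.
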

\begin{proof}
We inherit the notation $(X_i)_{i\in[n]},\rho,K,(\Delta_\ell)_{\ell\in[r']}$ from \Cref{lem:type-2_single} with one minor change:
here for each integer $a\ge3$, we define
$$
p_{\rho,a,\ell}=\max_{c\in\Zbb}\Pr\sbra{\Delta_\ell \equiv c \Mod{a}\mid\rho}.
$$
By a similar analysis as in \Cref{clm:lem:type-2_single_1}, we obtain the following claim.
\begin{claim}\label{clm:lem:type-2_mod_1}
$\E_\rho[(p_{\rho,a,\ell})^2]\le1-2^{-14t+11}$ holds for any $\ell\in[r']$ and $a\ge3$.
\end{claim}
\begin{proof}
We only highlight the difference from the proof of \Cref{clm:lem:type-2_single_1}.
We apply \Cref{lem:anticoncentration_after_coupling_all} with $\gamma=1/2$ and modulus $\bar q=\min\cbra{a,t+1}$ here.
Then $2\le\bar q\le t+1$ and
$$
\frac\gamma{4\bar q}\cdot2^{-50\gamma(t-1)/{\bar q}^2}
\ge\frac1{8(t+1)}\cdot2^{-13(t-1)}
\ge2^{-14t+10}\ge\eps,
$$
which implies $\E_\rho[(p_{\rho,a,\ell})^2]\le1-2^{-14t+11}$ by \Cref{lem:anticoncentration_after_coupling_all}.
\end{proof}

Then we have $\E_\rho[1-p_{\rho,a,\ell}]\ge2^{-14t+10}$.
As before, since $p_{\rho,a,\ell}$'s are independent for fixed $a$, by \Cref{fct:chernoff} we have
$$
\Pr\sbra{\sum_{\ell\in[r']}(1-p_{\rho,a,\ell})\le2^{-14t+9}\cdot r'}\le\exp\cbra{-2^{-14t+6}\cdot r'}.
$$
Now we say $\rho$ is bad if for some $a\ge3$ dividing $q$ the above event happens, and good otherwise.
Since there are at most $q$ possible such $a$'s, by the union bound we have
\begin{equation}\label{eq:lem:type-2_mod_1}
\Pr\sbra{\rho\text{ is bad}}\le q\cdot\exp\cbra{-2^{-14t+6}\cdot r'}.
\end{equation}
Thus,
\begin{align*}
&\Pcal_g(\supp{\Dcal_{q,\Lambda}}\\
\le&\Pr\sbra{\rho\text{ is bad}}+\Pr\sbra{K+\sum_{\ell\in[r']}\Delta_\ell\bmod q\in\Lambda\mid\rho\text{ is good}}\\
\le& q\cdot\exp\cbra{-2^{-14t+6}r'}+q\cdot\exp\cbra{-\frac{2^{-14t+11}r'}{q^2}}+\begin{cases}
|\Lambda|/q & q\text{ odd},\\
2\cdot\max\cbra{|\Lambda_\textsf{even}|,|\Lambda_\textsf{odd}|}/q & q\text{ even,}
\end{cases}
\end{align*}
where for the last inequality we use \Cref{eq:lem:type-2_mod_1} and \Cref{lem:mod_llt} with $L=2^{-14t+10}\cdot r'$ and the observation that $\Lambda$ shifted by $K$ still has the same maximum of even and odd numbers.
\end{proof}

Finally we choose $\eps$ and conclude the proof of \Cref{prop:mod_slice_more_local}.
\begin{proof}[Proof of \Cref{prop:mod_slice_more_local}]
Firstly the bound is trivial if $r<1$.
For $r\ge1$, we set $\eps=2^{-14t+10}$.
By \Cref{def:more_local}, there are $r$ non-connected neighborhoods of size at most $t$.
If $\ceilbra{r/2}$ of them are \textsf{Type-1}, then we apply \Cref{lem:type-1_mod} with $r'=\ceilbra{r/2}$ and obtain
$$
\tvdist{\Pcal_g-\Dcal_{q,\Lambda}}\ge1-\frac2\eta\cdot\exp\cbra{-2^{-28t+19}\cdot r}.
$$
Otherwise there are $\ceilbra{r/2}$ of \textsf{Type-2}, and we apply \Cref{lem:type-2_mod} with $r'=\ceilbra{r/2}$ to obtain
\begin{equation*}
\Pcal_g(\supp{\Dcal_{q,\Lambda}})\le2q\cdot\exp\cbra{-\frac{r\cdot2^{-14t+10}}{q^2}}+\begin{cases}
|\Lambda|/q & q\text{ odd},\\
2\cdot\max\cbra{|\Lambda_\textsf{even}|,|\Lambda_\textsf{odd}|}/q & q\text{ even,}
\end{cases}
\tag*{\qedhere}
\end{equation*}
\end{proof}
\section{Upper Bounds}\label{sec:upper_bounds}

In this section, we provide upper bounds on the locality of functions sampling specific distributions. Upper bounds for $\Dcal_k$  when the input length is nearly the information-theoretic minimum appear in \cite{viola2012complexity} with weak error bounds, as well as for $\Ucal_{\gamma}^n$ with strong error bounds. To the best of our knowledge, no upper bounds appear in the literature for $\Dcal_k$ with arbitrarily small error, so we provide two incomparable ones in \Cref{subsec:upper_singleslice}.

\begin{theorem}\label{thm:combinedSliceUpper}
    For all $k \le n$, there exists a $d$-local
    Boolean function $f : \bin^m \to \bin^n$ such that $f(\Ucal^m)$ is $\eps$-close to $\Dcal_{k}$, where
    $$
    d=O\left(\min\left\{\log(n)\cdot \log(n/\eps), \log(n/k) + \log^2(k/\eps)\right\}\right).
    $$
\end{theorem}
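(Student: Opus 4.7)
The plan is to establish each of the two bounds in the $\min$ separately via its own divide-and-conquer construction, and then take whichever is smaller for the given parameters.

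For the first bound $d = O(\log n \cdot \log(n/\eps))$, I would use a balanced binary tree with $n$ leaves, one per output coordinate. At each internal node, whose subtree contains $s$ leaves with total Hamming weight $w$ (determined by ancestor samples), I would sample the left-child weight $w_L \in \cbra{0,\ldots,\min(s/2,w)}$ from the hypergeometric distribution $H(s, w, s/2)$. Each such sample is implemented via inverse-transform sampling from $O(\log(n/\eps))$ fresh input bits, introducing per-node TV error at most $\eps/(2n)$. Sub-additivity of total variation distance (a straightforward induction on the tree depth using the coupling characterization in \Cref{fct:tvdist}) then bounds the aggregate error over all $2n-1$ internal nodes by $\eps$. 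Each output bit is determined by the input bits sampled at the $O(\log n)$ nodes on its root-to-leaf path, yielding the claimed locality.

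For the second bound $d = O(\log(n/k) + \log^2(k/\eps))$, the additive form suggests a hybrid two-scale construction. Partition $[n]$ into $k$ blocks of size $n/k$. In the first stage, sample a profile $(w_1,\ldots,w_k)$ with $\sum_i w_i = k$ from the true $\Dcal_k$-marginal on block weights, whose mass function is proportional to $\prod_i \binom{n/k}{w_i}$. This is itself a Hamming-weight sampling problem over $k$ items; I would invoke the first-bound construction recursively on a tree of depth $\log k$ with $O(\log(k/\eps))$ bits per node (since all supports are bounded by $k+1$ and the total error budget is $\eps/2$), costing $O(\log^2(k/\eps))$ locality for each block weight $w_i$. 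In the second stage, within each block of size $n/k$ given $w_i$, I would sample a uniform $w_i$-subset of the block. After truncating to $w_i \le T := O(\log(k/\eps))$ (which absorbs $\eps/3$ by a standard balls-and-bins tail bound), the within-block sampler should place the $w_i$ ones using a specialized construction that, for each output position, reads only $O(\log(n/k))$ bits from a shared per-block randomness tape together with the $O(\log(k/\eps))$ bits encoding $w_i$.

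The principal obstacle will be the second stage: a direct recursive placement inside each block (applying the first-bound construction with parameters $n/k$ and $w_i$) costs $O(\log(n/k) \cdot \log((n/k)/\eps))$ per output bit, which is multiplicative and would destroy the $\log(n/k)$ additive term. Achieving the additive form requires organizing the within-block sampling so that the $\log(n/k)$ factor appears exactly once in each output bit's read-list, e.g.\ by having each block-position read a single $\log(n/k)$-bit ``rank'' index from the tape and then perform an $O(\log(k/\eps))$-bit comparison against $w_i$ to decide its value. A careful analysis then combines both stages via the triangle inequality to bound the total error by $\eps$, and summing the locality budgets yields $d = O(\log(n/k) + \log^2(k/\eps))$ as desired.
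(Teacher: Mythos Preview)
Your first bound is essentially the paper's argument: a depth-$\log n$ binary tree of hypergeometric splits, each implemented with $O(\log(n/\eps))$ fresh bits.

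For the second bound, your scheme has a real gap in the within-block stage. The ``rank comparison'' idea you sketch does not sample a uniform $w_i$-subset: if each position reads an independent $\log(n/k)$-bit rank $r_j$ and outputs $1$ iff $r_j \le w_i$, the resulting number of ones in the block is binomial, not exactly $w_i$; if instead you mean to keep the $w_i$ smallest ranks, each output bit must compare its rank to all others in the block, destroying locality. More generally, with $w_i$ as large as $T=\Theta(\log(k/\eps))$, any scheme that writes down $w_i$ target positions forces each output bit to read $\Theta(T\cdot\log(n/k))$ bits, which is the multiplicative cost you were trying to avoid. You correctly identified this as the principal obstacle, but the proposed workaround does not resolve it.

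The paper sidesteps the issue by choosing the number of blocks differently. Instead of $k$ blocks of size $n/k$, it uses $t=\Theta(k^2/\eps)$ blocks of size $n/t$. A birthday bound then shows that under $\Dcal_k$ no block receives two ones except with probability at most $k^2/t\le\eps/2$, so one may restrict to the event ``each selected block gets exactly one $1$''. Now the first stage is just sampling a $k$-subset of $t$ blocks via the tree construction (locality $O(\log t\cdot\log(t/\eps))=O(\log^2(k/\eps))$), and the second stage is trivial: in each chosen block, place the single $1$ at a uniformly random position using $O(\log(n/t)+\log(k/\eps))=O(\log(n/k)+\log(k/\eps))$ bits. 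The additive shape of the bound thus comes from forcing $w_i\in\{0,1\}$ rather than from any clever per-position comparison.
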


\Cref{subsec:upper_periodslice} uses the previous theorem to prove an upper bound on $\Dcal_{q, \Lambda}$. This bound, as well as the sampling procedure, is in \cite{viola2012complexity}, and we include them only for completeness.
\begin{theorem}[\cite{viola2012complexity}]\label{thm:upper_mod}
    For all $q \in \Nbb$ and non-empty $\Lambda \subseteq \Zbb/q\Zbb$, there exists an $O(q^2 \cdot \log^2(n/\eps))$-local Boolean function $f : \bin^m \to \bin^n$ such that $f(\Ucal^m)$ is $\eps$-close to $\Dcal_{q, \Lambda}$.
\end{theorem}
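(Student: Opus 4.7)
The plan is to follow Viola's two-scale construction. Partition $[n]$ into $m=\lceil n/L\rceil$ disjoint blocks of size $L=\Theta(q^2\log(n/\eps))$, where the block size is dictated by the local limit theorem (\Cref{lem:mod_llt}): for a uniformly random block of $L$ bits, the Hamming weight modulo $q$ is within $O(\eps/n)$ total variation of the uniform distribution on $\Zbb/q\Zbb$. This near-uniformity is what lets us approximately decouple the blocks, and it is also the source of the $q^2$ factor in the final bound.

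Sampling then proceeds in three stages. First, sample a target total residue $r\in\Lambda$ from its marginal under $\Dcal_{q,\Lambda}$ using $O(\log(q/\eps))$ shared auxiliary bits. Second, sample per-block residues $(r_1,\ldots,r_m)\in(\Zbb/q\Zbb)^m$ conditioned on $\sum_j r_j\equiv r\pmod{q}$; by the near-uniformity of each block's natural residue, this can be achieved by drawing $r_1,\ldots,r_{m-1}$ uniformly and setting $r_m$ to close the gap, with total variation overhead $\le\eps/2$ via a union bound in the spirit of \Cref{lem:tvdist_after_conditioning}. Third, within each block $j$, sample uniformly from $\{x\in\bin^L:|x|\equiv r_j\pmod{q}\}$ by decomposing into a mixture over Hamming slices: choose $k_j\equiv r_j\pmod{q}$ with probability proportional to $\binom{L}{k_j}$ using $O(\log(L/\eps))$ shared bits, and invoke \Cref{thm:combinedSliceUpper} to sample $\Dcal_{k_j}$ on the block with locality $O(\log(L)\log(L/\eps))=O(\log^2(n/\eps))$ per bit.

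Each output bit in block $j$ then depends only on: the shared residue bits for $r$ and $r_j$, the per-block weight selector $k_j$, and the within-block Hamming slice sampler applied to $L$ block-local inputs. Combining these contributions yields total locality $O(q^2\log^2(n/\eps))$: the $q^2$ factor comes from $L$ (the minimum block size needed for the local limit theorem to give $\eps/n$ uniformity per block), and the $\log^2(n/\eps)$ factor is inherited from the within-block slice sampler of \Cref{thm:combinedSliceUpper}. The main obstacle will be the total variation bookkeeping across the two scales -- tracking how the $O(\eps/n)$ per-block local-limit error accumulates over $m$ blocks and composes with the within-block sampler's $\eps/(2m)$ approximation of $\Dcal_{k_j}$ -- but this follows directly from \Cref{fct:mult_apx} and \Cref{lem:tvdist_after_conditioning} once the product-and-conditioning structure of the coupled samplers is correctly identified.
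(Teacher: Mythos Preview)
Your overall architecture matches the paper's proof: partition into blocks of size $L=\Theta(q^2\log(n/\eps))$, use near-uniformity of the per-block Hamming weight modulo $q$ (the paper proves this directly via Fourier rather than invoking \Cref{lem:mod_llt}, but the content is the same), sample correlated block residues summing to a target $r$, then fill each block by choosing a weight $k_j$ in the correct residue class and invoking \Cref{thm:combinedSliceUpper}. The reduction from general $\Lambda$ to $|\Lambda|=1$ is also identical.

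However, there is a genuine locality gap in your second stage. You propose drawing $r_1,\ldots,r_{m-1}$ independently uniformly and setting $r_m=r-\sum_{j<m}r_j$. But then every output bit in block $m$ depends on $r_m$, which in turn depends on all of $r_1,\ldots,r_{m-1}$; that is $\Theta(m\log q)=\Theta\bigl(\tfrac{n\log q}{q^2\log(n/\eps)}\bigr)$ auxiliary bits, which for fixed $q$ is $\Theta(n/\log n)\gg q^2\log^2(n/\eps)$. Your locality accounting ``each output bit in block $j$ depends only on $r$ and $r_j$'' is therefore false for $j=m$.

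The paper fixes this with the same telescoping trick used for parity in the introduction: sample $x_1,\ldots,x_m\in\Zbb/q\Zbb$ independently uniformly and set $r_j=x_j-x_{j+1}$ (indices cyclic, with $r_m=x_m-x_1+r$). The $r_j$'s are then uniform conditioned on $\sum_j r_j\equiv r$, yet each $r_j$ depends on only two of the $x_j$'s, contributing $O(\log(q/\eps'))$ to the locality of every block, including the last. Once you swap in this trick, your argument goes through and coincides with the paper's.
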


Throughout, we will need to sample from various distributions. The following standard lemma allows us to do so approximately with low locality.
\begin{lemma}\label{lem:distr_approx}
    Any discrete distribution of support size $m$ can be approximated to $\eps$ error in total variation distance with $\lceil \log(m/\eps) \rceil$ uniform random bits.
\end{lemma}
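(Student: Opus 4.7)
The plan is to construct an explicit block sampler that reads exactly $k = \lceil \log(m/\eps) \rceil$ uniform bits and to bound its total variation error against the target distribution $\Pcal = (p_1, \ldots, p_m)$ on support $\cbra{1, \ldots, m}$.

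First I describe the sampler. With $k$ uniform bits I can realize the uniform distribution on $\cbra{0, 1, \ldots, 2^k - 1}$. I partition this index set into consecutive blocks $B_1, \ldots, B_m$ of sizes $|B_i| = \lfloor p_i \cdot 2^k \rfloor$ for $i < m$, with the remaining indices forming $B_m$ so that $|B_m| = 2^k - \sum_{i < m} \lfloor p_i \cdot 2^k \rfloor$. On reading a $k$-bit string interpreted as $j \in \cbra{0, \ldots, 2^k - 1}$, the sampler outputs the unique $i$ with $j \in B_i$. Each output bit of this sampler is a function of the $k$ input bits, which matches the claim.

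Next I bound the error. Write $\Qcal$ for the output distribution, with atoms $q_i = |B_i|/2^k$. By the floor, $0 \le p_i - q_i \le 2^{-k}$ for every $i < m$. Since $\Pcal$ and $\Qcal$ are both probability distributions, $q_m - p_m = \sum_{i < m}(p_i - q_i) \ge 0$, so the positive part of $p_i - q_i$ is supported on indices $i < m$. Using the standard identity $\tvdist{\Pcal - \Qcal} = \sum_{i : p_i > q_i}(p_i - q_i)$, we obtain
\[
\tvdist{\Pcal - \Qcal} \le (m-1) \cdot 2^{-k} < m \cdot 2^{-k} \le \eps,
\]
where the last inequality follows from $2^k \ge m/\eps$.

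The argument is entirely routine and no real obstacle is anticipated; this block-sampler construction is folklore, and the $m/2^k$ rounding loss is essentially unavoidable for fixed-bit-budget sampling.
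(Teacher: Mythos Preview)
Your proposal is correct and essentially identical to the paper's proof: both round each probability down to the nearest multiple of $2^{-k}$ for $i<m$, put the slack on the last atom, and bound the total variation distance by $(m-1)\cdot 2^{-k}<\eps$. The only cosmetic difference is that you phrase the construction as a block sampler on $\{0,\ldots,2^k-1\}$, while the paper writes down the discretized distribution $\Tcal$ directly.
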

\begin{proof}
    Let $\Dcal$ be a discrete distribution with support $\{x_1, \ldots, x_m\}$, and let $B \coloneq 2^{\lceil \log(m/\eps) \rceil}$. Furthermore, define $\Tcal$ to be the distribution by discretizing $\Dcal$'s probability density function:
    \[ \Tcal(x_i) = \begin{cases} 
          \frac{1}{B} \cdot \lfloor B \cdot \Dcal(x_i) \rfloor & i \in [m-1], \\
          1 - \sum_{i=1}^{m-1} \Tcal(x_i) & i = m.
       \end{cases}
    \]
    Then,
    \[
    \tvdist{\Tcal - \Dcal} = \Tcal(x_m) - \Dcal(x_m)= 1 - \pbra{\sum_{i=1}^{m-1}\frac{1}{B} \cdot \lfloor B \cdot \Dcal(x_i) \rfloor} - \Dcal(x_m) \le \eps. \qedhere
    \]
\end{proof}

\paragraph*{Decision Forest Depth.}
Our work focuses on quantifying a circuit's complexity by its locality. Another common measure is \textit{decision forest depth}, which can be viewed as an adaptive variant of locality. A function $f : \bin^m \to \bin^n$ is computable by a depth-$d$ decision forest if every output bit of $f$ is the result of a depth-$d$ decision tree of the input bits. 
Observe that such a function is $2^d$-local.
This notion was studied in earlier works, such as \cite{viola2012complexity, filmus2023sampling}. In fact, several of their aforementioned lower bounds also hold in this stronger model.

It is known that there exists a sampler for $\Dcal_{k}$ of depth $O(\log n)$ \cite{viola2012complexity} using a (randomized) switching network construction of \cite{czumaj2015random}.
Our \Cref{thm:combinedSliceUpper} provides comparable bounds in terms of the weaker locality parameter. 
Viola also gives a sampler for $\Dcal_{q, \Lambda}$ of depth $O(\log^2 n)$. In fact, the bound holds for more general distributions, and we direct the reader to \cite{viola2012complexity} for these and more related results.

\subsection{A Single Hamming Slice}\label{subsec:upper_singleslice}

In this subsection, we prove \Cref{thm:combinedSliceUpper} in two parts: \Cref{thm:upper_largeslice} gives an $O(\log^2 n)$ bound, while \Cref{thm:upper_smallslice} gives an $O(\log(n/k) + \log^2 k)$ bound. 

\begin{theorem}\label{thm:upper_largeslice}
    For all $k \le n$, there exists an $O(\log(n)\cdot \log(n/\eps))$-local Boolean function $f : \bin^m \to \bin^n$ such that $f(\Ucal^m)$ is $\eps$-close to $\Dcal_{k}$.
\end{theorem}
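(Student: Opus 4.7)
The plan is a standard recursive divide-and-conquer sampler, with the divisor distribution at each node approximated via \Cref{lem:distr_approx}. Define the recursive procedure $\mathrm{Sample}(I, j)$ which returns a uniformly random subset of $I$ of size $j$: if $|I| = 1$, output $j \in \{0,1\}$ directly; otherwise, split $I = I_L \sqcup I_R$ into two halves of sizes $\lfloor |I|/2 \rfloor$ and $\lceil |I|/2 \rceil$, sample a split count $J \in \{0, 1, \ldots, \min(j, |I_L|)\}$ with probability proportional to $\binom{|I_L|}{J}\binom{|I_R|}{j - J}$ (the hypergeometric distribution), and then recursively call $\mathrm{Sample}(I_L, J)$ and $\mathrm{Sample}(I_R, j - J)$. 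Calling $\mathrm{Sample}([n], k)$ produces a sample from $\Dcal_k$ exactly.

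To implement this with low locality, I would associate every internal node $v$ of the recursion tree with its own disjoint block of $\ell := \lceil \log(2n/\eps) \rceil$ uniform input bits, and use \Cref{lem:distr_approx} to $\eps'$-approximate the hypergeometric draw at $v$ using those $\ell$ bits, with $\eps' := \eps/(2n)$; since each hypergeometric distribution has support of size at most $n+1$, $\ell = O(\log(n/\eps))$ bits suffice. The total number of internal nodes in the binary tree on $n$ leaves is at most $n-1$, so the draws at different nodes are independent, and a straightforward hybrid/triangle-inequality argument bounds the total variation distance between $f(\Ucal^m)$ and $\Dcal_k$ by $(n-1)\eps' \le \eps$.

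For locality, observe that the value of the $i$-th output bit is completely determined by the sequence of counts $j_{v_0}, j_{v_1}, \ldots, j_{v_L}$ along the root-to-leaf path ending at the singleton $\{i\}$, and each count $j_{v_t}$ is a deterministic function of the $\ell$ input bits associated with $v_t$ together with $j_{v_{t-1}}$. Unrolling, the $i$-th output bit depends only on the input bits associated with the ancestors of its leaf. The tree has depth $L = O(\log n)$, and each node contributes $\ell = O(\log(n/\eps))$ input bits, giving the stated $d = O(\log(n) \cdot \log(n/\eps))$ locality.

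The main thing to be careful about is the error accounting: the output distribution is the composition of $O(n)$ independently-randomized approximate sampling steps, and one has to verify (by a routine hybrid argument replacing one approximate step at a time with the exact hypergeometric draw) that total variation distance accumulates additively, so that $\eps' = \Theta(\eps/n)$ is the correct budget per node. Everything else is bookkeeping: verifying that the support-size bound fed into \Cref{lem:distr_approx} is $O(n)$ so that $\log(n/\eps') = O(\log(n/\eps))$, and checking that the depth of the balanced recursion tree is genuinely $O(\log n)$ rather than $\omega(\log n)$, both of which are immediate from the halving split.
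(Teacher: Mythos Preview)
Your proposal is correct and follows essentially the same approach as the paper: a balanced binary recursion on the interval $[n]$, with the hypergeometric split at each internal node approximated via \Cref{lem:distr_approx}, locality bounded by depth times bits-per-node, and error accumulated by a union bound over the $O(n)$ nodes. The only cosmetic difference is that the paper bounds the hypergeometric support by $k+1$ rather than $n+1$, yielding $\lceil \log(k/\eps') \rceil$ bits per node, but this does not affect the stated asymptotic.
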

\begin{proof}
    We sample $\Dcal_{k}$ by iteratively partitioning the interval $[n]$ into two (essentially) equally sized intervals, which contain $a$ and $k-a$ ones to place, respectively, for an $a$ sampled from the appropriate hypergeometric distribution. More precisely, $\Dcal_{k}$ can be viewed as the resulting distribution of the following process.
    \begin{enumerate}
        \item Consider the interval $S \coloneq [n]$. There are $\ell \coloneq k$ ones to place inside, and the remaining entries are zeros.

        \item Partition $S$ into two consecutive intervals $S_1$ and $S_2$ of sizes $\lfloor |S|/2 \rfloor$ and $\lceil |S|/2 \rceil$.

         \item Pick an element $a \in \{0, \ldots, \ell\}$ according to the hypergeometric distribution $\Hcal_{|\Scal|, \ell, |\Scal_1|}$:
         \[
         \Pr[a = r] = \frac{\binom{\ell}{r} \cdot \binom{|S|-\ell}{|S_1|-r}}{\binom{|S|}{|S_1|}}.
         \]

        \item Repeat the process with $S_1$ and $S_2$ given $a$ and $\ell-a$ ones, respectively, to place in their intervals.
    \end{enumerate}

    We would like to approximate $\Dcal_{k}$ with a distribution $\Lcal_{k} \coloneq f(\Ucal^m)$ produced by a function $f : \bin^m \to \bin^n$ of small locality. By \Cref{lem:distr_approx}, we can sample from $\Hcal_{|\Scal|, \ell, |\Scal_1|}$ with at most $\lceil \log(k/\eps') \rceil$ bits of locality and error $\eps'$. The key insight is that each output bit depends only on the interval containing it in each of the at most $\lceil \log n \rceil$ steps, so the total locality is $O(\log(n) \cdot \log(k/\eps'))$.

    Since each low locality approximation to the hypergeometric distribution sampled in the above process is $\eps'$-close to the true distribution, $\tvdist{\Lcal_{k}-\Dcal_{k}} \le \eps' n$ by the union bound. Setting $\eps'=\eps/n$, we find that $f$ is an $O(\log(n)\cdot \log(n/\eps))$-local function with $f(\Ucal^m)$ $\eps$-close to $\Dcal_{k}$.
\end{proof}

We can use the above theorem to derive a tighter bound in the case of small $k$. 

\begin{theorem}\label{thm:upper_smallslice}
    For all $k \le n$, there exists an $O(\log(n/k) + \log^2(k/\eps))$-local Boolean function $f : \bin^m \to \bin^n$ such that $f(\Ucal^m)$ is $\eps$-close to $\Dcal_{k}$.
\end{theorem}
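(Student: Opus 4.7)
My plan is to decouple the sampling into a ``coarse'' stage that assigns each of the $k$ ones to a distinct block of a size-$N$ partition of $[n]$, with $N = \Theta(k^2/\eps)$, and a ``fine'' stage that picks a uniformly random position within each chosen block. The coarse stage will be a single call to \Cref{thm:upper_largeslice} on an instance of size $N$, contributing $O(\log^2(k/\eps))$ locality, while the fine stage uses $O(\log(n/k))$ fresh random bits per block. The ideal output distribution will agree with $\Dcal_k$ conditioned on ``no block contains two ones,'' and a birthday bound forces this event to occur with probability $\ge 1-\eps/3$ under $\Dcal_k$, yielding the desired closeness.

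Concretely, if $k^2/\eps > n$ then $\log n = O(\log(k/\eps))$ and \Cref{thm:upper_largeslice} alone already yields locality $O(\log n \cdot \log(n/\eps)) = O(\log^2(k/\eps))$, so I may assume $N := \lceil 2k^2/\eps\rceil \le n$ and partition $[n]$ into $N$ blocks $B_1,\dots,B_N$ of near-equal sizes. Stage~1 invokes \Cref{thm:upper_largeslice} to sample a uniformly random $k$-subset $T \subseteq [N]$ to error $\eps/3$, so each block-indicator $\indicator[i \in T]$ has locality $O(\log N \cdot \log(N/\eps)) = O(\log^2(k/\eps))$. Stage~2 uses $O(\log(|B_i|\cdot k/\eps)) = O(\log(n/k))$ \emph{fresh} random bits per block (via \Cref{lem:distr_approx}) to approximately sample a uniform position $p_i \in B_i$ to error $\eps/(3k)$. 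Setting $y_j = \indicator[i \in T]\cdot\indicator[p_i = j]$ for $j \in B_i$, and noting that Stages~1 and 2 draw from disjoint input bits, each $y_j$ has the claimed locality $O(\log^2(k/\eps) + \log(n/k))$.

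For correctness I would show that the ideal (exact-subroutine) version of the above procedure is supported on the event $E$ that each block contains at most one $1$, and is essentially uniform within $E$. Since $\Pr_{x \sim \Dcal_k}[E^c] \le \binom{k}{2}/N \le \eps/3$ by a union bound, conditioning perturbs $\Dcal_k$ by at most $\eps/3$ in total variation, and combining with the $\eps/3$ approximation errors from Stages~1 and 2 via the triangle inequality yields the required $\eps$-closeness.

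The main annoyance will be the block-size unevenness when $N \nmid n$: blocks then differ in size by at most one, so $\Dcal_k|_E$ weights each $T \subseteq [N]$ by $\prod_{i\in T}|B_i|$ rather than uniformly. I plan to handle this by a case split. In the regime $n \le O(k^3/\eps^2)$, we have $\log n = O(\log(k/\eps))$, so \Cref{thm:upper_largeslice} alone gives locality $O(\log n \cdot \log(n/\eps)) = O(\log^2(k/\eps))$ and no partitioning is needed. In the complementary regime $n \ge \Omega(k^3/\eps^2)$, the minimum block size is $\Omega(k/\eps)$, hence $(|B_{\max}|/|B_{\min}|)^k \le (1 + O(\eps/k))^k \le 1 + O(\eps)$; this makes the weights $\prod_{i \in T}|B_i|$ nearly constant in $T$, so the total variation distance between $\Dcal_k|_E$ and our uniform-$T$ ideal is $O(\eps)$ and can be absorbed by slightly inflating the constant in the choice of $N$.
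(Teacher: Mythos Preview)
Your approach is essentially identical to the paper's: both assume $k^2/\eps \lesssim n$ (else invoke \Cref{thm:upper_largeslice} directly), partition $[n]$ into $\Theta(k^2/\eps)$ blocks, use \Cref{thm:upper_largeslice} to pick $k$ of them, and place one $1$ uniformly in each via \Cref{lem:distr_approx}; the locality and error accounting match. The one difference is that the paper asserts $\tvdist{\Pcal_k-\Dcal_k}=\Pr_{x\sim\Dcal_k}[\text{two ones share a block}]$ as an equality and moves on, glossing over the fact that when $t\nmid n$ the ideal $\Pcal_k$ is not literally $\Dcal_k$ conditioned on the no-collision event---whereas you explicitly handle this via the case split $n\lesssim k^3/\eps^2$ versus $n\gtrsim k^3/\eps^2$, bounding the weight ratio $(\lvert B_{\max}\rvert/\lvert B_{\min}\rvert)^k$ in the latter regime. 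Your treatment of this point is more careful than the paper's; otherwise the two proofs coincide.
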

For clarity in the proof, we will use $\Dcal_{s,t}$ to denote the uniform distribution over $x \in \bin^t$ of Hamming weight $s$. When only one parameter is provided, $\Dcal_s$ denotes the uniform distribution over $x \in \bin^n$ of Hamming weight $s$, as in prior contexts.
\begin{proof}
    Assume $k \le \sqrt{\eps n}$, as otherwise the bound follows from \Cref{thm:upper_largeslice}.
    We first describe a randomized process to approximately generate $\Dcal_{k}$: split $[n]$ into (essentially) equally sized parts, pick $k$ at random, and place a single 1 randomly in each of the chosen parts. The number of parts in the original split determines the accuracy of this approximation. More precisely, let $\Pcal_{k}$ be the resulting distribution of the following process.
    \begin{enumerate}
        \item Divide $[n]$ into $t$ intervals, each of size $\lceil n/t \rceil$ or $\lfloor n/t \rfloor$ for a $t$ to be determined later.
        \item Pick $k$ distinct intervals according to the distribution $\Dcal_{k,t}$.
        \item Put a 1 in each of the selected intervals according to the distribution $\Dcal_{1,\lceil n/t \rceil}$ or $\Dcal_{1,\lfloor n/t \rfloor}$, depending on the block size. (The remaining entries are zeros.)
    \end{enumerate}

    By direct calculation, we find 
    \begin{align*}
        \tvdist{\Pcal_{k} - \Dcal_{k}} &= \Pr_{x \sim \Dcal_{k}}\sbra{x \text{ has two ones in the same interval}} \\
        &\le \E_{x \sim \Dcal_{k}}[\text{number of pairs of ones in $x$ in the same interval}] \\
        &\le k^2/t.
    \end{align*}
    Hence, as long as $t \ge \lceil k^2/\eps \rceil$, the two distributions are $\eps$-close.

    We would like to approximate $\Pcal_{k}$ with a distribution $\Lcal_{k} \coloneq f(\Ucal^m)$ produced by a function $f : \bin^m \to \bin^n$ of small locality. Each output bit $b$ depends on the choice of intervals and the location of the 1 within the interval containing $b$. Using \Cref{thm:upper_largeslice}, we can sample from $\Dcal_{k,t}$ with $O(\log(t)\cdot\log(t/\eps_1))$ bits of locality and error $\eps_1$. Similarly using \Cref{lem:distr_approx}, we can sample from $\Dcal_{1,\lceil n/t \rceil}$ or $\Dcal_{1,\lfloor n/t \rfloor}$ with $O(\log(n/(t\eps_2)))$ bits of locality and error $\eps_2$, so the total locality is $O(\log(t)\cdot\log(t/\eps_1) + \log(n/t\eps_2))$.

    Union bounding over the errors implies $\tvdist{\Lcal_{k}-\Pcal_{k}} \le \eps_1 + k\eps_2$. Setting $\eps_1 = \eps/4$ and $\eps_2 = \eps/4k$, we find that $\Lcal_{k}$ is $(\eps/2)$-close to $\Pcal_{k}$. Finally setting $t = \lceil k^2 / (\eps/2) \rceil$ yields that $\Pcal_{k}$ is $(\eps/2)$-close to $\Dcal_{k}$, so we conclude that $f$ is an $O(\log(n/k) + \log^2(k/\eps))$-local function with $f(\Ucal^m)$ $\eps$-close to $\Dcal_{k}$.
\end{proof}

\subsection{Periodic Hamming Slices}\label{subsec:upper_periodslice}

In this subsection, we prove \Cref{thm:upper_mod}. Recall that $\Dcal_{q, \Lambda}$ for $\Lambda \in \Zbb/q\Zbb$ is the uniform distribution over $n$-bit strings of Hamming weight modulo $q$ in $\Lambda$. We emphasize the given proof is extremely similar to that of \cite{viola2012complexity}. We provide it for completeness and to demonstrate the connection to the results in \Cref{subsec:upper_singleslice}.
\begin{theorem*}[\Cref{thm:upper_mod} Restated \cite{viola2012complexity}]
    For all $q \in \Nbb$ and non-empty $\Lambda \subseteq \Zbb/q\Zbb$, there exists an $O(q^2 \cdot \log^2(n/\eps))$-local Boolean function $f : \bin^m \to \bin^n$ such that $f(\Ucal^m)$ is $\eps$-close to $\Dcal_{q, \Lambda}$.
\end{theorem*}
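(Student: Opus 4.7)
The strategy is to express $\Dcal_{q,\Lambda}$ as a convex combination of single-slice distributions and reduce to \Cref{thm:combinedSliceUpper}. Let $K = \cbra{k \in \cbra{0,\ldots,n} : k \bmod q \in \Lambda}$ and $\pi_k = \binom{n}{k}/\sum_{j \in K}\binom{n}{j}$; then $\Dcal_{q,\Lambda} = \sum_{k \in K}\pi_k \cdot \Dcal_k$. So a two-stage sampler suffices: first draw $k \sim \pi$, and then, conditioned on $k$, sample a uniformly random string of Hamming weight $k$.

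For stage one, since $\pi$ has support of size at most $n+1$, \Cref{lem:distr_approx} yields a distribution $\tilde\pi$ with $\tvdist{\tilde\pi - \pi} \le \eps/2$ sampled from only $O(\log(n/\eps))$ uniform random bits. For stage two, I will promote the construction from \Cref{thm:upper_largeslice} to a \emph{universal} single-slice sampler that takes $k$ as an additional input. Inspecting that proof, the recursive binary-splitting procedure accesses $k$ only to set the hypergeometric parameters at each of $O(\log n)$ recursive levels, consuming a fresh $O(\log(n/\eps))$-bit block at each level. Treating $k$ as an extra input (rather than a hard-coded parameter) preserves the argument and yields a function $g$ such that $g(k,\Ucal^{m'})$ is $(\eps/2)$-close to $\Dcal_k$ for every $k$; each output bit of $g$ depends only on the $O(\log n)$ bits encoding $k$ together with one random block per ancestor level, giving locality $O(\log n \cdot \log(n/\eps))$.

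Composing the two stages gives $f(x_1, x_2) = g(\sigma(x_1), x_2)$, where $\sigma$ is the stage-one sampler. Using \Cref{fct:mult_apx} style bookkeeping and a triangle-inequality argument ($\pi$ is approximated to $\eps/2$ and each $\Dcal_k$ is approximated to $\eps/2$ uniformly in $k$), we get $\tvdist{f(\Ucal^m) - \Dcal_{q,\Lambda}} \le \eps$. Each output bit of $f$ depends on the $O(\log(n/\eps))$ bits in $x_1$ (which determine $\sigma(x_1)$) plus the $O(\log n \cdot \log(n/\eps))$ bits of $x_2$ read by $g$, for a total locality of $O(\log^2(n/\eps))$. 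Since $q \ge 1$, this is strictly stronger than the claimed $O(q^2 \log^2(n/\eps))$ bound and the theorem follows. The only nontrivial point is the bookkeeping check that the bits encoding $k$ can be reused across the splitting tree without inflating locality — this is immediate from the definition, which counts distinct input bits rather than usage multiplicities.
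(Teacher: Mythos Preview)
Your proposal is correct and takes a genuinely different route from the paper. The paper first decomposes $\Dcal_{q,\Lambda}$ as a mixture of the residue-class distributions $\Dcal_{q,\{r\}}$ for $r\in\Lambda$, and the real work is \Cref{lem:upper_mod}: a block-based sampler for $\Dcal_{q,\{r\}}$ that partitions $[n]$ into blocks of size $\Theta(q^2\log(n/\eps))$, enforces the global mod-$q$ constraint via a telescoping sum of block residues ($y_i=x_i-x_{i+1}$), and then samples each block's contents. This is where the $q^2$ enters. You instead decompose $\Dcal_{q,\Lambda}$ all the way down to single slices $\Dcal_k$ and invoke a universalized version of the binary-splitting sampler from \Cref{thm:upper_largeslice}, treating $k$ as an additional input. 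Your route is conceptually simpler and, as you note, actually yields the stronger bound $O(\log^2(n/\eps))$ with no $q$-dependence; the paper is just reproducing Viola's original construction for completeness and did not optimize it.

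The one point worth spelling out in a final write-up is the universalization step. At each node $v$ of the splitting tree the approximate hypergeometric sampler from \Cref{lem:distr_approx} must be indexed by the current one-count $\ell_v$ (which is a deterministic function of $k$ and the ancestor draws) while drawing from a \emph{fixed-length} random block of size $\lceil\log((n{+}1)/\eps')\rceil$, so that the same block suffices for every possible value of $\ell_v\le n$. This is implicit in your sketch (``consuming a fresh $O(\log(n/\eps))$-bit block at each level'') but is exactly the hinge that lets the same randomness work uniformly over all $k$, so it deserves an explicit sentence.
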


We start with the case of $|\Lambda|=1$, where our approach is similar to the \textsc{parity} example in the introduction.

\begin{lemma}\label{lem:upper_mod}
    For all $q \in \Nbb$ and $r \in \Zbb/q\Zbb$, there exists an $O(q^2 \cdot \log^2(n/\eps))$-local Boolean function $f : \bin^m \to \bin^n$ such that $f(\Ucal^m)$ is $\eps$-close to $\Dcal_{q, \{r\}}$.
\end{lemma}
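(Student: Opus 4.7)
The plan is to partition $[n]$ into $t = \lceil n/s \rceil$ consecutive blocks $B_1, \ldots, B_t$ of sizes $s_i \le s = \Theta(q^2 \log(n/\eps))$, and independently sample each block uniformly from $\cbra{y \in \bin^{s_i} : |y| \equiv v_i \pmod{q}}$, where the target residues $(v_1, \ldots, v_t)$ are jointly uniform over $\cbra{(v_i) \in (\Zbb/q\Zbb)^t : \sum_i v_i \equiv r \pmod{q}}$. If every $v_i$ is locally computable, concatenating the blocks yields an output with Hamming weight $\equiv r \pmod{q}$ that is approximately uniform on $\Dcal_{q, \cbra{r}}$.

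To make the $v_i$'s locally computable, I would introduce auxiliary variables $y_1, \ldots, y_{t-1} \in \Zbb/q\Zbb$ sampled i.i.d.\ uniformly, fix $y_0 = 0$ and $y_t = r$, and set $v_i = y_i - y_{i-1} \bmod q$. The map $(y_1, \ldots, y_{t-1}) \mapsto (v_1, \ldots, v_t)$ is a bijection onto $\cbra{(v_i) : \sum v_i \equiv r \pmod q}$, so $(v_i)$ has the desired distribution and each $v_i$ depends on only $O(\log q)$ bits. Within block $B_i$, I would apply \Cref{lem:distr_approx} to sample the uniform distribution on $\cbra{y \in \bin^{s_i} : |y| \equiv v_i \pmod{q}}$ with error $\eps/(2t)$ using $s_i + O(\log(n/\eps))$ fresh bits. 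Each output bit of block $B_i$ then depends only on its block's sampling bits together with $y_{i-1}, y_i$, yielding locality $O(s + \log(n/\eps) + \log q) = O(q^2 \log(n/\eps))$, comfortably within the claimed $O(q^2 \log^2(n/\eps))$.

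The substantive part of the argument is showing that the resulting distribution $P$ is $\eps$-close to the uniform distribution $U$ on $\Dcal_{q, \cbra{r}}$. A direct Fourier expansion over $\Zbb/q\Zbb$ gives
$$
N(s, v) \coloneqq \abs{\cbra{y \in \bin^s : |y| \equiv v \pmod{q}}} = \frac{1}{q} \sum_{j=0}^{q-1} \omega^{-jv}(1 + \omega^j)^s, \quad \omega = e^{2\pi i/q},
$$
where the $j = 0$ term contributes $2^s/q$ and every other term has magnitude at most $(2\cos(\pi/q))^s/q \le (2^s/q) \cdot 2^{-\Omega(s/q^2)}$. Thus $N(s, v) = (2^s/q)(1 \pm \delta)$ uniformly in $v$ with $\delta \le q \cdot 2^{-\Omega(s/q^2)}$, and for $s = \Theta(q^2 \log(n/\eps))$ one has $\delta = O(\eps/n)$.

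For any $x \in \Dcal_{q, \cbra{r}}$ with block residues $(u_1, \ldots, u_t)$, the exact pre-\Cref{lem:distr_approx} output probability is $q^{-(t-1)} \prod_i N(s_i, u_i)^{-1}$, which equals a common value across all such $x$ up to a multiplicative factor of $(1 \pm O(t\delta)) = (1 \pm O(\eps))$; an analogous computation gives $|\Dcal_{q, \cbra{r}}| = (2^n/q)(1 \pm O(t\delta))$, so $\tvdist{P - U} = O(\eps)$ once the aggregated \Cref{lem:distr_approx} error is absorbed. The main obstacle is the uniform multiplicative approximation of $N(s, v)$ via the Fourier formula and carefully tracking the relative errors across all $t$ blocks; the bijection argument for $(v_i)$ and the locality accounting are routine.
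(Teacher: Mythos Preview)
Your approach is essentially the paper's: the same block decomposition with block size $\Theta(q^2\log(n/\eps))$, the same telescoping $v_i=y_i-y_{i-1}$ to force $\sum_i v_i\equiv r$, and the same Fourier estimate $N(s,v)=(2^s/q)(1\pm\delta)$. Your within-block sampling is a bit more direct---you apply \Cref{lem:distr_approx} to the full conditional block distribution, whereas the paper first samples the block's Hamming weight and then invokes \Cref{thm:combinedSliceUpper} to place the ones---and this in fact shaves a $\log(n/\eps)$ factor from the locality. One small point to patch: sampling each $y_i$ \emph{exactly} uniformly from $\Zbb/q\Zbb$ with $O(\log q)$ fair coins is impossible when $q$ is not a power of two, so you need \Cref{lem:distr_approx} here as well (as the paper does), at a cost of $O(\log(qn/\eps))$ bits per $y_i$ and $O(\eps)$ additional total-variation error; both are absorbed into your stated bounds.
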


We again will use $\Dcal_{s,t}$ to denote the uniform distribution over $x \in \bin^t$ of Hamming weight $s$ in cases where $t$ may not equal $n$.

\begin{proof}
    Assume $q = o(\sqrt{n/\log(n/\eps)})$, as otherwise the bound follows from \Cref{lem:distr_approx}. We can view $\Ucal^n$ as the resulting distribution of the following processing. 
    
    \begin{enumerate}
        \item Divide $[n]$ into $\lfloor n/t \rfloor$ blocks, each of size either $t$ or $t+1$. 
        
        (This is the reverse of what was done in the proof of \Cref{thm:upper_smallslice}, where there were $t$ blocks. It will be more convenient for later calculations.)
        
        \item Sample $y_1, \ldots, y_{\lfloor n/t \rfloor} \in \Zbb/q\Zbb$ according to the binomial distribution $\Bcal_{q,1/2}$:
        \[
        \Pr[y_i = j] = \frac{1}{2^q} \cdot \binom{q}{j}.
        \]
        
        \item Suppose the $i$-th block has size $h$. Sample $w_i$ according to the binomial distribution $\Bcal_{h,1/2}$ conditioned on the result being $y_i \Mod{q}$.
        
        \item For each $i$, put $w_i$ ones in block $i$ according to the distribution $\Dcal_{w_i, h}$.
\end{enumerate}

    We claim that the $y_i$'s are approximately uniform over $\Zbb/q\Zbb$. Thus, if we perform a variant of the above process where each $y_i$ is chosen uniformly at random from $\Zbb/q\Zbb$, we can get an $(\eps/2)$-approximation to the uniform distribution over $\bin^n$.
    \begin{claim}
        Setting $t = \Omega(q^2 \cdot \log(n /\eps))$ implies the uniform distribution over $\bin^n$ restricted to a block has weight mod $q$ $(\eps/2n)$-close to the uniform distribution over $\Zbb/q\Zbb$.
    \end{claim}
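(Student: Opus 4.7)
The plan is to prove this by a direct Fourier-analytic estimate on the cyclic group $\Zbb/q\Zbb$. Let $X \sim \mathrm{Bin}(h,1/2)$ with $h \in \cbra{t,t+1}$ (the two possible block sizes), and let $p$ denote the distribution of $X \bmod q$. Writing $\omega = e^{2\pi i / q}$, independence of the bits immediately gives
$$
\hat p(j) \;=\; \E\sbra{\omega^{jX}} \;=\; \prod_{i=1}^{h}\E\sbra{\omega^{jX_i}} \;=\; \pbra{\frac{1+\omega^j}{2}}^{h}, \qquad |\hat p(j)| \;=\; |\cos(\pi j/q)|^{h}.
$$

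The first step is a uniform bound on $|\hat p(j)|$ for $j \in \cbra{1,\ldots,q-1}$. The quantity $|\cos(\pi j/q)|$ is maximized at $j = 1$ and $j = q-1$, and the elementary inequality $\cos(x) \le 1 - x^2/4$ on $[0,\pi/2]$ gives $\cos(\pi/q) \le 1 - \pi^2/(4q^2) \le 1 - 2/q^2$. Hence
$$
|\hat p(j)| \;\le\; (1-2/q^2)^{h} \;\le\; \exp\cbra{-2h/q^2} \qquad \text{for all } j \ne 0.
$$

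The second step is to turn this Fourier-side $L^\infty$ bound into a TV bound on the spatial side. Plancherel's identity on $\Zbb/q\Zbb$ yields $\sum_{x}(p(x)-1/q)^2 = \frac{1}{q}\sum_{j\ne 0}|\hat p(j)|^2$, and then Cauchy--Schwarz gives
$$
\tvdist{p - U} \;=\; \tfrac12\sum_x \abs{p(x) - 1/q} \;\le\; \tfrac12\sqrt{\sum_{j \ne 0}|\hat p(j)|^2} \;\le\; \tfrac{\sqrt q}{2}\cdot\exp\cbra{-2h/q^2}.
$$
Choosing $t$ with $t \ge (q^2/2)\cdot\ln\pbra{\sqrt{q}\cdot n/\eps}$ makes the right-hand side at most $\eps/(2n)$, and absorbing $\log q \le \log n$ into the hidden constant shows $t = \Omega(q^2 \log(n/\eps))$ suffices. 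Since the bound is monotonically decreasing in $h$ and $h \ge t$ in both cases, the same choice works for blocks of size $t$ and $t+1$.

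I do not expect a real obstacle: this is a textbook character-sum estimate, and the only care needed is the uniform lower bound $1-|\cos(\pi j/q)| \ge 2/q^2$ for nonzero frequencies, which is a quick trigonometric calculation. (One could alternatively deduce the claim as a special case of the local limit result \Cref{lem:mod_llt}, but the direct Fourier computation above is cleaner since all coordinates are i.i.d. unbiased bits.)
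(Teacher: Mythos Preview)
Your proof is correct and follows essentially the same Fourier-analytic route as the paper: compute $\hat p(j)=\bigl(\tfrac{1+\omega^j}{2}\bigr)^{h}$, bound $|\hat p(j)|\le(1-\Theta(1/q^2))^{h}$ for $j\neq0$, and convert this to a TV bound. The only cosmetic difference is in that last conversion: the paper bounds each point mass directly via the triangle inequality on the inversion formula (giving $|p(v)-1/q|\le\max_{a\ne0}|\hat p(a)|$, citing \Cref{clm:root_power} for the coefficient bound), whereas you pass through Plancherel and Cauchy--Schwarz; both are standard and yield the same $\Omega(q^2\log(n/\eps))$ requirement on $t$.
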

    \begin{proof}
        It suffices to show that $\abs{\Pr_{X \sim \bin^t}\sbra{\sum_{j\in[t]}X_j\bmod q = v} - 1/q} \le e^{-t/q^2}$ for any $v \in \Zbb/q\Zbb$. By Fourier expansion,
        $$
            \Pr\sbra{\sum_{j\in[t]}X_j\bmod q = v}
            =\E\sbra{\frac1q\sum_{a\in\Zbb/q\Zbb}\omega_q^{a\cdot(\sum_jX_j-v)}}
            = \frac{1}{q}+\frac1q\sum_{a\in\Zbb/q\Zbb, a\ne 0}\pbra{\frac{1+\omega_q^a}{2}}^t\cdot\omega_q^{-a\cdot v},
        $$
        where $\omega_q$ is the $q$-th unit root.

        By \Cref{clm:root_power}, we can bound the error by
        \begin{align*}
            \left|\frac1q\sum_{a\in\Zbb/q\Zbb, a\ne 0}\pbra{\frac{1+\omega_q^a}{2}}^t\cdot\omega_q^{-a\cdot v}\right| &\le \max_{a\in\Zbb/q\Zbb, a\ne 0} \left|\frac{1+\omega_q^a}{2}\right|^t
            \le \left(1 - \frac{1}{q^2}\right)^t
            \le e^{-t/q^2}. \qedhere
        \end{align*}
    \end{proof}

    Observe that $\Dcal_{q, \{r\}}$ is the uniform distribution over $\bin^n$ conditioned on the $y_i$'s summing to $r \Mod{q}$. We can sample $y_i$'s with this property by choosing $x_1, \ldots, x_{\lfloor n/t \rfloor} \in \Zbb/q\Zbb$ uniformly at random and setting 
    \[
    y_1 \coloneq x_1-x_2,\quad y_2 \coloneq x_2 - x_3,\quad \ldots,\quad y_{\lfloor n/t \rfloor-1} \coloneq x_{\lfloor n/t \rfloor-1}-x_{\lfloor n/t \rfloor},\quad y_{\lfloor n/t \rfloor} \coloneq x_{\lfloor n/t \rfloor}-x_1 + r.
    \]
    Call the resulting distribution of the above process with this modification $\Pcal$. Then we have $\tvdist{\Pcal - \Dcal_{q, \{r\}}} \le \eps/2$.

    We would like to approximate $\Pcal$ with a distribution $\Lcal \coloneq f(\Ucal^m)$ produced by a function $f : \bin^m \to \bin^n$ of small locality. By \Cref{lem:distr_approx}, we can sample from the uniform distribution over $\Zbb/q\Zbb$ with $\lceil \log(q/\eps') \rceil$ bits of locality and error $\eps'$. Similarly, we can sample from the uniform distribution over $\bin^{t+1}$ (or $\bin^t$) conditioned on the $y_i$'s summing to $0 \Mod{q}$ with $\lceil (t+1)\log(2/\eps') \rceil$ bits of locality and error $\eps'$. Finally, we can sample from $\Dcal_{w_i,h}$ with $O(\log(t)\cdot \log(t/\eps'))$ bits of locality and error $\eps'$ using \Cref{thm:combinedSliceUpper}. Output bits in block $i$ depend on the two $x_j$'s that affect $y_i$, $w_i$, and the placement of the $w_i$ ones, so the total locality is $O(\log(q/\eps')+t\log(2/\eps') +\log(t)\cdot \log(t/\eps'))$.

    Union bounding over the errors implies $\tvdist{\Lcal-\Pcal} \le \frac{n}{t} \cdot 3\eps'$.
    Setting $\eps' = t\eps/6n$ implies $\Lcal$ depends on $O(t\log(n/t\eps) + \log(t)\cdot\log(n/\eps))$ bits of locality and is $\eps/2$-close to $\Pcal$. Finally, setting $t = \Theta(q^2 \cdot \log(n/\eps))$, we find that $f$ is an $O(q^2 \cdot \log^2(n/\eps))$-local function with $f(\Ucal^m)$ $\eps$-close to $\Dcal_{q, \{r\}}$.
\end{proof}

By randomly choosing a Hamming weight from $\Lambda \subseteq \Zbb/q\Zbb$, we get \Cref{thm:upper_mod}.

\begin{proof}[Proof of \Cref{thm:upper_mod}]
    Randomly choose an element $r \in \Lambda$ with probability $\frac{\abs{\supp{\Dcal_{q,\{r\}}}}}{\abs{\supp{\Dcal_{q,\Lambda}}}}$, and then apply \Cref{lem:upper_mod} with error $\eps/2$. \Cref{lem:distr_approx} implies we can sample $r$ with locality $\lceil \log(2q/\eps) \rceil$ and error $\eps/2$. Applying the union bound concludes the proof. 
\end{proof}

\section*{Acknowledgements}
We thank Yuval Filmus for pointing this topic to us and answering questions regarding \cite{filmus2023sampling,watts2023unconditional}, thank Shachar Lovett for his support, thank Emanuele Viola for answering questions regarding \cite{viola2023new} and clarifying prior literature, and thank Adam Bene Watts and Natalie Parham for informing us of their updated version of \cite{watts2023unconditional}.
KW wants to thank Daniel Grier, Jiaqing Jiang, and Jack Morris for answering questions in quantum computing, and thank Mingmou Liu for questions in data structure lower bounds. AO wants to thank Max Hopkins for helpful conversations on the upper bounds. We are also indebted to anonymous reviewers for their useful feedback, especially their discussion on prior techniques and results.

Part of this work was done while AO and KW were visiting the Simons Institute for the Theory of Computing.

\bibliographystyle{alpha} 
\bibliography{biblio}

\appendix
\section{Tightness of the Graph Elimination Results}\label{app:tightness}

In this section, we show the tightness of our graph elimination results, which are the main bottlenecks preventing better locality lower bounds.
Recall that these graph elimination reductions aim to remove a few right vertices in a $d$-left-bounded bipartite graph to obtain non-connected left vertices or neighborhoods, which corresponds to conditioning on a few input bits in a $d$-local function to obtain a $(d,r)$-local or $(d,r,t)$-local function.

\subsection{Non-Connected Vertices}\label{app:tightness_vtx}

We start with the tightness of the graph elimination result for non-connected vertices, i.e., \Cref{as:non-adj_vtx} and \Cref{cor:non-adj_vtx_new}.
This is used in the lower bounds for sampling biased distributions (\Cref{thm:locality_biased}) and single Hamming slices of non-dyadic weight (\Cref{thm:locality_single_non-dyadic}).

\begin{property*}[\Cref{as:non-adj_vtx} Restated]
There exists $S\subseteq[m]$ such that deleting those right vertices (and their incident edges) produces a bipartite graph with $r$ non-connected left vertices satisfying
$$
|S|\le\frac r\beta
\quad\text{and}\quad
r\ge\frac n\lambda.
$$
\end{property*}

In light of \Cref{as:non-adj_vtx}, we prove the following statement.

\begin{lemma}\label{lem:tightness_vtx}
Let $\beta\ge2$ be an integer parameter (not necessarily constant).
If $n=(\beta-1)^d$, then there exists a $d$-left-bounded bipartite graph $G=([n],[m],E)$ such that for any $S\subseteq[m]$, deleting those right vertices (and their incident edges) gives at most $\max\cbra{1,(\beta-1)|S|}$ non-connected left vertices.
\end{lemma}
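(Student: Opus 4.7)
The plan is to construct $G$ as the incidence graph of a rooted complete $(\beta-1)$-ary tree of depth $d$: the $n = (\beta-1)^d$ leaves serve as the left vertex set $[n]$; the internal nodes (those at depths $0, 1, \ldots, d-1$) serve as the right vertex set $[m]$; and each leaf is made adjacent to exactly the $d$ internal nodes on its root-to-leaf path. This immediately ensures $G$ is $d$-left-bounded. The key observation is that, after deleting $S$ from the right side, two leaves $\ell$ and $\ell'$ share a right neighbor if and only if some internal node on the path from their least common ancestor (LCA) up to the root survives in $[m] \setminus S$. Equivalently, $\ell$ and $\ell'$ are non-connected if and only if every such common ancestor lies in $S$.

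I would prove the stated bound by induction on $d$. The base case $d = 1$ is immediate: the tree has a single right vertex (the root), adjacent to all $\beta - 1$ leaves, so $|S| = 0$ admits only $1$ non-connected leaf while $|S| \ge 1$ (i.e., the root is deleted) admits all $\beta - 1$, matching $\max\{1, (\beta-1)|S|\}$. For the inductive step with root $r$: if $r \notin S$, then every pair of leaves shares $r$ as an undeleted common right neighbor, so at most $1$ left vertex can be chosen; the bound $\max\{1, (\beta-1)|S|\} \ge 1$ holds trivially. If $r \in S$, the tree decomposes into $\beta - 1$ disjoint subtrees $T_1, \ldots, T_{\beta-1}$, each isomorphic to the depth-$(d-1)$ construction. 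Setting $k_i = |S \cap V(T_i)|$, so that $\sum_i k_i = |S| - 1$, the inductive hypothesis bounds the number of pairwise non-connected leaves within $T_i$ by $\max\{1, (\beta-1) k_i\}$; since non-connectedness across distinct subtrees is automatic once $r$ is removed, the total count is at most $\sum_i \max\{1, (\beta-1) k_i\}$.

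The remaining calculation is to verify $\sum_{i=1}^{\beta-1} \max\{1, (\beta-1) k_i\} \le (\beta-1)|S|$. Letting $j = |\{i : k_i = 0\}|$, the sum equals $j + (\beta-1) \sum_{i : k_i \ge 1} k_i = j + (\beta-1)(|S|-1)$; since $j \le \beta - 1$, this is at most $(\beta-1)|S|$, matching the target. I do not anticipate any serious obstacle: the construction is a clean recursive tree and the argument is a routine induction. The only subtlety worth flagging is the precise notion of non-connectedness, which requires every common ancestor up to the root to be deleted (not merely the LCA), together with the edge case $|S| = 0$, in which the shared root forces the ``$\max$'' to select its first argument. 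The equality $j = \beta - 1$ is attained precisely when only the root is deleted, which is exactly the situation where the bound $(\beta-1)|S|$ is saturated.
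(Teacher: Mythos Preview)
Your construction is identical to the paper's, and your inductive argument is correct. The paper's proof differs only in presentation: rather than inducting on $d$, it observes directly that once the root is removed, any child $v'\notin S$ of a node in $S$ can harbor at most one member of $T$ in its subtree, so $|T|$ is bounded by the number of leaves of a $(\beta-1)$-ary tree with at most $|S|$ internal nodes, which is at most $(\beta-1)|S|$.
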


The instance $G$ above does not satisfy \Cref{as:non-adj_vtx} if $\lambda<n=(\beta-1)^d$.
Recall that \Cref{cor:non-adj_vtx_new} shows that \Cref{as:non-adj_vtx} holds as long as $\lambda\ge(d\beta)^{\Omega(d)}$.
Hence \Cref{lem:tightness_vtx} provides a sharp example for \Cref{cor:non-adj_vtx_new} when $\beta\ge d^{\Omega(1)}$, which is the typical setting for us.

In the whole analysis, it implies a barrier for improving the $\delta^{O(d)}$ factor in \Cref{thm:locality_biased} and \Cref{thm:locality_single_non-dyadic} to $\delta^{o(d)}$.
Put concretely, the $2^{O(d^2)}$ factor in \Cref{thm:informal_single_non-dyadic} and \Cref{thm:informal_biased} is inevitable in our analysis framework.

Now we proceed to the proof of \Cref{lem:tightness_vtx}.
\begin{proof}[Proof of \Cref{lem:tightness_vtx}]
The right vertices of $G$ will form a complete $(\beta-1)$-ary tree on top of the left vertices.
To be more precise, let $\Tcal_d$ be a complete $(\beta-1)$-ary tree of depth $d$, where the root has depth zero.
We identify the $(\beta-1)^d$ leaves of $\Tcal_d$ as the left vertices $[n]$, and identify the internal nodes of $\Tcal_d$ as the right vertices $[m]$.
From now on, we will use internal nodes for right vertices and leaves for left vertices.

In the bipartite graph $G$, each internal node is connected with all the leaves below it in $\Tcal_d$.
It is clear that $G$ is $d$-left-bounded as $\Tcal_d$ has depth $d$.
Suppose we removed internal nodes $S$ and obtained leaves $T$ that are not connected to each other in $G$.
If $|S|=0$, then clearly $|T|=1$. Hence now we assume $|S|\ge1$.
Observe that for any distinct leaves in $T$, their common ancestors in $\Tcal_d$ must be removed in $S$ to disconnect them.
Therefore, for any $v\in S$ and its child $v'\in\Tcal_d$, if $v'\notin S$ then at most one leaf in the sub-tree rooted from $v'$ can be contained in $T$.
This means that the size of $T$ is upper bounded by the number of leaves in a $(\beta-1)$-ary tree with at most $|S|$ internal nodes, where the latter is at most $(\beta-1)|S|$ when $|S|\ge1$.
\end{proof}

\subsection{Non-Connected Neighborhoods}\label{app:tightness_neigh}

Now we turn to the tightness of the graph elimination result for non-connected neighborhoods, i.e., \Cref{as:non-adj_neigh} and \Cref{cor:graph_elim_non-adj_neigh}.
This accounts for the gigantic dependence on $d$ in the lower bounds for sampling general (\Cref{thm:locality_single_hamming_slice}) and periodic (\Cref{thm:mod_slice}) Hamming slices.

\begin{property*}[\Cref{as:non-adj_neigh} Restated]
There exists $S\subseteq[m]$ such that deleting those right vertices (and their incident edges) produces a bipartite graph with $r$ non-connected left neighborhoods of size at most $t$ satisfying
$$
|S|\le\frac r{F(t)}
\quad\text{and}\quad
r\ge\frac n\lambda
\quad\text{and}\quad
t\le\kappa.
$$
\end{property*}

In light of \Cref{as:non-adj_neigh}, we present the following construction.

\begin{lemma}\label{lem:tightness_neigh}
Assume $d\ge2$ is an even number and $n=\tow_2(d/2)$.
There exists a $d$-left-bounded bipartite graph $G=([n],[m],E)$ such that for any $S\subseteq[m]$, deleting those right vertices (and their incident edges) gives at most $\max\cbra{1,|S|}$ non-connected left neighborhoods.
\end{lemma}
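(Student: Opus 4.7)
My plan is to exhibit $G$ by a recursive construction on $d$, with the base case $d=2$ being the bipartite graph on two left vertices and two right vertices with all four edges present. For this $G_2$, $n_2=\tow_2(1)=2$, every left vertex has degree $2$, and the bound follows from a direct case analysis on $|S|\in\{0,1,2\}$: if $|S|\le 1$ at least one shared right vertex survives, keeping the two left vertices in each other's neighborhood, and if $|S|=2$ the trivial bound $2=|S|$ suffices.

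For the inductive step, I would build $G_{d+2}$ from $G_d$ so that the left-vertex count jumps from $n_d$ to $n_{d+2}=2^{n_d}=\tow_2((d+2)/2)$ while the left-degree grows by exactly $2$. Since $\tow_2((d+2)/2)=2^{\tow_2(d/2)}$, the natural target is to index the new left vertices by subsets $T\subseteq[n_d]$ of the old left vertex set. I would then install two layers of right vertices: a \emph{lifted} layer in which, for each right vertex $r$ of $G_d$ with left-neighborhood $L(r)\subseteq[n_d]$, we add right vertices labelled by signatures $\sigma\subseteq L(r)$ and make $T$ adjacent only to the right vertex $(r,T\cap L(r))$, so that two subsets $T,T'$ share a lifted right vertex iff they agree on some $L(r)$; plus a \emph{redundancy} layer of a small constant number of universal-connector right vertices adjacent to all new left vertices.

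Verifying the non-connected neighborhood property for $G_{d+2}$ would split any deletion set $S$ into its lifted part $S_\ell$ and its redundancy part $S_r$. If $S_r$ is too small to eliminate every universal connector, a surviving connector keeps all new left neighborhoods mutually connected, so only one non-connected neighborhood remains. Otherwise the adversary has already paid $|S_r|\ge 2$ up front, and the residual damage on the lifted layer reduces, via the signature encoding, to a deletion instance on $G_d$, to which the inductive hypothesis applies; summing the two contributions should recover the $\max\{1,|S|\}$ bound.

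The main obstacle I expect is the precise calibration of signature granularity against redundancy count within the two-edge-per-recursion budget. Naive subset constructions fail in predictable ways: adjacency to every possible signature blows the degree past $d+2$, while restricting adjacency to the single full-agreement signature $\sigma=T$ causes distinct subsets to share no right vertex at all, immediately violating the bound at $|S|=0$. Getting this balance right is precisely what pushes the growth from merely exponential (a purely hierarchical construction only reaches $n_d\approx 3^{d/2}$ under the same deletion analysis) up to tower-type, and verifying that the signature encoding transfers the inductive hypothesis cleanly---without introducing new non-connected neighborhoods beyond what $|S|$ can absorb---is the heart of the argument.
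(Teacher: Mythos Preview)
Your recursive plan has a structural gap that is not just a calibration issue. In the lifted layer each new left vertex $T$ is adjacent to $(r,T\cap L(r))$ for \emph{every} right vertex $r$ of $G_d$, so its degree there equals $m_d$, the number of right vertices of $G_d$. One step in, $G_2=K_{2,2}$ has $m_2=2$ right vertices each with $L(r)=\{1,2\}$, so the lifted layer of $G_4$ already carries $2\cdot 2^2=8$ right vertices; with $c$ universal connectors $m_4=8+c$, and the next lift gives every left vertex of $G_6$ degree at least $8+c>6$. Worse, the failure mode you flag (``restricting adjacency to the single full-agreement signature $\sigma=T$'') is exactly what happens at the first lift: since $L(r)=[n_2]$, the signature $T\cap L(r)$ is simply $T$, every lifted right vertex has degree one, and the lifted layer of $G_4$ contributes no connectivity. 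Deleting only the $c$ universal connectors already yields $2^{n_2}=4$ pairwise non-connected singleton neighborhoods, violating $\max\{1,|S|\}=c$ for any $c\le 3$. Finally, the promised ``reduction via the signature encoding to a deletion instance on $G_d$'' is never specified: there is no map from deleted lifted vertices $(r,\sigma)$ to deleted right vertices of $G_d$, nor from subsets $T$ to left vertices of $G_d$, that would let the inductive hypothesis apply.

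The paper's construction is not recursive on subsets. It builds a single depth-$k$ tree $\Lcal_k$ (with $k=d/2$) whose node at depth $i$ has arity $\tow_2(k-i)/\tow_2(k-i-1)$, so the leaf count is exactly $\tow_2(k)$; each leaf is joined to its $k$ ancestors, contributing $k$ to the left degree. The essential second gadget: at every internal node $v$ with children $v_1,\ldots,v_{a_i}$, overlay a complete binary tree $\Bcal_v$ on the $v_j$'s, and attach each internal node $w\in\Bcal_v$ of depth $\ell$ to the $\ell$-th leaf of each $\Lcal_k$-subtree below $w$. Because those subtrees have $b_{i+1}\ge\log a_i$ leaves each, this spreading ensures every leaf of $\Lcal_k$ gains at most one edge from each ancestor's $\Bcal_v$, hence at most $k$ extra edges total, and $G$ is $d$-left-bounded. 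The deletion bound then localizes each selected leaf $u$ to its lowest surviving ancestor $v_u$, groups by $p=$ parent of $v_u$, and uses $\Bcal_p$: any two surviving $v_u$'s under $p$ force removal of their common $\Bcal_p$-ancestor, so $|T_p|\le|S_p|$ for each $p$ and $|T|\le|S|$. The ``spread the binary tree across the abundant leaves below'' trick is precisely the idea that lets the connectivity gadget cost one edge per level instead of $m_d$, and it is what your proposal is missing.
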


The instance $G$ above serves as a counterexample to \Cref{as:non-adj_neigh} if $F(t)>1$ and $\lambda<n=\tow_2(d/2)$.
In addition, by ``open-boxing'' the construction, there is a right vertex $v^*$ in $G$ incident to all left vertices.
Hence $G$ is also a counterexample to \Cref{as:non-adj_neigh} if $F(t)>1$ and $\kappa<n=\tow_2(d/2)$, since $t\le\kappa<n$ enforces that $v^*$ must be removed.
On the other hand, setting $F(t)=O(1)$ in \Cref{cor:graph_elim_non-adj_neigh}, we know that \Cref{as:non-adj_neigh} holds if $\lambda$ and $\kappa$ are indeed a tower of $d$. 
Hence \Cref{lem:tightness_neigh} shows that \Cref{cor:graph_elim_non-adj_neigh} is surprisingly tight.

Recall that in our actual applications, \Cref{thm:locality_single_hamming_slice} needs to set $F(t)=\Omega(t)$ and \Cref{thm:mod_slice} needs to set $F(t)=2^{\Omega(t)}$. Hence the assumption that $F(t)>1$ is extremely weak.
Yet we still cannot hope for a bound without a tower-type blowup on $d$.

Now we prove \Cref{lem:tightness_neigh}. Its construction is similar to the one in \Cref{lem:tightness_vtx}.
\begin{proof}[Proof of \Cref{lem:tightness_neigh}]
Define $k=d/2$.
Let $\Lcal_k$ be a rooted tree of depth $k$, where the root has depth zero. Each internal node in $\Lcal_k$ of depth $i\le k-1$ has arity $a_i=\frac{\log^{(i)}(n)}{\log^{(i+1)}(n)}=\frac{\tow_2(k-i)}{\tow_2(k-i-1)}$.\footnote{$\log^{(i)}(\cdot)$ is the iterated logarithm of order $i$, where $\log^{(0)}(x)=x$ and $\log^{(i)}(x)=\log(\log^{(i-1)}(x))$ for $i\ge1$.}
By construction, there are $b_i=\tow_2(k-i)$ leaves in a sub-tree rooted from a node of $\Lcal_k$ at depth $i$.

\begin{figure}[ht]
\includegraphics[scale=0.5]{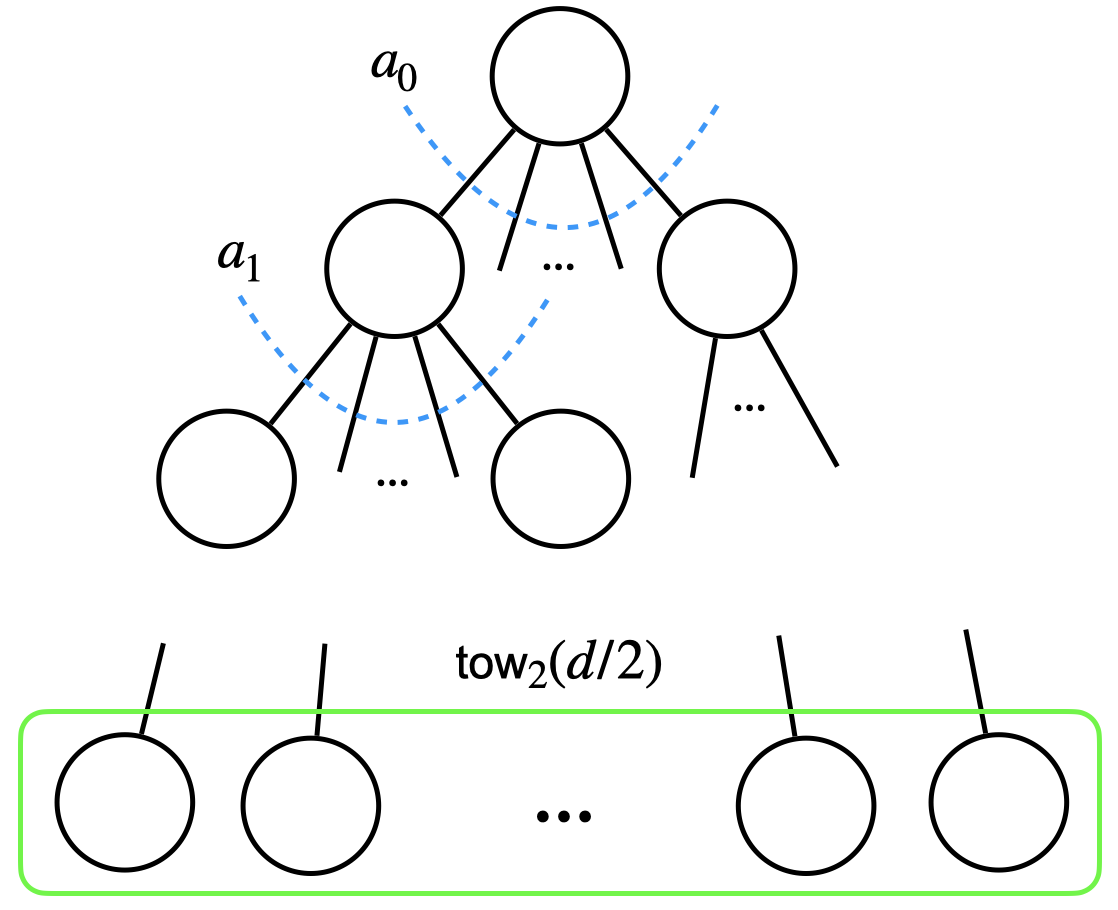}
\centering
\caption{The tree $\Lcal_k$. The vertices in the green box are identified as the left vertices of the bipartite graph $G$, while those outside the box correspond to the right vertices.}
\label{fig:Bv_2}
\end{figure}

Now similar to the proof of \Cref{lem:tightness_vtx}, we identify the $\tow_2(k)$ leaves of $\Lcal_k$ as the left vertices $[n]$, and put the internal nodes of $\Lcal_k$ as right vertices (see \Cref{fig:Bv_2}).
Then in the bipartite graph $G$, each internal node is connected with all the leaves below it in $\Lcal_k$.
It is clear that, at this point, $G$ is $k$-left-bounded as $\Lcal_k$ has depth $k$.

Suppose that we want to ensure that leaves $T\subseteq[n]$ have non-connected left \emph{neighborhoods}.
Then at least, we need $T$ to be a set of non-connected left \emph{vertices}.
Hence, analogous to the proof of \Cref{lem:tightness_vtx}, this implies the following claim.
\begin{claim}\label{clm:lem:tightness_neigh}
For any distinct leaves in $T$, their common ancestors in $\Lcal_k$ must be removed.
\end{claim}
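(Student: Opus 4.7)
The plan is to translate the non-connected condition through the specific structure of the bipartite graph built from $\Lcal_k$. By construction, each internal node $v$ of $\Lcal_k$ is a right vertex of $G$ adjacent to precisely the leaves in the subtree rooted at $v$. Equivalently, for any leaf $u$, the set $I_G(u)$ of right-vertex neighbors of $u$ is exactly the set of ancestors of $u$ along the root-to-$u$ path in $\Lcal_k$. Consequently, for two distinct leaves $u$ and $v$, the intersection $I_G(u)\cap I_G(v)$ equals precisely the set of common ancestors of $u$ and $v$ in $\Lcal_k$ (which is exactly the root-to-LCA path).

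After deleting the right vertices in $S$, I will write $G'$ for the resulting graph, so that $I_{G'}(u)=I_G(u)\setminus S$ for each leaf $u$. By the discussion immediately preceding the claim, if the left neighborhoods $\{N(u):u\in T\}$ are pairwise non-connected in $G'$, then in particular the left vertices of $T$ are pairwise non-connected, which unpacks to $I_{G'}(u)\cap I_{G'}(v)=\emptyset$ for every pair of distinct $u,v\in T$. Combining this with the identification of $I_G(u)\cap I_G(v)$ as the common-ancestor set then forces every common ancestor of $u$ and $v$ to belong to $S$, which is exactly the claim.

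The argument is essentially pure definition-chasing, so I do not expect any real obstacle here. The technical work of \Cref{lem:tightness_neigh} is pushed to the subsequent counting step, where \Cref{clm:lem:tightness_neigh} is used as the structural starting point: once each pair of surviving leaves is forced to have all of its common ancestors killed by $S$, one walks down $\Lcal_k$ from the root and uses the rapidly growing arities $a_i=\tow_2(k-i)/\tow_2(k-i-1)$ to charge each additional surviving leaf to at least one extra removed internal node. That charging — not the claim itself — is where the $\tow_2(d/2)$ blowup in the lemma genuinely comes from, and it is the step that requires care to set up.
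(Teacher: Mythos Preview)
Your proof of the claim is correct and essentially identical to the paper's (which simply notes that non-connected neighborhoods forces non-connected vertices, then invokes the tree structure as in \Cref{lem:tightness_vtx}). Your concluding paragraph, however, misdescribes the subsequent argument: the paper explicitly observes that the $\Lcal_k$ arities are \emph{too large} for a direct charging (``Unfortunately there are not many such common ancestors\ldots''), and the actual bound $|T_p|\le|S_p|$ is obtained via the auxiliary binary trees $\Bcal_v$ that are grafted onto the construction immediately after the claim, not by walking down $\Lcal_k$ and exploiting its arities.
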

Unfortunately there are not many such common ancestors, since the arities of the internal nodes of $\Lcal_k$ are extremely large.
To strengthen the connectivity among left neighborhoods, we now introduce more right vertices.

Fix an arbitrary internal node $v\in\Lcal_k$.
Let $0\le i\le k-1$ be its depth, and let $v_1,v_2,\ldots,v_{a_i}\in\Lcal_k$ be its children, where recall that $a_i=\frac{\tow_2(k-i)}{\tow_2(k-i-1)}$ is its arity.
In the current construction, we have only put $v_1,\ldots,v_{a_i}$ as right vertices on top of their \emph{respective} leaves. In other words, $v_1,\ldots,v_{a_i}$ support on disjoint sets of leaves.
To enhance the connectivity, we will now create more right vertices to link between these disjoint set of leaves.
For each $j\in[a_i]$, we list (in an arbitrary order) the leaves in the sub-tree rooted from $v_j$ as $u_{j,0},\ldots,u_{j,b_{i+1}-1}$, where recall that $b_{i+1}=\tow_2(k-i-1)$.
Then we will build a complete binary tree for $v_1,\ldots,v_{a_i}$, and use the $u_{j,\ell}$'s to spread out the degrees.
More precisely, let $\Bcal_v$ be a complete binary tree with $a_i$ leaves, where the $j$-th leaf is equipped with $u_{j,0},\ldots,u_{j,b_{i+1}-1}$.
Since $a_i=\frac{\tow_2(k-i)}{\tow_2(k-i-1)}$ and $0\le i\le k-1$, $\Bcal_v$ has depth $d_i=\log(a_i)\le b_{i+1}$ where $d_i\ge1$ is an integer and the root of $\Bcal_v$ root has depth zero.
For each internal node $w_v\in\Bcal_v$ of depth $0\le\ell\le d_i-1\le b_{i+1}-1$, we identify it as a new right vertex in $G$ and add an edge between $w_v$ and $u_{j,\ell}$ for each leaf $j$ below $w_v$ in $\Bcal_v$ (see \Cref{fig:Bv_1}).
Observe that different $w_v$ in the same depth will use different index $j$ of the $u$'s, and different $w_v$ of different depths will use different index $\ell$ of the $u$'s.
Hence these new right vertices are incident to disjoint sets of input vertices.

\begin{figure}[ht]
\begin{tabular}{ll}
\includegraphics[scale=0.34]{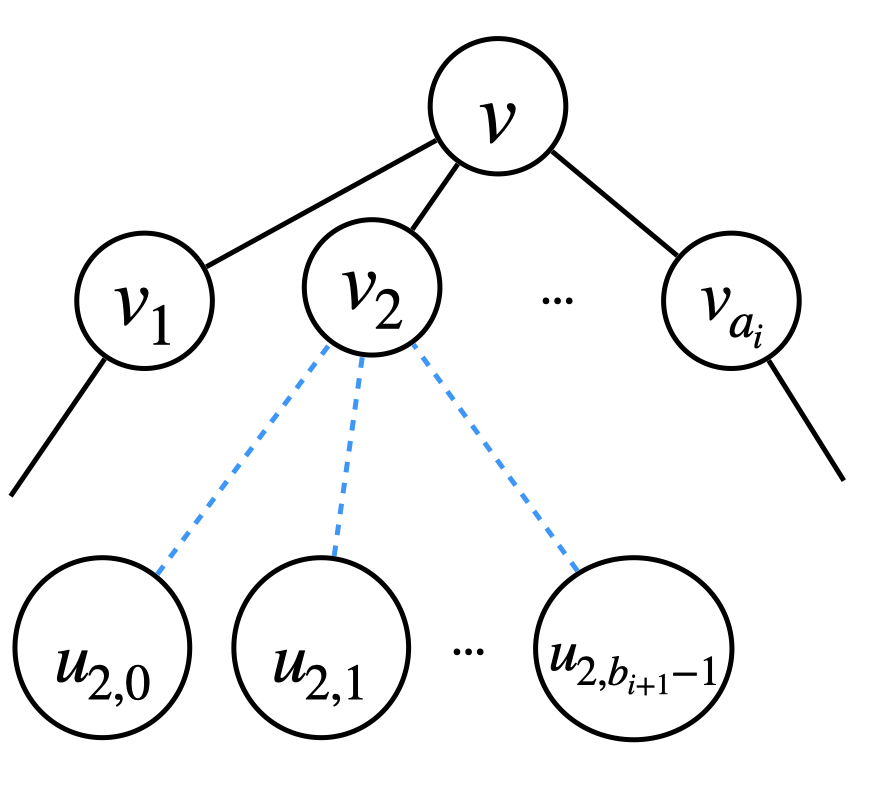}
&
\includegraphics[scale=0.34]{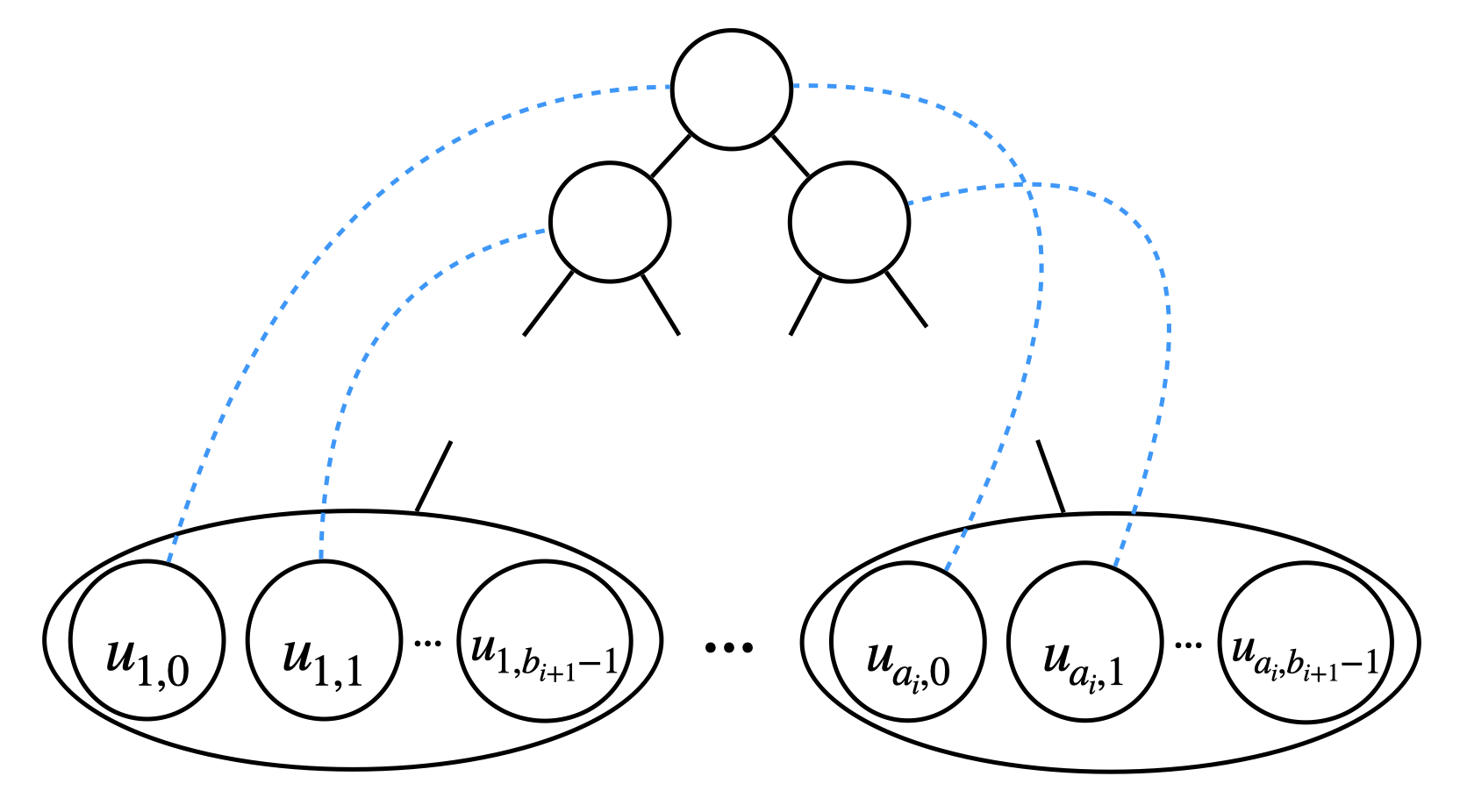}
\end{tabular}
\caption{Left: The sub-tree rooted at a node $v \in \Lcal_k$ of depth $i$. The blue dashed edges represent that the $u_{2,j}$'s are leaves in the sub-tree rooted at $v_2$. Right: The corresponding $\Bcal_v$. The blue dashed edges correspond to edges in the bipartite graph $G$.}
\label{fig:Bv_1}
\end{figure}

The construction of $G$ is completed by performing the above procedure for each internal node $v\in\Lcal_k$.
During the procedure for $v\in\Lcal_k$, only the degree of leaves (left vertices of $G$) below $v$ gets increased by at most one.
Therefore each left vertex of $G$ has total degree at most $2k=d$ as desired.

Recall that we aim to ensure that leaves $T\subseteq[n]$ have non-connected left neighborhoods.
Assume that we removed right vertices in $S$ to get this.
If $|T|\le1$, then clearly one does not need to remove any right vertex, i.e., $S=\emptyset$ suffices.
Now we assume $|T|\ge2$.
For each $u\in T$, we define $v_u\in\Lcal_k$ to be the ancestor of $u$ that is not removed in $S$ and has the smallest depth.
Note that if all ancestors of $u$ are removed, then $v_u=u$ itself.
Let $p_u$ be the parent node of $v_u\in\Lcal_k$. Since $|T|\ge2$ and by \Cref{clm:lem:tightness_neigh}, we at least need to remove the root of $\Lcal_k$ and thus $p_u$ always exists.

Now for a fixed internal node $p\in\Lcal_k$, we analyze all $u\in T$ with $p_u=p$.
Let $T_p=\cbra{u\in T\colon p_u=p}$.
Recall that we also constructed right vertices based on $\Bcal_p$ to enhance the connectivity of the first-step construction based on $\Lcal_k$.
Define $S_p=\cbra{v\in S\colon v=p\lor v\in\Bcal_p}$. Then it suffices to show $|T_p|\le|S_p|$, since
$$
|S|=\sum_{\text{internal node }p\in\Lcal_k}|S_p|\ge\sum_{\text{internal node }p\in\Lcal_k}|T_p|=|T|.
$$
Without loss of generality, assume $|T_p|\ge1$.
Firstly, $p\in S_p$ by definition.
Now we look at the right vertices in $\Bcal_p$.
By \Cref{clm:lem:tightness_neigh}, different $u$ have different $v_u$.
Therefore, $v_u,v_{u'}$ for distinct $u,u'\in T_p$ correspond to distinct leaves in $\Bcal_p$.
Let $w\in\Bcal_p$ be a common ancestor of $v_u,v_{u'}$.
Since both $v_u,v_{u'}$ are not removed, $w$ must be removed; otherwise the left neighborhoods of $u$ and $u'$ are connected.

Indeed, the sub-trees (in $\Lcal_k$) rooted at $v_u$ and $v_{u'}$ must have leaves $c_{u}$ and $c_{u'}$ with edges $(c_u, w)$ and $(c_{u'}, w)$, respectively. Then $u$ connects with $u'$ via $v_u, c_u, w, c_{u'}, v_{u'}$. 
This means that any common ancestors of $v_u,v_{u'}$ in $\Bcal_p$ for pairs of distinct $u,u'\in T_p$ must be contained in $S_p$.
Since $|T_p|\ge1$ and $\Bcal_p$ is a binary tree, we have at least $|T_p|-1$ such ancestors.
In summary, $|S_p|\ge1+(|T_p|-1)=|T_p|$ as claimed.
\end{proof}
\section[Local Limit Theorems on the Additive Group Modulo q]{Local Limit Theorems on the Additive Group Modulo $q$}\label{app:mod_llt}

In this section, we prove \Cref{lem:mod_llt}, which is a local limit result for $\Zbb/q\Zbb$.
One can also view it as the variant of \Cref{fct:ushakov} in the cyclic groups with sharp estimates.
The basic strategy is to use Fourier analysis to bound the sum of random variables, where the non-constancy assumption is used to show that certain coefficients are small.

We start by proving a more general statement.
\begin{theorem}\label{thm:mod_llt}
Let $q\ge2$ be an integer, and let $X_1,\ldots,X_n$ be independent random variables in $\Zbb$.
For each $j\in[n]$ and $r\ge1$, define $p_{r,j}=\max_{x\in\Zbb}\Pr\sbra{X_j\equiv x\Mod r}$.
Then for any $\Lambda\subseteq\Zbb/q\Zbb$, we have
$$
\Pr\sbra{\sum_{j\in[n]}X_j\mod q\in \Lambda}\le\frac1q\sum_{s\in[q]}\pbra{\sum_{a\in T_s}\abs{\sum_{c\in\Lambda}\omega_q^{-a\cdot c}}}\cdot\exp\cbra{-\frac{2\cdot\sum_{j\in[n]}(1-p_{q/s,j})}{q^2}},
$$
where $\omega_q=e^{2\pi i/q}$ is the primitive $q$-th root of unit, $\gcd(\cdot,\cdot)$ is the greatest common divider function, and $T_s=\cbra{a\in\Zbb/q\Zbb\colon\gcd(a,q)=s}$.
\end{theorem}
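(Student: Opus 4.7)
The plan is to expand the probability via the characters of $\Zbb/q\Zbb$, factor the resulting expectations by independence, and then estimate each character expectation using the non-constancy hypothesis on $p_{r,j}$.

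First I would use the Fourier identity $\indicator[y\equiv c\Mod q]=\frac{1}{q}\sum_{a\in\Zbb/q\Zbb}\omega_q^{a(y-c)}$ to write
\[
\Pr\sbra{\sum_{j\in[n]}X_j\bmod q\in\Lambda}=\frac1q\sum_{a\in\Zbb/q\Zbb}\pbra{\sum_{c\in\Lambda}\omega_q^{-ac}}\prod_{j\in[n]}\E\sbra{\omega_q^{aX_j}},
\]
where independence of the $X_j$'s factors the expectation of the product. Taking absolute values and grouping the outer sum by $s=\gcd(a,q)\in[q]$ (so that the $T_s$'s partition $\Zbb/q\Zbb$) bounds the probability by
\[
\frac1q\sum_{s\in[q]}\pbra{\sum_{a\in T_s}\abs{\sum_{c\in\Lambda}\omega_q^{-ac}}}\cdot\max_{a\in T_s}\prod_{j\in[n]}\abs{\E\sbra{\omega_q^{aX_j}}}.
\]

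Second I would establish the key per-term estimate: for $a\in T_s$ and $r=q/s$,
\[
\abs{\E\sbra{\omega_q^{aX_j}}}^2\le1-\pbra{1-\cos(2\pi/r)}\cdot(1-p_{r,j}).
\]
Writing $a=sb$ with $\gcd(b,r)=1$ (by definition of $T_s$), one has $\omega_q^a=\omega_r^b$, so $\abs{\E[\omega_q^{aX_j}]}^2=\E[\omega_r^{b(X_j-X_j')}]=\sum_{k\in\Zbb/r\Zbb}\tilde q_k\cos(2\pi bk/r)$ where $X_j'$ is an independent copy of $X_j$ and $\tilde q_k=\Pr[X_j-X_j'\equiv k\Mod r]$. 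Since $\tilde q_0=\sum_x\Pr[X_j\equiv x\Mod r]^2\le p_{r,j}$ and $\cos(2\pi bk/r)\le\cos(2\pi/r)$ for every $k\not\equiv0\Mod r$ (using that multiplication by $b$ permutes the nonzero residues mod $r$), splitting the sum at $k=0$ yields the displayed bound.

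Third I would finish by elementary calculus: Jordan's inequality gives $1-\cos(2\pi/r)\ge 8/r^2\ge 8/q^2$ for $r\ge 2$, and $\sqrt{1-x}\le e^{-x/2}$ yields $\abs{\E[\omega_q^{aX_j}]}\le\exp\cbra{-4(1-p_{r,j})/q^2}$. Taking the product over $j$ produces a factor $\exp\cbra{-4\sum_j(1-p_{q/s,j})/q^2}$, which is stronger than (and hence implies) the claimed factor $\exp\cbra{-2\sum_j(1-p_{q/s,j})/q^2}$. Plugging back into the character-sum decomposition completes the proof.

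The main obstacle is step two, and specifically getting the correct denominator. A naive argument would bound $\abs{\E[\omega_q^{aX_j}]}$ by a quantity involving $p_{q,j}$, but since $\omega_q^a$ for $a\in T_s$ is actually a primitive $r$-th root of unity with $r=q/s$, it only sees $X_j\Mod r$; the correct non-constancy parameter is therefore $p_{r,j}$. Making this reduction precise via the factorization $a=sb$ with $\gcd(b,r)=1$ is the delicate point, because otherwise the bound obtained $T_s$-by-$T_s$ would be too weak when $s>1$. The rest is a routine Fourier-analytic calculation.
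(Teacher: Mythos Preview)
Your proposal is correct. The first step (Fourier expansion, grouping by $s=\gcd(a,q)$) matches the paper exactly, but your second step---the per-term estimate $\abs{\E[\omega_q^{aX_j}]}^2\le 1-(1-\cos(2\pi/r))(1-p_{r,j})$---is proved differently. You use the ``squaring trick'': introduce an independent copy $X_j'$, write $\abs{\E[\omega_r^{bX_j}]}^2=\E[\omega_r^{b(X_j-X_j')}]$, expand as $\sum_k\tilde q_k\cos(2\pi bk/r)$, and bound the $k\neq0$ terms by $\cos(2\pi/r)$. The paper instead bounds $\Re(z\cdot\E[\omega_q^{aX_j}])$ for arbitrary unit $z$, rewrites this as $1-2\E[\sin^2(\pi(aX_j+\theta)/q)]$, and lower-bounds the $\sin^2$ expectation via the event $\{(aX_j+\theta)\bmod q\in(-1/2,1/2]\}$, arriving at $\abs{\E[\omega_q^{aX_j}]}\le 1-2(1-p_{q/s,j})/q^2$ directly (no square root needed). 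Your route is arguably cleaner and, as you note, yields the sharper constant $4$ rather than $2$ in the exponent; the paper's route avoids the independent-copy device and works linearly in the character. Both correctly isolate that $\omega_q^a$ is a primitive $(q/s)$-th root of unity, so the right non-constancy parameter is $p_{q/s,j}$.
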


As a consequence, we finish the proof of \Cref{lem:mod_llt}.

\begin{lemma*}[\Cref{lem:mod_llt} Restated]
Under the same conditions as \Cref{thm:mod_llt}, assume further $\sum_{j\in[n]}(1-p_{r,j})\ge L$ holds for all $r\ge3$ dividing $q$.
Then for $\Lambda\subseteq\Zbb/q\Zbb$, we have
$$
\Pr\sbra{\sum_{j\in[n]}X_j\mod q\in\Lambda}\le q\cdot e^{-2L/q^2}+\begin{cases}
|\Lambda|/q & q\text{ is odd,}\\
2\cdot\max\cbra{|\Lambda_\textsf{even}|,|\Lambda_\textsf{odd}|}/q & q\text{ is even,}
\end{cases}
$$
where $\Lambda_\textsf{even}=\cbra{\text{even numbers in }\Lambda}$ and $\Lambda_\textsf{odd}=\cbra{\text{odd numbers in }\Lambda}$.
\end{lemma*}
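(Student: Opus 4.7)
The plan is to deduce the lemma directly from \Cref{thm:mod_llt} by carefully partitioning the outer sum over $s \in [q]$ according to the value of the ``effective modulus'' $q/s$. Note that $T_s$ is non-empty only when $s$ divides $q$, so we effectively sum over divisors; for each such $s$, the value $q/s$ is an integer, and we split based on whether $q/s \in \{1, 2\}$ (which contribute correction terms) or $q/s \ge 3$ (where the hypothesis on $L$ gives exponential decay).

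For $q/s = 1$ (i.e., $s = q$), we have $T_q = \{0\}$, so $\sum_{c \in \Lambda} \omega_q^{-0 \cdot c} = |\Lambda|$. Also $p_{1,j} = 1$ for all $j$, making the exponential factor $1$. The contribution of this term is therefore exactly $|\Lambda|/q$. For $q/s = 2$ (which requires $q$ even, with $s = q/2$), the set $T_{q/2}$ consists of only $q/2$, and $\omega_q^{-(q/2) c} = (-1)^c$, giving $|\sum_{c \in \Lambda} (-1)^c| = \bigl||\Lambda_\textsf{even}| - |\Lambda_\textsf{odd}|\bigr|$; bounding the exponential factor by $1$, this contributes at most $\bigl||\Lambda_\textsf{even}| - |\Lambda_\textsf{odd}|\bigr|/q$. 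When combined with the $|\Lambda|/q$ contribution, the identity $|\Lambda| = |\Lambda_\textsf{even}| + |\Lambda_\textsf{odd}|$ together with $x + |x - y| \le 2 \max(x, y)$ (applied to nonnegative $x, y$) yields the $2 \max(|\Lambda_\textsf{even}|, |\Lambda_\textsf{odd}|)/q$ term for even $q$, and just $|\Lambda|/q$ for odd $q$.

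For the remaining divisors $s$ with $q/s \ge 3$, the hypothesis gives $\sum_{j \in [n]} (1 - p_{q/s, j}) \ge L$, so each exponential factor is at most $e^{-2L/q^2}$. I will then use the crude estimates $\bigl|\sum_{c \in \Lambda} \omega_q^{-ac}\bigr| \le |\Lambda| \le q$ and $\sum_{s \in [q]} |T_s| = q$ (since the $T_s$ partition $\Zbb/q\Zbb$), yielding a total contribution of at most $\tfrac{1}{q} \cdot q \cdot q \cdot e^{-2L/q^2} = q \cdot e^{-2L/q^2}$.

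Summing the three contributions recovers the lemma's stated bound in both parity cases. There is no real obstacle: the only technical point is correctly identifying $T_{q/2} = \{q/2\}$ for even $q$ and verifying that the combinatorial identity $|\Lambda|/q + \bigl||\Lambda_\textsf{even}| - |\Lambda_\textsf{odd}|\bigr|/q = 2\max(|\Lambda_\textsf{even}|, |\Lambda_\textsf{odd}|)/q$ yields the correct coefficient. Tightness of the Fourier-analytic bound in \Cref{thm:mod_llt} itself is what makes this clean splitting work, so the real content is in the proof of that theorem rather than this corollary.
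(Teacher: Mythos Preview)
Your proposal is correct and follows essentially the same approach as the paper: both split the sum in \Cref{thm:mod_llt} according to whether $q/s\in\{1,2\}$ or $q/s\ge3$, handle the first two cases exactly (using $T_q=\{0\}$ and $T_{q/2}=\{q/2\}$ together with the identity $|\Lambda|+\bigl||\Lambda_\textsf{even}|-|\Lambda_\textsf{odd}|\bigr|=2\max\{|\Lambda_\textsf{even}|,|\Lambda_\textsf{odd}|\}$), and bound the remaining terms crudely via $\sum_s|T_s|\cdot|\Lambda|\le q^2$ and the $L$-hypothesis. The only cosmetic slip is that your intermediate inequality ``$x+|x-y|\le 2\max(x,y)$'' should be the identity $(x+y)+|x-y|=2\max(x,y)$, which is what you actually use in the final line.
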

\begin{proof}
We follow the notation convention in \Cref{thm:mod_llt} and define for each $s\in[q]$
$$
A_s=\pbra{\sum_{a\in T_s}\abs{\sum_{c\in\Lambda}\omega_q^{-a\cdot c}}}\cdot\exp\cbra{-\frac{2\cdot\sum_{j\in[n]}(1-p_{q/s,j})}{q^2}}.
$$
For $s=q$, we clearly have $A_s=|\Lambda|$.
If $q$ is even and $s=q/2$, we have $T_s=\cbra{q/2}$ and 
\begin{equation*}
A_s\le\abs{\sum_{c\in\Lambda}\omega_q^{-c\cdot q/2}}=\abs{|\Lambda_\textsf{even}|-|\Lambda_\textsf{odd}|}.
\tag{since $\omega_q^{q/2}=-1$}
\end{equation*}
For any $s\in[q]$ dividing $q$ and not equal to $q$ or $q/2$, we have $q/s\ge3$ and thus $A_s\le|T_s|\cdot|\Lambda|\cdot e^{-2L/q^2}$.
Then the desired bound follows from \Cref{thm:mod_llt} by observing that $|\Lambda|+\abs{|\Lambda_\textsf{even}|-|\Lambda_\textsf{odd}|}=2\cdot\max\cbra{|\Lambda_\textsf{even}|,|\Lambda_\textsf{odd}|}$ and $\sum_{s\in[q]}|T_s|\cdot|\Lambda|=q\cdot|\Lambda|\le q^2$.
\end{proof}

\begin{remark}
We remark that \Cref{lem:mod_llt} is roughly tight:
\begin{itemize}
\item If $q$ is odd, we let each $X_j$ be an unbiased coin.
Then $\sum_jX_j\bmod q$ converges to the uniform distribution over $\Zbb/q\Zbb$ as $n\to+\infty$, which means the LHS converges to $|\Lambda|/q$.
\item If $q$ is even, we let each $X_j$ be uniform in $\cbra{0,2}$.
Then $\sum_jX_j\bmod q$ converges to the uniform distribution over even numbers in $\Zbb/q\Zbb$, which means the LHS converges to $2\cdot|\Lambda_\textsf{even}|/q$. By changing $X_1$ to be uniform in $\cbra{1,3}$, we get the other bound $2\cdot|\Lambda_\textsf{odd}|/q$.
\end{itemize}
\end{remark}

Now we prove \Cref{thm:mod_llt}.
\begin{proof}[Proof of \Cref{thm:mod_llt}]
We first note the following Fourier estimate, which will be proved later.
\begin{claim}\label{clm:root_power}
For each $a\in\Zbb/q\Zbb$, let $s=\gcd(a,q)$. We have
$$
\abs{\E_{X_j}\sbra{\omega_q^{a\cdot X_j}}}\le1-\frac{2\cdot\pbra{1-p_{q/s,j}}}{q^2}
\quad\text{for each }j\in[n].
$$
\end{claim}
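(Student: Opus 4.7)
The plan is to prove Claim~\ref{clm:root_power} by a short Fourier computation, after first reducing to the case of a primitive root of unity.

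First, I would exploit the structure of the exponent $a \in \Zbb/q\Zbb$ with $s = \gcd(a,q)$. Write $a = s \cdot a'$ where $\gcd(a', q/s) = 1$, and set $q' = q/s$. Then
$$
\omega_q^{a \cdot X_j} = e^{2\pi i s a' X_j / q} = e^{2\pi i a' X_j / q'} = \omega_{q'}^{a' X_j}.
$$
Since $a'$ is a unit in $\Zbb/q'\Zbb$, the map $x \mapsto a' x \bmod q'$ is a bijection on $\Zbb/q'\Zbb$, so the random variable $Y_j \coloneqq a' X_j \bmod q'$ has the same maximum probability mass as $X_j \bmod q'$, namely $p_{q/s, j}$. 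Hence it suffices to bound $|S|$ where $S = \E[\omega_{q'}^{Y_j}] = \sum_{y \in \Zbb/q'\Zbb} P(y)\, \omega_{q'}^y$ and $P(y) \coloneqq \Pr[Y_j = y]$ satisfies $\max_y P(y) \le p_{q/s, j}$.

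Next, I would compute $|S|^2$ by pairing up terms: $|S|^2 = S\bar S = \sum_{y,y'} P(y)P(y')\,\omega_{q'}^{y-y'}$, and taking real parts (the imaginary parts cancel under the involution $(y,y') \leftrightarrow (y',y)$) gives $|S|^2 = \sum_{y,y'} P(y)P(y') \cos\!\pbra{\tfrac{2\pi(y-y')}{q'}}$. Using the identity $1 - \cos(\theta) = 2\sin^2(\theta/2)$, this rearranges to
$$
1 - |S|^2 = 2 \sum_{y,y'} P(y)P(y') \sin^2\!\pbra{\frac{\pi(y-y')}{q'}}.
$$
Now the elementary inequality $\sin(x) \ge 2x/\pi$ on $[0,\pi/2]$ (combined with $\sin(\pi - x) = \sin(x)$) implies $\sin^2(\pi k/q') \ge 4/(q')^2$ whenever $k \not\equiv 0 \pmod{q'}$. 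Therefore
$$
1 - |S|^2 \ge \frac{8}{(q')^2} \pbra{1 - \sum_y P(y)^2} \ge \frac{8}{(q')^2} (1 - p_{q/s, j}),
$$
where the last step uses $\sum_y P(y)^2 \le p_{q/s, j} \cdot \sum_y P(y) = p_{q/s, j}$.

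Finally, since $s \ge 1$ we have $(q')^2 = q^2/s^2 \le q^2$, and applying the crude bound $\sqrt{1 - x} \le 1 - x/2$ for $x \in [0,1]$ gives
$$
|S| \le 1 - \frac{4(1 - p_{q/s, j})}{q^2} \le 1 - \frac{2(1 - p_{q/s, j})}{q^2},
$$
as required (in fact one gets a factor $4$ instead of $2$). The degenerate case $s = q$ (i.e.\ $a \equiv 0$) is trivial since both sides equal $1$. No real obstacle is anticipated; the key non-trivial ingredient is just the $\sin^2$ lower bound, which is standard.
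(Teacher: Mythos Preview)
Your argument is correct and in fact yields the slightly sharper constant $4$ in place of $2$. It does, however, proceed along a different route from the paper. The paper does not reduce to a primitive root or square the modulus; instead it writes $|\E[\omega_q^{aX_j}]|=\sup_\theta\Re\bigl(\omega_q^\theta\,\E[\omega_q^{aX_j}]\bigr)=1-2\,\E\bigl[\sin^2(\pi(aX_j+\theta)/q)\bigr]$ and lower bounds the $\sin^2$ term pointwise on the event $\Ecal^c$ where $(aX_j+\theta)\bmod q\notin(-1/2,1/2]$, observing that $\Ecal$ selects a single residue class of $X_j$ modulo $q/s$ and hence has probability at most $p_{q/s,j}$. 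Your approach is more symmetric: by first passing to $q'=q/s$ and a primitive root, you turn the problem into bounding $1-\sum_y P(y)^2$, the collision deficiency, which you then control via $\max_y P(y)$. Both arguments rest on the same elementary $\sin^2(\pi k/q)\ge 4/q^2$ estimate; yours packages it through $|S|^2$ and the identity $1-\cos=2\sin^2(\cdot/2)$, while the paper packages it through an optimal phase shift and an event analysis. Your route gains a factor of $2$ and is arguably cleaner, at the cost of the preliminary (easy) reduction to the primitive case.
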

Assuming \Cref{clm:root_power}, we now complete the proof of \Cref{thm:mod_llt}.
Observe that
\begin{align*}
\Pr\sbra{\sum_{j\in[n]}X_j\mod q\in \Lambda}
&=\E\sbra{\sum_{c\in \Lambda}\frac1q\sum_{a\in\Zbb/q\Zbb}\omega_q^{a\cdot(\sum_jX_j-c)}}\\
&=\frac1q\sum_{a\in\Zbb/q\Zbb}\pbra{\prod_{j\in[n]}\E_{X_j}\sbra{\omega_q^{a\cdot X_j}}}\pbra{\sum_{c\in \Lambda}\omega_q^{-a\cdot c}}\\
&\le\frac1q\sum_{a\in\Zbb/q\Zbb}\abs{\prod_{j\in[n]}\E_{X_j}\sbra{\omega_q^{a\cdot X_j}}}\cdot\abs{\sum_{c\in \Lambda}\omega_q^{-a\cdot c}}.
\end{align*}
Now for each $a\in\Zbb/q\Zbb$, by \Cref{clm:root_power} we have
$$
\abs{\prod_{j\in[n]}\E_{X_j}\sbra{\omega_q^{a\cdot X_j}}}
\le\prod_{j\in[n]}\pbra{1-\frac{2\cdot\pbra{1-p_{q/\gcd(a,q),j}}}{q^2}}
\le\exp\cbra{-\frac{2\cdot\sum_{j\in[n]}(1-p_{q/\gcd(a,q),j})}{q^2}},
$$
which implies
\begin{equation*}
\Pr\sbra{\sum_{j\in[n]}X_j\mod q\in \Lambda}
\le\frac1q\sum_{s\in[q]}\pbra{\sum_{a\in T_s}\abs{\sum_{c\in\Lambda}\omega_q^{-a\cdot c}}}\cdot\exp\cbra{-\frac{2\cdot\sum_{j\in[n]}(1-p_{q/s,j})}{q^2}}.
\tag*{\qedhere}
\end{equation*}
\end{proof}

Finally we prove \Cref{clm:root_power}.
\begin{proof}[Proof of \Cref{clm:root_power}]
Let $z=\omega_q^\theta$ where $\theta\in\Rbb$ is an arbitrary number.
To prove \Cref{clm:root_power}, it suffices to show 
\begin{equation}\label{eq:clm:root_power_1}
\Re\pbra{z\cdot\E\sbra{\omega_q^{a\cdot X_j}}}\le1-\frac{2\cdot\pbra{1-p_{q/s,j}}}{q^2}
\quad\text{for all $z$,}
\end{equation}
where $\Re(\cdot)$ is the real part of a complex number.

Observe that 
\begin{equation}\label{eq:clm:root_power_2}
\Re\pbra{z\cdot\E\sbra{\omega_q^{a\cdot X_j}}}
=\E\sbra{\cos\pbra{\frac{2\pi(a\cdot X_j+\theta)}q}}
=1-2\cdot\E\sbra{\sin^2\pbra{\frac{\pi(a\cdot X_j+\theta)}q}}.
\end{equation}
Now it suffices to lower bound $\E\sbra{\sin^2\pbra{\frac{\pi(a\cdot X_j+\theta)}q}}$.
Let $\Ecal$ be the event that $(a\cdot X_j+\theta)\bmod q\in(-1/2,1/2]$, where the $\bmod$ here is over the reals.
If $\Ecal$ does not happen, we have
\begin{equation}\label{eq:clm:root_power_3}
\sin^2\pbra{\frac{\pi\pbra{a\cdot X_j+\theta}}q}\ge\sin^2\pbra{\frac\pi{2q}}\ge\frac1{q^2},
\end{equation}
where we used the fact that $\sin(\pi x)\ge2x$ for $0\le x\le1/2$.
Hence 
$$
\E\sbra{\sin^2\pbra{\frac{\pi(a\cdot X_j+\theta)}q}}\ge\frac1{q^2}\cdot\Pr\sbra{\neg\Ecal}.
$$
On the other hand, since both $a\cdot X_j$ and $q$ are integers, there is a unique value $b\in\Zbb/q\Zbb$ such that $\Ecal$ holds if and only if $a\cdot X_j\equiv b\Mod q$.
Now, if $s=\gcd(a,q)$ does not divide $b$, this can never happen and hence $\Pr[\Ecal]=0$.
Otherwise, setting $a'=a/s$ and $b'=b/s$, we have
$$
\Pr\sbra{\Ecal}
=\Pr\sbra{a'\cdot X_j\equiv b'\Mod{q/s}}
=\Pr\sbra{X_j\equiv (a')^{-1}\cdot b'\Mod{q/s}}
\le p_{q/s,j},
$$
where $(a')^{-1}$ is the inverse of $a'$ modulo $q/s$ and the last inequality uses the assumption in \Cref{thm:mod_llt}.
Therefore \Cref{eq:clm:root_power_1} follows by combining the above bound with \Cref{eq:clm:root_power_2} and \Cref{eq:clm:root_power_3}.
\end{proof}
\section[Missing Proofs in Section 4]{Missing Proofs in \Cref{sec:useful_lemmas}}\label{app:missing_proof_sec:useful_lemmas}

Here, we put omitted proofs from \Cref{sec:useful_lemmas}.

\subsection*{Proof of \Cref{fct:mult_apx}}\label{app:proof_of_fct:mult_apx}

\begin{proof}[Proof of \Cref{fct:mult_apx}]
Fix an event $\Ecal'$ that attains $\Qcal'(\Ecal')-\Pcal'(\Ecal')=\tvdist{\Pcal'-\Qcal'}$. Then we have
\begin{align*}
\Qcal'(\Ecal')-\Pcal'(\Ecal')
&=\frac{\Qcal(\Ecal\land\Ecal')}{\Qcal(\Ecal)}-\frac{\Pcal(\Ecal\land\Ecal')}{\Pcal(\Ecal)}
=\frac{\Qcal(\Ecal\land\Ecal')}{\Qcal(\Ecal)}-\frac{\Pcal(\Ecal\land\Ecal')}{\Qcal(\Ecal)}+\frac{\Pcal(\Ecal\land\Ecal')\cdot(\Pcal(\Ecal)-\Qcal(\Ecal))}{\Qcal(\Ecal)\cdot\Pcal(\Ecal)}\\
&\le\frac{\Qcal(\Ecal\land\Ecal')}{\Qcal(\Ecal)}-\frac{\Pcal(\Ecal\land\Ecal')}{\Qcal(\Ecal)}+\frac{\eps\cdot\Pcal(\Ecal\land\Ecal')}{\Qcal(\Ecal)\cdot\Pcal(\Ecal)}
\tag{since $\Pcal(\Ecal)-\Qcal(\Ecal)\le\tvdist{\Pcal-\Qcal}\le\eps$}\\
&\le\frac{\Qcal(\Ecal\land\Ecal')}{\Qcal(\Ecal)}-\frac{\Pcal(\Ecal\land\Ecal')}{\Qcal(\Ecal)}+\frac\eps{\Qcal(\Ecal)}
\tag{since $\Pcal(\Ecal\land\Ecal')\le\Pcal(\Ecal)$}\\
&\le\frac\eps{\Qcal(\Ecal)}+\frac\eps{\Qcal(\Ecal)}
\tag{since $\Qcal(\Ecal\land\Ecal')-\Pcal(\Ecal\land\Ecal')\le\tvdist{\Pcal-\Qcal}\le\eps$}\\
&=\frac{2\eps}{\Qcal(\Ecal)}.
\end{align*}
Applying the data processing inequality concludes the proof.
\end{proof}

\subsection*{Proof of \Cref{clm:tvdist_gamma_biased_shift}}\label{app:proof_of_clm:tvdist_gamma_biased_shift}

\begin{proof}[Proof of \Cref{clm:tvdist_gamma_biased_shift}]
Let $\omega_q=e^{2\pi i/q}$ be the primitive $q$-th root of unit.
We consider the following quantity
$$
Q=\abs{\E_{X\sim\Dcal_0}\sbra{\omega_q^X}-\E_{X\sim\Dcal_1}\sbra{\omega_q^X}}.
$$
On the one hand, we have
\begin{equation}\label{eq:clm:tvdist_gamma_biased_shift_1}
Q\le\sum_{c\in\Zbb/q\Zbb}\abs{\omega_q^c\cdot\pbra{\Dcal_0(c)-\Dcal_1(c)}}=\sum_{c\in\Zbb/q\Zbb}\abs{\Dcal_0(c)-\Dcal_1(c)}=2\cdot\tvdist{\Dcal_0-\Dcal_1}.
\end{equation}
On the other hand, we have
\begin{align}
Q
&=\abs{\pbra{1-\gamma+\gamma\cdot\omega_q}^{t-1}-\omega_q\cdot\pbra{1-\gamma+\gamma\cdot\omega_q}^{t-1}}
\tag{by the definition of $\Dcal_0$ and $\Dcal_1$}\\
&=\abs{1-\omega_q}\cdot\abs{1-\gamma+\gamma\cdot\omega_q}^{t-1}.
\label{eq:clm:tvdist_gamma_biased_shift_2}
\end{align}

Let $r=\sin^2\pbra{\frac\pi q}$.
Then
$$
\abs{1-\omega_q}=\sqrt{\pbra{1-\cos\pbra{\frac{2\pi}q}}^2+\sin^2\pbra{\frac{2\pi}q}}
=2\cdot\abs{\sin\pbra{\frac\pi q}}
=2\sqrt r
$$
and
\begin{align*}
\abs{1-\gamma+\gamma\cdot\omega_q}
&=\sqrt{\pbra{1-\gamma+\gamma\cdot\cos\pbra{\frac{2\pi}q}}^2+\gamma^2\cdot\sin^2\pbra{\frac{2\pi}q}}
\notag\\
&=\sqrt{1-4\gamma(1-\gamma)\cdot\sin^2\pbra{\frac\pi q}}
=\sqrt{1-4\gamma(1-\gamma)r}.
\end{align*}
Combining these with \Cref{eq:clm:tvdist_gamma_biased_shift_1} and \Cref{eq:clm:tvdist_gamma_biased_shift_2}, we have
\begin{equation}\label{eq:clm:tvdist_gamma_biased_shift_3}
\tvdist{\Dcal_0-\Dcal_1}
\ge\sqrt{r\cdot\pbra{1-4\gamma(1-\gamma)r}^{t-1}}~\reflectbox{$\coloneqq$}~\sqrt{F}.
\end{equation}

If $q=3$, then $r=\sin^2(\pi/q)=3/4$. 
Hence
$$
F=\frac34\cdot\pbra{1-3\gamma(1-\gamma)}^{t-1}\ge\frac34\cdot2^{-8\gamma(1-\gamma)(t-1)}\ge\frac34\cdot2^{-8\gamma(t-1)}\ge\frac49\cdot2^{-100\gamma(t-1)/9}
$$
where we used the fact that $3\gamma(1-\gamma)\le3/4$ and $(1-x)^{1/x}\ge2^{-8/3}$ for $x\le3/4$.
Otherwise $q\ge4$. 
We use a different inequality that $2x\le\sin(\pi x)\le\pi x$ for $0\le x\le1/2$, and obtain $4/q^2\le r\le10/q^2$.
Hence
$$
F\ge\frac4{q^2}\cdot\pbra{1-\frac{40\gamma(1-\gamma)}{q^2}}^{t-1}
\ge\frac4{q^2}\cdot2^{-100\gamma(1-\gamma)(t-1)/q^2}
\ge\frac4{q^2}\cdot2^{-100\gamma(t-1)/q^2},
$$
where we used the fact that $40\gamma(1-\gamma)/q^2\le10/q^2\le5/8$ and $(1-x)^{1/x}\ge2^{-10/4}$ for $x\le5/8$.
Putting these back to \Cref{eq:clm:tvdist_gamma_biased_shift_3} gives
$$
\tvdist{\Dcal_0-\Dcal_1}
\ge\frac2q\cdot2^{-50\gamma(t-1)/q^2}
$$
as desired.
\end{proof}

\end{document}